\newcommand{\ostar}{\mathbin{\mathpalette\make@circled\star}}
\newcommand{\make@circled}[2]{%
\ooalign{$\m@th#1\smallbigcirc{#1}$\cr\hidewidth$\m@th#1#2$\hidewidth\cr}%
}
\newcommand{\smallbigcirc}[1]{%
  \vcenter{\hbox{\scalebox{0.77778}{$\m@th#1\bigcirc$}}}%
}
\DeclareMathOperator*{\argmax}{argmax}
\DeclareMathOperator*{\sargmax}{sargmax}
\newcommand{\indep}{\perp\!\!\!\perp}
\newtheorem{thm}{Theorem}
\newtheorem{prop}[thm]{Proposition}
\newtheorem{lemma}[thm]{Lemma}
\newtheorem{defn}{Definition}
\newtheorem{example}{Example}
\newtheorem{assumption}{Assumption}
\def\hat{\widehat}
\def\tilde{\widetilde}
\title{Deep learning with missing data}
\author{Tianyi Ma$^1$, Tengyao Wang$^2$ and Richard J. Samworth$^{1}$\\ \\
$^1$Statistical Laboratory, University of Cambridge\\
$^2$Department of Statistics, London School of Economics
}
\date{}
\begin{document}	
\maketitle

\begin{abstract}
In the context of multivariate nonparametric regression with missing covariates, we propose \emph{Pattern Embedded Neural Networks} (PENNs), which can be applied in conjunction with any existing imputation technique.  In addition to a neural network trained on the imputed data, PENNs pass the vectors of observation indicators through a second neural network to provide a compact representation.  The outputs are then combined in a third neural network to produce final predictions.  Our main theoretical result exploits an assumption that the observation patterns can be partitioned into cells on which the Bayes regression function behaves similarly, and belongs to a compositional H\"older class.  It provides a finite-sample excess risk bound that holds for an arbitrary missingness mechanism, and in combination with a complementary minimax lower bound, demonstrates that our PENN estimator attains in typical cases the minimax rate of convergence as if the cells of the partition were known in advance, up to a poly-logarithmic factor in the sample size.  Numerical experiments on simulated, semi-synthetic and real data confirm that the PENN estimator consistently improves, often dramatically, on standard neural networks without pattern embedding.  Code to reproduce our experiments, as well as a tutorial on how to apply our method, is publicly available.
\end{abstract}
\noindent\textbf{Keywords:} Deep learning, Missing data, Nonparametric regression

\section{Introduction}

Over the last decade or so, deep neural networks have achieved stunning empirical successes in learning tasks across diverse application areas, including image classification \citep{krizhevsky2012imagenet,he2016deep}, protein folding \citep{jumper2021highly}, natural language processing \citep{vaswani2017attention,devlin2019bert} and many others.  While their complexity initially led to the temptation to regard them as a black box, recent theory does provide insights into the origins of their impressive practical performance \citep{jacot2018neural,schmidt-hieber2020nonparametric,kohler2021rate,mei2022generalization,suh2024survey}. 

The works described in the previous paragraph all concern settings where the (large) datasets involved are fully observed.  Nevertheless, as argued by \citet{zhu2022high}, missing data play an ever more significant role in high-dimensional learning problems, and new methods have now been introduced to tackle several different contemporary statistical challenges involving missing data, including sparse linear regression \citep{loh2012high,chandrasekher2020imputation}, principal component analysis \citep{elsener2019sparse,zhu2022high}, classification \citep{cai2019high,sell2023nonparametric} and changepoint estimation \citep{follain2022high}.  Most of these papers study the simplest, idealised case where the data and missingness indicators are independent, a setting known as Missing Completely At Random (MCAR).  Indeed, it is now understood that for more general (dependent) missingness mechanisms, even consistent mean estimation may be impossible without further assumptions \citep{ma2024estimation}.

The goal of this paper is to study supervised deep learning problems with missing values among the covariates, where we do not rely on an MCAR assumption, or indeed any other restriction on the missingness mechanism.  Motivated by many applications in which deep learning is applied, we focus on the problem of prediction (i.e.~making a forecast of the response at a new covariate vector) as opposed to estimation (i.e.~learning the underlying parameters of the model).  Our primary methodological contribution is to introduce the idea of (observation) pattern embedding into the neural network framework.  In other words, in addition to training a neural network on our original covariates (with missing values imputed via any existing technique), we pass the vectors of observation indicators, which we call \emph{revelation vectors}, through another neural network to obtain a compact representation that summarises the information they contain.  We can then train a third neural network that combines these two earlier ones to produce final predictions.

The benefits of our approach are illustrated in Figure~\ref{fig:quadratic}, where our training data of size $n = 1{,}000$ and $d=1$ are generated from the model described in Example~\ref{example:f-star} in Section~\ref{sec:setup} below.  Our original covariates $(\bm X_i)_{i=1}^n$ are observed with missingness, and we apply zero (mean) imputation based on the corresponding revelation vectors $(\bm \Omega_i)_{i=1}^n$ to obtain imputed covariates $(\bm Z_i)_{i=1}^n$.  Thus, together with the responses $(Y_i)_{i=1}^n$, which satisfy $Y_i|\bm X_i \sim  N(3\bm X_i^2,0.1^2)$ independently for $i=1,\ldots,n$, we have training data $(\bm Z_i, \bm \Omega_i, Y_i)_{i=1}^n$.  Following the imputation, the standard approach would be to train a model on $(\bm Z_i, Y_i)_{i=1}^n$ (see Figure~\ref{fig:quadratic}(a)), yielding fitted values on an independent test set of size 500 displayed in panel~(c).  By contrast, our approach trains a neural network on the augmented data $(\bm Z_i, \bm \Omega_i, Y_i)_{i=1}^n$ in panel~(b), leading to much more accurate predictions on the test set, as illustrated in panel~(d).  The main point to observe here is that the failure to include the revelation vectors $(\bm\Omega_i)_{i=1}^n$ in panel~(c) corrupts the output in a neighbourhood of the origin, which is a point of discontinuity of the function $\bm z \mapsto \mathbb{E}(Y_1 \, | \, \bm Z_1 = \bm z)$, whereas this is corrected in panel~(d) by the inclusion of the additional covariate.

\begin{figure}[htbp]
\centering
\begin{subfigure}[t]{0.49\textwidth}
    \centering
    \begin{overpic}[width=\textwidth]{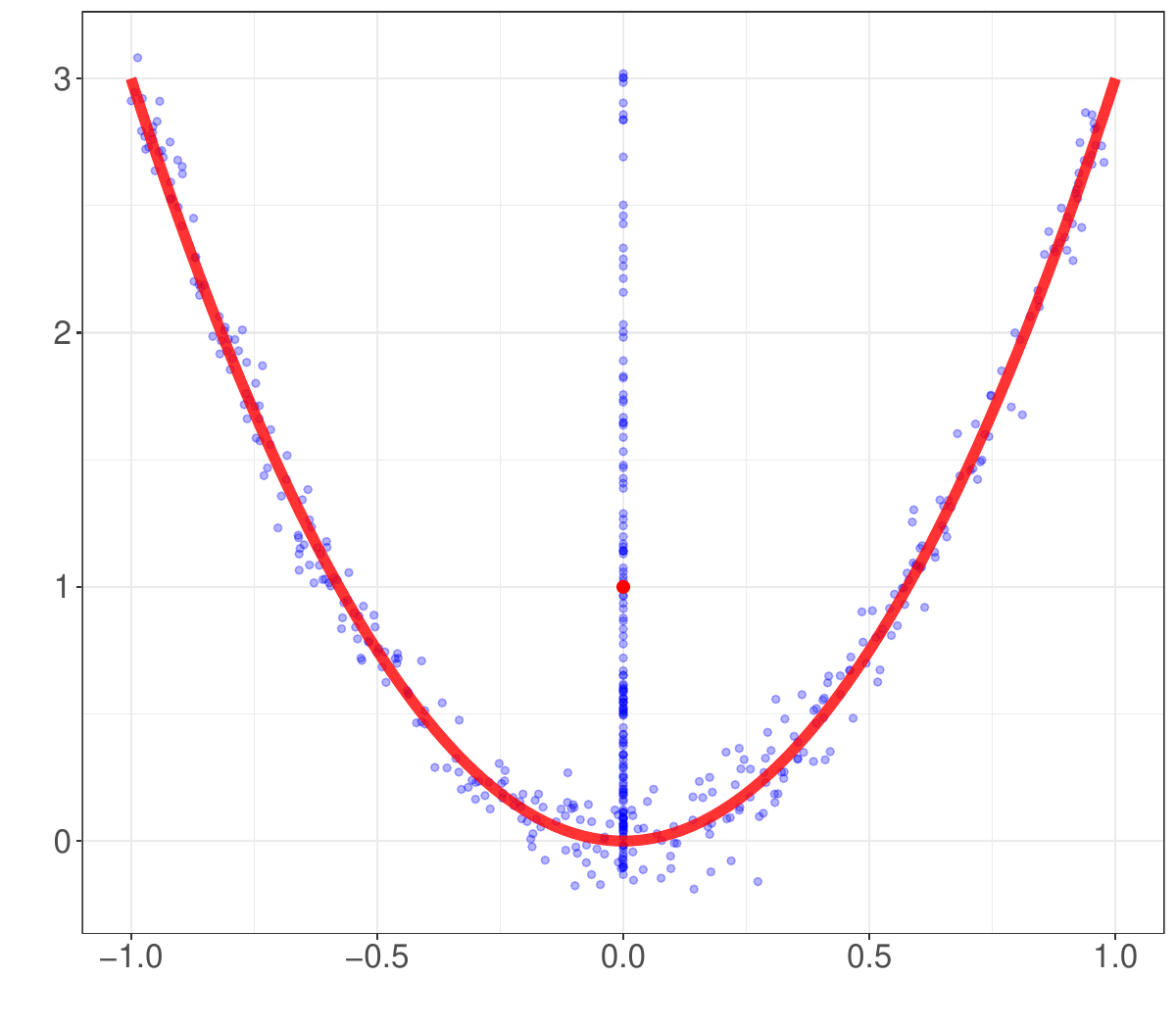}
      \put(52,1){\footnotesize$\bm z$}
    \end{overpic}
    \caption{Test data (blue) and Bayes regression function $\bm z \mapsto \mathbb{E}(Y_1 \, | \, \bm Z_1 = \bm z)$ (red).}
\end{subfigure}
\begin{subfigure}[t]{0.49\textwidth}
    \begin{overpic}[width=\textwidth]{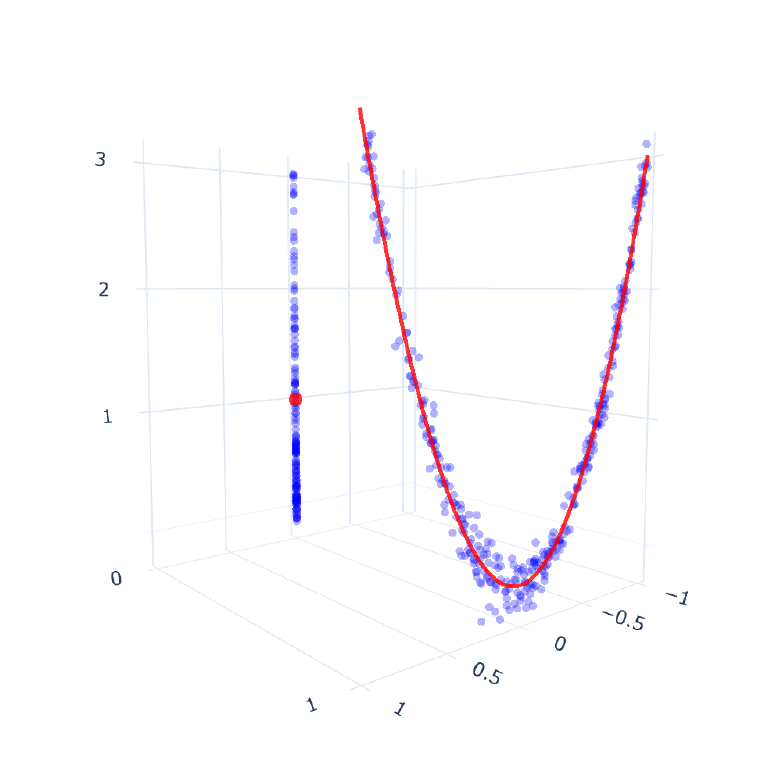}
      \put(70,10){\footnotesize$\bm z$}
      \put(22,14){\footnotesize$\bm \omega$}
    \end{overpic}
    \caption{Test data (blue) and Bayes regression function $(\bm z, \bm \omega) \mapsto \mathbb{E}(Y_1 \, | \, \bm Z_1 = \bm z, \bm \Omega_1 = \bm \omega)$ (red).}
\end{subfigure}\\ \hspace{0.75cm}

\begin{subfigure}[t]{0.49\textwidth}
    \begin{overpic}[width=\textwidth]{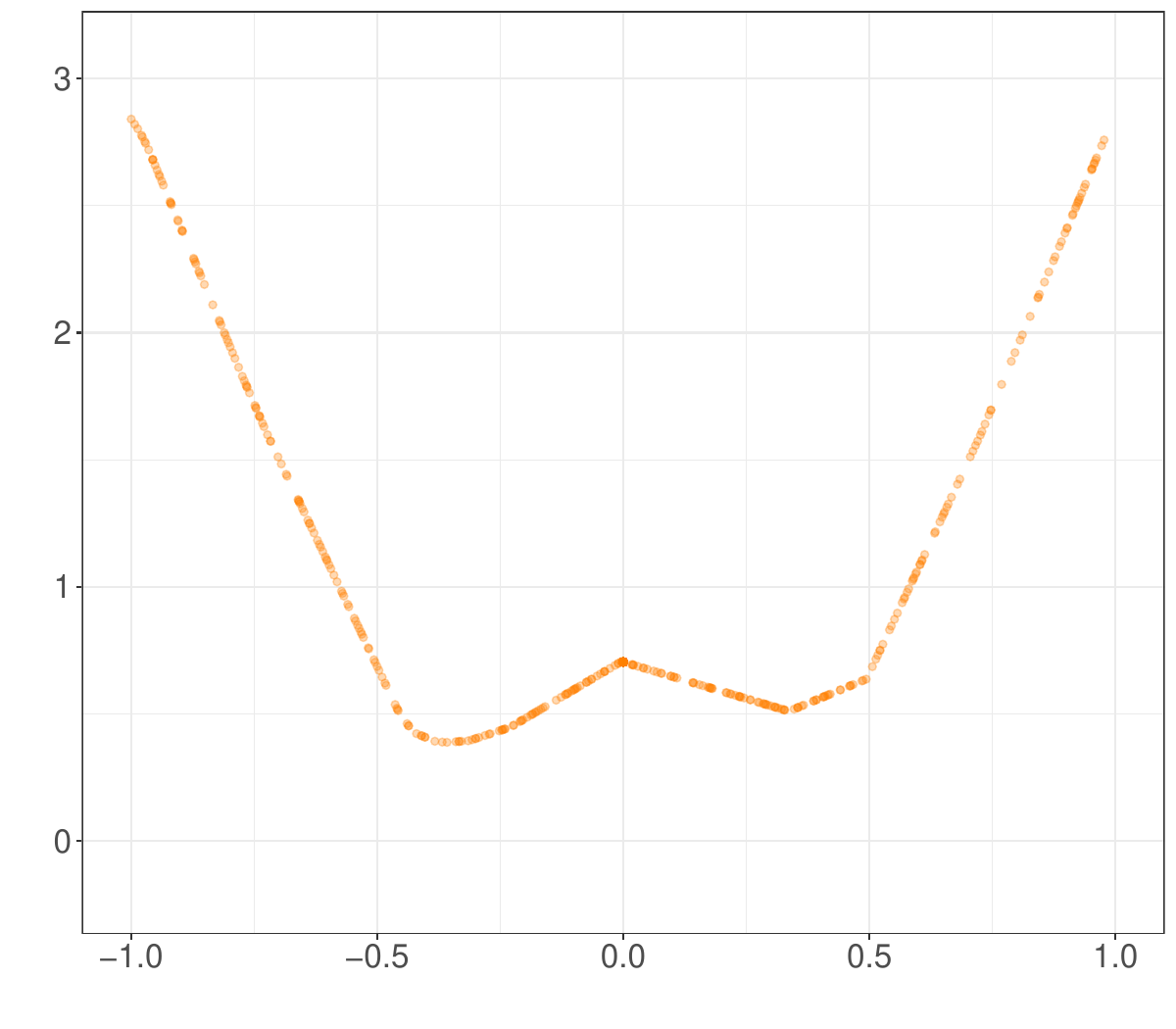}
      \put(52,1){\footnotesize$\bm z$}
    \end{overpic}
    \caption{Output of the neural network trained without the revelation vectors.}
\end{subfigure}
\begin{subfigure}[t]{0.49\textwidth}
    \begin{overpic}[width=\textwidth]{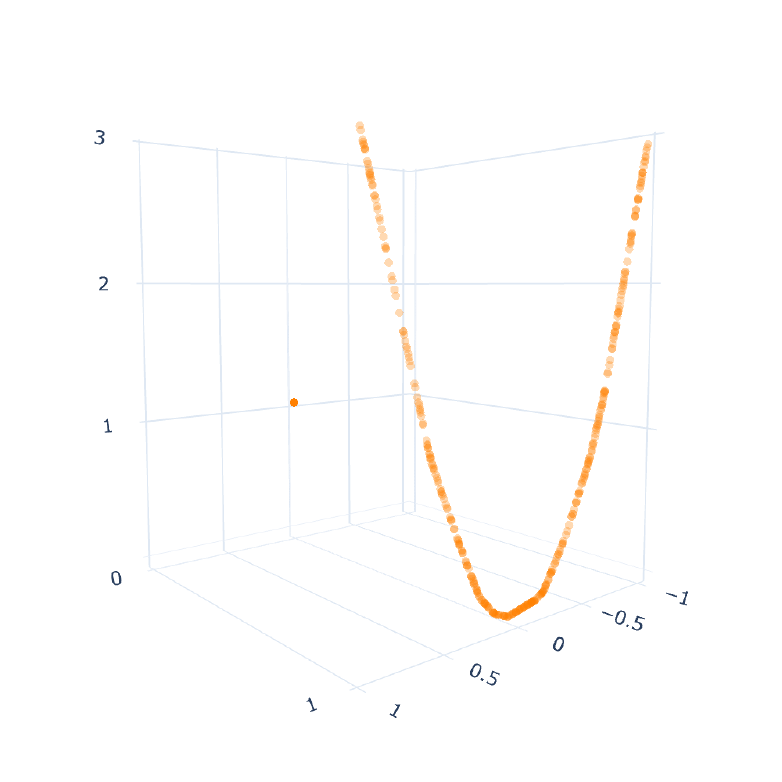}
      \put(70,10){\footnotesize$\bm z$}
      \put(22,14){\footnotesize$\bm \omega$}
    \end{overpic}
    \caption{Output of the neural network trained with the revelation vectors.}
\end{subfigure}
\caption{An illustration of Example~\ref{example:f-star}, and the outputs of neural networks trained without and with the revelation vectors.} 
\label{fig:quadratic}
\end{figure}

On the other hand, as we illustrate in Figure~\ref{fig:quadratic-d} in Section~\ref{sec:PENN}, when $d$ is larger, naive inclusion of the revelation vectors as additional covariates can harm performance, even compared with omitting $(\bm \Omega_i)_{i=1}^n$ entirely.  Moreover, as we also discuss in Section~\ref{sec:PENN}, the predictive value of the revelation vectors may not be well-aligned with their Euclidean geometry.  Thus, a key challenge is to construct a suitable compression of the information contained in the revelation vectors that retains their predictive content to the greatest extent possible. 


After a more formal description of our problem set-up and some background on ReLU neural networks, we introduce our \emph{Pattern Embedded Neural Networks} (PENNs) in Section~\ref{sec:setup-method}.
These can be fitted using standard empirical risk minimisation algorithms such as Adam or AdamW \citep{kingma2015adam, loshchilov2018decoupled}, as implemented in \texttt{PyTorch} in Python, yielding an estimator $\hat{f}$ of the Bayes regression function $f^\star$, given by $f^\star(\bm z,\bm \omega) \coloneqq \mathbb{E}(Y_1 \, | \, \bm Z_1 = \bm z, \bm \Omega_1 = \bm \omega)$.  

In Section~\ref{sec:theoretical-results} we turn our attention to the theoretical properties of our procedure.  We begin with a general oracle inequality (Proposition~\ref{prop:oracle-inequality}), revealing that under a sub-Gaussian condition on the response, the excess risk of our estimator is controlled by the sum of optimisation error, approximation error and estimation error terms, with the latter two reflecting a bias--variance trade-off in the complexity of the neural network class.   
Our main result (Theorem~\ref{thm:PENN-ub} in Section~\ref{sec:minimax-rate}) specialises the general setting studied previously to give a more explicit excess risk bound.  In particular, since there $2^d$ possible observation patterns, we introduce a new condition on the Bayes regression function $f^\star$ to encapsulate the notion that the set of all possible observation patterns may be partitioned into sets on which $f^\star$ behaves similarly as a function of the covariates.  We further ask that on each cell of this partition, $f^\star$ belongs to a compositional H\"older smoothness class.  This allows us to show that the sum of the approximation error and estimation error terms can be controlled by a weighted average of estimation rates for each cell of the partition, together with an additional term reflecting the complexity of the partition that will typically be negligible, up to a poly-logarithmic factor in the sample size. This theorem is complemented by a minimax lower bound in Theorem~\ref{thm:minimax-lb}, which confirms that the weighted average of the estimation rates over different cells in Theorem~\ref{thm:PENN-ub} is optimal.  All proofs are deferred to the Appendix.

A numerical study of the performance of our PENN estimator is presented in Section~\ref{sec:simulations}.  Our simulated data settings consider both MCAR and Missing Not At Random (MNAR) scenarios, as well as three commonly-used imputation techniques, namely (columnwise) mean imputation, MissForest imputation \citep{stekhoven2012missforest} and Multiple Imputation by Chained Equations (MICE) \citep{van2011mice}.  A consistent pattern emerges whereby the PENN estimator is able to improve, sometimes very significantly, on the corresponding neural network estimator that does not include pattern embedding.  We also compare with other machine learning methods that handle missingness directly, namely XGBoost~\citep{chen2016xgboost} and random forests~\citep{breiman2001random}, and see similar improvements in these settings.  Beyond purely simulated data, we study two semi-synthetic datasets and one real dataset with the same imputation methods.  In the former cases, the original real data, on handwritten digits (where we include Vision Transformers~\citep{dosovitskiy2021an} as an additional competitor) and relative location of Computed Tomography (CT) slices respectively, are observed without missingness, which allows us to introduce either MCAR or MNAR missingness via known mechanisms.  On the other hand, in the latter real dataset on credit scores, missingness is already present in the original data.  By splitting the data into training, validation and test sets, we again see consistent improvements from our pattern embedding approach.  Python code to reproduce all of the experiments in this section, together with an accompanying tutorial on how to apply our method, is available at \url{https://github.com/tianyima2000/DNN_missing_data}.

\subsection{Related literature}

The desire to understand and explain the impressive empirical performance of deep learning represents a major current research theme in statistics and machine learning.  One line of work provides explicit rates of convergence through the lens of approximation theory and empirical process theory, assuming that the empirical risk can be minimised sufficiently well. Minimax optimality (in terms of the sample size) of neural networks has been studied, for example, in the context of nonparametric regression \citep{bauer2019on,imaizumi2019deep,schmidt-hieber2020nonparametric,kohler2021rate,jiao2023deep,bhattacharya2023deep,fan2024factor}, classification \citep{bos2022convergence,zhang2024classification}, the partially linear Cox model \citep{zhong2022deep} and density estimation \citep{bos2024supervised}. In particular, it is now known that neural networks can exploit low-dimensional structure in nonparametric regression under a compositional H\"older assumption \citep{schmidt-hieber2020nonparametric,kohler2021rate}. While most of the bounds in the work mentioned above have pre-factors depending exponentially on the covariate dimension or the number of variables on which the regression function depends, \citet{jiao2023deep} provide upper bounds that depend only polynomially on the dimension, using the approximation scheme of \citet{lu2021deep}, while \citet{imaizumi2019deep} allow discontinuities in the regression function. Other research directions focus on understanding the training dynamics of neural networks, for example through Neural Tangent Kernels \citep{jacot2018neural,du2018gradient,arora2019exact}, Tensor programs \citep{yang2021tensor, littwin2023adaptive}, and mean-field limits \citep{chizat2018global, mei2018mean, mei2022generalization, suzuki2023convergence}. Recently, the in-context learning ability of Transformers has also been exploited to build foundation models for tabular data \citep{hollmann2025accurate}; theoretical results on in-context nonparametric regression have been studied by, e.g., \citet{kim2024transformers,ma2025optimal,ching2026efficient}.

When fitting neural networks, the embedding of discrete or structured data into a latent space is an attractive approach.  In particular, \emph{categorical embeddings} map discrete objects, such as words or user identifiers, into continuous representations learned jointly with the prediction task \citep{mikolov2013efficient,guo2016entity}.  Relatedly, in unsupervised tasks, \emph{autoencoders} learn compact latent representations via `bottleneck architectures' (with narrow middle layers) that capture essential structure in high-dimensional data \citep{hinton2006reducing,vincent2008extracting}. Viewed in this light, our method learns a low-dimensional embedding of the observation pattern, a categorical variable with up to $2^d$ levels, in which similarities between patterns are inferred from the data rather than specified a priori. Unlike classical autoencoders, in our supervised learning task this embedding is optimised solely for prediction and is not required to reconstruct the missingness pattern itself.

Missing data has been an active research topic in statistics for several decades, but is currently undergoing a renaissance for two main reasons.  First, as already mentioned, missing values are ubiquitous in large datasets, and, unlike for other forms of data corruption, many statistical learning algorithms cannot be applied until the missingness is handled appropriately.  Second, and somewhat related, the complexity of modern data demands new models for missingness and new inferential methods with appropriate guarantees on their performance \citep{berrett2023optimal,ma2024estimation}.  It has long been recognised that the MCAR assumption is unrealistic for many practical applications.  On the other hand, modern data generating mechanisms can often only be adequately described by high- or infinite-dimensional parameter spaces.  In such settings, missingness models such as Missing At Random (MAR) that are predicated on the correctness of low-dimensional parametric models fitted using maximum likelihood may be inappropriate.  

For regression problems with missing covariates, one widely used general strategy is \emph{impute-then-regress}, i.e.~we first impute the missing data and then treat the imputed dataset as complete to train a regression algorithm. Many imputation algorithms have been proposed under MCAR or MAR assumptions, such as MissForest imputation \citep{stekhoven2012missforest}, MICE \citep{van2011mice}, and methods based on deep learning and generative models \citep{li2019misgan,mattei2019miwae,nazabal2020handling,zhang2025diffputer}.  If there exists a universally consistent estimator, then under some conditions that still allow MNAR, the impute-then-regress approach leads to asymptotically vanishing excess risk on new covariate vectors as the sample size diverges to infinity \citep{le2021what,josse2024consistency}. 
However, the Bayes regression function, which in this case is the conditional expectation of the response given the imputed covariate vector, may be discontinuous and hard to learn \citep{le2021what}.  

\citet{efromovich2011nonparametric} studies orthogonal series methods in univariate nonparametric regression with MAR missingness.  Other strategies for regression with missing data include augmenting the covariate space with a distinguished point, reflecting a missing value, in relevant coordinates; this can be applied with regression trees or other (non-orthogonally equivariant) methods such as XGBoost and random forests.  \citet{twala2008good} propose regression trees with the observation patterns included as covariates, \citet{smieja2018processing} replace neurons in the first hidden layer of a neural network by estimated expected values to handle missing data, while \citet{le2021what} train imputation and regression algorithms simultaneously using neural networks.  For image data, MisGAN \citep{li2019misgan} imputes missing pixels, while Vision Transformers (ViTs) can naturally handle occlusions through their patch-based representation and attention mechanism \citep{dosovitskiy2021an, bao2022beit, he2022masked}.  Nevertheless, to the best of our knowledge, our work is the first to provide minimax optimality guarantees for multivariate nonparametric regression with missing data.

\subsection{Notation}

We conclude the introduction with some notation employed throughout the paper.  For $n\in\mathbb{N}$, we define $[n] \coloneqq \{1,\ldots,n\}$, and for $a,b\in\mathbb{R}$, we let $a\wedge b \coloneqq \min\{a,b\}$ and $a\vee b \coloneqq \max\{a,b\}$.  For $\bm{a},\bm{b}\in\mathbb{R}^d$, we let $\bm{a}\odot \bm{b} \in \mathbb{R}^d$ denote the Hadamard product (i.e.~coordinate-wise product) of $\bm{a}$ and $\bm{b}$.  For $q \in [1,\infty)$ and $\bm v = (v_1,\ldots,v_d)^\top \in \mathbb{R}^d$, we write $\|\bm v\|_q \coloneqq \bigl(\sum_{j=1}^d |v_j|^q\bigr)^{1/q}$, as well as $\|\bm v\|_0 \coloneqq \sum_{j=1}^d \mathbbm{1}_{\{v_j \neq 0\}}$ and $\|\bm v\|_{\infty} \coloneqq \max_{j \in [d]} |v_j|$.  The all-ones vector is $\bm{1}_d \coloneqq (1,\ldots,1)^\top \in \mathbb{R}^d$.  If $(\mathcal{X},\mathcal{A},\mu)$ is a measure space and $f:\mathcal{X} \rightarrow \mathbb{R}$ is measurable, then we define $\|f\|_{L_q(\mu)} \coloneqq \bigl(\int_{\mathcal{X}} |f|^q \, d\mu\bigr)^{1/q}$, as well as $\|f\|_\infty \coloneqq \sup_{\bm x \in \mathcal{X}} |f(\bm x)|$.  For function classes $\mathcal{F}$ and $\mathcal{G}$, we define $\mathcal{F}\circ\mathcal{G} \coloneqq \{f \circ g:f \in \mathcal{F}, g \in \mathcal{G}\}$, and for $B\geq 0$, we define the truncation operator $T_B : \mathbb{R} \to [-B,B]$ by $T_B(y) \coloneqq (-B) \vee y \wedge B$.  We say that $\{\mathcal{S}_1,\ldots,\mathcal{S}_K\}$ is a \emph{partition} of a non-empty set $\mathcal{S}$ if $\mathcal{S}_1,\ldots,\mathcal{S}_K$ are pairwise disjoint, non-empty subsets of~$\mathcal{S}$ whose union is $\mathcal{S}$.  The \emph{sub-Gaussian norm} of a real-valued random variable $X$ is $\|X\|_{\psi_2} \coloneqq \inf \{t>0 : \exp(X^2/t^2) \leq 2\}$; its \emph{sub-exponential norm} is $\|X\|_{\psi_1} \coloneqq \inf \{t>0 : \exp(|X|/t) \leq 2\}$.  For positive sequences $(a_n)$, $(b_n)$, we write $a_n \lesssim b_n$ if there exists a universal constant $C > 0$ such that $a_n \leq C b_n$ for all $n \in \mathbb{N}$; if we also have $b_n \lesssim a_n$, then we write $a_n \asymp b_n$. 

\section{Problem set-up and methodology} \label{sec:setup-method}
\subsection{Problem set-up} \label{sec:setup}

Suppose that $(\bm{X}_i,\bm{\Omega}_i,Y_i)_{i=0}^{n}$ are independent and identically distributed random vectors taking values in $\mathbb{R}^d \times \{0,1\}^d \times \mathbb{R}$.  The elements of these triples represent the fully observed covariate vectors, revelation vectors and responses, respectively, and we let $\mathcal{S} \coloneqq \bigl\{\bm{\omega}: \mathbb{P}(\bm{\Omega}_0 = \bm\omega) > 0\bigr\}$ denote the set of possible observation patterns.  In our missing data context, we are in general unable to observe $(\bm{X}_i)_{i=0}^n$; instead, we have access to \emph{partially observed} covariate vectors $(\tilde{\bm{X}}_i)_{i=0}^n$ taking values in the extended space\footnote{Formally, this set is equipped with the natural topology and $\sigma$-algebra described in \citet[Section~2.1]{ma2024estimation}.}  $\mathbb{R}_{\star}^d \coloneqq \bigl(\mathbb{R} \cup \{\star\}\bigr)^d$, where $\star$ denotes a missing value.  Thus $\tilde{\bm{X}}_i$ agrees with $\bm{X}_i$ except in components where~$\bm{\Omega}_i$ is zero, and those components of $\tilde{\bm{X}}_i$ are replaced with $\star$.  
We emphasise that we do not assume independence between the covariate~$\bm{X}_i$ and revelation vector $\bm{\Omega}_i$, and indeed we allow the entries of~$\tilde{\bm{X}}_i$ to be MNAR. Let $\mathsf{Imp} : \mathbb{R}_{\star}^d \to \mathbb{R}^d$ be a potentially randomised imputation algorithm, where any randomness in the construction of $\mathsf{Imp}$ is independent of $(\bm{X}_i,\bm{\Omega}_i,Y_i)_{i=0}^{n}$, and let $\bm Z_i \coloneqq \mathsf{Imp}(\tilde{\bm X}_i)$ for $i\in\{0,1,\ldots,n\}$. For example, $\mathsf{Imp}$ can be zero (or mean) imputation, or other regression-based imputation algorithms trained on an independent dataset\footnote{This is assumed for theoretical convenience. In practice (and indeed in our simulations in Section~\ref{sec:simulations}), the neural network and imputation algorithms would be trained on the same dataset.}. We observe training data $(\bm{Z}_i,\bm{\Omega}_i,Y_i)_{i=1}^n$ and our goal is to predict~$Y_0$ given a test point $(\bm{Z}_0,\bm{\Omega}_0)$. 

Write $\mathcal{G}$ for the set of all possible \emph{prediction functions}, i.e.~the set of Borel measurable functions from $\mathbb{R}^d \times \mathcal{S}$ to $\mathbb{R}$.  We define the \emph{generalisation error} of $\tilde{f} \in \mathcal{G}$ as
\begin{align*}
    R(\tilde{f}) &\coloneqq \mathbb{E}_{(\bm{Z}_0,\bm{\Omega}_0,Y_0)} \bigl\{ \bigl(\tilde{f}(\bm{Z}_0,\bm{\Omega}_0) - Y_0\bigr)^2 \bigr\}\\
    &\hspace{0.09cm}= \int_{\mathbb{R}^d \times \mathcal{S} \times \mathbb{R}} \bigl(\tilde{f}(\bm{z},\bm{\omega}) - y\bigr)^2 \;\mathrm{d} \mu_{\bm{Z}_0,\bm{\Omega}_0,Y_0}(\bm{z},\bm{\omega},y),
\end{align*}
where $\mu_{\bm{Z}_0,\bm{\Omega}_0,Y_0}$ denotes the distribution of $(\bm{Z}_0,\bm{\Omega}_0,Y_0)$. Note that if $\tilde{f}$ is an estimator depending on $(\bm{Z}_i,\bm{\Omega}_i,Y_i)_{i=1}^n$, then $R(\tilde{f})$ is random, since the expectation is taken only over the distribution of $(\bm{Z}_0,\bm{\Omega}_0,Y_0)$.  The \emph{Bayes regression function} $f^{\star} : \mathbb{R}^d \times \mathcal{S} \to \mathbb{R}$ is defined by
\[
    f^{\star}(\bm{z}, \bm{\omega}) \coloneqq 
        \mathbb{E}(Y_0 \,|\, \bm{Z}_0=\bm{z}, \bm{\Omega}_0 = \bm{\omega}).
\]
This function, which satisfies $R(f^{\star}) = \inf_{f \in \mathcal{G}} R(f)$, may therefore depend on the imputation algorithm employed, though $R(f^{\star})$ does not.  For an estimator $\tilde{f}$, we measure its performance by its \emph{excess risk} $\mathbb{E}\bigl\{ R(\tilde{f})\bigr\} - R(f^{\star})$.  Finally, we define the \emph{empirical risk} of~$\tilde{f}$ as 
\[
\hat{R}_n(\tilde{f}) \coloneqq \frac{1}{n} \sum_{i=1}^n \bigl(\tilde{f}(\bm{Z}_i,\bm{\Omega}_i)-Y_i\bigr)^2.
\]

\begin{example} \label{example:f-star}
Let $\bm{X}_0=(X_{0,1},\ldots,X_{0,d})^\top \sim \mathrm{Unif}[-1,1]^d$, and let $Y_0 = 3X_{0,1}^2 + \varepsilon_0$ where $\varepsilon_0 \sim N(0,0.01)$ and $\varepsilon_0 \indep \bm{X}_0$. Suppose that $\bm{\Omega}_0 = (\Omega_{0,1},\ldots,\Omega_{0,d})^\top \indep (\bm{X}_0,\varepsilon_0)$ satisfies $\Omega_{0,1},\ldots,\Omega_{0,d} \overset{\mathrm{iid}}{\sim} \mathrm{Ber}(0.7)$, i.e.~we have MCAR missingness where each coordinate is missing independently with probability $0.3$. Let $\tilde{\bm X}_0$ denote the partially observed covariate vector corresponding to $\bm{X}_0$, and let $\mathsf{Imp}:\mathbb{R}_{\star}^d \to \mathbb{R}^d$ be the zero imputation algorithm (which is the same as mean imputation in this example) that replaces $\star$ by zero, so $\bm Z_0 \coloneqq \mathsf{Imp}(\tilde{\bm X}_0) = \bm X_0 \odot \bm\Omega_0$. For $\bm{z} = (z_1,\ldots,z_d)^\top \in [-1,1]^d$ and $\bm{\omega} = (\omega_1,\ldots,\omega_d)^\top \in \{0,1\}^d$, the Bayes regression function can be written as
\begin{align*}
    f^{\star}(\bm{z}, \bm{\omega}) = \begin{cases}
        3z_1^2 \quad&\text{if } \omega_1 = 1\\
        1 &\text{if } \omega_1 = 0.
    \end{cases}
\end{align*}
\end{example}

\subsection{ReLU neural networks} \label{sec:ReLU-NN}
In this section, we formally define classes of neural networks with ReLU activation function~$\sigma$ given by $\sigma(a) \coloneqq a\vee 0$ for $a\in\mathbb{R}$.  For a vector $\bm{x}=(x_1,\ldots,x_d)^\top\in\mathbb{R}^d$, we define
\begin{align*}
    \bm{\sigma}(\bm{x}) \coloneqq \bigl(\sigma(x_1),\ldots,\sigma(x_d)\bigr)^\top \in [0,\infty)^d,
\end{align*}
though we allow $d$ to vary in different instances of this function without comment.  Given $L\in\mathbb{N}$ and $\bm{p} = (p_0,p_1,\ldots,p_{L+1}) \in \mathbb{N}^{L+2}$, the set of neural networks with architecture $(L,\bm{p})$ is defined as
\begin{align}
    \mathcal{F}(L,\bm{p}) \coloneqq &\Bigl\{ \bm{f}:\mathbb{R}^{p_0} \to \mathbb{R}^{p_{L+1}} :\; \bm{f}(\cdot) = \bm{A}_{L+1} \circ \bm{\sigma} \circ \bm{A}_L \circ \bm{\sigma} \circ \cdots \circ \bm{A}_2 \circ \bm{\sigma} \circ \bm{A}_1(\cdot),\text{ where}\nonumber\\
    & \qquad\qquad \bm{A}_\ell(\bm{v}) = \bm{W}_\ell \bm{v} + \bm{b}_\ell,\, \bm{W}_\ell \in \mathbb{R}^{p_{\ell} \times p_{\ell-1}} \text{ and } \bm{b}_{\ell} \in\mathbb{R}^{p_{\ell}} \;\forall \ell\in[L+1] \Bigr\}. \label{Eq:Flp}
\end{align}
Here, $L$ is the number of hidden layers (or depth of the network), $p_1,\ldots,p_L$ are the widths of the hidden layers, $p_0$ is the input dimension, $p_{L+1}$ is the output dimension, $\bm{W}_1,\ldots,\bm{W}_{L+1}$ are the weight matrices and $\bm{b}_1,\ldots,\bm{b}_{L+1}$ are the bias vectors. For $\bm{f}\in\mathcal{F}(L,\bm{p})$ with weight matrices $(\bm{W}_{\ell})_{\ell=1}^{L+1}$ and bias vectors $(\bm{b}_{\ell})_{\ell=1}^{L+1}$, we define $\bm{\Theta}(\bm{f})$ to be the vector consisting of all the parameters of the neural network $\bm f$, i.e.\ $\bm{\Theta}(\bm{f}) \coloneqq \bigl(\mathrm{vec}(\bm{W}_1)^\top, \bm{b}_1^\top, \ldots,\mathrm{vec}(\bm{W}_{L+1})^\top, \bm{b}_{L+1}^\top\bigr)^\top \in \mathbb{R}^{V}$, where $V \coloneqq \sum_{\ell=1}^{L+1} p_\ell(p_{\ell-1}+1)$ is the total number of parameters. Given $s\in\mathbb{N}$, define
\begin{align*}
    \mathcal{F}(L,\bm{p},s) \coloneqq \bigl\{ \bm{f}\in\mathcal{F}(L,\bm{p}) : \|\bm{\Theta}(\bm{f})\|_0 \leq s \bigr\}
\end{align*}
to be the set of neural networks with architecture $(L,\bm{p})$ and sparsity $s$. Note that $\mathcal{F}(L,\bm p, s) = \mathcal{F}(L,\bm p)$ for all $s\geq V$.

\subsection{Pattern Embedded Neural Network estimators} \label{sec:PENN}

\begin{figure}[ht]
    \centering
    \includegraphics[width=0.8\linewidth]{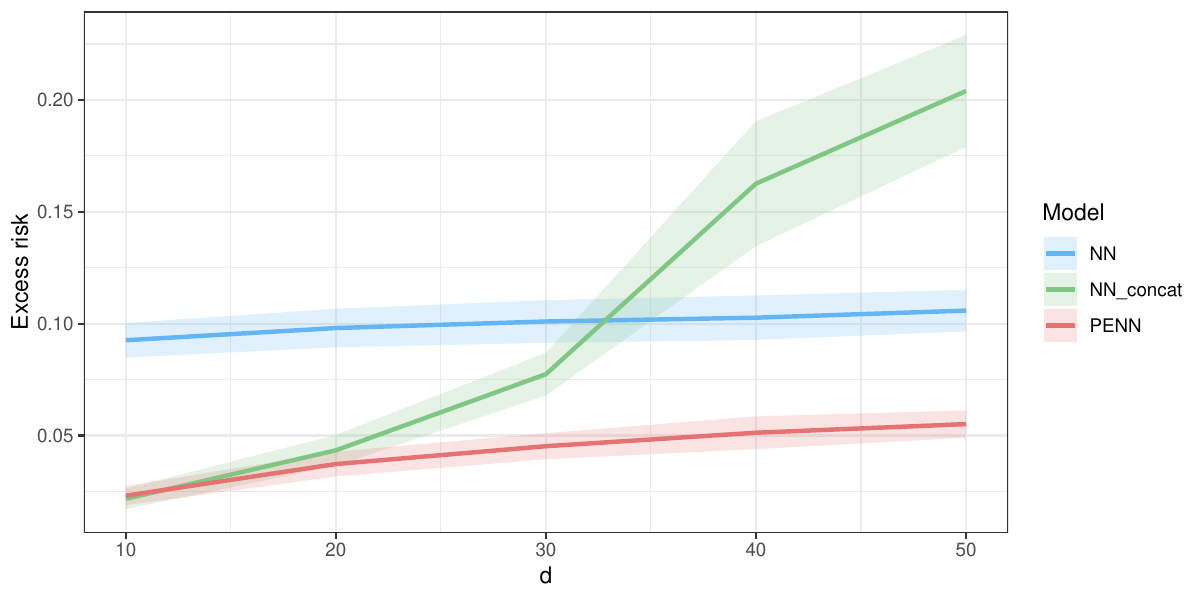}
    \caption{Excess risk for Example \ref{example:f-star} with $d\in\{10,20,\ldots,50\}$ and $n=1000$. NN (blue) is a neural network that regresses $Y_i$ onto $\bm Z_i$; NN$\_$concat (green) is a neural network that directly regresses $Y_i$ onto $(\bm Z_i, \bm\Omega_i)$; and PENN (red) is our pattern embedded neural network (with $m=3$).  The shaded regions represent $90\%$ confidence intervals for the true excess risk, based on 100 repetitions of the experiment.}
    \label{fig:quadratic-d}
\end{figure}

\begin{figure}[ht]
\centering
\begin{subfigure}[t]{0.3\textwidth}
    \includegraphics[width=\textwidth]{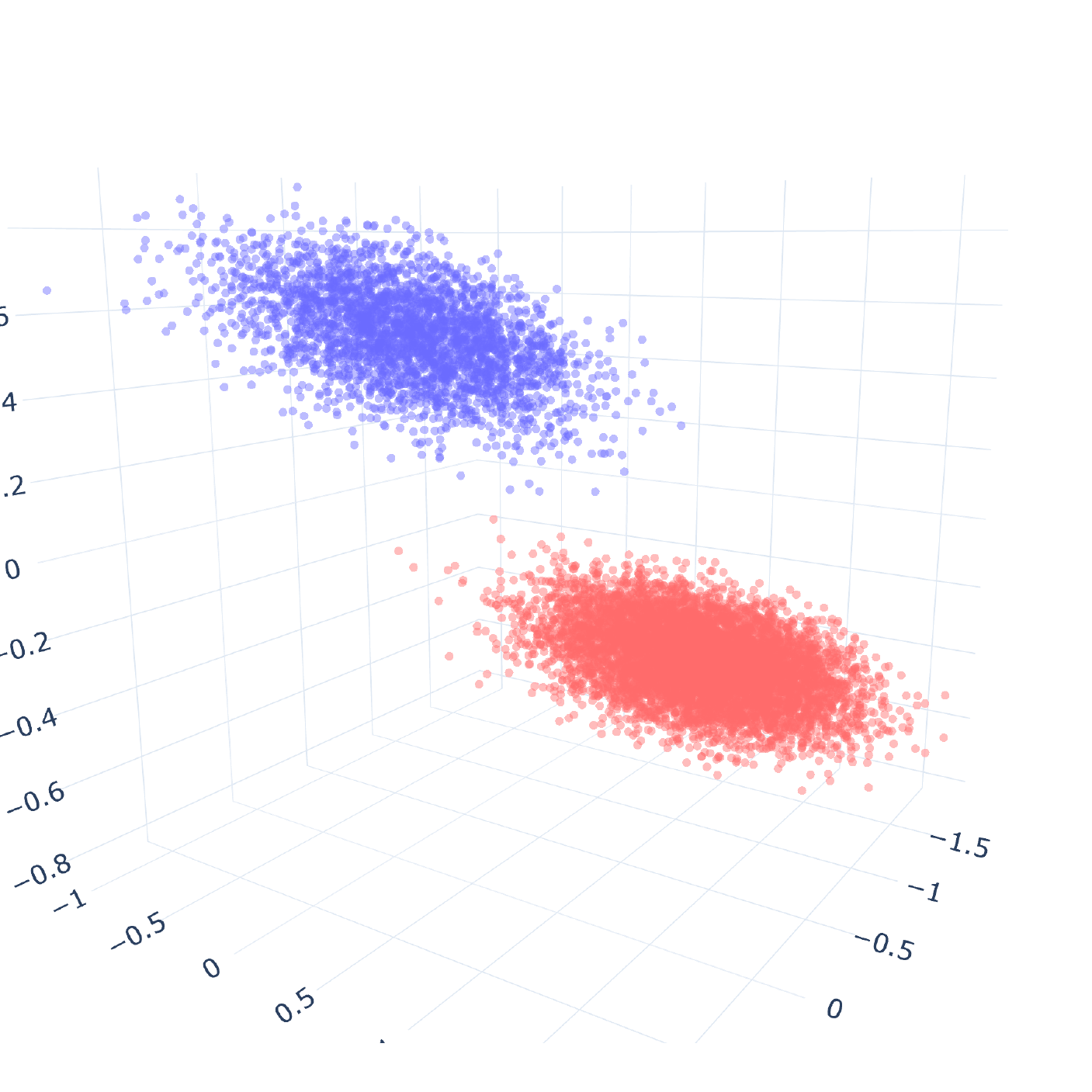}
    \caption{$d=40$,\\ \hphantom{(a)} excess risk 0.0289.}
\end{subfigure}
\begin{subfigure}[t]{0.3\textwidth}
    \includegraphics[width=\textwidth]{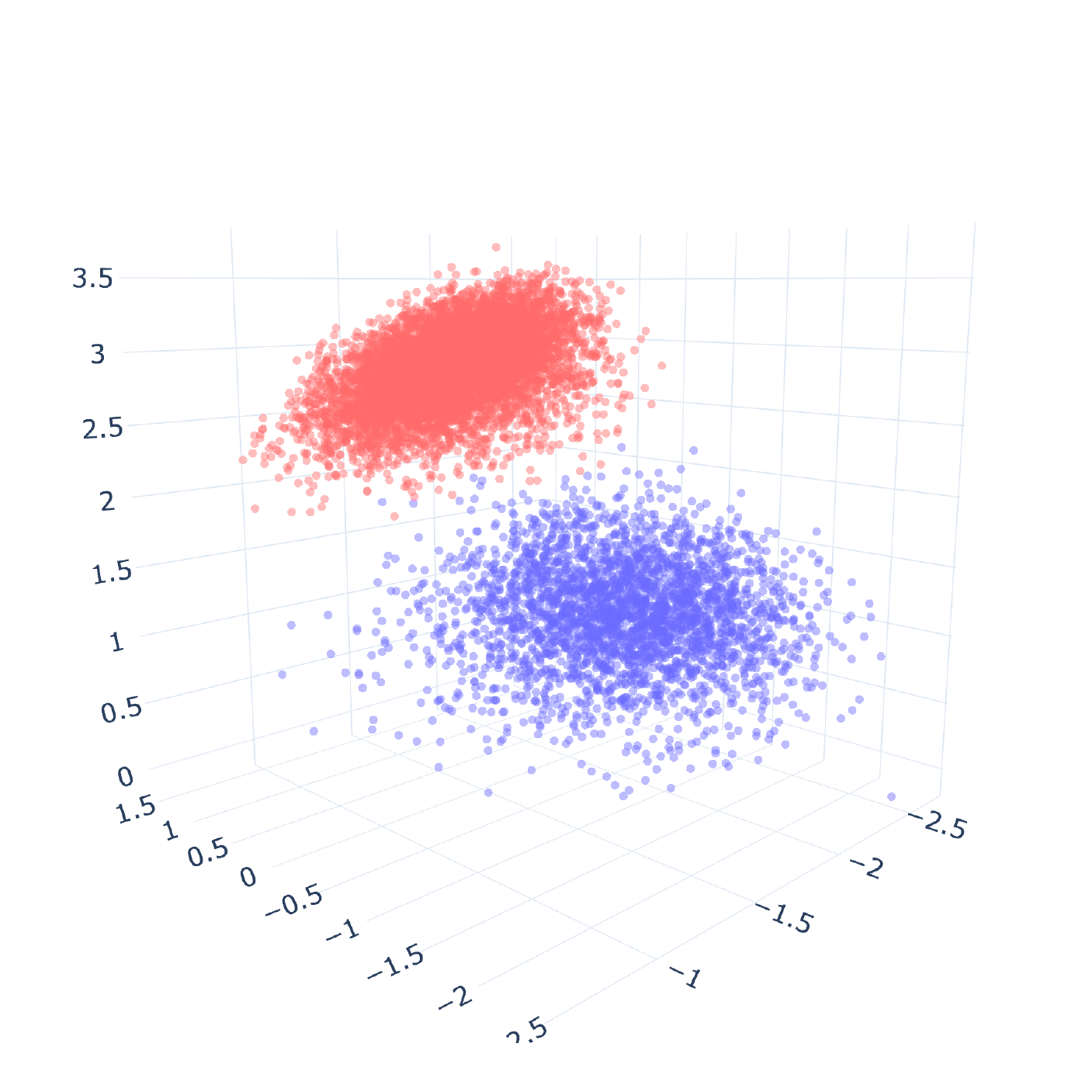}
    \caption{$d=40$,\\ \hphantom{(c)} excess risk 0.0523.}
\end{subfigure}
\begin{subfigure}[t]{0.3\textwidth}
    \includegraphics[width=\textwidth]{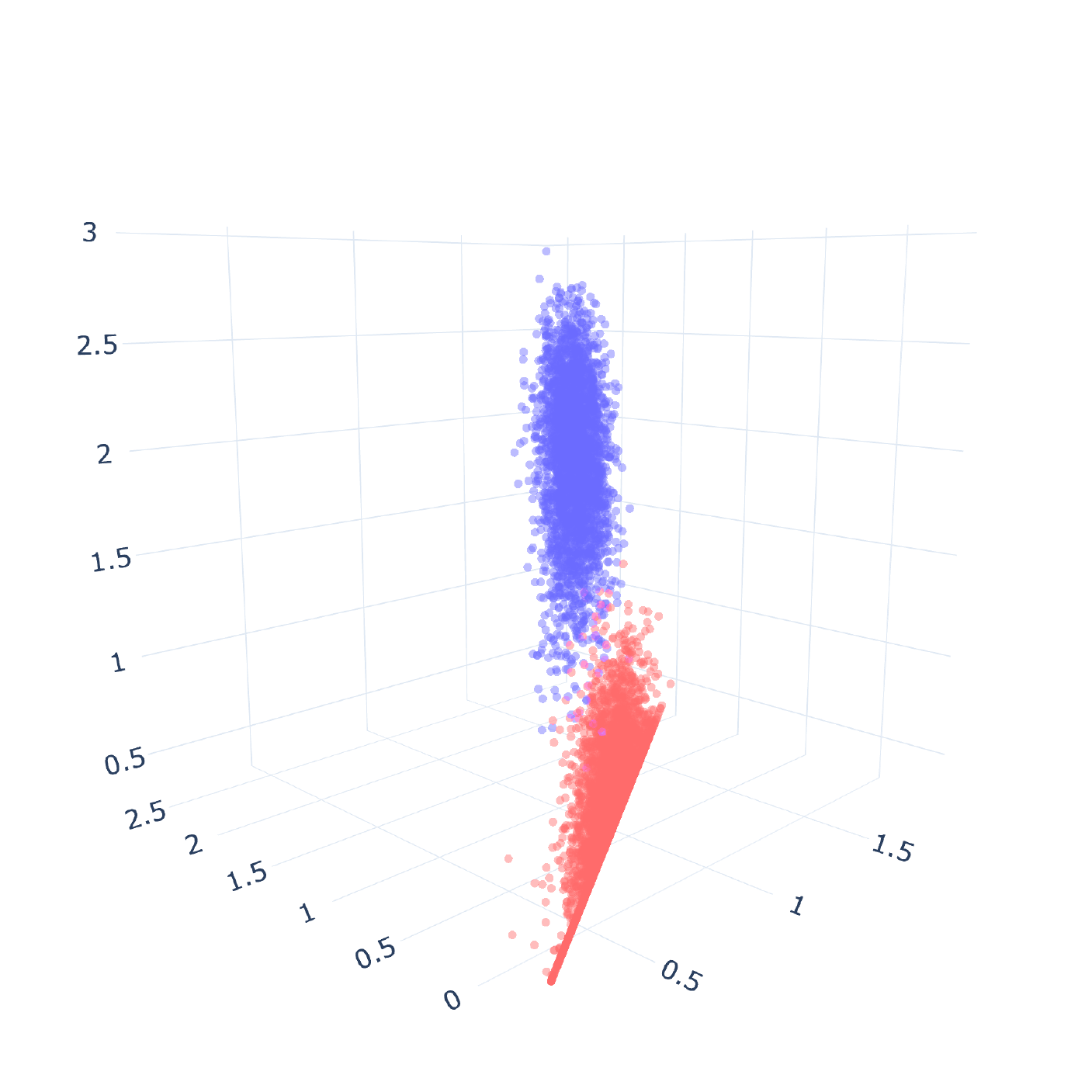}
    \caption{$d=40$,\\ \hphantom{(b)} excess risk 0.0680.}
\end{subfigure}\\ \hspace{0.75cm}

\begin{subfigure}[t]{0.3\textwidth}
    \includegraphics[width=\textwidth]{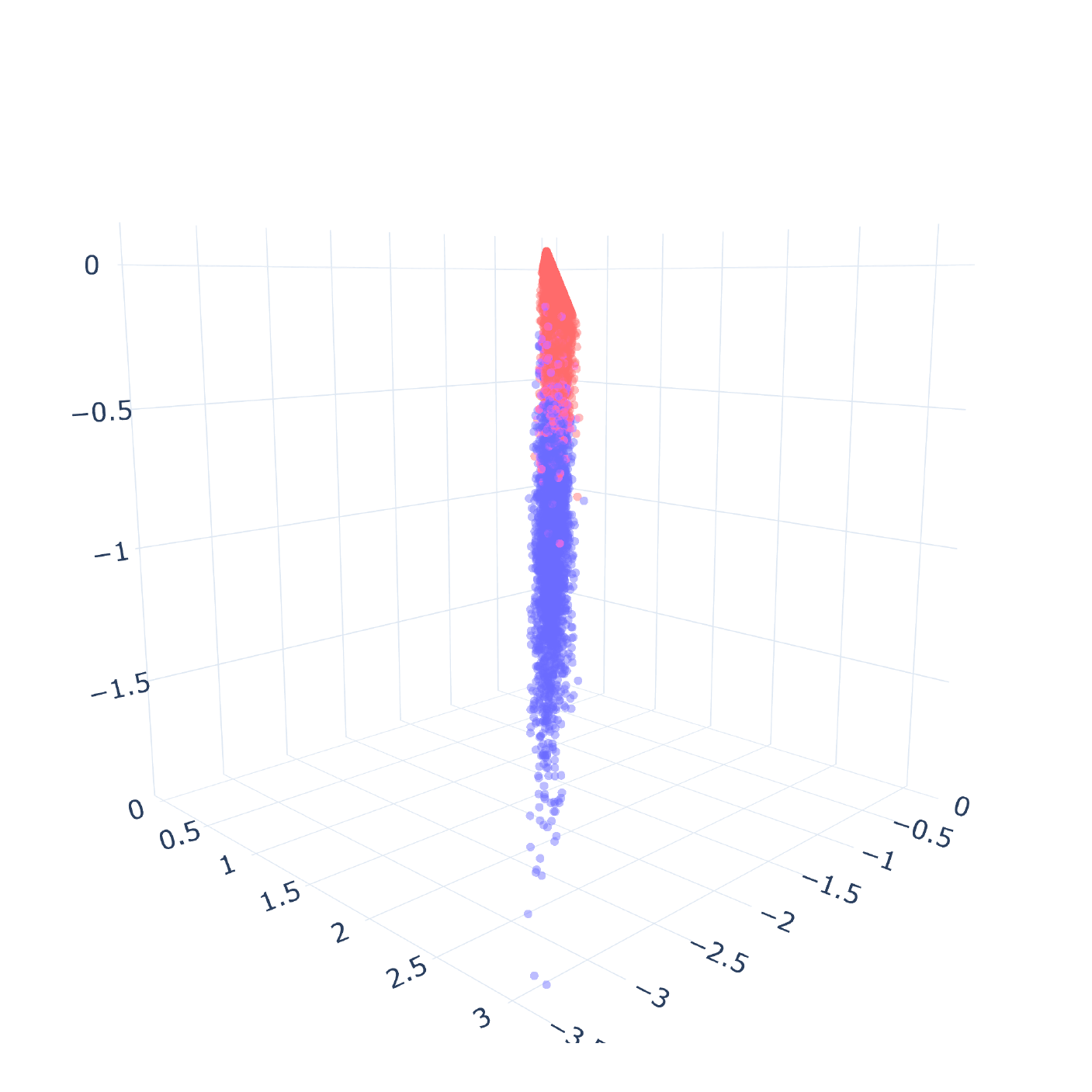}
    \caption{$d=100$, \\\hphantom{(d)} excess risk 0.0421.}
\end{subfigure}
\begin{subfigure}[t]{0.3\textwidth}
    \includegraphics[width=\textwidth]{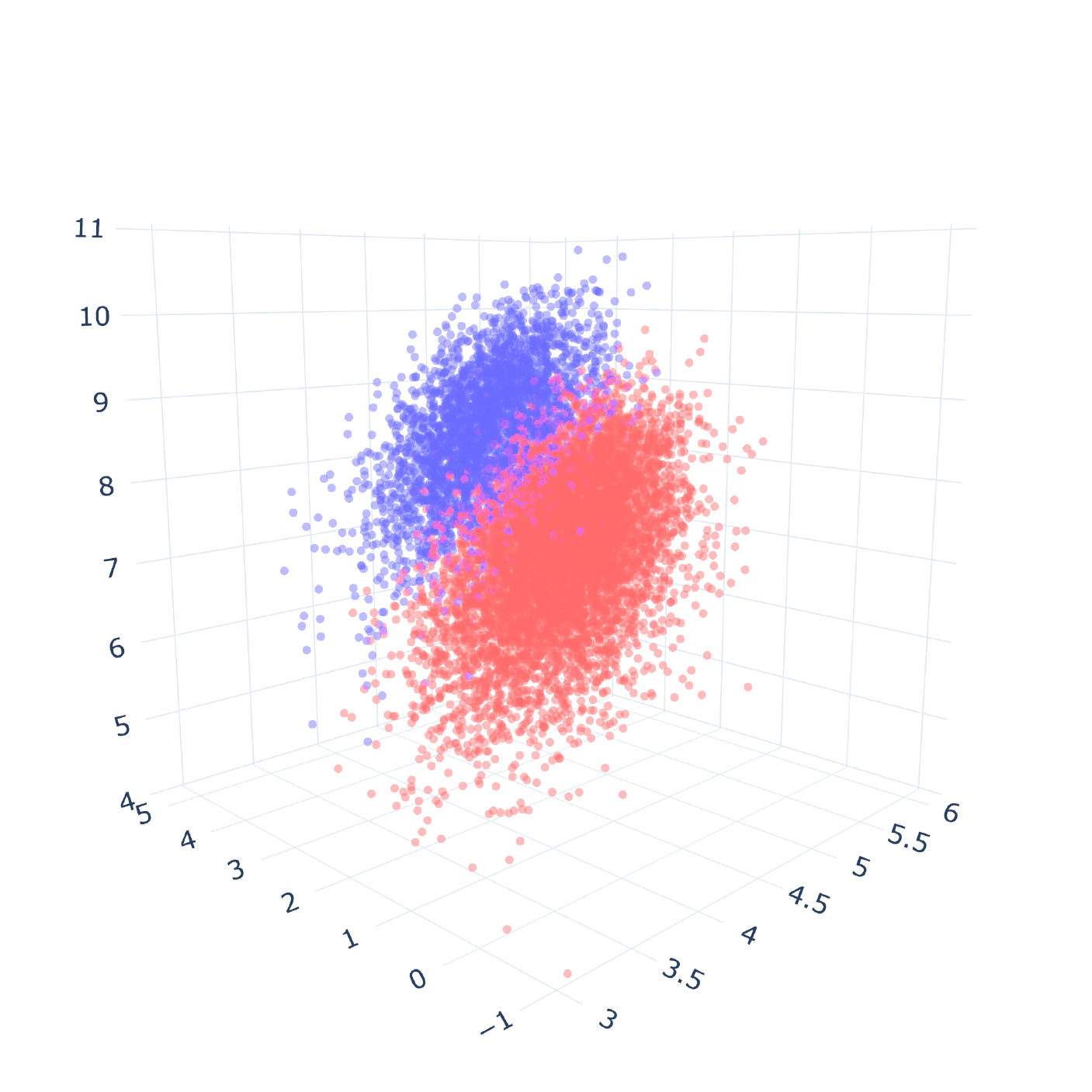}
    \caption{$d=100$, \\\hphantom{(e)} excess risk 0.0866.}
\end{subfigure}
\begin{subfigure}[t]{0.3\textwidth}
    \includegraphics[width=\textwidth]{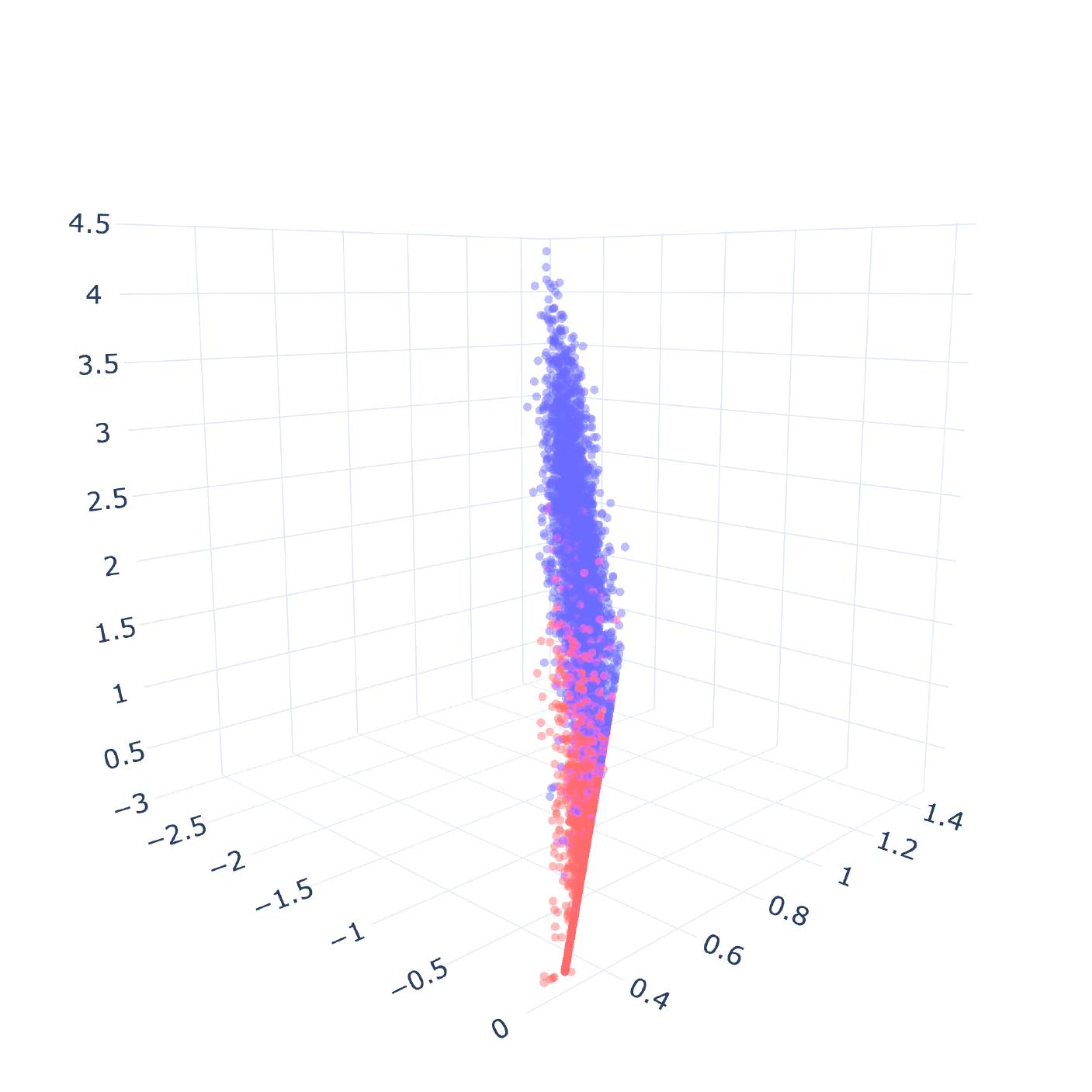}
    \caption{$d=100$, \\\hphantom{(f)} excess risk 0.1074.}
\end{subfigure}

\caption{Visualisations of the 3-dimensional embeddings of the observation patterns extracted from six trained PENN estimators under the settings of Example~\ref{example:f-star} with $d\in\{40,100\}$ and $n=1000$. The blue and red points correspond to the patterns whose first coordinates are zero and one, respectively. The subcaptions indicate dimension and the excess risks of the corresponding PENNs on test data.} 
\label{fig:embedding}
\end{figure}


Our aim is to estimate $f^{\star}$ by regressing $(Y_i)_{i=1}^n$ onto $(\bm Z_i,\bm\Omega_i)_{i=1}^n$ using neural networks, while borrowing strength across similar observation patterns.  Naive training of a neural network using the concatenated vectors $(\bm Z_i,\bm\Omega_i)_{i=1}^n$ as covariates leads to significant overfitting when the dimension $d$ is large (see Figure~\ref{fig:quadratic-d}), and, indeed the excess risk may be worse than that obtained by regressing $(Y_i)_{i=1}^n$ on $(\bm Z_i)_{i=1}^n$, ignoring the revelation vectors.  The main problem is that the number of observation patterns $|\mathcal{S}|$ may grow exponentially in $d$, so each data point may even have a unique observation pattern, and naive training is unable to borrow strength across similar observation patterns.  Moreover, while the revelation vectors $\bm\Omega_i\in\{0,1\}^d$ may carry highly useful information for prediction, the raw Euclidean geometry of the hypercube $\{0,1\}^d$ may align poorly with the predictive structure of the problem. For instance, in Example~\ref{example:f-star}, two revelation vectors that differ in all but the first coordinate yield identical regression functions, despite the fact that their Euclidean distance of $\sqrt{d-1}$ is almost maximal. As a result, direct incorporation of the revelation vectors as covariates forces the model to fit into an extraneous geometric structure that almost amounts to noise. 

To remedy the issue raised in the previous paragraph, we seek to construct a learned embedding of the revelation vectors into a lower-dimensional Euclidean space where geometric proximity corresponds to similarity in predictive behaviour (see Figure~\ref{fig:embedding}). Such an embedding allows the model to borrow statistical strength across related missingness patterns, so as to provide better generalisation for rare patterns.  Figure~\ref{fig:embedding} illustrates the predictive value of our learned embeddings of the revelation vectors in Example~\ref{example:f-star}, and in particular the fact that better separation in our embeddings is associated with lower excess risks.  Naturally, clusters tend to have greater overlap in higher dimensions, reflecting the increased difficulty of the problem; nonetheless, even when $d=100$, our learned three-dimensional embedding still provides considerable predictive information.  The embedding illustrated takes the form of a neural network that maps the $d$-dimensional revelation vectors to vectors in $\mathbb{R}^m$ for some \emph{embedding dimension} $m \leq d$ (so $m=3$ in the figure).  As a secondary consideration, the imputed covariates $\bm Z_1,\ldots,\bm Z_n$ can also benefit from a nonlinear feature map, before both sets of transformed variables are combined in a further flexible regression model that is able to capture appropriate interactions.  We again take these nonlinear functions to be neural networks, which are convenient for both modelling and fitting.

\begin{figure}[htbp]
    \centering
    \begin{tikzpicture}[scale = 0.8,
        input neuron/.style={
            circle,
            draw=green,
            fill=green!20,
            minimum size=1cm,
            inner sep=0pt,
            scale = 0.8
        },
        hidden neuron/.style={
            circle,
            draw=blue,
            fill=blue!20,
            minimum size=1cm,
            inner sep=0pt,
            scale = 0.8
        },
        output neuron/.style={
            circle,
            draw=orange,
            fill=orange!20,
            minimum size=1cm,
            inner sep=0pt,
            scale = 0.8
        },
        neuron missing/.style={
            draw=none,
            minimum width=1cm,
            minimum height=1cm,
            inner sep=0pt,
            text centered,
            execute at begin node={\centering\color{black}$\vdots$}
        },
        arrow style/.style={
        ->, >={latex}, opacity=0.8, line width=0.3mm
        }]

        \node[input neuron] (I-1) at (0,3.75) {$z_1$};
        \node[neuron missing] at (0,2.65) {};
        \node[input neuron] (I-2) at (0,1.25) {$z_d$};
        \node[input neuron] (I-3) at (0,-1.25) {$\omega_1$};
        \node[neuron missing] at (0,-2.35) {};
        \node[input neuron] (I-4) at (0,-3.75) {$\omega_d$};

        \node[hidden neuron] (H-1-1) at (2.5,4.15) {};
        \node[neuron missing] at (2.5,2.65) {};
        \node[hidden neuron] (H-1-2) at (2.5,0.85) {};
        \node[hidden neuron] (H-1-3) at (2.5,-0.85) {};
        \node[neuron missing] at (2.5,-2.35) {};
        \node[hidden neuron] (H-1-4) at (2.5,-4.15) {};

        \node[hidden neuron] (H-2-1) at (5,4) {};
        \node[neuron missing] at (5,2.65) {};
        \node[hidden neuron] (H-2-2) at (5,1) {};
        \node[hidden neuron] (H-2-3) at (5,-1.65) {$f_{2,1}$};
        \node[neuron missing] at (5,-2.35) {};
        \node[hidden neuron] (H-2-4) at (5,-3.35) {$f_{2,m}$};

        \node[hidden neuron] (H-3-1) at (7.5,1.8) {};
        \node[neuron missing] at (7.5,0.2) {};
        \node[hidden neuron] (H-3-2) at (7.5,-1.8) {};

        \node[hidden neuron] (H-4-1) at (10,1.8) {};
        \node[neuron missing] at (10,0.2) {};
        \node[hidden neuron] (H-4-2) at (10,-1.8) {};

        \node[output neuron] (O-1) at (12.5,0) {$y$};

        \draw[arrow style] (I-1) -- (H-1-1);
        \draw[arrow style] (I-1) -- (H-1-2);
        \draw[arrow style] (I-2) -- (H-1-1);
        \draw[arrow style] (I-2) -- (H-1-2);
        \draw[arrow style] (I-3) -- (H-1-3);
        \draw[arrow style] (I-3) -- (H-1-4);
        \draw[arrow style] (I-4) -- (H-1-3);
        \draw[arrow style] (I-4) -- (H-1-4);

        \draw[arrow style] (H-1-1) -- (H-2-1);
        \draw[arrow style] (H-1-1) -- (H-2-2);
        \draw[arrow style] (H-1-2) -- (H-2-1);
        \draw[arrow style] (H-1-2) -- (H-2-2);
        \draw[arrow style] (H-1-3) -- (H-2-3);
        \draw[arrow style] (H-1-3) -- (H-2-4);
        \draw[arrow style] (H-1-4) -- (H-2-3);
        \draw[arrow style] (H-1-4) -- (H-2-4);

        \draw[arrow style] (H-2-1) -- (H-3-1);
        \draw[arrow style] (H-2-1) -- (H-3-2);
        \draw[arrow style] (H-2-2) -- (H-3-1);
        \draw[arrow style] (H-2-2) -- (H-3-2);
        \draw[arrow style] (H-2-3) -- (H-3-1);
        \draw[arrow style] (H-2-3) -- (H-3-2);
        \draw[arrow style] (H-2-4) -- (H-3-1);
        \draw[arrow style] (H-2-4) -- (H-3-2);

        \draw[arrow style] (H-3-1) -- (H-4-1);
        \draw[arrow style] (H-3-1) -- (H-4-2);
        \draw[arrow style] (H-3-2) -- (H-4-1);
        \draw[arrow style] (H-3-2) -- (H-4-2);

        \draw[arrow style] (H-4-1) -- (O-1);
        \draw[arrow style] (H-4-2) -- (O-1);

        \draw[dashed, rounded corners=5mm] (-0.7, 4.8) rectangle (5.7, 0.2);
        \draw[dashed, rounded corners=5mm] (-0.7, -4.8) rectangle (5.7, -0.2);
        \draw[dashed, rounded corners=5mm] (4.2, -4.1) rectangle (13.3, 4.6);

        \draw[->, thick, bend left=20] (2,4.8) to (3,5.3);
        \draw[->, thick, bend right=20] (2,-4.8) to (3,-5.3);
        \draw[->, thick, bend left=20] (8.5,4.6) to (9.5,5.3);

        \node[right] at (3,5.3) {$\bm f_1$};
        \node[right] at (3,-5.3) {$\bm f_2$};
        \node[right] at (9.5,5.3) {$\bm f_3$};
    \end{tikzpicture}
    \caption{An illustration of the class $\mathcal{F}_{\mathrm{PENN}}\bigl((L_r,\bm{p}_r)_{r=1}^3,s\bigr)$.} \label{fig:PENN-diagram}
\end{figure}

Our \emph{pattern embedded neural networks} (PENNs) therefore take the form $(\bm z,\bm \omega) \mapsto \bm f_3\bigl(\bm f_1(\bm z), \bm f_2(\bm\omega)\bigr)$, where $\bm f_r$ is a neural network for $r \in \{1,2,3\}$.  Here,~$\bm f_1$ denotes the feature map for the imputed covariates, $\bm f_2$ denotes the pattern embedding  for the revelation vectors and $\bm f_3$ combines the learned covariate features and pattern embeddings.  More precisely, for $r \in \{1,2,3\}$, let $L_r \in \mathbb{N}$, suppose that $\bm p_r = (p_{r,0},\ldots,p_{r,L_r+1}) \in \mathbb{N}^{L_r+2}$ satisfies $p_{3,0} = p_{1,L_1+1} + p_{2,L_2+1}$, and let $s\in\mathbb{N}$; writing $p_{\mathrm{in}} \coloneqq p_{1,0} + p_{2,0}$ and $p_{\mathrm{out}} \coloneqq p_{3,L_3+1}$, the class of \emph{pattern embedded neural networks} (PENNs) is given by
\begin{align}
\mathcal{F}_{\mathrm{PENN}}\bigl((L_r,\bm{p}_r)_{r=1}^3 &,s\bigr) \coloneqq \biggl\{ \bm f : \mathbb{R}^{p_{\mathrm{in}}} \to \mathbb{R}^{p_{\mathrm{out}}} : \bm f(\cdot,\cdot\cdot) = \bm f_3\bigl(\bm f_1(\cdot), \bm f_2(\cdot\cdot)\bigr), \nonumber \\
        & \text{where } \bm f_r \in \mathcal{F}(L_r,\bm p_r) \text{ for } r\in\{1,2,3\} \text{ and } \sum_{r=1}^3 \|\bm\Theta(\bm f_r)\|_0 \leq s \biggr\}. \label{eq:PENN-def}
\end{align}
See Figure~\ref{fig:PENN-diagram} for an illustration of a PENN.  In the setting of Section~\ref{sec:setup}, we seek an estimator $\hat{f}$ that minimises the empirical risk $\hat{R}_n(f)$ over an appropriate PENN class as defined in~\eqref{eq:PENN-def}.  To this end, our PENN is implemented as a single neural network (using standard software such as \texttt{PyTorch} \citep{paszke2019pytorch}) and trained by gradient methods, though we remark that the optimisation problem involved is non-convex, and our theory in Section~\ref{sec:theoretical-results} therefore allows for a residual optimisation error.  In our applications, we will set $p_{1,0} = p_{2,0} = d$, $p_{2,L_2+1} = m$ and $p_{3,L_3+1} = 1$.  

\section{Theoretical results} \label{sec:theoretical-results}

The main goal of our theory in this section is to derive a theoretical guarantee on the performance of the PENN methodology introduced in Section~\ref{sec:PENN}.  Our overall strategy is first to decompose the excess risk of our estimator into a sum of three terms, representing the optimisation error, approximation error and estimation error respectively; see Proposition~\ref{prop:oracle-inequality} below.  This decomposition is quite general, in the sense that in addition to requiring no conditions on the dependence structure of the data generating and missingness mechanisms, we also make no assumptions on the form of the Bayes regression function $f^\star$ and consider an arbitrary neural network estimator.  A consequence of this generality, however, is that the approximation error term is not very explicit, in that it simply reflects the extent to which $f^\star$ can be approximated by an element of our neural network class.  For this reason, we then introduce natural assumptions that allow us to borrow strength across different observation patterns, as well as smoothness conditions, that provide an explicit form for the approximation error.  Thus, in our main result (Theorem~\ref{thm:PENN-ub}), we are able to choose the architecture of our PENN estimator so as to balance the approximation and estimation error terms to give a finite-sample excess risk bound that holds under no assumptions on the form of the missingness mechanism.  This upper bound is complemented by a minimax lower bound (Theorem~\ref{thm:minimax-lb}) that reveals the optimality of the PENN estimator up to poly-logarithmic factors in the sample size.

\subsection{Oracle inequality} \label{sec:oracle-inequality}

Following the strategy described above, let $L\in\mathbb{N}$ and $\bm{p}=(p_0,\ldots,p_{L+1})^\top \in \mathbb{N}^{L+2}$ with $p_0=2d$ and $p_{L+1}=1$.
By a \emph{neural network estimator} in $\mathcal{F} \subseteq \mathcal{F}(L,\bm p,s)$, we mean a jointly measurable function $\tilde{f}:\mathbb{R}^d \times \mathcal{S} \times (\mathbb{R}^d \times \mathcal{S} \times \mathbb{R})^n  \rightarrow \mathbb{R}$ such that for each $\mathcal{D} \coloneqq (\bm z_i,\bm \omega_i,y_i)_{i=1}^n \in (\mathbb{R}^d \times \mathcal{S} \times \mathbb{R})^n$, the function $\tilde{f}(\cdot,\cdot;\mathcal{D}) \in \mathcal{F}$.  Where the data are clear from context, we often omit the final argument of~$\tilde{f}$. 

\begin{prop}\label{prop:oracle-inequality}
    In the setting of Section~\ref{sec:setup}, assume that $\|Y_0\|_{\psi_2} \leq \xi$ for some $\xi\geq 1$. Let~$\tilde{f}$ be a neural network estimator in $\mathcal{F}\subseteq \mathcal{F}(L,\bm p, s)$ based on data $\mathcal{D} \coloneqq (\bm Z_i,\bm \Omega_i,Y_i)_{i=1}^n$ and let $B_n \coloneqq \xi\sqrt{2\log n}$.  Then there exists a universal constant $C>0$ such that 
    for $n \geq 2$,
    \begin{equation*}
    \begin{aligned}
        \mathbb{E}\bigl\{ R(T_{B_n}\tilde{f})\bigr\} - R(f^{\star}) &\leq 2\mathbb{E}\Bigl\{ \hat{R}_n(\tilde{f}) - \inf_{f\in\mathcal{F}}\hat{R}_n(f) \Bigl\} + 2\inf_{f\in\mathcal{F}} \mathbb{E}\Bigl\{ \bigl(f(\bm Z_0,\bm \Omega_0) - f^{\star}(\bm Z_0,\bm \Omega_0)\bigr)^2 \Bigr\}\nonumber\\
        &\hspace{3cm} + \frac{C\xi^4\log(e\xi) \log^3 n \cdot \bigl(sL\log(es) + s\log(ed)\bigr)}{n}.
        \end{aligned}
    \end{equation*}
\end{prop}
As a simple illustration of the bound on $\|Y_0\|_{\psi_2}$ in Proposition~\ref{prop:oracle-inequality}, if $Y_0 = f^0(\bm X_0) + \varepsilon_0$ where $\|f^0\|_{\infty} \leq \xi_1$ and $\|\varepsilon_0\|_{\psi_2} \leq \xi_2$, then $\|Y_0\|_{\psi_2} \leq \xi_1/\sqrt{\log 2} + \xi_2$. The upper bound on the excess risk in Proposition~\ref{prop:oracle-inequality} is a sum of three terms, where the first term corresponds to the optimisation error, the second term represents the approximation error, and the last term corresponds to estimation error and reflects the complexity of the function class $\mathcal{F}$.  If~$\tilde{f}$ is an empirical risk minimiser, then the optimisation error term vanishes.  Importantly, we do not insist that $f^\star \in \mathcal{F}$, though if it does, then we may take the approximation error term to be zero.  The estimation error term is of order $sL/n$ up to poly-logarithmic factors.  In general, there is a trade-off between the approximation and estimation error terms that is akin to a bias--variance trade-off: more complex classes $\mathcal{F}$ will have smaller approximation error, but the price to be paid is through larger values of $s$ and $L$ in the estimation error term.  Related results to Proposition~\ref{prop:oracle-inequality} in the context of fully observed data include those of \citet{schmidt-hieber2020nonparametric}, who has an additional boundedness assumption on the parameter and input spaces, and \citet{kohler2021rate}, who work with fully dense neural networks.  The key to our proof is a new bound on the Vapnik--Chervonenkis dimension and covering numbers of the $\mathcal{F}(L,\bm p,s)$ class given in Proposition~\ref{prop:VC-upper-bound}, which allows us to remove constraints in prior work on the boundedness of covariates and the parameters in our neural networks.

\subsection{Minimax rate under a piecewise smoothness assumption} \label{sec:minimax-rate}


In principle, the functions $\bm z \mapsto f^\star(\bm z, \bm \omega)$ may be arbitrarily different as $\bm \omega$ varies through~$\mathcal{S}$, and indeed Proposition~\ref{prop:oracle-inequality} does not restrict the differences between these functions as~$\bm \omega$ varies.  In the worst case, then, we may have $|\mathcal{S}| = 2^d$ different regression functions to learn!  Fortunately, in many circumstances, it is reasonable to postulate that the Bayes regression function at certain different observation patterns may be similar, and this will provide great advantages for our theory. 



\begin{assumption} \label{assumption:piecewise-assumption}
    Assume that each coordinate of $\bm Z_0$ is sub-exponential and that $Y_0$ is sub-Gaussian, i.e.~there exist $\xi_1,\xi_2>0$ such that $\|Z_{0,j}\|_{\psi_1} \leq \xi_1$ for all $j\in[d]$ and $\|Y_0\|_{\psi_2} \leq \xi_2$. Further assume that there exist $K\leq |\mathcal{S}|$, a partition $\{\mathcal{S}_1,\ldots,\mathcal{S}_K\}$ of $\mathcal{S}$, and functions $f^{\mathcal{S}_1},\ldots,f^{\mathcal{S}_K} : \mathbb{R}^d \to \mathbb{R}$ such that\footnote{More formally, since $f^\star$ is only defined up to sets having zero measure under the distribution of $(\bm{Z}_0,\bm{\Omega}_0)$, we ask that there exists a version of $f^\star$ for which the statement in Assumption~\ref{assumption:piecewise-assumption} holds.}
    \begin{align}
        f^{\star}(\bm z, \bm\omega) = \sum_{k=1}^K f^{\mathcal{S}_k}(\bm z) \mathbbm{1}_{\{\bm\omega \in \mathcal{S}_k\}} \label{eq:piecewise-f-star}
    \end{align}
    for all $\bm z \in \mathbb{R}^d$ and $\bm\omega \in \mathcal{S}$.  For $k \in [K]$, we write $\pi_k \coloneqq \mathbb{P}(\bm \Omega_0 \in \mathcal{S}_k)$ and write $n_k \coloneqq n\pi_k$ for the \emph{effective sample size} in the $k$th cell. 
\end{assumption}
The tail condition on $\bm Z_0$ allows our covariates to be unbounded, in contrast to much of the literature on nonparametric regression.  Of course, any Bayes regression function~$f^{\star}$ always satisfies~\eqref{eq:piecewise-f-star} with $K = |\mathcal{S}|$. However,~$K$ may be much smaller than $|\mathcal{S}|$; e.g.~in Example~\ref{example:f-star} we have $|\mathcal{S}| = 2^d$ and $K=2$, since we may take $\mathcal{S}_1 = \bigl\{\bm{\omega} \in \{0,1\}^d:\omega_1 = 1\bigr\}$ and $\mathcal{S}_2 = \bigl\{\bm{\omega} \in \{0,1\}^d:\omega_1 = 0\bigr\}$.  In such circumstances, we can borrow strength across data whose revelation vectors lie in the same cell of the partition.  It turns out that Assumption~\ref{assumption:piecewise-assumption} may be weakened to requiring only that~\eqref{eq:piecewise-f-star} holds approximately; see the discussion following Theorem~\ref{thm:PENN-ub} below. 

Given an arbitrary partition $\{\mathcal{S}_1,\ldots,\mathcal{S}_K\}$ of $\mathcal{S}$, there exists a Bayes regression function~$f^{\star}$ that satisfies Assumption~\ref{assumption:piecewise-assumption} with that partition; see Lemma~\ref{lemma:arbitrary-partition}. 
Beyond the cardinality $K$ of this partition, its complexity can be measured through the complexity of classes of functions (e.g.~neural networks) that can cluster observation patterns in the same cell of the partition.  To this end, we introduce a notion of separability of a partition of~$\mathcal{S}$.
\begin{defn} \label{defn:F(L,p,s)-separable}
Given a class $\mathcal{F}$ of functions from $\mathbb{R}^d$ to $\mathbb{R}^m$, we say a partition $\{\mathcal{S}_1,\ldots,\mathcal{S}_K\}$ of~$\mathcal{S} \subseteq \{0,1\}^d$ is \emph{$\mathcal{F}$-separable} if there exist $\bm f \in \mathcal{F}$, $\bm v_1,\ldots,\bm v_K \in \mathbb{R}^m$ and $\epsilon>0$ such that 
    \begin{itemize}
        \item[(i)] $\|\bm f(\bm\omega) - \bm v_k\|_{\infty} \leq \epsilon/2$ for all $k\in[K]$ and $\bm\omega \in \mathcal{S}_k$;
        \item[(ii)] $\|\bm v_k - \bm v_{k'}\|_{\infty} \geq 2\epsilon$ for all $k \neq k'$.
    \end{itemize}
    In this case, we say that $\{\mathcal{S}_1,\ldots,\mathcal{S}_K\}$ is separated by $\bm f$.
\end{defn}
Thus $\{\mathcal{S}_1,\ldots,\mathcal{S}_K\}$ is separated by $\bm f$ if the scale over which $\bm f$ varies on each $\mathcal{S}_k$ is small by comparison with its variability across different cells of the partition. Proposition~\ref{prop:F(L,p,s)-separable} guarantees that an arbitrary partition $\{\mathcal{S}_1, \ldots, \mathcal{S}_K\}$ of $\mathcal{S}$ is $\mathcal{F}(2,\bm p)$-separable for a suitable choice of $\bm p$. Moreover, if the partition is defined by a small number of coordinates (as in part~\emph{(a)} of that result) or a small number of halfspaces (as in part~\emph{(b)}), then the partition may be separated by a class of neural networks with fewer parameters.

We now introduce smoothness assumptions that will be imposed on each~$f^{\mathcal{S}_k}$ in~\eqref{eq:piecewise-f-star}.    
It is convenient to invoke multi-index notation for partial derivatives, whereby for $\bm\alpha=(\alpha_1,\ldots,\alpha_d)^\top \in \mathbb{N}_0^d$ and an $\|\bm\alpha\|_1$-times differentiable real-valued function $f$ defined on a subset of $\mathbb{R}^d$, we set $\partial^{\bm\alpha}f \coloneqq \partial^{\alpha_1}\cdots\partial^{\alpha_d}f$.  We also write $\bm{x}^{\bm{\alpha}} := \prod_{j=1}^d x_j^{\alpha_j}$ for $\bm{x} = (x_1,\ldots,x_d)^\top \in \mathbb{R}^d$.
\begin{defn}
    For $\beta,\gamma>0$, $d\in\mathbb{N}$, $t\in[d] \cup \{0\}$ and $D\subseteq \mathbb{R}^d$, we write $\beta_0 \coloneqq \lceil \beta \rceil -1$ and define the class of $(\beta,\gamma)$-H\"older functions that depend only on $t$ variables\footnote{For $D \subseteq \mathbb{R}^d$ and $f:D \to \mathbb{R}$, we say that $f$ \emph{depends only on $t$ variables} if there exists $\mathcal{J} \subseteq [d]$ with $|\mathcal{J}| = t$ such that for all $\bm{x}, \bm{y} \in \mathbb{R}^d$ with $\bm{x}_{\mathcal{J}} = \bm{y}_{\mathcal{J}}$, we have $f(\bm{x}) = f(\bm{y})$, where $\bm x_{\mathcal{J}}, \bm{y}_{\mathcal{J}} \in \mathbb{R}^t$ denote the subvectors of $\bm x$ and $\bm y$ respectively with indices in $\mathcal{J}$; when $\mathcal{J} = \emptyset$, this means that $f$ is constant.} as
    \begin{align*}
        \mathcal{H}_t^{\beta}(D,\gamma) &\coloneqq \biggl\{ f:D\to\mathbb{R} : \text{$f$ depends only on $t$ variables, $f$ is $\beta_0$-times differentiable,}\\
        &\hspace{1cm}\max_{\bm \alpha\in\mathbb{N}_0^d \,:\, \|\bm\alpha\|_1 \leq \beta_0} \|\partial^{\bm\alpha} f\|_{\infty} \leq \gamma,\, \max_{\bm\alpha\in\mathbb{N}_0^d \,:\, \|\bm\alpha\|_1 = \beta_0}\; \sup_{\bm x\neq \bm y \in D} \frac{|\partial^{\bm\alpha} f(\bm x) - \partial^{\bm\alpha} f(\bm y)|}{\|\bm x-\bm y\|_{2}^{\beta-\beta_0}} \leq \gamma \biggr\}.
    \end{align*}
    We also write $\mathcal{H}_t^\beta(D,\mathbb{R}^{d'},\gamma)$ to denote the set of vector-valued functions $\bm g:D \rightarrow \mathbb{R}^{d'}$ all of whose component functions belong to $\mathcal{H}_t^\beta(D,\gamma)$.
\end{defn}
Our functions $f^{\mathcal{S}_k}$ will be assumed to be compositions of vector-valued functions whose corresponding component functions belong to these H\"older classes.  Definition~\ref{defn:holder-composition-class} below generalises the standard notion of H\"older smoothness to compositional classes of functions; such definitions have been widely employed in the theoretical analysis of deep learning \citep[e.g.][]{schmidt-hieber2020nonparametric,kohler2021rate,fan2024factor}.
\begin{defn} \label{defn:holder-composition-class}
    Let $q\in\mathbb{N}$, $\bm d = (d_r)_{r=1}^{q+1} \in \mathbb{N}^{q+1}$ with $d_1=d$ and $d_{q+1} = 1$, $\bm t = (t_r)_{r=1}^q \in \mathbb{N}_0^q$ with $t_r \in [d_r] \cup \{0\} \;\forall r\in[q]$, $\bm\beta = (\beta_r)_{r=1}^q \in (0,\infty)^q$ and $\bm\gamma = (\gamma_r)_{r=1}^q \in (0,\infty)^q$. We define $\mathcal{H}_{\mathrm{comp}}(q,\bm d, \bm t, \bm\beta, \bm\gamma)$ to be the class of functions $f:\mathbb{R}^d \to \mathbb{R}$ of the form
    \begin{align*}
        f = \bm g_q \circ \cdots \circ \bm g_1 
    \end{align*} 
    where\footnote{Although $\bm g_{q}$ is a real-valued function, we write it with a bold letter for convenience.} $\bm g_{r} \in \mathcal{H}_{t_r}^{\beta_r}(\mathbb{R}^{d_r}, \mathbb{R}^{d_{r+1}}, \gamma_r)$ for all $r \in [q]$.  
\end{defn}
As an example, if $f$ has a single index structure, i.e. $f(x) = g_2(a^\top x)$, for some $a \in \mathbb{R}^d$ and $g_2\in\mathcal{H}_1^{\beta}(\mathbb{R},\gamma)$, then $f$ belongs to the compositional H\"older class with $g_1:x\mapsto a^\top x$ being a linear function (which is infinitely smooth) depending on all $d$ variables, and~$g_2$ being a $\beta$-H\"older function on $\mathbb{R}$. 

Given $f = \bm g_q \circ \cdots \circ \bm g_1 \in \mathcal{H}_{\mathrm{comp}}(q,\bm d, \bm t, \bm\beta, \bm\gamma)$ and $r \in [q]$, the H\"older smoothness of $\bm g_q \circ \cdots \circ \bm g_r$ is
\[
\bar{\beta}_r \coloneqq \beta_r \prod_{\ell=r+1}^{q} (\beta_{\ell} \wedge 1).
\]
This then leads to the following definition.
\begin{defn} \label{defn:effective-smoothness}
The \emph{critical composition index} of $f = \bm g_q \circ \cdots \circ \bm g_1 \in \mathcal{H}_{\mathrm{comp}}(q,\bm d, \bm t, \bm\beta, \bm\gamma)$ is\footnote{Here, $\mathrm{sargmax}$ denotes the smallest element of the $\argmax$ set.}
\[
r_* \coloneqq \sargmax_{r\in [q]} \frac{t_r}{\bar{\beta}_r}.
\]
The \emph{effective smoothness} and \emph{effective dimension} of $f$ are then given by $\bar{\beta}_* \coloneqq \bar{\beta}_{r_*}$ and $t_* := t_{r_*}$ respectively.  
\end{defn}
It is important to note that the effective smoothness and effective dimension arise from a worst-case ratio, as opposed to, for instance, the individual worst cases.  For instance, returning to our single index example below Definition~\ref{defn:holder-composition-class}, we have $\bar{\beta}_* = \beta$ and $t_* = 1$, even though $f$ depends on all~$d$ covariates.  We will see in Theorem~\ref{thm:PENN-ub} below that the small effective dimension induced through its compositional structure allows us to obtain a rate of convergence that avoids the curse of dimensionality.


\begin{assumption} \label{assumption:composition-of-smooth-functions}
    We assume that $f^{\mathcal{S}_k} \in \mathcal{H}_{\mathrm{comp}}(q_k,\bm d^{(k)}, \bm t^{(k)}, \bm \beta^{(k)}, \bm\gamma^{(k)})$ for each $k\in[K]$.
\end{assumption}
We write $r_*^{(k)}$, $\bar{\beta}_*^{(k)}$ and $t_*^{(k)}$ for the critical composition index, effective smoothness and effective dimension of $f^{\mathcal{S}_k}$ respectively.  Finally, then, we can state our main result: 


\begin{thm} \label{thm:PENN-ub}
    Suppose that Assumptions~\ref{assumption:piecewise-assumption} and~\ref{assumption:composition-of-smooth-functions} hold.  Suppose further that $\{\mathcal{S}_1,\ldots,\mathcal{S}_K\}$ is $\mathcal{F}(L_2,\bm p_2,s_2)$-separable for some $L_2\in\mathbb{N}$, $\bm p_2 \in \mathbb{N}^{L_2+2}$ and $s_2\in\mathbb{N}$, and let $B_n \coloneqq \xi_2\sqrt{2\log n}$.  Then there exist $L_1,L_3 \in \mathbb{N}$, $\bm p_1 \in \mathbb{N}^{L_1+2}$, $\bm p_3 \in \mathbb{N}^{L_3+2}$ and $s \in \mathbb{N}$ such that, writing 
    \begin{align*}
        \mathcal{F} \coloneqq \mathcal{F}_{\mathrm{PENN}}\bigl((L_r,\bm{p}_r)_{r=1}^3,s\bigr),
    \end{align*}
    and letting $\hat{f}$ denote any neural network estimator in $\mathcal{F}$ based on data $\mathcal{D} \coloneqq (\bm Z_i,\bm\Omega_i,Y_i)_{i=1}^n$, we have for $n \geq 2$ that
    \begin{align*}
        \mathbb{E}\bigl\{ R(T_{B_n}\hat{f})\bigr\} - R(f^{\star}) &\leq C \biggl\{\sum_{k=1}^K \pi_k n_k^{-2\bar\beta^{(k)}_*/(2\bar\beta^{(k)}_* + t^{(k)}_*)} + \frac{s_2\log s_2}{n}\biggr\} \cdot (\log n)^{2\max_{k\in[K]} \bar{\beta}^{(k)}_1 \vee 6}\\
        &\hspace{7cm} + 2\mathbb{E}\Bigl\{\hat{R}_n(\hat{f}) - \inf_{f\in\mathcal{F}} \hat{R}_n(f) \Bigr\},
    \end{align*}
    where $C>0$ does not depend on $n$, $(\pi_k)_{k=1}^K$ or $s_2$. 
\end{thm}    

In order to understand the main messages of Theorem~\ref{thm:PENN-ub}, first suppose that we are able to compute the empirical risk minimiser exactly, so that the optimisation error $\mathbb{E}\bigl\{\hat{R}_n(\hat{f}) - \inf_{f\in\mathcal{F}} \hat{R}_n(f) \bigr\}$ is zero.  Then the excess risk of a truncated PENN estimator is controlled by the sum of two interpretable terms.  The first of these would be the minimax rate for estimating the Bayes regression function if the partition $\{\mathcal{S}_1,\ldots,\mathcal{S}_K\}$ of Assumption~\ref{assumption:piecewise-assumption} were known, up to a poly-logarithmic factor in the sample size.  It comprises a weighted average over $k \in [K]$ of the minimax rates of estimating the Bayes regression function $f^{\mathcal{S}_k}$ on the $k$th cell of the partition, with effective sample size~$n_k$, effective smoothness~$\bar{\beta}_*^{(k)}$ and effective dimension $t_*^{(k)}$; see Theorem~\ref{thm:minimax-lb} below.  
As an attraction of this weighted average form, we see that the $k$th summand
\[
\pi_k n_k^{-2\bar{\beta}_*^{(k)} / (2\bar{\beta}_*^{(k)} + t_*^{(k)})} = \pi_k^{t_*^{(k)} / (2\bar{\beta}_*^{(k)} + t_*^{(k)})} n^{-2\bar{\beta}_*^{(k)} / (2\bar{\beta}_*^{(k)} + t_*^{(k)})} \to 0 \quad\text{as $\pi_k \rightarrow 0$,}
\]
for fixed $n$; thus, rarely observed missingness patterns have negligible effect on the excess risk.  The second term in the bound in Theorem~\ref{thm:PENN-ub} is the additional error incurred due to the fact that the $\mathcal{F}(L_2,\bm p_2,s_2)$-separable partition $\{\mathcal{S}_1,\ldots,\mathcal{S}_K\}$ is unknown.  For example, in cases where the partition is defined by a small number of coordinates or a small number of halfspaces, we can expect this second term to be dominated by the first.  In particular, this occurs in the setting of Proposition~\ref{prop:F(L,p,s)-separable}\emph{(a)} with $|\mathcal{S}_{\mathcal{J}}| \lesssim \sum_{k=1}^K n_k^{t^{(k)}_*/(2\bar\beta^{(k)}_* + t^{(k)}_*)}$ and in the setting of Proposition~\ref{prop:F(L,p,s)-separable}\emph{(b)} with $K\sum_{k=1}^K P_k \lesssim \sum_{k=1}^K n_k^{t^{(k)}_*/(2\bar\beta^{(k)}_* + t^{(k)}_*)}$.   Theorem~\ref{thm:PENN-ub} also accounts for cases where the empirical risk minimiser is not computed exactly, in which case we incur an additional optimisation error.  

We next characterise the order in $n_1,\ldots,n_K$ of the parameters $L_1,L_3,\bm p_1, \bm p_3$ and~$s$ needed for Theorem~\ref{thm:PENN-ub} to hold.  Writing $\bm p_1 = (d, p_{1,*}, \ldots, p_{1,*}, p_{1,L_1+1})^\top \in \mathbb{N}^{L_1+2}$ and $\bm p_3 = (d, p_{3,*}, \ldots, p_{3,*}, 1)^\top \in \mathbb{N}^{L_3+2}$, we can see from the proof of Theorem~\ref{thm:PENN-ub} that it suffices to choose $L_1$ and $L_3$ of constant order in the sample size, together with 
\begin{gather*}
    p_{1,*} \asymp \sum_{k=1}^K n_k^{t^{(k)}_* / (4\bar{\beta}^{(k)}_* + 2t^{(k)}_*)}\log n, \quad
    p_{3,*} \asymp \sum_{k=1}^K n_k^{t^{(k)}_* / \{(L_3-3)(2\bar{\beta}^{(k)}_* + t^{(k)}_*)\}}, \\
    s \asymp s_2 + \sum_{k=1}^K n_k^{t^{(k)}_* / (2\bar{\beta}^{(k)}_* + t^{(k)}_*)}\log^2 n. 
\end{gather*}
Moreover, by~\ref{NN-enlarging} in Appendix~\ref{appendix-a} and the fact that the bound in Proposition~\ref{prop:oracle-inequality} does not depend on the widths of the hidden layers, we may increase $p_{1,*}$ and $p_{3,*}$ without affecting the bound in Theorem~\ref{thm:PENN-ub}.  This means that Theorem~\ref{thm:PENN-ub} applies to over-parametrised neural networks of appropriate sparsity. 

From the proof, we can see that the quantity $C > 0$ in Theorem~\ref{thm:PENN-ub} may be chosen as the maximum over $k\in[K]$ of polynomial functions of $d, \bm d^{(k)}, \bm t^{(k)}$, whose degrees depend only on $\bm\beta^{(k)}$ and whose coefficients depend only on $p_{2,L_2+1}, L_1, L_2, L_3, \xi_1, \xi_2, d, \bm\beta^{(k)}, \bm\gamma^{(k)}$.  In particular, if $\pi_k = 1/K$, $\beta^{(k)}_r = \beta \geq 1$ and $t^{(k)}_r = t$ for all $k\in[K]$ and $r\in[q_k]$, then the first term in the upper bound simplifies to $(K/n)^{2\beta/(2\beta+t)}$, up to a poly-logarithmic factor in $n$ that does not depend on $K$.   A final observation from the proof is that Assumption~\ref{assumption:piecewise-assumption} can be weakened so that the Bayes regression function is not required to be identical on different elements within the same $\mathcal{S}_k$; in fact, it suffices to assume that $|f^{\star}(\bm z,\bm\omega) - f^{\mathcal{S}_k}(\bm z)| \lesssim n_k^{-\bar{\beta}_*^{(k)} / (2\bar{\beta}_*^{(k)} + t_*^{(k)})}$ for $\bm z\in\mathbb{R}^d$, $k\in[K]$ and $\bm\omega \in \mathcal{S}_k$.


It is worth highlighting the novel aspects of Theorem~\ref{thm:PENN-ub} relative to prior work with fully observed data \citep[e.g.][]{schmidt-hieber2020nonparametric,kohler2021rate}.  One key challenge for us is to formulate an appropriate model to allow us to borrow strength across different missingness patterns.  This is essential in view of the fact that there can be up to $2^d$ different missingness patterns in total, and is achieved through our Assumption~\ref{assumption:piecewise-assumption}.  Likewise, the $\mathcal{F}$-separability condition in Definition~\ref{defn:F(L,p,s)-separable} that captures the complexity of the cell partition of the Bayes regression function is new.  The main challenge in the proof of Theorem~\ref{thm:PENN-ub} is to achieve the delicate balance of the  estimation and approximation errors needed to obtain the $\sum_{k=1}^K \pi_k n_k^{-2\bar{\beta}_*^{(k)}/(2\bar{\beta}_*^{(k)} + t_*^{(k)})}$ term in the bound; this term does not arise when no missingness is present.

We complement the upper bound of Theorem~\ref{thm:PENN-ub} with a corresponding minimax lower bound, that uses the notation of Assumptions~\ref{assumption:piecewise-assumption} and~\ref{assumption:composition-of-smooth-functions}, as well as Definition~\ref{defn:effective-smoothness}. Further, for $k\in[K]$, define $\mathcal{J}^{(k)} \coloneqq \{j\in[d] : \omega_j = 1 \text{ for all } \bm\omega \in \mathcal{S}_k\}$, as well as $d_*^{(k)} \coloneqq d_{r_*^{(k)}}^{(k)}$ and $\gamma_*^{(k)} \coloneqq \gamma^{(k)}_{r_*^{(k)}}$. 

\begin{thm} \label{thm:minimax-lb} 
    Let $\mathcal{P}$ be the set of all distributions of $(\bm Z_0,\bm\Omega_0,Y_0)$ where $\bm Z_0 = \mathsf{Imp}(\tilde{\bm X}_0)$, where $\tilde{\bm{X}}_0$ denotes the partially observed covariate vector and where $(\bm X_0, \bm \Omega_0, Y_0)$ satisfies Assumptions~\ref{assumption:piecewise-assumption} and~\ref{assumption:composition-of-smooth-functions} with $\xi_1,\xi_2\geq 1$.  Suppose further that for some $j_*\in[d]$, we have 
    \[  
    t_*^{(k)} \leq \min\bigl\{d_1^{(k)},\ldots,d_*^{(k)}\bigr\} \wedge \bigl|\mathcal{J}^{(k)} \setminus \{j_*\}\bigr| \wedge \gamma_*^{(k)}
    \]
    for all $k \in [K]$.  For $P\in\mathcal{P}$, let $f^{\star} \equiv f^{\star}_P \coloneqq \mathbb{E}_P(Y_0 \,|\, \bm Z_0,\bm\Omega_0)$, and let $\hat{\mathcal{F}}$ be the set of all estimators of $f^\star$ based on a sample of size $n$, i.e.~the set of Borel measurable functions from $\mathbb{R}^d \times \mathcal{S} \times (\mathbb{R}^d \times \mathcal{S} \times \mathbb{R})^n$ to $\mathbb{R}$. Then there exists $c>0$, depending only on $\xi_2$ and $(\bar{\beta}_*^{(k)}, t_*^{(k)})_{k=1}^K$, such that
    \begin{align*}
        \inf_{\hat{f} \in \hat{\mathcal{F}}} \sup_{P\in \mathcal{P}} \mathbb{E}_{P^{\otimes n}} \bigl\{R(\hat{f}) - R(f^{\star})\bigr\} \geq c \sum_{k=1}^K \pi_k n_k^{-2\bar{\beta}_*^{(k)} / (2\bar{\beta}_*^{(k)} + t_*^{(k)})}
    \end{align*}
    for all $n \in \mathbb{N}$.
\end{thm}
Theorem~\ref{thm:minimax-lb} reveals that the main term in the upper bound in Theorem~\ref{thm:PENN-ub}, namely the first estimation error term, is minimax optimal in $n$ and $\pi_1,\ldots,\pi_K$, up to a poly-logarithmic factor in $n$.  

\section{Simulations} \label{sec:simulations}

In this section, we study the empirical performance of Pattern Embedded Neural Network (PENN) estimators on simulated, semi-synthetic and real data.  Since PENN estimators can be used in conjunction with any imputation technique, we consider (columnwise) mean imputation (MI), MissForest imputation (MF) \citep{stekhoven2012missforest} and a python implementation of Multiple Imputation by Chained Equations (MICE) \citep{van2011mice} called  \texttt{IterativeImputer} (II) from \texttt{scikit-learn} \citep{pedregosa2011scikit}.  In each case, we compare PENN with standard neural networks that do not incorporate the revelation vectors as covariates. We also compare with XGBoost \citep{chen2016xgboost} and random forests \citep{breiman2001random}, which are two popular tree-based algorithms that handle missing values directly by learning default split directions for observations with missingness.  

For all of our numerical experiments, we divide the data into training, validation and test sets.  We fit the neural networks on the training data by first running the stochastic optimisation method AdamW \citep{loshchilov2018decoupled} for 10 epochs, and then keeping a proportion $\lambda \in \{0.1,0.2,0.4,0.8\}$ of weights of largest magnitude (in addition to all of the bias vectors) to obtain a sparse network.  Following the recommendation of \citet{liu2018rethinking}, we then randomly reinitialise the non-zero parameters after this pruning step using the \texttt{PyTorch} function \texttt{torch.nn.init.kaiming\_uniform\_} and retrain the sparse model via AdamW.  Early stopping \citep{prechelt2002early} is incorporated, so that the training process for each value of $\lambda$ is terminated once the validation loss fails to decrease by at least $0{.}001$ in 10 epochs.  The tuning parameter $\lambda$ is then chosen to minimise the average loss on the validation set, and finally the performance of this selected estimator on the test set is reported.  For XGBoost, we choose the \texttt{max\_depth} parameter from $\{3,6,9,12\}$ based on its performance on the validation set, while for random forests, we tune its \texttt{max\_features} parameter from $\{\lceil0.1d\rceil, \lceil0.2d\rceil, \lceil0.4d\rceil, \lceil0.8d\rceil\}$; other tuning parameters for these methods were chosen using default settings.  We remark that although our theory requires the data used for imputation to be independent of the training data, in our simulations we train the imputation algorithms on the whole dataset (training, validation and test sets) and then impute the missing entries.  

In our simulations, for a dataset with (training) sample size $n$ and covariate dimension~$d$, we define $w_1 \coloneqq \lceil \sqrt{n} \rceil$, $w_2 \coloneqq \lceil w_1/5 \rceil$ and embedding dimension $m\coloneqq \lceil \sqrt{d} \rceil \wedge \lceil w_1/20 \rceil$. Then
\begin{itemize}
    \item PENN uses $\mathcal{F}_{\mathrm{PENN}}\bigl((L_r,\bm p_r)_{r=1}^3, s\bigr)$, where $L_1 = 3$, $\bm{p}_1 = (d, w_1, w_1, w_1, w_1)$, $L_2 = 2$, $\bm{p}_2 = (d,w_2,w_2,m)$ and $L_3=3$, $\bm{p}_3=(w_1+m,w_1,w_1,w_1,1)$;
    \item NN uses $\mathcal{F}\bigl(6, (d,w_1,w_1,w_1,w_1,w_1,w_1,1),s\bigr)$.
\end{itemize}
By \ref{NN-composition}, the PENN class above has six hidden layers in total, not including the embedding function $\bm{f}_2$; see~\eqref{eq:PENN-def} and Figure~\ref{fig:PENN-diagram}.  We therefore compare it with a standard neural network with six hidden layers of the same width. The sparsity $s$ is chosen as described above.

\subsection{Simulated data} \label{sec:simulated-data}

For each of our experiments on simulated data, we take $d=20$, with a training set of size $n=10{,}000$, meaning that $m = 4$, as well as validation and test sets each of size $5{,}000$.  Our data generating mechanisms were chosen as follows:

\paragraph{Model 1:} $\bm X_0 \sim \mathrm{Unif}[-1,1]^d$, $Y_0 = \exp(X_{0,1} + X_{0,2}) + 4X_{0,3}^2 + \varepsilon_0$, $\varepsilon_0\sim N(0,0.25)$, $\varepsilon_0 \indep \bm{X}_0$, $\bm\Omega_0 \indep (\bm X_0, \varepsilon_0)$ and $\Omega_{0,j} \overset{\mathrm{iid}}{\sim} \mathrm{Ber}(0.7)$ for $j \in [d]$.

\paragraph{Model 2:} $\bm X_0 \sim \mathrm{Unif}[-1,1]^d$, $Y_0 = 2\sin(2X_{0,1} + 2X_{0,2} + 2X_{0,3}) + 2X_{0,3} + \varepsilon_0$, $\varepsilon_0\sim N(0,0.25)$, $\varepsilon_0 \indep \bm{X}_0$, $\Omega_{0,j} \overset{\mathrm{iid}}{\sim} \mathrm{Ber}(0.7)$ independent of $\bm X_0$ for $j\notin\{2,3\}$, and $\Omega_{0,j} = \mathbbm{1}_{\{X_{0,j} \leq 0.4\}}$ for $j\in\{2,3\}$.

\paragraph{Model 3:} As for Model 1 except that we set $X_{0,1} = \sqrt{X_{0,4}+1} - 0.7 + \mathrm{Unif}[-0.3,0.3]$ and $X_{0,3} = 0.7X_{0,5} + \mathrm{Unif}[-0.3,0.3]$.

\paragraph{Model 4:} As for Model 2 except that we set $X_{0,1} = \sqrt{X_{0,4}+1} - 0.7 + \mathrm{Unif}[-0.3,0.3]$ and $X_{0,3} = 0.7X_{0,5} + \mathrm{Unif}[-0.3,0.3]$.

\medskip

Thus, Models~1 and~3 are MCAR, while Models~2 and~4 are MNAR.  In Models~1 and~2, the different coordinates of the covariates are independent, while in Models~3 and~4 we introduce positive correlation between $X_{0,1}$ and $X_{0,4}$, and between $X_{0,3}$ and $X_{0,5}$.  In each case, the true regression function depends only on the first three components of $\bm X_0$; this facilitates computation of the Bayes risk via Monte Carlo integration.  

The results of our simulations over 100 repetitions are presented in Figure~\ref{fig:synthetic-data}.  We observe that for all four models, PENN improves the performance of vanilla NNs for each imputation technique, often dramatically.  In particular, it is able to substantially remedy the strikingly poor performance of \texttt{NN\_MF} for Models~2 and~4.  It seems that, to some extent, more successful imputation techniques for the vanilla neural network estimator tend to yield more successful PENN estimators.  The XGBoost and random forest methods are competitive for Model~1, but less so for the other three models.  In Appendix~\ref{sec:embeddings-syn}, we also provide visualisations of the learned pattern embeddings.

\begin{figure}[htbp]
    \centering
    \begin{subfigure}{0.49\textwidth}
    \includegraphics[width=\textwidth]{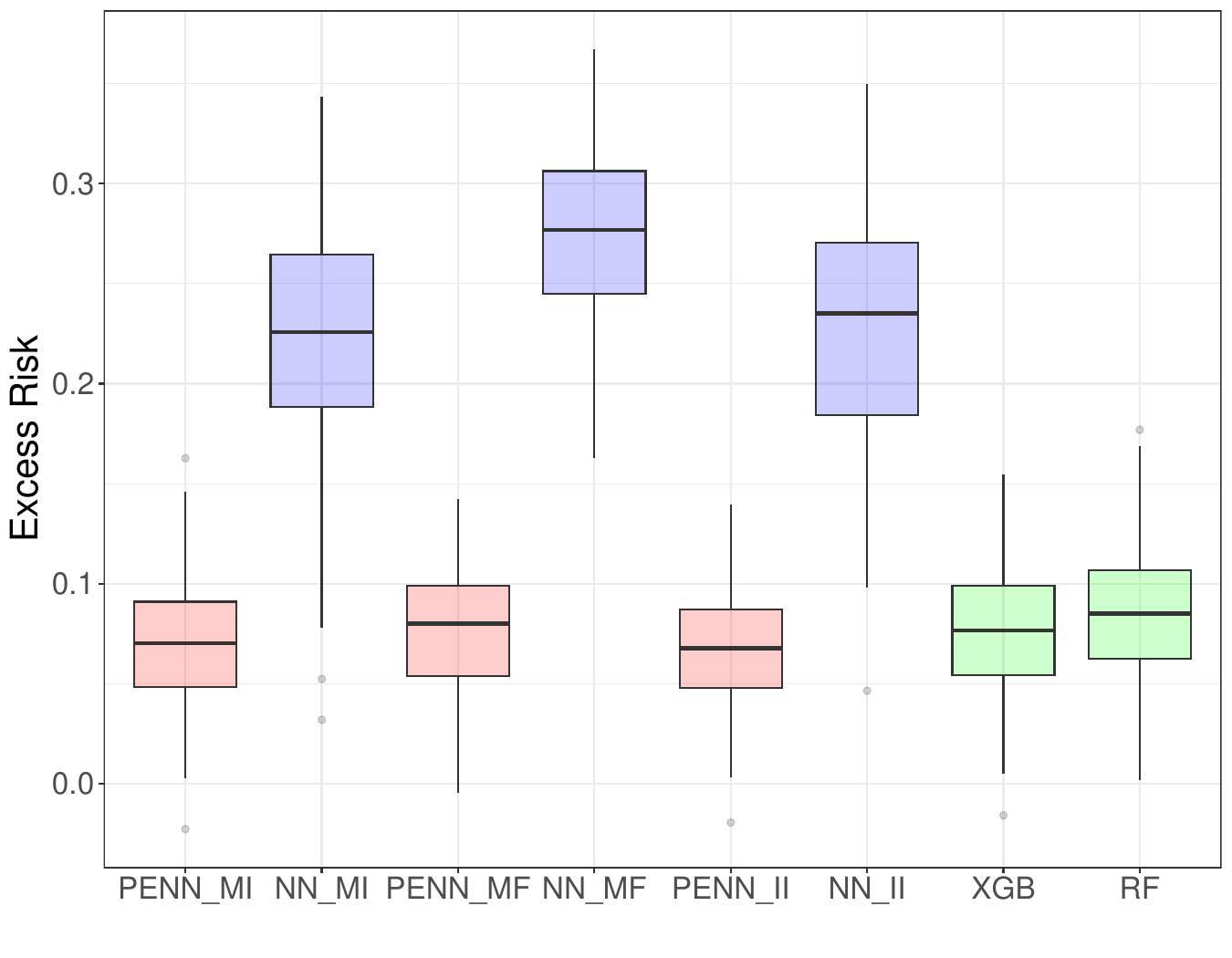} 
    \caption{Model 1}
    \end{subfigure}
    \begin{subfigure}{0.49\textwidth}
    \includegraphics[width=\textwidth]{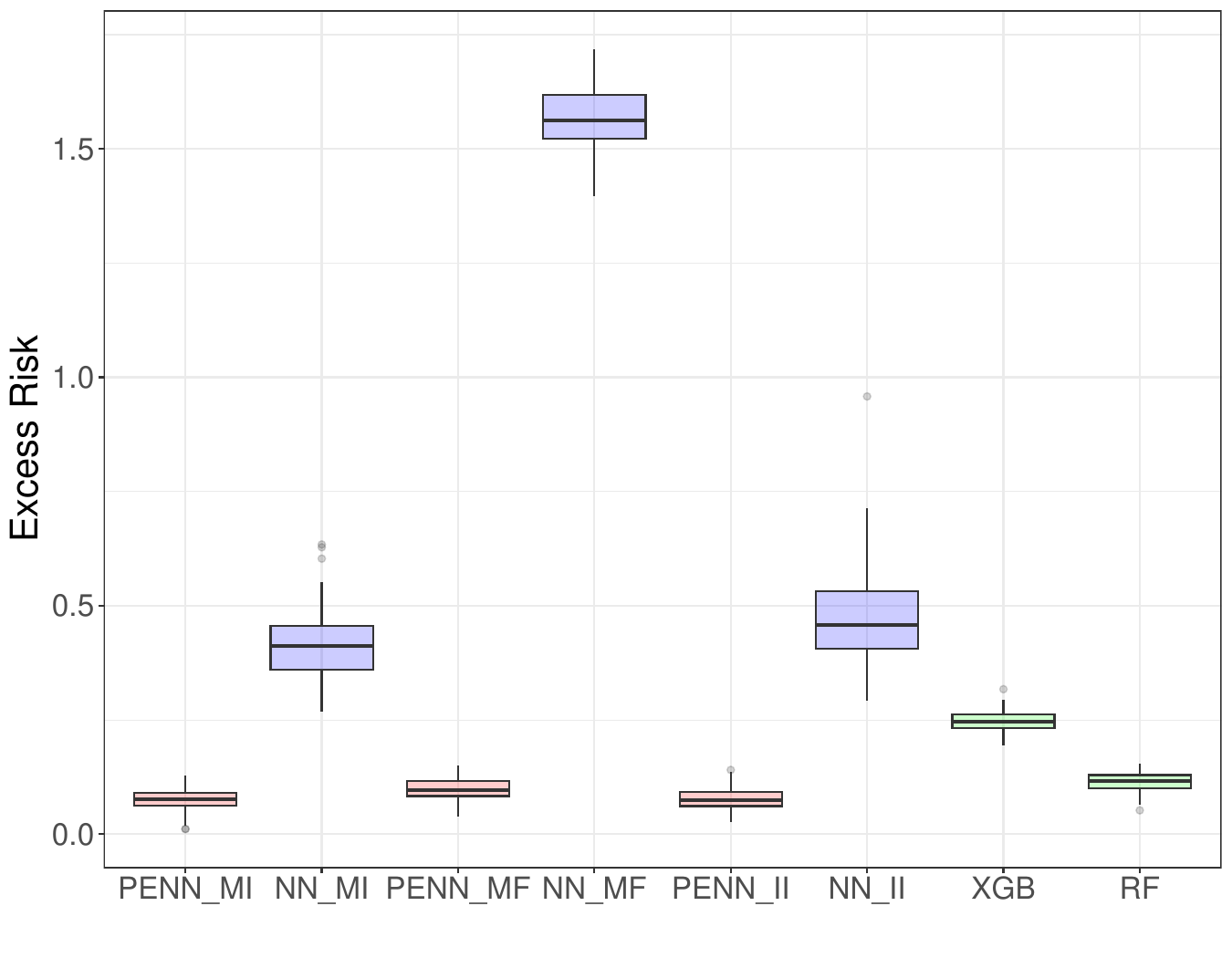}
    \caption{Model 2}
    \end{subfigure} \\
    \vspace*{0.75cm}
    \begin{subfigure}{0.49\textwidth}
    \includegraphics[width=\textwidth]{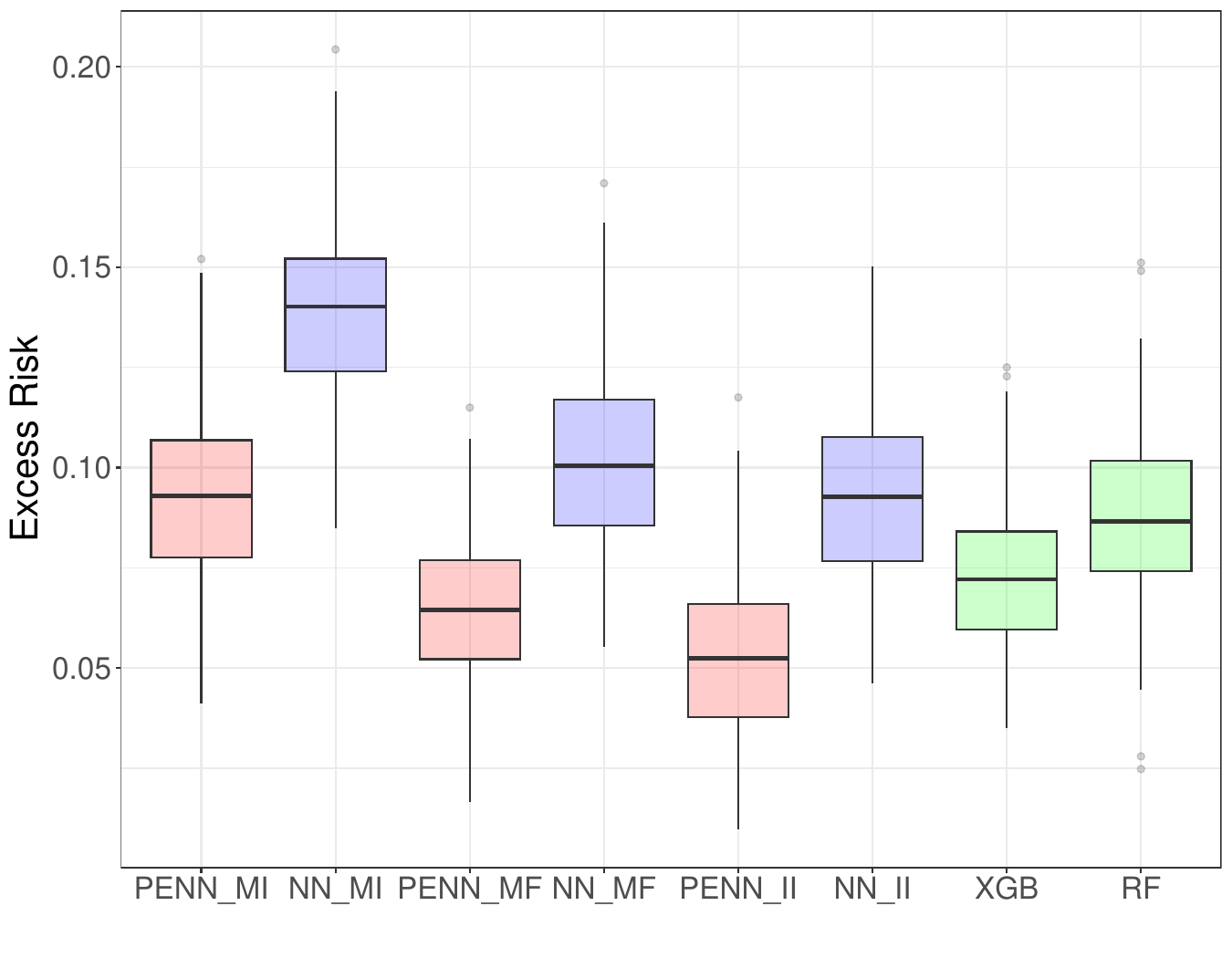}
    \caption{Model 3}
    \end{subfigure}
    \begin{subfigure}{0.49\textwidth}
    \includegraphics[width=\textwidth]{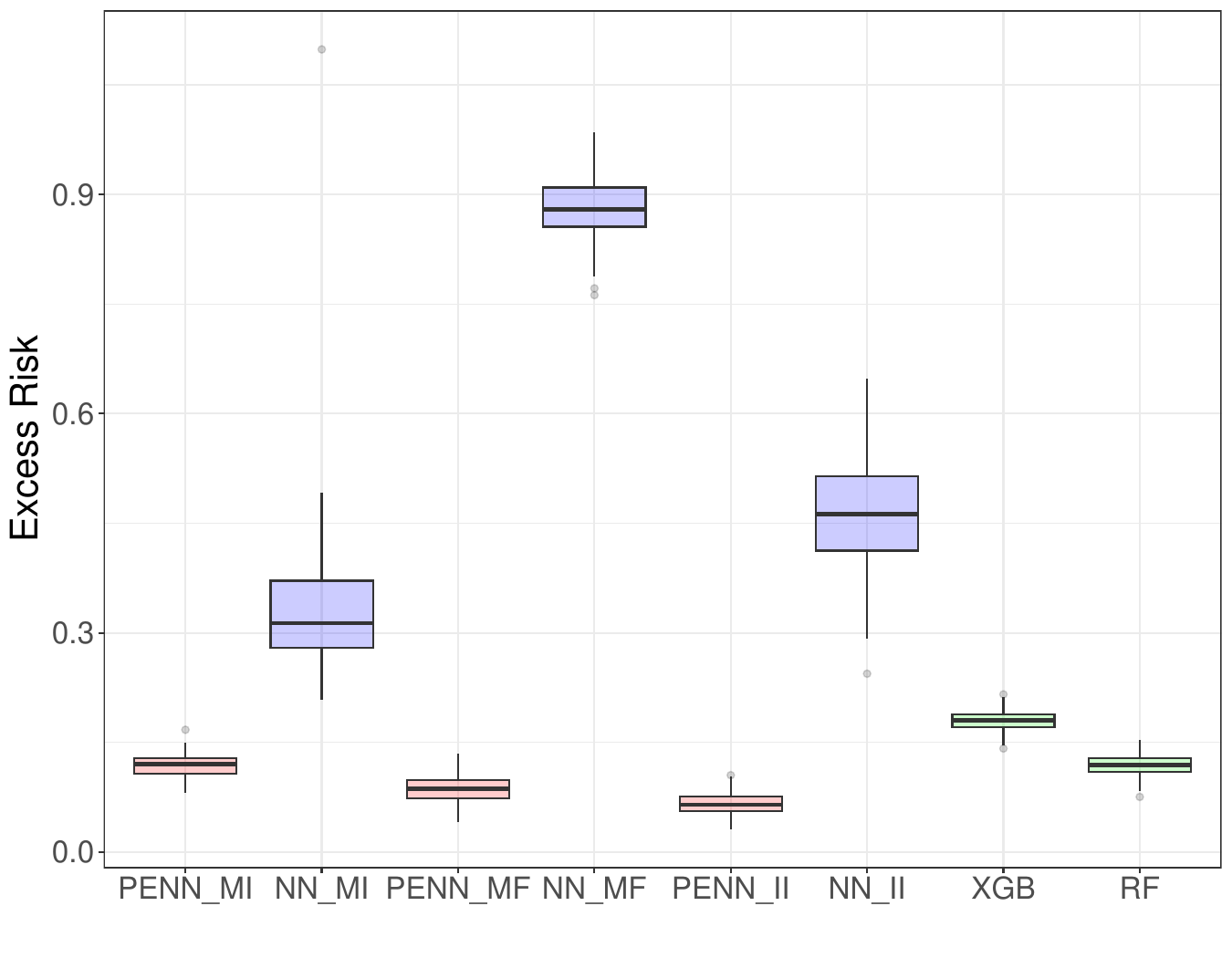}
    \caption{Model 4}
    \end{subfigure}

    \caption{Estimates of excess risks for simulated data Models 1--4.  The PENN estimators are shown in red, with the vanilla neural networks (NN) in blue and XGBoost (XGB) and random forests (RF) in green. On the $x$-axis, the abbreviation of the imputation technique appears after the underscore symbol.} \label{fig:synthetic-data}
\end{figure}

\subsection{Semi-synthetic data}

The aim of this subsection is to consider two semi-synthetic datasets.  By this we mean that we take two real datasets without missingness, and artificially introduce missingness according to two different prescribed mechanisms that we articulate below.  An attraction of this approach is that it allows us to study the effects of different types of missingness on real data.

\paragraph{Relative location of CT dataset:} The relative location of Computed Tomography (CT) dataset from \url{https://www.openml.org/search?type=data&status=active&id=46300} is a complete dataset with $d=384$ and total sample size $n=53{,}500$.  The response variable is the relative location of the CT slice on the axial axis of the human body, while the covariates describe the location of bone structures and the location of air inclusions.  Here we again consider MCAR missingness with homogeneous observation probability 0.7 in each coordinate.  For MNAR missingness, the probability of observing $X_{ij}$ is given by $\mathrm{sigmoid}(-2X_{ij} + 4\bar{X}_i + 1)$, where $\mathrm{sigmoid}(x) \coloneqq 1/(1+e^{-x})$ and $\bar{X}_i \coloneqq \frac{1}{d} \sum_{k=1}^d X_{ik}$.

\paragraph{MNIST dataset:} The MNIST dataset \citep{lecun1998mnist} in \texttt{PyTorch} consists of $70{,}000$ greyscale images of handwritten digits from 0 to 9, with each image represented as a $28 \times 28$ pixel grid. We normalise the greyscale so that it takes values in $[0,1]$, and the goal is to predict the labels based on the images.  To reflect the fact that missingness is likely to be correlated among nearby pixels, we partition each image uniformly into 49 blocks, each of $4\times 4$ pixels.  In our MCAR setting, each block is observed independently with probability $0.5$; in our MNAR setting, for each $\ell\in\{0,\ldots,9\}$, we draw a random vector $\bm p_\ell = (p_{\ell,1},\ldots,p_{\ell,49})^\top$ uniformly from $[0.2,0.8]^{49}$, and generate $\Omega_{i,j} \sim \mathrm{Ber}(p_{\ell_i,j})$ independently for $j \in [49]$, where $\ell_i$ is the label for the $i$th observation. In other words, in our MNAR setting, the missingness probability varies across different labels, while the overall expected missingness probability is 0.5 for all observations.  For the MNIST dataset, we adopt only mean imputation, since MissForest and MICE are not well-suited for image data.  Figure~\ref{fig:mnist-numbers} presents the first 16 images from the MNIST dataset with MCAR and MNAR missingness.

\begin{figure}[htbp]
\centering
\begin{subfigure}{0.4\textwidth}
    \includegraphics[width=\textwidth]{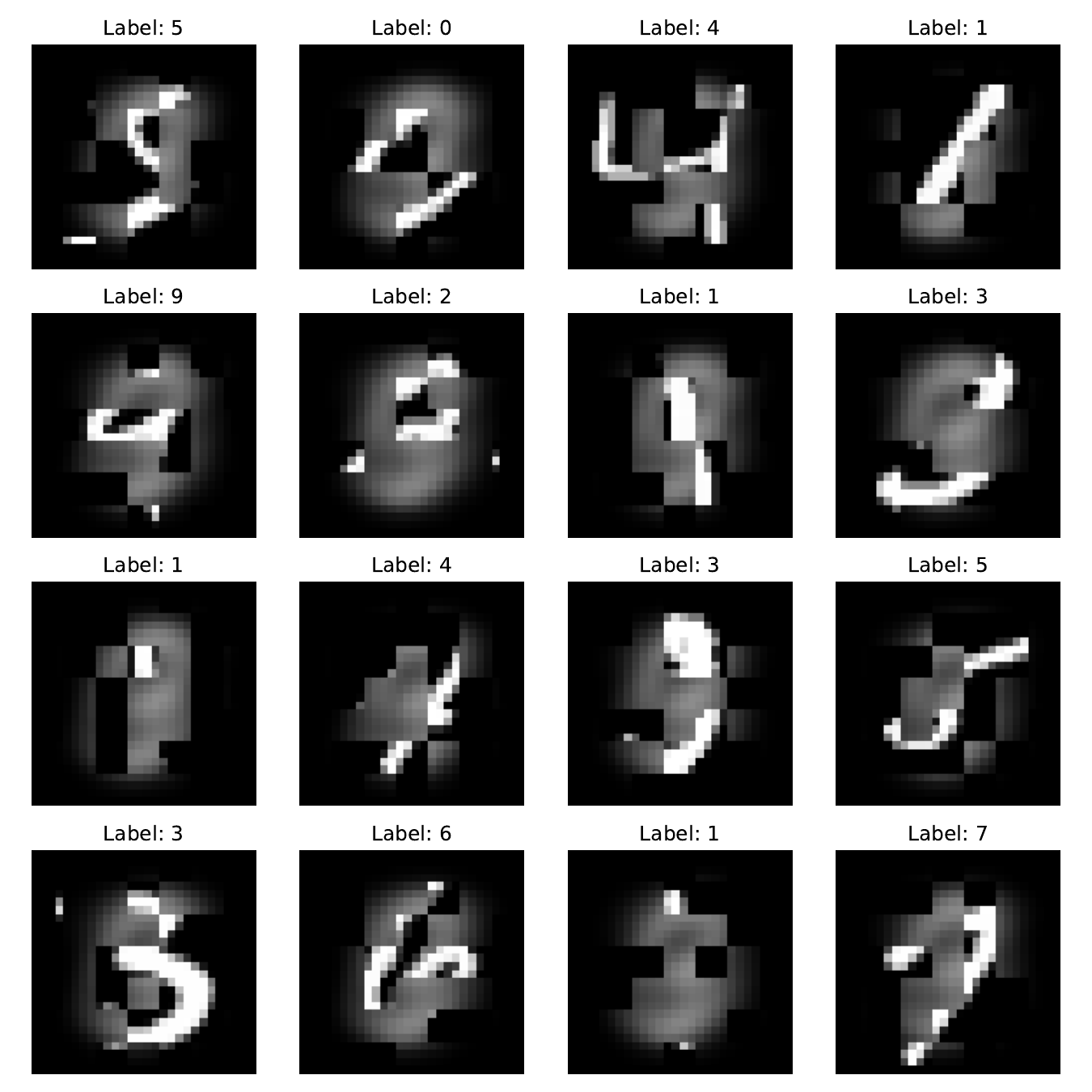}
    \caption{MCAR missingness}
\end{subfigure}
\begin{subfigure}{0.4\textwidth}
    \includegraphics[width=\textwidth]{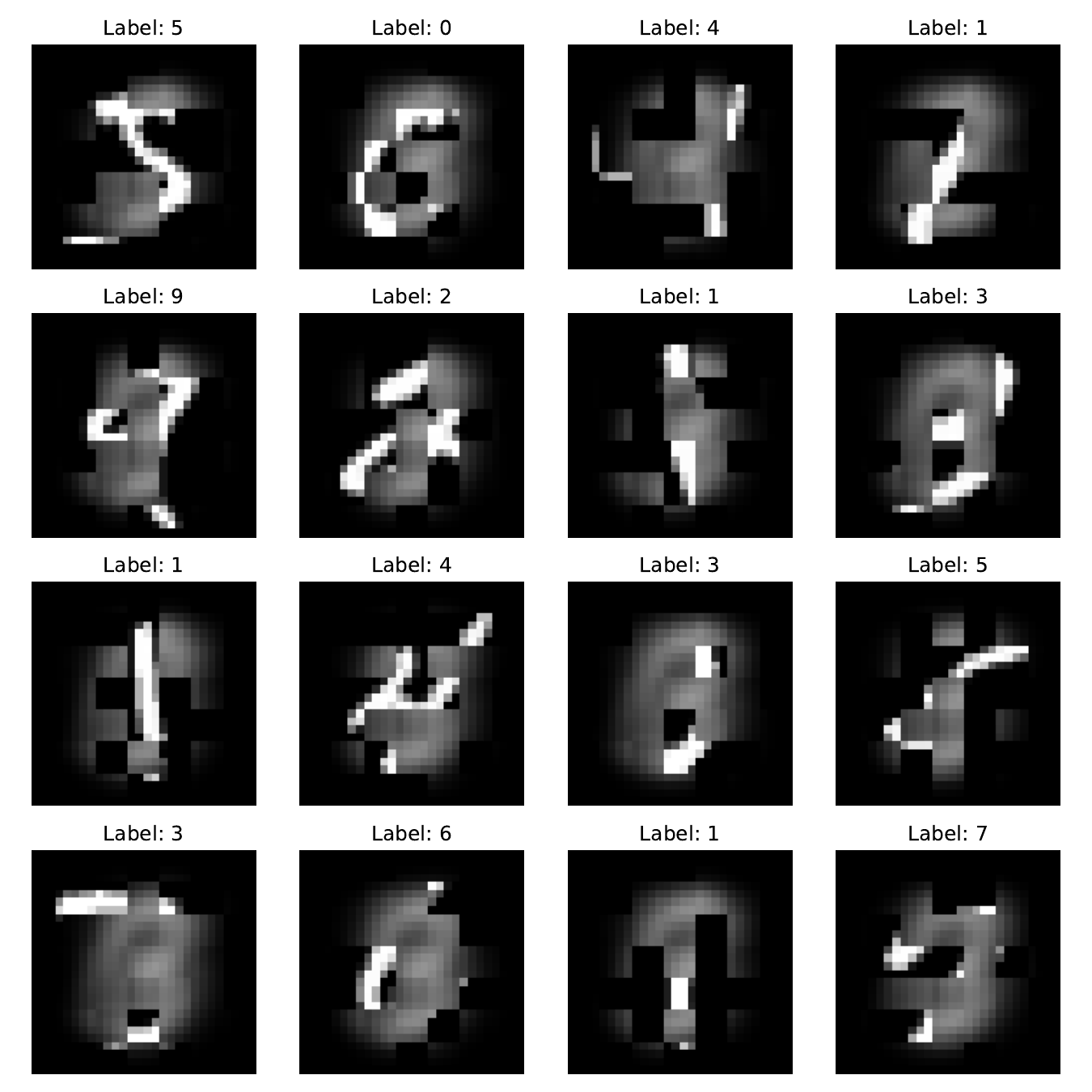}
    \caption{MNAR missingness}
\end{subfigure}
\caption{First 16 images (after mean imputation) from the MNIST dataset with MCAR and MNAR missingness, together with the true labels above each panel.} 
\label{fig:mnist-numbers}
\end{figure}

\medskip
In image classification tasks, convolutional neural networks \citep{zhang1988shift,lecun1998gradient, krizhevsky2012imagenet} have achieved great successes; see also \citet[][Section~3.1]{fan2020selective} for an introduction.
Thus, for the MNIST dataset, we replace each of the first two hidden layers in $\bm f_1$ in the definition of PENN~\eqref{eq:PENN-def}, and each of the first two hidden layers in NN, by a composition of a 2D convolutional layer, ReLU activation layer and max pooling layer.  Details of the architectures of PENN and NN with convolutional layers used in our simulations can be found at \url{https://github.com/tianyima2000/DNN_missing_data}. Again, we compare a PENN with a neural network of the same architecture except that the embedding function has been removed, as well as with XGBoost and random forests.  In addition, we include a Vision Transformer (ViT) \citep{dosovitskiy2021an} as a benchmark\footnote{We used the implementation from the GitHub repository \url{https://github.com/gejinchen/PyTorch-Vision-Transformer-ViT-MNIST-CIFAR10}.} for the MNIST dataset, since  ViT architectures can be applied directly to images with missing or occluded patches by downweighting attention weights on patches exhibiting missingness or occlusion.  Our training employs the cross-entropy loss, which is the negative log-likelihood of the relevant  multinomial distribution, for the MNIST classification task. 


\begin{figure}[htbp]
\centering
\begin{subfigure}{0.49\textwidth}
    \includegraphics[width=\textwidth]{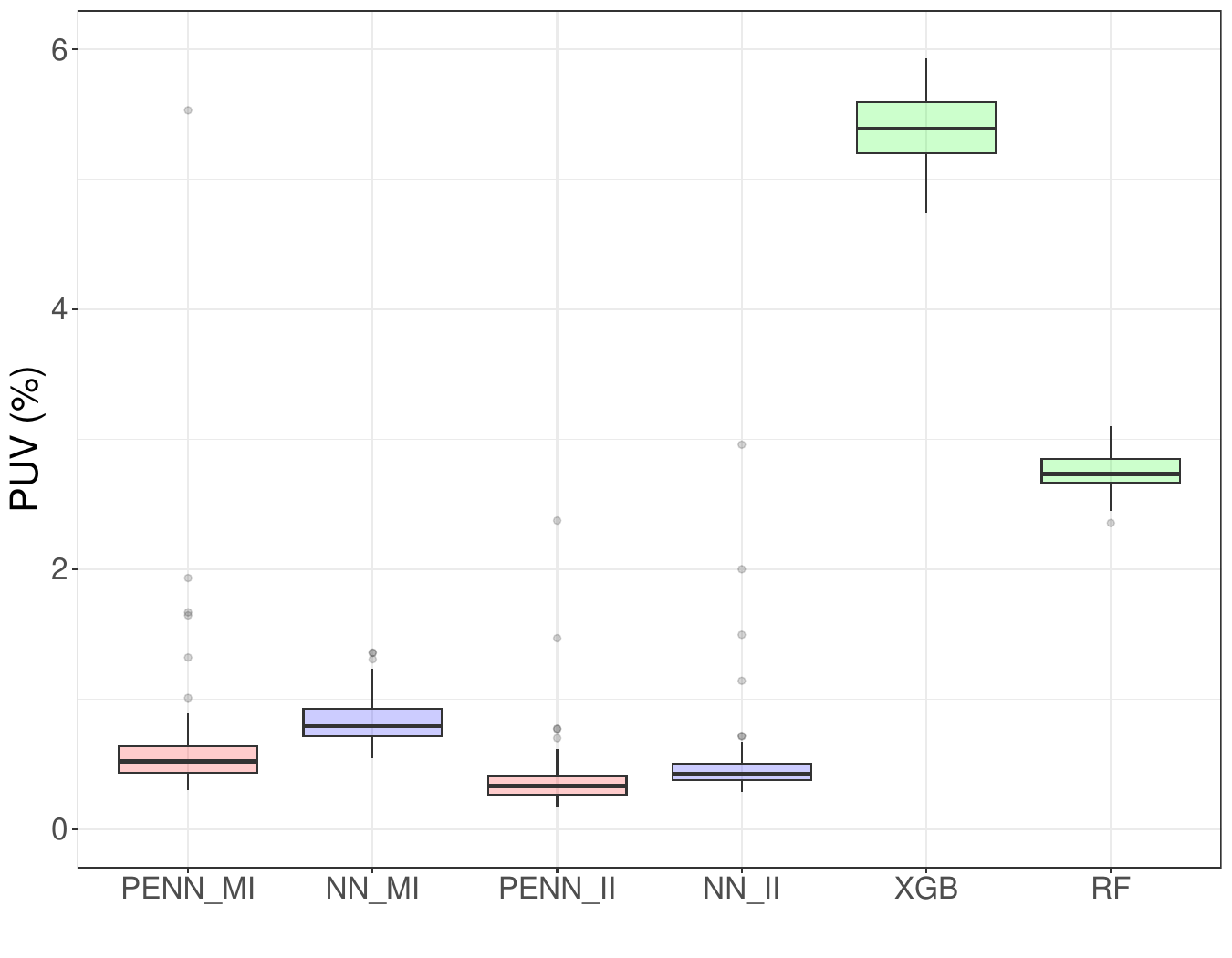}
    \caption{MCAR missingness}
\end{subfigure}
\begin{subfigure}{0.49\textwidth}
    \includegraphics[width=\textwidth]{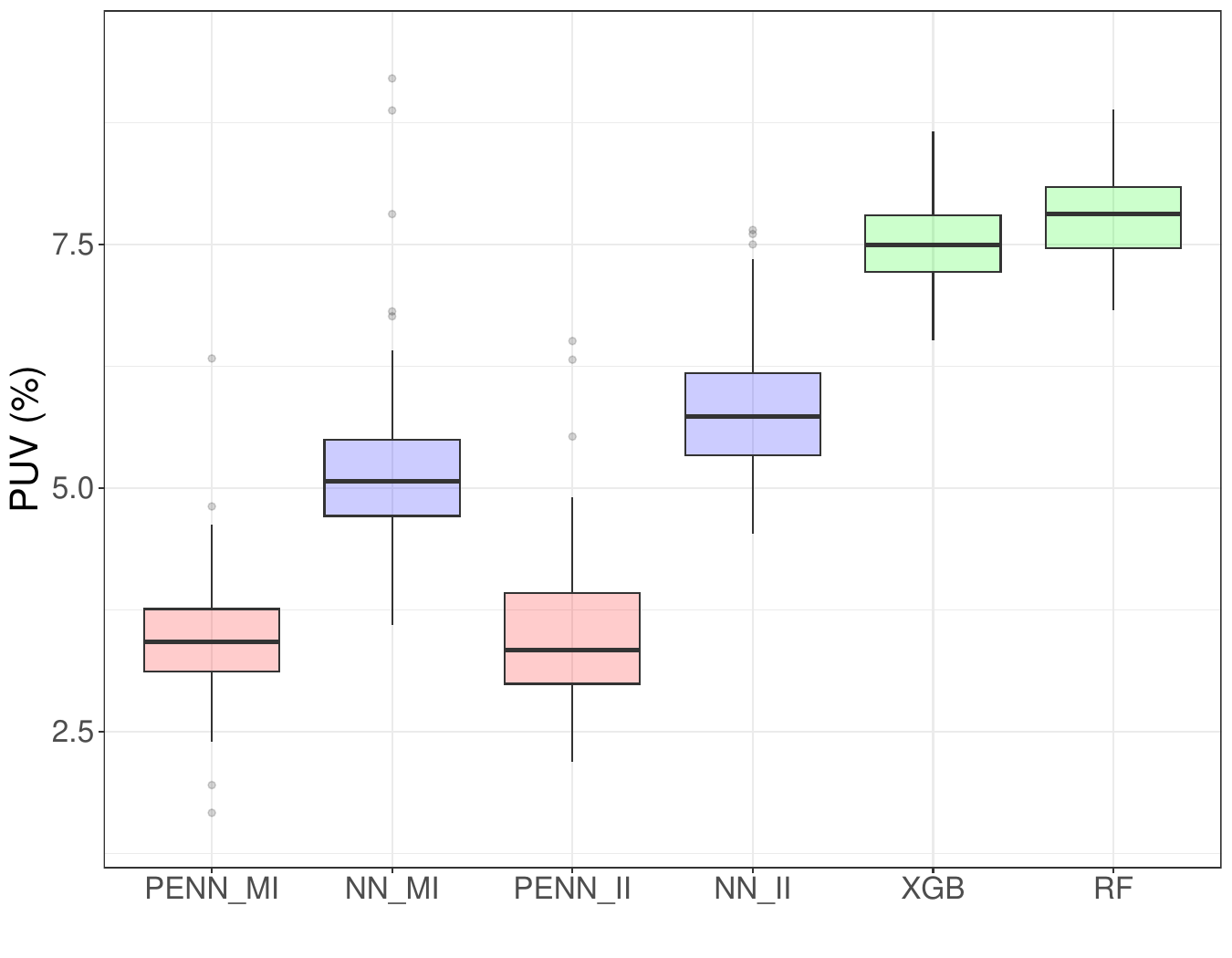}
    \caption{MNAR missingness}
\end{subfigure}
\caption{Proportion of unexplained variance (PUV) for relative location of CT dataset with different missingness mechanisms.} 
\label{fig:CT}
\end{figure}

\begin{figure}[htbp]
\centering
\begin{subfigure}{0.49\textwidth}
    \includegraphics[width=\textwidth]{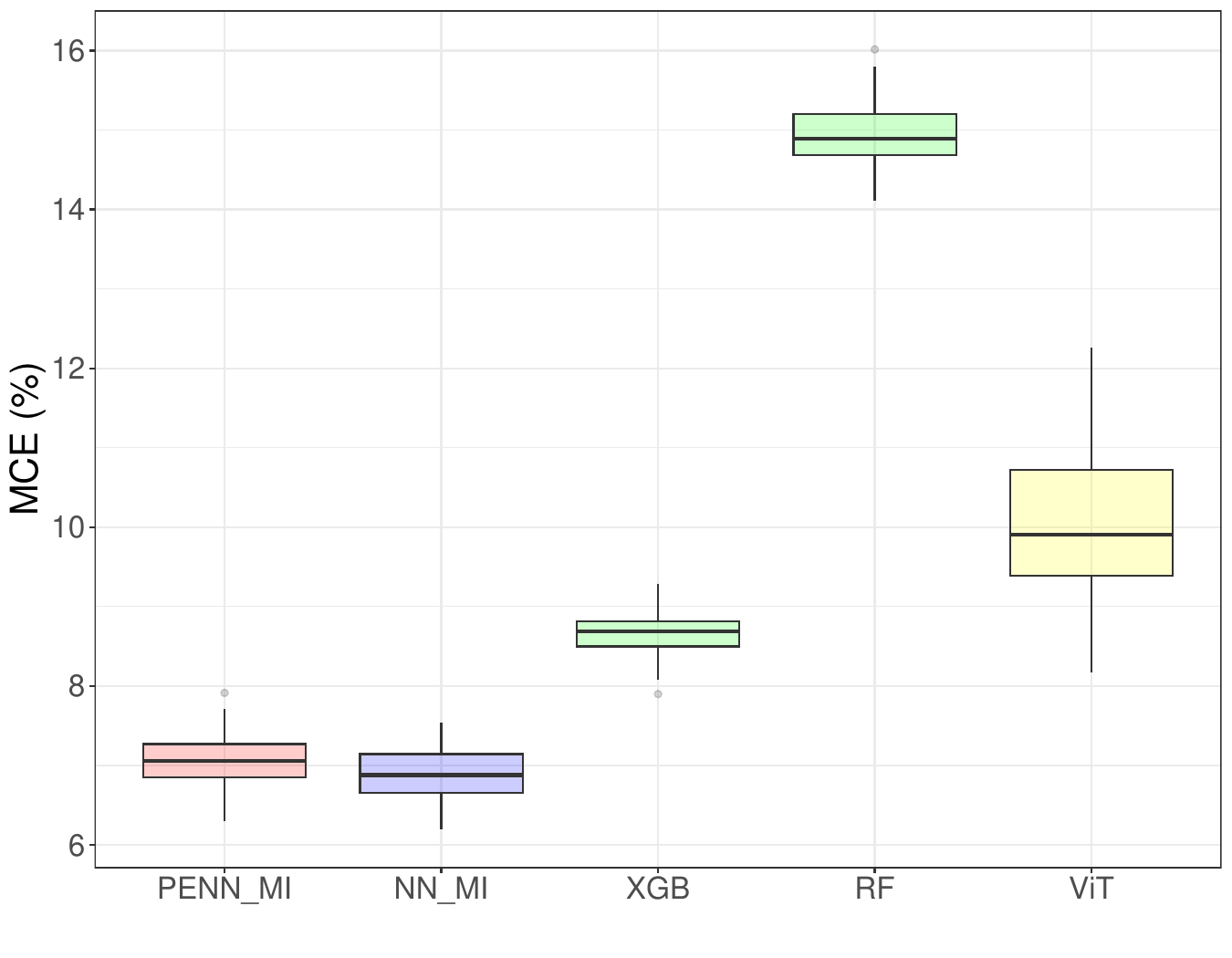}
    \caption{MCAR missingness}
\end{subfigure}
\begin{subfigure}{0.49\textwidth}
    \includegraphics[width=\textwidth]{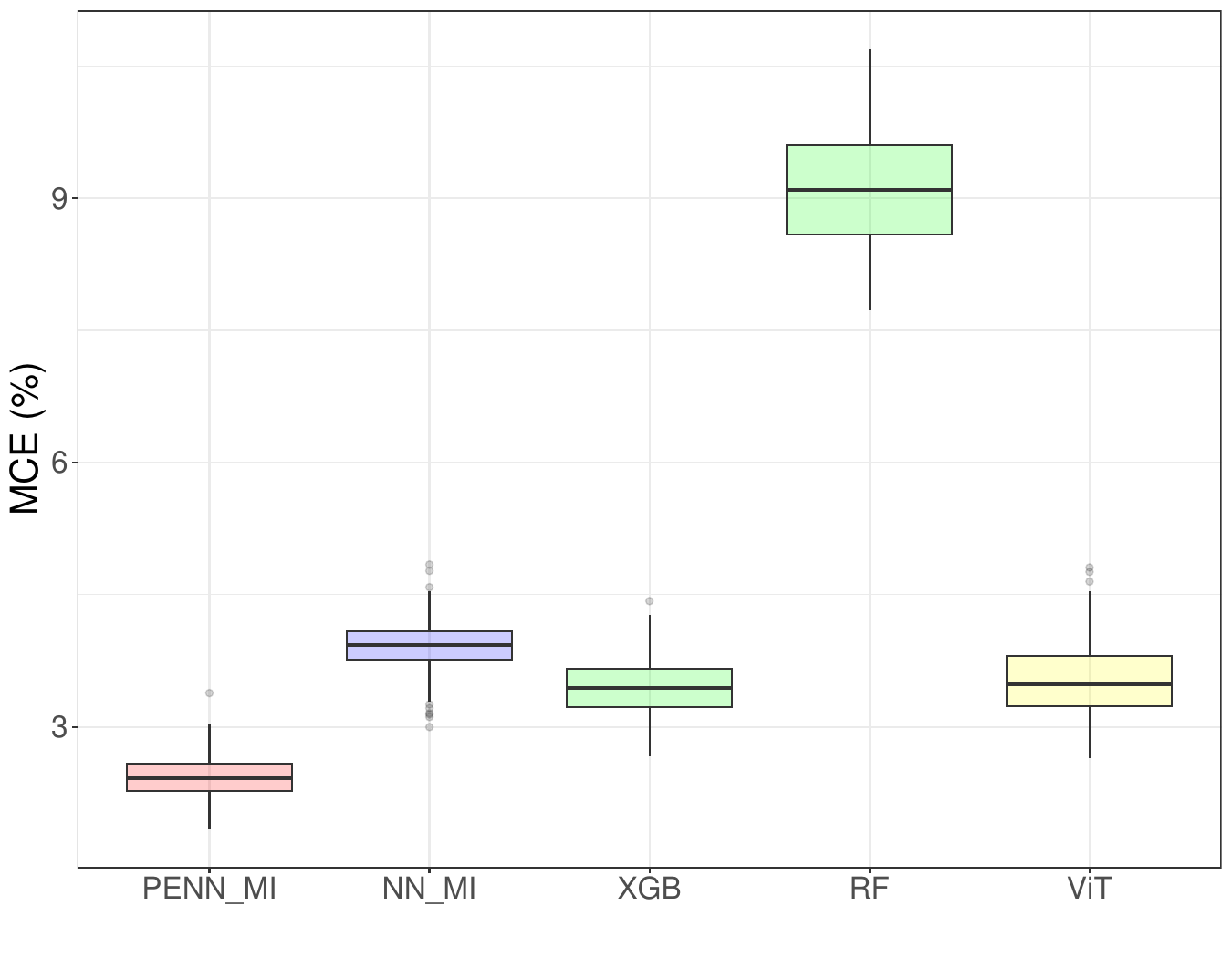}
    \caption{MNAR missingness}
\end{subfigure}
\caption{Misclassification error (MCE) for MNIST dataset with different missingness mechanisms.} 
\label{fig:mnist}
\end{figure}

For both datasets, and for each of 100 repetitions, we randomly split the dataset into training, validation and test sets with sizes in the ratio 8:1:1 after introducing the missingness.  For the regression task, we measure the performance of the algorithms via the proportion of unexplained variance (PUV), which is the ratio of the mean squared prediction error on the test set to the sample variance of the response on the same test set.
On the other hand, for the classification task, we measure the performance of the algorithms by the misclassification error (MCE) on the test set. 
The results for the two different datasets are presented in Figures~\ref{fig:CT} and~\ref{fig:mnist}.  For MCAR missingness, the PENN estimator is at least comparable with the vanilla NN estimator in terms of overall performance, while for MNAR missingness, it provides a significant improvement.  PENN also improves substantially on XGBoost, random forests and ViT.

\subsection{Real data}

Finally, we now turn to a further real dataset that already has some missing values.

 \paragraph{Credit score prediction dataset:}  The credit score prediction dataset from \url{https://www.kaggle.com/datasets/prasy46/credit-score-prediction} has $d=304$ and total sample size of $100{,}000$.  The response variable of interest is the credit rating, quantified as a positive integer ranging from 300 to 839 in the available data.  There are 41 columns with missingness, and the observation probabilities for the columns with missingness are given in the left panel of Figure~\ref{fig:credit-score-prediction}.


Again, for each of 30 repetitions, we randomly split the dataset into training, validation and test sets with sizes in the ratio 8:1:1.  The PUVs are presented in the right panel of Figure~\ref{fig:credit-score-prediction} for the credit score data.  The improvements of PENN over both the vanilla NN estimator, as well as XGBoost and random forests, are again very notable.

\begin{figure}[htbp]
    \centering
    \includegraphics[width=0.444\textwidth]{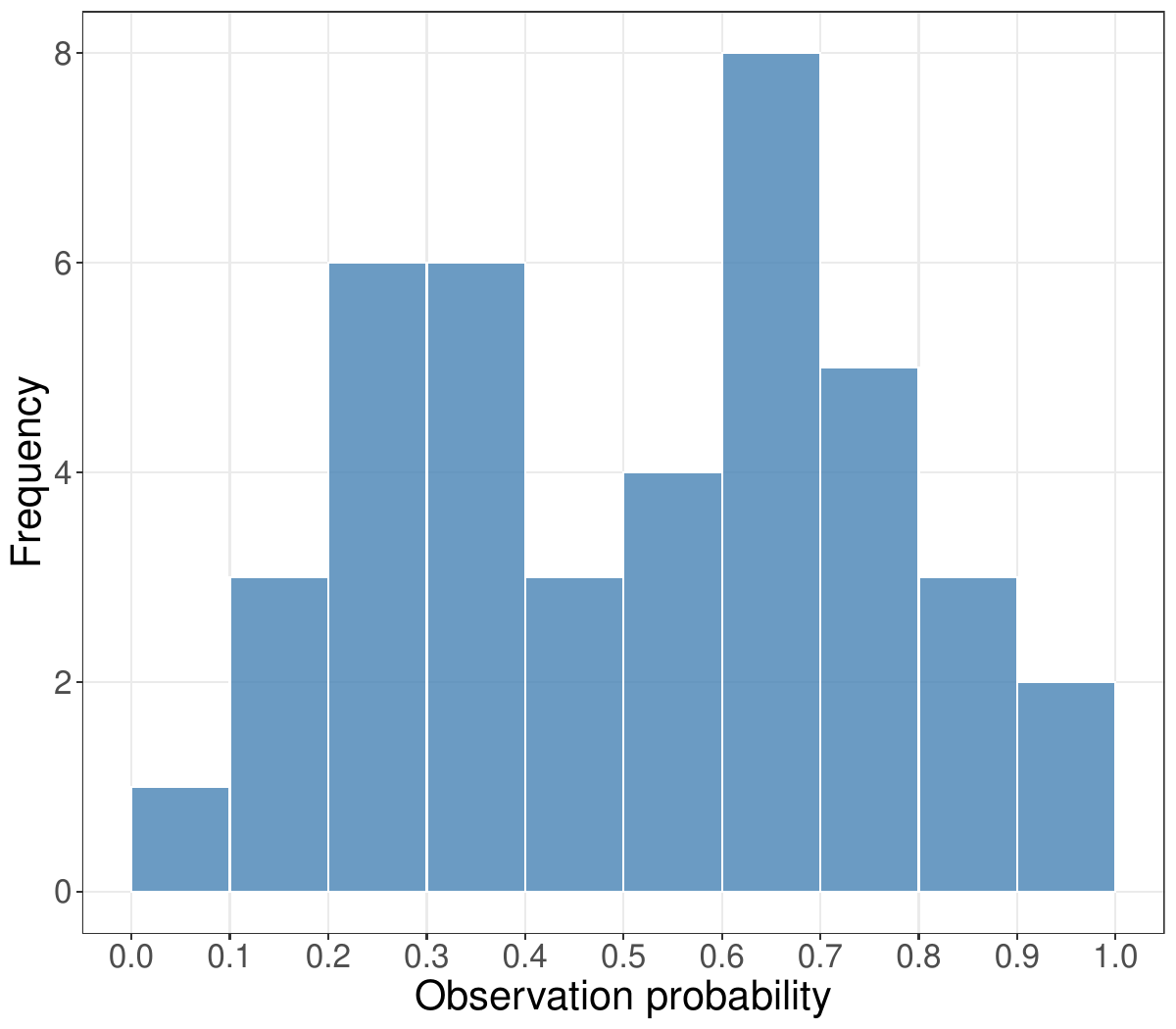}\quad
    \includegraphics[width=0.5\linewidth]{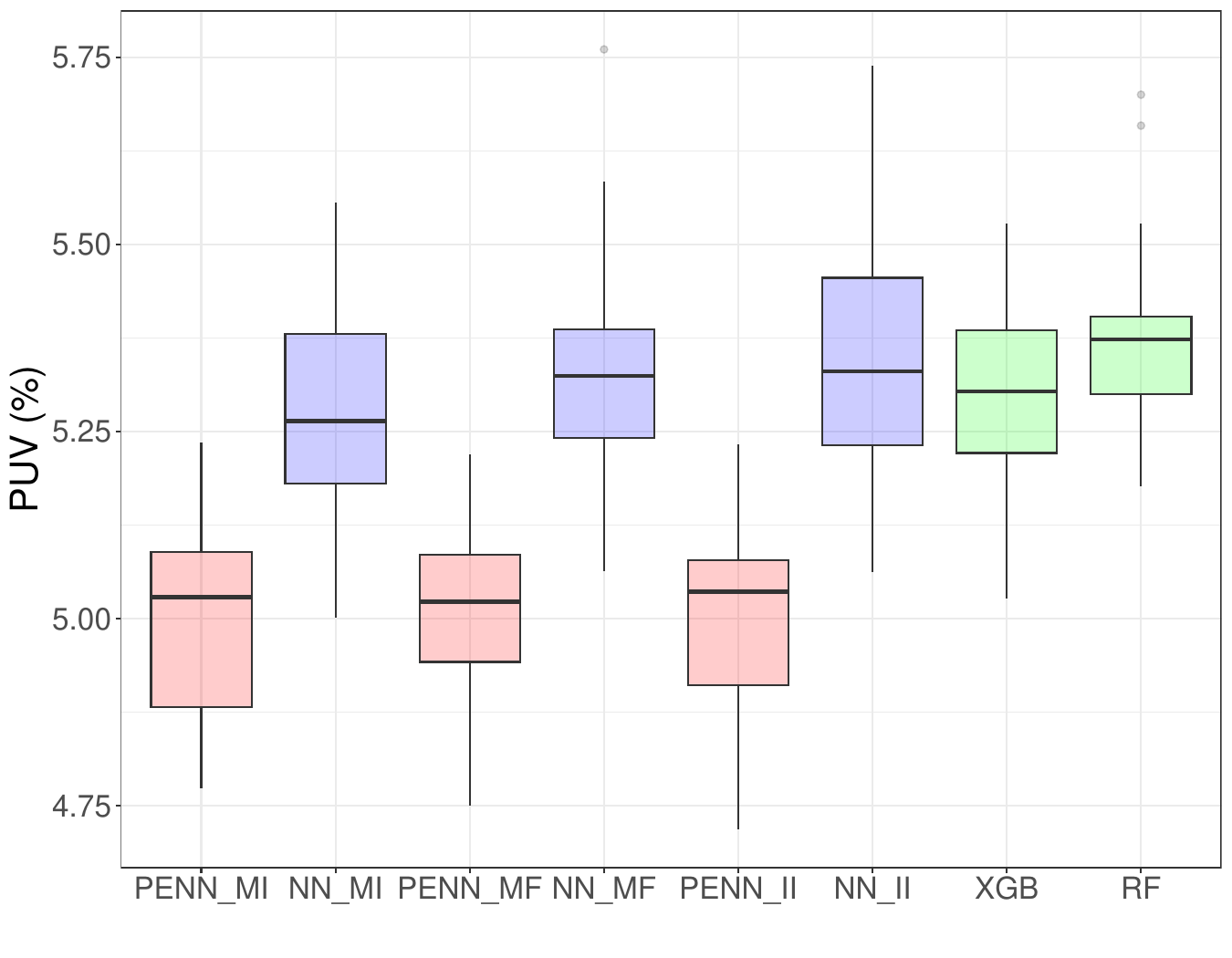}
    \caption{Left: Histogram of the observation probabilities for the columns with missingness.  Right: PUVs for the credit score prediction dataset.}
    \label{fig:credit-score-prediction}
\end{figure}


\section{Discussion} \label{sec:discussion}

In this paper, we have introduced Pattern Embedded Neural Networks (PENNs) as a novel deep learning architecture for nonparametric regression with missing covariates.  Example~\ref{example:f-star}, which is illustrated in Figure~\ref{fig:quadratic}, shows that the function $\bm{z} \mapsto \mathbb{E}(\bm{Y} \, | \, \bm{Z} = \bm{z})$ may have discontinuities caused by the different observation patterns, so one of the main messages of this example is that the revelation vectors should be incorporated into the modelling process.

On the other hand, Figure~\ref{fig:quadratic-d} provides the insight that simply including the revelation vectors as additional covariates (thereby increasing the dimension of the input space by a factor of 2) may even harm predictive performance.  The primary methodological contribution of PENNs, then, is to provide an effective compression of the information in the revelation vectors by embedding them into a relatively low-dimensional space.  An attractive feature of the methodology is that it can be used in conjunction with any existing imputation technique.

We provide theoretical support for the use of PENNs in the form of a finite-sample upper bound on their excess risk that holds without any assumption on the relationship between the data generating and missingness mechanisms.  Our Assumption~\ref{assumption:piecewise-assumption}, which partitions the Bayes regression function into different cells, allows us to borrow strength across different observation patterns, and the main conclusion of Theorems~\ref{thm:PENN-ub} and~\ref{thm:minimax-lb} is that PENNs typically achieve the minimax rate of convergence as if the cells in this partition were known in advance, up to a poly-logarithmic factor in the sample size.

Our theoretical results are complemented by numerical experiments on synthetic, semi-synthetic and real data, which confirm the practical improvements offered by PENNs.

\section*{Acknowledgement}
The research of TM and RJS was supported by RJS's European Research Council Advanced Grant 101019498.

\bibliographystyle{custom}
\bibliography{bibliography}

\begin{thebibliography}{75}
\providecommand{\natexlab}[1]{#1}
\providecommand{\url}[1]{\texttt{#1}}
\providecommand{\urlprefix}{URL }
\providecommand{\eprint}[2][]{\url{#2}}

\bibitem[{Arora et~al.(2019)Arora, Du, Hu, Li, Salakhutdinov and Wang}]{arora2019exact}
Arora, S., Du, S.~S., Hu, W., Li, Z., Salakhutdinov, R.~R. and Wang, R. (2019) On exact computation with an infinitely wide neural net. \emph{Advances in Neural Information Processing Systems}, \textbf{32}.

\bibitem[{Bao et~al.(2022)Bao, Dong, Piao and Wei}]{bao2022beit}
Bao, H., Dong, L., Piao, S. and Wei, F. (2022) {BE}iT: {BERT} Pre-Training of Image Transformers. In \emph{International Conference on Learning Representations}.

\bibitem[{Bartlett et~al.(2019)Bartlett, Harvey, Liaw and Mehrabian}]{bartlett2019nearly}
Bartlett, P.~L., Harvey, N., Liaw, C. and Mehrabian, A. (2019) Nearly-tight VC-dimension and pseudodimension bounds for piecewise linear neural networks. \emph{Journal of Machine Learning Research}, \textbf{20}, 1--17.

\bibitem[{Bauer and Kohler(2019)}]{bauer2019on}
Bauer, B. and Kohler, M. (2019) {On deep learning as a remedy for the curse of dimensionality in nonparametric regression}. \emph{Annals of Statistics}, \textbf{47}, 2261--2285.

\bibitem[{Berrett and Samworth(2023)}]{berrett2023optimal}
Berrett, T.~B. and Samworth, R.~J. (2023) Optimal nonparametric testing of Missing Completely At Random, and its connections to compatibility. \emph{Annals of Statistics}, \textbf{51}, 2170--2193.

\bibitem[{Bhattacharya, Fan and Mukherjee(2024)}]{bhattacharya2023deep}
Bhattacharya, S., Fan, J. and Mukherjee, D. (2024) Deep neural networks for nonparametric interaction models with diverging dimension. \emph{Annals of Statistics}, \textbf{52}, 2738--2766.

\bibitem[{Bos and Schmidt-Hieber(2022)}]{bos2022convergence}
Bos, T. and Schmidt-Hieber, J. (2022) Convergence rates of deep ReLU networks for multiclass classification. \emph{Electronic Journal of Statistics}, \textbf{16}, 2724--2773.

\bibitem[{Bos and Schmidt-Hieber(2024)}]{bos2024supervised}
Bos, T. and Schmidt-Hieber, J. (2024) A supervised deep learning method for nonparametric density estimation. \emph{Electronic Journal of Statistics}, \textbf{18}, 5601--5658.

\bibitem[{Breiman(2001)}]{breiman2001random}
Breiman, L. (2001) Random forests. \emph{Machine Learning}, \textbf{45}, 5--32.

\bibitem[{Cai and Zhang(2019)}]{cai2019high}
Cai, T.~T. and Zhang, L. (2019) High dimensional linear discriminant analysis: optimality, adaptive algorithm and missing data. \emph{Journal of the Royal Statistical Society, Series~B: Statistical Methodology}, \textbf{81}, 675--705.

\bibitem[{Chandrasekher, Alaoui and Montanari(2020)}]{chandrasekher2020imputation}
Chandrasekher, K.~A., Alaoui, A.~E. and Montanari, A. (2020) Imputation for high-dimensional linear regression. \emph{arXiv preprint arXiv:2001.09180}.

\bibitem[{Chen and Guestrin(2016)}]{chen2016xgboost}
Chen, T. and Guestrin, C. (2016) XGBoost: A scalable tree boosting system. In \emph{Proceedings of the 22nd ACM SIGKDD International Conference on Knowledge Discovery and Data Mining}, 785--794.

\bibitem[{Ching et~al.(2026)Ching, Popescu, Smith, Ma, Underwood and Samworth}]{ching2026efficient}
Ching, M., Popescu, I., Smith, N., Ma, T., Underwood, W.~G. and Samworth, R.~J. (2026) Efficient and minimax-optimal in-context nonparametric regression with transformers. \emph{International Conference on Machine Learning, to appear}.

\bibitem[{Chizat and Bach(2018)}]{chizat2018global}
Chizat, L. and Bach, F. (2018) On the global convergence of gradient descent for over-parameterized models using optimal transport. \emph{Advances in Neural Information Processing Systems}, \textbf{31}.

\bibitem[{Devlin et~al.(2019)Devlin, Chang, Lee and Toutanova}]{devlin2019bert}
Devlin, J., Chang, M.-W., Lee, K. and Toutanova, K. (2019) Bert: Pre-training of deep bidirectional transformers for language understanding. In \emph{Proceedings of the 2019 Conference of the North American chapter of the Association for Computational Linguistics: Human Language Technologies, Volume 1 (long and short papers)}, 4171--4186.

\bibitem[{Dosovitskiy et~al.(2021)Dosovitskiy, Beyer, Kolesnikov, Weissenborn, Zhai, Unterthiner, Dehghani, Minderer, Heigold, Gelly, Uszkoreit and Houlsby}]{dosovitskiy2021an}
Dosovitskiy, A., Beyer, L., Kolesnikov, A., Weissenborn, D., Zhai, X., Unterthiner, T., Dehghani, M., Minderer, M., Heigold, G., Gelly, S., Uszkoreit, J. and Houlsby, N. (2021) An image is worth 16x16 words: Transformers for image recognition at scale. In \emph{International Conference on Learning Representations}.

\bibitem[{Du et~al.(2019)Du, Zhai, Poczos and Singh}]{du2018gradient}
Du, S.~S., Zhai, X., Poczos, B. and Singh, A. (2019) Gradient descent provably optimizes over-parameterized neural networks. In \emph{International Conference on Learning Representations}.

\bibitem[{Efromovich(2011)}]{efromovich2011nonparametric}
Efromovich, S. (2011) Nonparametric regression with predictors missing at random. \emph{Journal of the American Statistical Association}, \textbf{106}, 306--319.

\bibitem[{Elsener and van~de Geer(2019)}]{elsener2019sparse}
Elsener, A. and van~de Geer, S. (2019) Sparse spectral estimation with missing and corrupted measurements. \emph{Stat}, \textbf{8}, e229.

\bibitem[{Fan and Gu(2024)}]{fan2024factor}
Fan, J. and Gu, Y. (2024) Factor augmented sparse throughput deep ReLU neural networks for high dimensional regression. \emph{Journal of the American Statistical Association}, \textbf{119}, 2680--2694.

\bibitem[{Fan, Ma and Zhong(2020)}]{fan2020selective}
Fan, J., Ma, C. and Zhong, Y. (2020) A selective overview of deep learning. \emph{Statistical Science}, \textbf{36}, 264--290.

\bibitem[{Follain, Wang and Samworth(2022)}]{follain2022high}
Follain, B., Wang, T. and Samworth, R.~J. (2022) High-dimensional changepoint estimation with heterogeneous missingness. \emph{Journal of the Royal Statistical Society Series B: Statistical Methodology}, \textbf{84}, 1023--1055.

\bibitem[{Guo and Berkhahn(2016)}]{guo2016entity}
Guo, C. and Berkhahn, F. (2016) Entity embeddings of categorical variables. \emph{arXiv preprint arXiv:1604.06737}.

\bibitem[{Gy{\"o}rfi et~al.(2006)Gy{\"o}rfi, Kohler, Krzyzak and Walk}]{gyorfi2006distribution}
Gy{\"o}rfi, L., Kohler, M., Krzyzak, A. and Walk, H. (2006) \emph{A Distribution-free Theory of Nonparametric Regression}. Springer Science \& Business Media.

\bibitem[{He et~al.(2022)He, Chen, Xie, Li, Doll{\'a}r and Girshick}]{he2022masked}
He, K., Chen, X., Xie, S., Li, Y., Doll{\'a}r, P. and Girshick, R. (2022) Masked autoencoders are scalable vision learners. In \emph{Proceedings of the IEEE/CVF conference on computer vision and pattern recognition}, 16000--16009.

\bibitem[{He et~al.(2016)He, Zhang, Ren and Sun}]{he2016deep}
He, K., Zhang, X., Ren, S. and Sun, J. (2016) Deep residual learning for image recognition. In \emph{Proceedings of the IEEE Conference on Computer Vision and Pattern Recognition}, 770--778.

\bibitem[{Hinton and Salakhutdinov(2006)}]{hinton2006reducing}
Hinton, G.~E. and Salakhutdinov, R.~R. (2006) Reducing the dimensionality of data with neural networks. \emph{Science}, \textbf{313}, 504--507.

\bibitem[{Hollmann et~al.(2025)Hollmann, M{\"u}ller, Purucker, Krishnakumar, K{\"o}rfer, Hoo, Schirrmeister and Hutter}]{hollmann2025accurate}
Hollmann, N., M{\"u}ller, S., Purucker, L., Krishnakumar, A., K{\"o}rfer, M., Hoo, S.~B., Schirrmeister, R.~T. and Hutter, F. (2025) Accurate predictions on small data with a tabular foundation model. \emph{Nature}, \textbf{637}, 319--326.

\bibitem[{Imaizumi and Fukumizu(2019)}]{imaizumi2019deep}
Imaizumi, M. and Fukumizu, K. (2019) Deep neural networks learn non-smooth functions effectively. In \emph{The 22nd International Conference on Artificial Intelligence and Statistics}, 869--878, PMLR.

\bibitem[{Jacot, Gabriel and Hongler(2018)}]{jacot2018neural}
Jacot, A., Gabriel, F. and Hongler, C. (2018) Neural tangent kernel: Convergence and generalization in neural networks. \emph{Advances in Neural Information Processing Systems}, \textbf{31}.

\bibitem[{Jiao et~al.(2023)Jiao, Shen, Lin and Huang}]{jiao2023deep}
Jiao, Y., Shen, G., Lin, Y. and Huang, J. (2023) Deep nonparametric regression on approximate manifolds: Nonasymptotic error bounds with polynomial prefactors. \emph{Annals of Statistics}, \textbf{51}, 691--716.

\bibitem[{Josse et~al.(2024)Josse, Chen, Prost, Varoquaux and Scornet}]{josse2024consistency}
Josse, J., Chen, J.~M., Prost, N., Varoquaux, G. and Scornet, E. (2024) On the consistency of supervised learning with missing values. \emph{Statistical Papers}, \textbf{65}, 5447--5479.

\bibitem[{Jumper et~al.(2021)Jumper, Evans, Pritzel, Green, Figurnov, Ronneberger, Tunyasuvunakool, Bates, {\v{Z}}{\'\i}dek, Potapenko, Bridgland, Meyer, Kohl, Ballard, Cowie, Romera-Paredes, Nikolov, Jain, Adler, Back, Petersen, Reiman, Clancy, Zielinski, Steinegger, Pacholska, Berghammer, Bodenstein, Silver, Vinyals, Senior, Kavukcuoglu, Kohli and Hassabis}]{jumper2021highly}
Jumper, J., Evans, R., Pritzel, A., Green, T., Figurnov, M., Ronneberger, O., Tunyasuvunakool, K., Bates, R., {\v{Z}}{\'\i}dek, A., Potapenko, A., Bridgland, A., Meyer, C., Kohl, S. A.~A., Ballard, A.~J., Cowie, A., Romera-Paredes, B., Nikolov, S., Jain, R., Adler, J., Back, T., Petersen, S., Reiman, D., Clancy, E., Zielinski, M., Steinegger, M., Pacholska, M., Berghammer, T., Bodenstein, S., Silver, D., Vinyals, O., Senior, A.~W., Kavukcuoglu, K., Kohli, P. and Hassabis, D. (2021) Highly accurate protein structure prediction with AlphaFold. \emph{Nature}, \textbf{596}, 583--589.

\bibitem[{Kim, Nakamaki and Suzuki(2024)}]{kim2024transformers}
Kim, J., Nakamaki, T. and Suzuki, T. (2024) Transformers are minimax optimal nonparametric in-context learners. \emph{Advances in Neural Information Processing Systems}, \textbf{37}, 106667--106713.

\bibitem[{Kingma and Ba(2015)}]{kingma2015adam}
Kingma, D.~P. and Ba, J. (2015) Adam: A method for stochastic optimization. In \emph{International Conference on Learning Representations}.

\bibitem[{Kohler and Langer(2021)}]{kohler2021rate}
Kohler, M. and Langer, S. (2021) On the rate of convergence of fully connected deep neural network regression estimates. \emph{Annals of Statistics}, \textbf{49}, 2231--2249.

\bibitem[{Krizhevsky, Sutskever and Hinton(2012)}]{krizhevsky2012imagenet}
Krizhevsky, A., Sutskever, I. and Hinton, G.~E. (2012) Imagenet classification with deep convolutional neural networks. \emph{Advances in Neural Information Processing Systems}, \textbf{25}.

\bibitem[{Le~Morvan et~al.(2021)Le~Morvan, Josse, Scornet and Varoquaux}]{le2021what}
Le~Morvan, M., Josse, J., Scornet, E. and Varoquaux, G. (2021) What’s a good imputation to predict with missing values? \emph{Advances in Neural Information Processing Systems}, \textbf{34}, 11530--11540.

\bibitem[{LeCun et~al.(1998)LeCun, Bottou, Bengio and Haffner}]{lecun1998gradient}
LeCun, Y., Bottou, L., Bengio, Y. and Haffner, P. (1998) Gradient-based learning applied to document recognition. \emph{Proceedings of the IEEE}, \textbf{86}, 2278--2324.

\bibitem[{LeCun, Cortes and Burges(1998)}]{lecun1998mnist}
LeCun, Y., Cortes, C. and Burges, C. J.~C. (1998) The MNIST database of handwritten digits. \url{https://yann.lecun.com/exdb/mnist/}.

\bibitem[{Li, Jiang and Marlin(2019)}]{li2019misgan}
Li, S. C.-X., Jiang, B. and Marlin, B. (2019) MisGAN: Learning from incomplete data with generative adversarial networks. In \emph{International Conference on Learning Representations}.

\bibitem[{Littwin and Yang(2023)}]{littwin2023adaptive}
Littwin, E. and Yang, G. (2023) Adaptive optimization in the $\infty$-width limit. In \emph{The Eleventh International Conference on Learning Representations}.

\bibitem[{Liu et~al.(2019)Liu, Sun, Zhou, Huang and Darrell}]{liu2018rethinking}
Liu, Z., Sun, M., Zhou, T., Huang, G. and Darrell, T. (2019) Rethinking the value of network pruning. In \emph{International Conference on Learning Representations}.

\bibitem[{Loh and Wainwright(2012)}]{loh2012high}
Loh, P.-L. and Wainwright, M.~J. (2012) High-dimensional regression with noisy and missing data: provable guarantees with nonconvexity. \emph{Annals of Statistics}, \textbf{40}, 1637--1664.

\bibitem[{Loshchilov and Hutter(2019)}]{loshchilov2018decoupled}
Loshchilov, I. and Hutter, F. (2019) Decoupled weight decay regularization. In \emph{International Conference on Learning Representations}.

\bibitem[{Lu et~al.(2021)Lu, Shen, Yang and Zhang}]{lu2021deep}
Lu, J., Shen, Z., Yang, H. and Zhang, S. (2021) Deep network approximation for smooth functions. \emph{SIAM Journal on Mathematical Analysis}, \textbf{53}, 5465--5506.

\bibitem[{Ma et~al.(2026+)Ma, Verchand, Berrett, Wang and Samworth}]{ma2024estimation}
Ma, T., Verchand, K.~A., Berrett, T.~B., Wang, T. and Samworth, R.~J. (2026+) Estimation beyond Missing (Completely) at Random. \emph{Annals of Statistics (to appear)}.

\bibitem[{Ma, Wang and Samworth(2025)}]{ma2025optimal}
Ma, T., Wang, T. and Samworth, R.~J. (2025) Optimal in-context adaptivity and distributional robustness of Transformers. \emph{arXiv preprint arXiv:2510.23254}.

\bibitem[{Mattei and Frellsen(2019)}]{mattei2019miwae}
Mattei, P.-A. and Frellsen, J. (2019) MIWAE: Deep generative modelling and imputation of incomplete data sets. In \emph{International Conference on Machine Learning}, 4413--4423, PMLR.

\bibitem[{Mei and Montanari(2022)}]{mei2022generalization}
Mei, S. and Montanari, A. (2022) The generalization error of random features regression: Precise asymptotics and the double descent curve. \emph{Communications on Pure and Applied Mathematics}, \textbf{75}, 667--766.

\bibitem[{Mei, Montanari and Nguyen(2018)}]{mei2018mean}
Mei, S., Montanari, A. and Nguyen, P.-M. (2018) A mean field view of the landscape of two-layer neural networks. \emph{Proceedings of the National Academy of Sciences}, \textbf{115}, E7665--E7671.

\bibitem[{Mikolov et~al.(2013)Mikolov, Chen, Corrado and Dean}]{mikolov2013efficient}
Mikolov, T., Chen, K., Corrado, G. and Dean, J. (2013) Efficient estimation of word representations in vector space. \emph{arXiv preprint arXiv:1301.3781}.

\bibitem[{Nazabal et~al.(2020)Nazabal, Olmos, Ghahramani and Valera}]{nazabal2020handling}
Nazabal, A., Olmos, P.~M., Ghahramani, Z. and Valera, I. (2020) Handling incomplete heterogeneous data using VAEs. \emph{Pattern Recognition}, \textbf{107}, 107501.

\bibitem[{Paszke et~al.(2019)Paszke, Gross, Massa, Lerer, Bradbury, Chanan, Killeen, Lin, Gimelshein, Antiga, Desmaison, K\"opf, Yang, DeVito, Raison, Tejani, Chilamkurthy, Steiner, Fang, Bai and Chintala}]{paszke2019pytorch}
Paszke, A., Gross, S., Massa, F., Lerer, A., Bradbury, J., Chanan, G., Killeen, T., Lin, Z., Gimelshein, N., Antiga, L., Desmaison, A., K\"opf, A., Yang, E., DeVito, Z., Raison, M., Tejani, A., Chilamkurthy, S., Steiner, B., Fang, L., Bai, J. and Chintala, S. (2019) Pytorch: An imperative style, high-performance deep learning library. \emph{Advances in Neural Information Processing Systems}, \textbf{32}.

\bibitem[{Pedregosa et~al.(2011)Pedregosa, Varoquaux, Gramfort, Michel, Thirion, Grisel, Blondel, Prettenhofer, Weiss, Dubourg, Vanderplas, Passos, Cournapeau, Brucher, Perrot and Duchesnay}]{pedregosa2011scikit}
Pedregosa, F., Varoquaux, G., Gramfort, A., Michel, V., Thirion, B., Grisel, O., Blondel, M., Prettenhofer, P., Weiss, R., Dubourg, V., Vanderplas, J., Passos, A., Cournapeau, D., Brucher, M., Perrot, M. and Duchesnay, E. (2011) Scikit-learn: Machine learning in Python. \emph{Journal of Machine Learning Research}, \textbf{12}, 2825--2830.

\bibitem[{Prechelt(2002)}]{prechelt2002early}
Prechelt, L. (2002) Early stopping -- but when? In \emph{Neural Networks: Tricks of the Trade} (Orr, G.~B. and M\"uller, K.-R., eds.), 55--69, Springer.

\bibitem[{Royden and Fitzpatrick(2010)}]{royden2010real}
Royden, H.~L. and Fitzpatrick, P. (2010) Real Analysis. \emph{Printice-Hall Inc, Boston}.

\bibitem[{Samworth and Shah(2026+)}]{samworth2025statistics}
Samworth, R.~J. and Shah, R.~D. (2026+) \emph{Modern Statistical Methods and Theory}. Cambridge University Press, Cambridge.

\bibitem[{Schmidt-Hieber(2020)}]{schmidt-hieber2020nonparametric}
Schmidt-Hieber, J. (2020) {Nonparametric regression using deep neural networks with ReLU activation function}. \emph{Annals of Statistics}, \textbf{48}, 1875--1897.

\bibitem[{Sell, Berrett and Cannings(2024)}]{sell2023nonparametric}
Sell, T., Berrett, T.~B. and Cannings, T.~I. (2024) {Nonparametric classification with missing data}. \emph{Annals of Statistics}, \textbf{52}, 1178--1200.

\bibitem[{Shen, Yang and Zhang(2019)}]{shen2019nonlinear}
Shen, Z., Yang, H. and Zhang, S. (2019) Nonlinear approximation via compositions. \emph{Neural Networks}, \textbf{119}, 74--84.

\bibitem[{{\'S}mieja et~al.(2018){\'S}mieja, Struski, Tabor, Zieli{\'n}ski and Spurek}]{smieja2018processing}
{\'S}mieja, M., Struski, {\L}., Tabor, J., Zieli{\'n}ski, B. and Spurek, P. (2018) Processing of missing data by neural networks. \emph{Advances in Neural Information Processing Systems}, \textbf{31}.

\bibitem[{Stekhoven and B{\"u}hlmann(2012)}]{stekhoven2012missforest}
Stekhoven, D.~J. and B{\"u}hlmann, P. (2012) MissForest—non-parametric missing value imputation for mixed-type data. \emph{Bioinformatics}, \textbf{28}, 112--118.

\bibitem[{Suh and Cheng(2024)}]{suh2024survey}
Suh, N. and Cheng, G. (2024) A survey on statistical theory of deep learning: Approximation, training dynamics, and generative models. \emph{Annual Review of Statistics and Its Application}, \textbf{12}, 177--207.

\bibitem[{Suzuki, Wu and Nitanda(2023)}]{suzuki2023convergence}
Suzuki, T., Wu, D. and Nitanda, A. (2023) Convergence of mean-field langevin dynamics: time-space discretization, stochastic gradient, and variance reduction. \emph{Advances in Neural Information Processing Systems}, \textbf{36}, 15545--15577.

\bibitem[{Twala, Jones and Hand(2008)}]{twala2008good}
Twala, B.~E., Jones, M. and Hand, D.~J. (2008) Good methods for coping with missing data in decision trees. \emph{Pattern Recognition Letters}, \textbf{29}, 950--956.

\bibitem[{van Buuren and Groothuis-Oudshoorn(2011)}]{van2011mice}
van Buuren, S. and Groothuis-Oudshoorn, K. (2011) mice: Multivariate imputation by chained equations in R. \emph{Journal of Statistical Software}, \textbf{45}, 1--67.

\bibitem[{Vaswani et~al.(2017)Vaswani, Shazeer, Parmar, Uszkoreit, Jones, Gomez, Kaiser and Polosukhin}]{vaswani2017attention}
Vaswani, A., Shazeer, N., Parmar, N., Uszkoreit, J., Jones, L., Gomez, A.~N., Kaiser, {\L}. and Polosukhin, I. (2017) Attention is all you need. \emph{Advances in Neural Information Processing Systems}, \textbf{30}.

\bibitem[{Vincent et~al.(2008)Vincent, Larochelle, Bengio and Manzagol}]{vincent2008extracting}
Vincent, P., Larochelle, H., Bengio, Y. and Manzagol, P.-A. (2008) Extracting and composing robust features with denoising autoencoders. In \emph{Proceedings of the 25th International Conference on Machine Learning}, 1096--1103.

\bibitem[{Yang and Hu(2021)}]{yang2021tensor}
Yang, G. and Hu, E.~J. (2021) Tensor Programs IV: Feature Learning in Infinite-Width Neural Networks. In \emph{International Conference on Machine Learning}, PMLR.

\bibitem[{Zhang et~al.(2025)Zhang, Fang, Wu and Yu}]{zhang2025diffputer}
Zhang, H., Fang, L., Wu, Q. and Yu, P.~S. (2025) DiffPuter: An {EM}-driven diffusion model for missing data imputation. In \emph{The Thirteenth International Conference on Learning Representations}.

\bibitem[{Zhang et~al.(1988)Zhang, Tanida, Itoh and Ichioka}]{zhang1988shift}
Zhang, W., Tanida, J., Itoh, K. and Ichioka, Y. (1988) Shift-invariant pattern recognition neural network and its optical architecture. In \emph{Proceedings of annual conference of the Japan Society of Applied Physics}, vol. 564, Montreal, CA.

\bibitem[{Zhang, Shi and Zhou(2024)}]{zhang2024classification}
Zhang, Z., Shi, L. and Zhou, D.-X. (2024) Classification with deep neural networks and logistic loss. \emph{Journal of Machine Learning Research}, \textbf{25}, 1--117.

\bibitem[{Zhong, Mueller and Wang(2022)}]{zhong2022deep}
Zhong, Q., Mueller, J. and Wang, J.-L. (2022) Deep learning for the partially linear Cox model. \emph{Annals of Statistics}, \textbf{50}, 1348--1375.

\bibitem[{Zhu, Wang and Samworth(2022)}]{zhu2022high}
Zhu, Z., Wang, T. and Samworth, R.~J. (2022) High-dimensional principal component analysis with heterogeneous missingness. \emph{Journal of the Royal Statistical Society Series B: Statistical Methodology}, \textbf{84}, 2000--2031.

\end{thebibliography}

\appendix

\section{Some properties of neural networks} \label{appendix-a}
We summarise some basic operations associated with neural networks:
\begin{enumerate}[label=(P\arabic*)]
    \item \label{NN-enlarging} \textbf{Neural network enlarging:} Let $L\in\mathbb{N}$, let $\bm p_1, \bm p_2 \in \mathbb{N}^{L+2}$ be such that $\bm p_1 \leq \bm p_2$ coordinate-wise and let $s\in\mathbb{N}$. Then, from the definition, $\mathcal{F}(L,\bm p_1, s) \subseteq \mathcal{F}(L,\bm p_2, s)$.

    \item \label{NN-composition} \textbf{Neural network composition:} $\bm p_1 = (p_{1,0},\ldots,p_{1,L_1+1})^\top \in \mathbb{N}^{L_1+2}$ and $\bm p_2 = (p_{2,0},\ldots,p_{2,L_2+1})^\top \in \mathbb{N}^{L_2+2}$ satisfy $p_{1,L_1+1} = p_{2,0}$; let $\bm f_1 \in\mathcal{F}(L_1,\bm p_1)$ and $\bm f_2 \in\mathcal{F}(L_2,\bm p_2)$ (if either $\bm f_1$ or $\bm f_2$ is an affine function, then we set the corresponding $L_1$ or $L_2$ to be zero). Then
    \begin{gather*}
        \bm f_2 \circ \bm f_1 \in \mathcal{F}(L_1+L_2,\, \bm p_3),\text{ where}\\
        \bm p_3 \coloneqq (p_{1,0},\, \ldots,\, p_{1,L_1},\, p_{2,1},\, \ldots,\, p_{2,L_2+1}) \in \mathbb{N}^{L_1+L_2+2}. 
    \end{gather*}
    To see this, suppose that $\bm f_1(\cdot) = \bm{A}_{L_1+1}^{(1)} \circ \bm{\sigma} \circ \bm{A}_{L_1}^{(1)} \circ \bm{\sigma} \circ \cdots \circ \bm{A}_2^{(1)} \circ \bm{\sigma} \circ \bm{A}_1^{(1)}(\cdot)$ and $\bm f_2(\cdot) = \bm{A}_{L_2+1}^{(2)} \circ \bm{\sigma} \circ \bm{A}_{L_2}^{(2)} \circ \bm{\sigma} \circ \cdots \circ \bm{A}_2^{(2)} \circ \bm{\sigma} \circ \bm{A}_1^{(2)}(\cdot)$.  Then
    \begin{align*}
        \bm f_2 \circ \bm f_1 (\cdot) = \bm{A}_{L_2+1}^{(2)} \circ \bm{\sigma} \circ \cdots \circ \bm{A}_2^{(2)} \circ \bm{\sigma} \circ \bigl(\bm{A}_1^{(2)} \circ \bm{A}_{L_1+1}^{(1)}\bigr) \circ \bm{\sigma} \circ \cdots \circ \bm{A}_2^{(1)} \circ \bm{\sigma} \circ \bm{A}_1^{(1)}(\cdot)
    \end{align*}
    where we note that $\bm{A}_1^{(2)} \circ \bm{A}_{L_1+1}^{(1)}: \mathbb{R}^{p_{1,L_1}} \to \mathbb{R}^{p_{2,1}}$ is an affine function. 

    \item \label{NN-padding} \textbf{Neural network padding:} Let $L_1,L_2 \in \mathbb{N}$ be such that $L_1 < L_2$, let $\bm p_1 = (p_{1,0},\ldots,p_{1,L_1+1})^\top \in \mathbb{N}^{L_1+2}$, $\bm p_2 = (\bm p_1, 2p_{1,L_1+1},\ldots, 2p_{1,L_1+1})^\top \in \mathbb{N}^{L_2+2}$ and $s\in\mathbb{N}$. Then $\mathcal{F}(L_1, \bm p_1, s) \subseteq \mathcal{F}\bigl(L_2, \bm p_2, 2s+2p_{1,L_1+1}(L_2-L_1)\bigr)$.  To see this, let $\bm \phi \in \mathcal{F}\bigl(1, (p_{1,L_1+1}, 2p_{1,L_1+1}, p_{1,L_1+1})\bigr)$ be defined by $\bm \phi (\bm y) \coloneqq \bm\sigma(\bm y) - \bm\sigma(-\bm y)$ for $\bm y \in \mathbb{R}^{p_{1,L_1+1}}$, so that $\bm\phi(\bm y) = \bm y$. Let $\bm f_1 \in \mathcal{F}(L_1, \bm p_1, s)$, and define $\bm f_2 \coloneqq \bm\phi \circ \cdots\circ \bm\phi \circ \bm f_1$, where $\bm\phi$ is applied $(L_2-L_1)$-times. Then $\bm f_1(\bm x) = \bm f_2(\bm x)$ for all $\bm x \in \mathbb{R}^{p_{1,0}}$, and $\bm f_2 \in \mathcal{F}\bigl(L_2, \bm p_2, 2s + 2p_{1,L_1+1}(L_2-L_1)\bigr)$ by~\ref{NN-composition} and counting parameters.
    
    \item \label{NN-parallelisation} \textbf{Neural network parallelisation:} Let $N, L \in\mathbb{N}$ and for $i\in[N]$, let $\bm f_i \in \mathcal{F}(L,\bm p_i,s_i)$ where $\bm p_i = (d,p_{i,1},\ldots,p_{i,L+1})^\top \in \mathbb{N}^{L+2}$ and $s_i \in \mathbb{N}$. Then, writing $\bm f(\cdot) = \bigl(\bm f_1(\cdot)^\top, \ldots, \bm f_N(\cdot)^\top \bigr)^\top$, we have from the definition that 
    \begin{align*}
        \bm f \in \mathcal{F}\biggl(L,\, \Bigl(d,\sum_{i=1}^N p_{i,1}, \ldots, \sum_{i=1}^N p_{i,L+1}\Bigr),\, \sum_{i=1}^N s_i\biggr).
    \end{align*}
\end{enumerate}

\section{Covering number bounds for sparse neural networks}

Our oracle inequality for the excess risk of sparse neural network estimators in Proposition~\ref{prop:oracle-inequality} relies on bounds on the covering number of the relevant class that we develop in this subsection.  In fact, our proof proceeds via a bound on the \emph{pseudo-dimension} of this class, which is related to its Vapnik--Chervonenkis (VC) dimension, so we begin by defining the notions we require.
\begin{defn}
    Let $\mathcal{H}$ be a set of functions from $\mathcal{X}\subseteq\mathbb{R}^d$ to $\{0,1\}$. For $m\in\mathbb{N}$, the \emph{shattering coefficient} is defined as
    \begin{align*}
        \mathrm{shat}(\mathcal{H},m) \coloneqq \max_{\bm{x}_1,\ldots,\bm{x}_m \in \mathcal{X}} \bigl| \bigl\{\bigl(h(\bm{x}_1), \ldots, h(\bm{x}_m)\bigr) : h\in\mathcal{H}\bigr\} \bigr|.
    \end{align*}
    The \emph{VC-dimension} of $\mathcal{H}$ is defined as
    \begin{align*}
        \mathrm{VCdim}(\mathcal{H}) \coloneqq \sup \{m\in\mathbb{N} : \mathrm{shat}(\mathcal{H},m) = 2^m\}.
    \end{align*}
\end{defn}

We define the \emph{sign function} $\mathrm{sgn} : \mathbb{R} \to \{0,1\}$ as\footnote{Although this is not the standard definition of the sign function, it is used elsewhere in the neural network literature \citep[e.g.][]{bartlett2019nearly}, and it is convenient for our purposes here.} $\mathrm{sgn}(x) \coloneqq \mathbbm{1}_{\{x>0\}}$. If $\mathcal{F}$ is a class of real-valued functions, then we define $\mathrm{sgn}\circ\mathcal{F} \coloneqq \{\mathrm{sgn}\circ f : f\in\mathcal{F}\}$.
\begin{defn}
    Let $\mathcal{F}$ be a set of functions from $\mathcal{X}\subseteq\mathbb{R}^d$ to $\mathbb{R}$. The \emph{pseudo-dimension} of~$\mathcal{F}$ is defined as
    \begin{align*}
        \mathrm{Pdim}(\mathcal{F}) \coloneqq \mathrm{VCdim}\bigl( \bigl\{(\bm{x},y) \mapsto \mathrm{sgn}\bigl(f(\bm{x})-y\bigr) : f\in\mathcal{F}\bigr\} \bigr).
    \end{align*}
    For $\epsilon>0$, we say that a collection of functions $g_1,\ldots,g_N:\mathcal{X}\to\mathbb{R}$ is an \emph{$\epsilon$-cover} of~$\mathcal{F}$ with respect to $\|\cdot\|_{L_q(\mu)}$ if for every $f\in\mathcal{F}$, there exists $j=j(f) \in [N]$ such that $\|f-g_j\|_{L_q(\mu)} \leq \epsilon$. The \emph{$\epsilon$-covering number} of $\mathcal{F}$ with respect to $\|\cdot\|_{L_q(\mu)}$, written as $\mathcal{N}(\epsilon,\mathcal{F},\|\cdot\|_{L_q(\mu)})$, is the cardinality of the smallest $\epsilon$-cover of $\mathcal{F}$ with respect to $\|\cdot\|_{L_q(\mu)}$ (if no finite $\epsilon$-cover exists, then the $\epsilon$-covering number is $\infty$).
\end{defn}

The following proposition provides upper bounds on the VC-dimension and covering number of the class $\mathcal{F}(L,\bm{p},s)$. The proof is based on \citet[Theorem~7]{bartlett2019nearly}, which gives an upper bound on the VC-dimension of the class $\mathcal{F}(L,\bm{p})$. We also remark that upper bounds on the covering number of the class $\{f\in\mathcal{F}(L,\bm{p},s) : \|\bm{\Theta}(\bm{f})\|_{\infty} \leq 1\}$ have been obtained by, e.g.~\citet[Lemma~5]{schmidt-hieber2020nonparametric}; however, these do not imply an upper bound on its VC-dimension, and they do not generalise to the class $\mathcal{F}(L,\bm{p},s)$.

\begin{prop}\label{prop:VC-upper-bound}
    Let $L\in\mathbb{N}$, $\bm{p} = (p_0,\ldots,p_{L+1})^\top \in \mathbb{N}^{L+2}$ with $p_{L+1}=1$, let $V \coloneqq \sum_{\ell=1}^{L+1} p_\ell(p_{\ell-1}+1)$ and let $s\in[V]$.
    \begin{itemize}
        \item[(a)] We have
        \begin{align*}
            \mathrm{VCdim}\bigl(\mathrm{sgn}\circ\mathcal{F}(L,\bm{p},s)\bigr) &\leq \mathrm{Pdim}\bigl(\mathcal{F}(L,\bm{p},s)\bigr)\\
            &\leq 6s(L+1)\log_2(3s) + 2s\log_2(2p_0) \\
            &\lesssim sL\log (es) + s\log(ep_0).
        \end{align*}
        \item[(b)] For $q \in [1,\infty)$, $B>0$, $\epsilon\in(0, B/2)$ and any probability measure $\mu$ on $\mathbb{R}^d$, we have
        \begin{align*}
            \log \mathcal{N}\bigl(\epsilon,\, T_B \circ \mathcal{F}(L,\bm{p},s),\, \|\cdot\|_{L_q(\mu)} \bigr) &\leq 2q\cdot\mathrm{Pdim}\bigl(\mathcal{F}(L,\bm{p},s)\bigr)\log(10B/\epsilon) \\
            &\lesssim q\bigl\{sL\log (es) + s\log(ep_0)\bigr\}\log(B/\epsilon).
        \end{align*}
    \end{itemize}
\end{prop}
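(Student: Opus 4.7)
My plan is to handle the two parts in sequence, deriving Part (b) from Part (a) by a general pseudo-dimension to covering number argument.

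For Part (a), the first step is to note that $\mathrm{VCdim}\bigl(\mathrm{sgn}\circ \mathcal{F}(L,\bm p,s)\bigr)\leq\mathrm{Pdim}\bigl(\mathcal{F}(L,\bm p,s)\bigr)$: indeed, any set $(\bm x_i)$ shattered by $\mathrm{sgn}\circ\mathcal{F}$ is shattered by the class appearing in the definition of pseudo-dimension by setting $y_i\equiv 0$. So it suffices to bound the pseudo-dimension. I would adapt the piecewise polynomial strategy of \citet[Theorem~7]{bartlett2019nearly}, with careful modifications to exploit sparsity. Fix $m$ sample points $(\bm x_i,y_i)_{i=1}^m$ and, for each sparsity pattern $S\subseteq[V]$ with $|S|\leq s$, write $\mathcal{F}_S\subseteq\mathcal{F}(L,\bm p)$ for the subclass in which only the parameters indexed by $S$ are allowed to be nonzero. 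Within $\mathcal{F}_S$, after conditioning on the sign pattern of every ReLU pre-activation at the points $\bm x_i$, the maps $\bm\theta_S\mapsto f_{\bm\theta_S}(\bm x_i)-y_i$ become polynomials in $\bm\theta_S\in\mathbb{R}^s$ of degree at most $L+1$. A Warren--Milnor counting argument then bounds the number of sign vectors generated by such polynomials across varying parameters and varying sign patterns by $(Cm)^{cs(L+1)}$.

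To combine the contributions from different sparsity patterns, I union bound the shattering coefficient of $\bigl\{(\bm x,y)\mapsto\mathrm{sgn}(f(\bm x)-y):f\in\mathcal{F}(L,\bm p,s)\bigr\}$ by $\binom{V}{s}$ times the per-pattern bound; then $\log\binom{V}{s}\leq s\log(eV/s)$ and solving the inequality $\mathrm{shat}(m)\geq 2^m$ yields the VC-dim threshold. The main delicacy, and the step I expect to be the hardest, is that the stated bound involves $\log(2p_0)$ rather than $\log V$. To obtain this I would refine the sparsity-pattern enumeration to count only "architecturally effective" choices: an active weight in layer~$\ell$ selects a pair of neurons in consecutive layers, but once a neuron's outgoing weights are all zero it can be pruned, so one may reduce to networks in which every hidden neuron has at least one active outgoing weight. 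This allows the total count of effective architectures with $s$ active weights to be bounded in terms of $s$ and $p_0$ alone (with the remaining widths absorbed into the $s(L+1)\log(3s)$ term), matching the stated form.

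For Part (b), I invoke the classical conversion from pseudo-dimension to $L_q$-covering numbers (see e.g.\ \citealp{anthony1999neural}, Theorem~12.2, or Haussler's version): if $\mathcal{G}$ is a class of $[-B,B]$-valued functions with $\mathrm{Pdim}(\mathcal{G})\leq D$, then for every probability measure $\mu$, every $q\in[1,\infty)$ and every $\epsilon\in(0,B/2)$,
\begin{equation*}
\log\mathcal{N}\bigl(\epsilon,\mathcal{G},\|\cdot\|_{L_q(\mu)}\bigr)\leq 2qD\log(10B/\epsilon).
\end{equation*}
Applied to $\mathcal{G}=T_B\circ\mathcal{F}(L,\bm p,s)$, the only thing to check is that truncation does not inflate the pseudo-dimension, which holds because $T_B$ is a monotone non-decreasing $1$-Lipschitz map, so any shattering of $(\bm x_i,y_i)$ by $T_B\circ\mathcal{F}(L,\bm p,s)$ can be realised (by replacing each $y_i$ with its truncation into $[-B,B]$) by $\mathcal{F}(L,\bm p,s)$ itself. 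Combining with the pseudo-dimension bound from Part (a) yields the stated covering number inequality.
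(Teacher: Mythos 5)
Your plan is essentially the same as the paper's: reduce to a pseudo-dimension bound, adapt the layerwise sign-pattern/Warren-style counting argument of \citet[Theorem~7]{bartlett2019nearly} to sparse networks, union-bound over sparsity patterns, and convert to $L_q$ covering numbers via the standard pseudo-dimension result (the paper cites \citealp[Theorem~9.4]{gyorfi2006distribution}; your \citealp{anthony1999neural}/Haussler route is equivalent). Your truncation argument for Part~(b) also matches: shattered pairs $(\bm x_i,y_i)$ for $T_B\circ\mathcal F$ force $y_i\in[-B,B)$, so the same pairs are shattered by $\mathcal F$.

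Two technical points you flag or gloss over are handled differently and more crisply in the paper, and you would need to supply their analogues. First, to replace $\log V$ by $\log p_0$ you propose pruning hidden neurons with zero outgoing weight. The paper gets the same reduction via the cleaner identity $\mathcal F(L,\bm p,s)=\mathcal F(L,\tilde{\bm p},s)$ with $\tilde p_\ell=p_\ell\wedge s$ for hidden layers \citep[Eq.~(19)]{schmidt-hieber2020nonparametric}, after which $\tilde V\leq 2p_0(2s)^{L+1}$ and $\binom{\tilde V}{s}$ contributes exactly $s\log_2(2p_0)+s(L+1)\log_2(2s)$. Your pruning heuristic can be made rigorous, but you would have to be careful about biases (a neuron with zero incoming weights but nonzero bias still emits a constant) and about showing that the resulting class is uniformly contained in a single reduced $\mathcal F(L,\tilde{\bm p},s)$; citing the Schmidt-Hieber identity avoids this. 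Second, your claim that the per-sparsity-pattern count is $(Cm)^{cs(L+1)}$ hides a genuine step: a naive Warren bound applied layer by layer gives a factor $\prod_{\ell=1}^L\bigl(2em\ell k_\ell/s\bigr)^s$ involving the per-layer active-neuron counts $k_\ell$, and the paper eliminates these via an AM--GM argument exploiting the sparsity budget $\sum_\ell k_\ell\leq s$ (so that $\prod_\ell\ell k_\ell\leq s^L$), yielding the clean $2^{L+1}(2em)^{s(L+1)}$. Without this step the width (or $\max_\ell k_\ell$) would leak into the final bound. With both ingredients spelled out, your argument becomes the paper's proof.
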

By \citet[Theorem~3]{bartlett2019nearly}, there exists a universal constant $C_3 > 0$ such that if $s \geq C_3 L \geq C_3^2$, then there exists $\bm{p}' = (p'_0,\ldots,p'_{L+1})^\top \in \mathbb{N}^{L+2}$ with $V' \coloneqq \sum_{\ell=1}^{L+1} p_\ell'(p_{\ell-1}'+1) \leq s$, such that $\mathrm{VCdim}\bigl(\mathrm{sgn} \circ \mathcal{F}(L,\bm{p}')\bigr) \geq c_4\cdot sL\log(s/L)$ for some universal constant $c_4>0$.  By~\ref{NN-enlarging}, $\mathcal{F}(L,\bm{p}') \subseteq \mathcal{F}(L,\bm{p},s)$ for all $\bm{p}\in\mathbb{N}^{L+2}$ such that $\bm{p} \geq \bm{p}'$ coordinate-wise, so the same lower bound then applies to $\mathrm{VCdim}\bigl(\mathrm{sgn} \circ \mathcal{F}(L,\bm{p},s)\bigr)$.  Thus, under the above condition on $s$ and $L$, our upper bound in Proposition~\ref{prop:VC-upper-bound}\emph{(a)} is tight up to a logarithmic factor in $n$ when $\bm{p}$ is sufficiently large. 
\begin{proof}
    \emph{(a)} We first observe that 
    \begin{align*}
    \mathrm{Pdim}\bigl(\mathcal{F}(L,\bm{p},s)\bigr) &= \mathrm{VCdim}\bigl( \bigl\{(\bm{x},y) \mapsto \mathrm{sgn}\bigl(f(\bm{x})-y\bigr) : f\in\mathcal{F}(L,\bm{p},s)\bigr\} \bigr) \\
    &\geq \mathrm{VCdim}\bigl( \bigl\{\bm{x} \mapsto \mathrm{sgn}\bigl(f(\bm{x})\bigr) : f\in\mathcal{F}(L,\bm{p},s)\bigr\} \bigr) \\
    &= \mathrm{VCdim}\bigl(\mathrm{sgn}\circ\mathcal{F}(L,\bm{p},s)\bigr).
    \end{align*}
    Writing $\tilde{\bm p} \coloneqq (p_0,p_1\wedge s,\ldots,p_L\wedge s, p_{L+1})^\top$, we have by~\citet[Equation~(19)]{schmidt-hieber2020nonparametric} that $\mathcal{F}(L,\bm p,s) = \mathcal{F}(L,\tilde{\bm p},s)$.
    Define a class of functions $\mathcal{H} \coloneqq \bigl\{(\bm{x},y) \mapsto \mathrm{sgn}\bigl(f(\bm{x}) - y\bigr) : f\in\mathcal{F}(L,\tilde{\bm p},s),\, \gamma\in\mathbb{R}\bigr\}$ from $\mathbb{R}^{p_0} \times\mathbb{R}$ to $\mathbb{R}$. Then $\mathrm{Pdim}\bigl(\mathcal{F}(L,\bm{p},s)\bigr) = \mathrm{VCdim}(\mathcal{H})$, so it suffices to upper bound $m\coloneqq \mathrm{VCdim}(\mathcal{H})$. If $m < s$, then part~\emph{(a)} of the proposition follows since the right-hand side is at least $s$. Therefore, we may assume without loss of generality that $m \geq s$. Suppose that $(\bm{x}_1,y_1),\ldots,(\bm{x}_m,y_m)\in\mathbb{R}^{p_0}\times \mathbb{R}$ are shattered by $\mathcal{H}$. Let $\tilde{V}\coloneqq \sum_{\ell=1}^{L+1} \tilde{p}_\ell (\tilde{p}_{\ell-1}+1)$ denote the total number of parameters of a neural network in $\mathcal{F}(L,\tilde{\bm p},s)$. 
    For $\bm{x}\in\mathbb{R}^{p_0}$ and $\bm{\theta} \in \{\bm{\theta}'\in\mathbb{R}^{\tilde{V}} : \|\bm{\theta}'\|_0 \leq s\} \eqqcolon \mathbb{B}_0^{\tilde{V}}(s)$, let $g(\bm{x},\bm{\theta}) \coloneqq f(\bm{x})$ when $f\in\mathcal{F}(L,\tilde{\bm p},s)$ satisfies $\bm{\Theta}(\bm{f}) = \bm{\theta}$. In other words, $g(\cdot,\bm{\theta})$ is the neural network in $\mathcal{F}(L,\tilde{\bm p},s)$ with parameter vector $\bm{\theta}$, so $\mathcal{F}(L,\tilde{\bm p},s) = \bigl\{g(\cdot,\bm{\theta}) : \bm{\theta} \in \mathbb{B}_0^{\tilde{V}}(s)\bigr\}$. We partition $\mathbb{B}_0^{\tilde{V}}(s)$ into sets $B_1,\ldots,B_{\binom{\tilde{V}}{s}}$, where the elements in each set $B_i$ are all supported on the same set of cardinality $s$. Then, by definition of $m$, 
    \begin{align}
        2^m &= \Bigl| \Bigl\{ \Bigl(\mathrm{sgn}\bigl(g(\bm{x}_1,\bm{\theta})- y_1\bigr), \ldots, \mathrm{sgn}\bigl(g(\bm{x}_m,\bm{\theta})- y_m\bigr)\Bigr)^\top : \bm{\theta}\in \mathbb{B}_0^{\tilde{V}}(s)\Bigr\} \Bigr|\nonumber\\
        &\leq \sum_{i=1}^{\binom{\tilde{V}}{s}} \Bigl| \Bigl\{ \Bigl(\mathrm{sgn}\bigl(g(\bm{x}_1,\bm{\theta})- y_1\bigr), \ldots, \mathrm{sgn}\bigl(g(\bm{x}_m,\bm{\theta})- y_m\bigr)\Bigr)^\top :  \bm{\theta}\in B_i\Bigr\} \Bigr| \eqqcolon \sum_{i=1}^{\binom{\tilde{V}}{s}} K_i. \label{eq:VC-bound-K-decomposition}
    \end{align}
    We will prove upper bounds for $K_1,\ldots,K_{\binom{\tilde{V}}{s}}$, which then imply an upper bound on $m$.  To this end, without loss of generality, we upper bound $K_1$. For $\ell\in[L+1]$, $\bm{x} \in \mathbb{R}^{p_0}$ and $\bm{\theta} \in B_1$, define $\bm{g}^{(\ell)}(\bm{x}, \bm{\theta}) \coloneqq \bm{A}_{\ell}\circ \bm{\sigma} \circ\cdots\circ \bm{\sigma}\circ \bm{A}_1(\bm{x})$, where $\bm{A}_1,\ldots,\bm{A}_{L+1}$ are defined analogously to~\eqref{Eq:Flp} (but with each $p_\ell$ there replaced with $\tilde{p}_\ell$) with weight matrices and bias vectors given by relevant components of~$\bm{\theta}$.  For $\ell \in [L+1]$ and $u\in[\tilde{p}_{\ell}]$, let $g^{(\ell)}_u$ be the $u$th coordinate function of $\bm g^{(\ell)}$, let $\mathcal{U}_\ell\coloneqq \{u\in[\tilde{p}_\ell] : g^{(\ell)}_u(\cdot,\bm\theta) \neq 0 \text{ for some } \bm\theta \in B_1\}$ be the coordinates of active neurons in the $\ell$th layer, and let $k_\ell \coloneqq |\mathcal{U}_\ell|$.  If $k_\ell = 0$ for some $\ell\in[L+1]$, then $g(\cdot,\bm\theta) = 0$ for all $\bm\theta \in B_1$, so $K_1=1$. Now assume that $k_\ell \geq 1$ for all $\ell\in[L+1]$.  We will construct a finite sequence of partitions    
    $\mathcal{P}_1,\ldots,\mathcal{P}_{L+1}$ of $B_1$, each refining the previous one, satisfying the following properties:
    \begin{enumerate}[label=(\roman*)]
        \item \label{property1}$N_\ell \coloneqq |\mathcal{P}_\ell|$ satisfies $N_1 = 1$ and for $\ell\in [L]$, we have
        \begin{align*}
            \frac{N_{\ell+1}}{N_{\ell}} \leq 2\biggl( \frac{2em \ell k_{\ell}}{s} \biggr)^s \eqqcolon \phi_{\ell}.
        \end{align*}
        \item \label{property2} $\bm\theta \mapsto \bm g^{(\ell)}(\bm x_j, \bm\theta)$ is a polynomial of degree at most $\ell$ on $P_{\ell}$, for each $j\in[m]$, $\ell\in[L+1]$ and $P_\ell \in \mathcal{P}_\ell$.
    \end{enumerate}
    We start by defining $\mathcal{P}_1 \coloneqq \{B_1\}$. By the definition of $\bm A_1$ in~\eqref{Eq:Flp}, we have that $\bm\theta \mapsto \bm g^{(1)}(\bm x_j, \bm \theta) = \bm A_1(\bm x_j)$ is linear on $B_1$. Now suppose that we have constructed $\mathcal{P}_1,\ldots,\mathcal{P}_{\ell}$ for some $\ell \in [L]$ satisfying both Properties~\ref{property1} and~\ref{property2}.  By the induction hypothesis, $\mathcal{P}_{\ell} = \{P_{\ell,1},\ldots,P_{\ell,N_{\ell}}\}$ is such that, for each $r\in[N_\ell]$, $j\in[m]$ and $u\in\mathcal{U}_{\ell}$, the map $\bm \theta \mapsto g^{(\ell)}_u(\bm x_j,\bm \theta)$ is a polynomial of degree at most $\ell$ on $P_{\ell,r}$, depending on at most $s$ coordinates of $\bm \theta$. Then, for a fixed $r\in[N_\ell]$, by \citet[Lemma~17]{bartlett2019nearly}, since $m \geq s$, we have
    \begin{align*}
        \Bigl|\Bigl\{\Bigl( \mathrm{sgn}\bigl(g^{(\ell)}_u&(\bm x_j,\bm \theta)\bigr) : u\in[\tilde{p}_{\ell}],\, j\in[m] \Bigr) : \bm \theta\in P_{\ell,r}\Bigr\} \Bigr|\\
        & = \Bigl|\Bigl\{\Bigl( \mathrm{sgn}\bigl(g^{(\ell)}_u(\bm x_j,\bm \theta)\bigr) : u\in \mathcal{U}_{\ell},\, j\in[m] \Bigr) : \bm \theta\in P_{\ell,r}\Bigr\} \Bigr| \leq \phi_{\ell}. 
    \end{align*}
    Therefore, for each $r\in[N_\ell]$, we can partition $P_{\ell,r}$ into regions $R_{r,1},\ldots,R_{r,\phi_{\ell}}$ such that for each $t \in [\phi_{\ell}]$, the map $\bm \theta \mapsto \Bigl( \mathrm{sgn}\bigl(g_u^{(\ell)}(\bm x_j,\bm \theta)\bigr) : u\in[\tilde{p}_{\ell}],\, j\in[m] \Bigr)$ is constant on~$R_{r,t}$.  We then define $\mathcal{P}_{\ell+1} \coloneqq \bigl\{  R_{r,t} : r\in[N_\ell],\, t\in[\phi_{\ell}] \bigr\}$. By construction, $\mathcal{P}_{\ell+1}$ satisfies Property~\ref{property1}. Moreover, writing $\bm \xi_{\bm x_j,\bm \theta} \coloneqq \Bigl(\mathrm{sgn}\bigl(g_u^{(\ell)}(\bm x_j,\bm \theta)\bigr):u\in [\tilde{p}_{\ell}]\Bigr)^\top \in\{0,1\}^{\tilde{p}_{\ell}}$, we have 
    \[  
    \bm g^{(\ell+1)}(\bm x_j,\bm \theta) = \bm A_{\ell+1}\circ\bm \sigma\circ \bm g^{(\ell)}(\bm x_j,\bm \theta) = \bm A_{\ell+1}\bigl(\mathrm{diag}(\bm \xi_{\bm x_j,\bm \theta}) \bm g^{(\ell)}(\bm x_j,\bm \theta) \bigr).
    \]
    Since for each $j\in[m]$, $r\in[N_\ell]$ and $t\in[\phi_{\ell}]$, the sign vector $\bm \xi_{\bm x_j,\bm \theta}$ is constant for all $\bm \theta\in R_{r,t}$, the map $\bm \theta\mapsto \bm g^{(\ell+1)}(\bm x_j,\bm \theta)$ is a polynomial of degree at most $\ell+1$ on $R_{r,t}$, which verifies Property~\ref{property2} for $\ell+1$ and hence completes the induction.
    

    Next, by Property~\ref{property2}, for each $j\in[m]$ and $P_{L} \in \mathcal{P}_L$, we have that $\bm \theta \mapsto g(\bm x_j,\bm \theta) - y_j = \bm g^{(L+1)}(\bm x_j,\bm \theta) - y_j$ is a polynomial of degree at most $L$ on $P_L$, depending on at most~$s$ coordinates of $\bm \theta$. Thus, by \citet[Lemma~17]{bartlett2019nearly} again,
    \begin{align}
        \Bigl| \Bigl\{ \Bigl(\mathrm{sgn}\bigl(g(\bm x_1,\bm \theta) - y_1\bigr), \ldots, \mathrm{sgn}\bigl(g(\bm x_m,\bm \theta) - y_m\bigr)\Bigr)^\top : \bm \theta\in P_L \Bigr\} \Bigr| \leq 2\biggl( \frac{2emL}{s} \biggr)^{s}. \label{eq:VC-bound-ineq-1}
    \end{align}
    By Property~\ref{property1}, we also have 
    \begin{align}
        |\mathcal{P}_L| \leq 2^L\prod_{\ell=1}^L \biggl( \frac{2em\ell k_{\ell}}{s} \biggr)^s. \label{eq:VC-bound-ineq-2}
    \end{align}
    Combining~\eqref{eq:VC-bound-ineq-1} and~\eqref{eq:VC-bound-ineq-2} yields that
    \begin{align*}
        K_1 &\leq 2^{L+1}\biggl( \frac{2emL}{s} \biggr)^{s} \prod_{\ell=1}^{L} \biggl( \frac{2em \ell k_{\ell}}{s} \biggr)^s = 2^{L+1} \biggl( \frac{2em}{s} \biggr)^{s(L+1)} \biggl(\prod_{\ell=1}^L \ell k_\ell\biggr)^s\\
        &\leq 2^{L+1} \biggl( \frac{2em}{s} \biggr)^{s(L+1)} \biggl(\frac{1}{L}\sum_{\ell=1}^L \ell k_\ell\biggr)^{sL}\leq 2^{L+1} (2em)^{s(L+1)},
    \end{align*}
    where the second inequality is an application of the AM--GM inequality, and the final bound uses the fact that $\sum_{\ell=1}^L k_\ell \leq s$.  Therefore, by~\eqref{eq:VC-bound-K-decomposition}, we deduce that
    \begin{align*}
        2^m \leq \binom{\tilde{V}}{s} \cdot 2^{L+1} (2em)^{s(L+1)} &\leq  \bigl( 4em \tilde{V}^{1/(L+1)}\bigr)^{s(L+1)}. 
    \end{align*}
    By Lemma~\ref{lemma:LambertW}, since $4e\tilde{V}^{1/(L+1)} \geq 4$, we have
    \begin{align*}
        m &\leq 2s(L+1)\log_2\bigl(4es(L+1)\tilde{V}^{1/(L+1)}\bigr) \\
        &\leq 2s(L+1)\log_2\bigl(8es^3(2p_0)^{1/(L+1)}\bigr) \leq 6s(L+1)\log_2(3s) + 2s\log_2(2p_0),
    \end{align*}
    where the second inequality follows since $L+1 \leq s$ by our assumption that $k_\ell \geq 1$ for all $\ell\in[L+1]$ and
    \begin{align*}
        \tilde{V} \leq (L-1)s(s+1) + (p_0+1)s + s + 1 \leq 2(L+1)s^2 + p_0s \leq 2p_0(2s)^{L+1}.
    \end{align*}

    \medskip \emph{(b)} If $(\bm x_1,y_1),\ldots,(\bm x_m,y_m) \in \mathbb{R}^{p_0} \times \mathbb{R}$ are shattered by $\bigl\{(\bm x,y) \mapsto \mathrm{sgn}\bigl(T_B f(\bm x)-y\bigr) : f\in\mathcal{F}(L,\bm p,s)\bigr\}$, then we must have $y_j\in[-B,B)$ for all $j\in[m]$. Therefore, $(\bm x_1,y_1),\ldots,(\bm x_m,y_m)$ are also shattered by $\bigl\{(\bm x,y) \mapsto \mathrm{sgn}\bigl(f(\bm x)-y\bigr) : f\in\mathcal{F}(L,\bm p,s)\bigr\}$, so $\mathrm{Pdim}\bigl(T_B \circ \mathcal{F}(L,\bm p,s)\bigr) \leq \mathrm{Pdim}\bigl(\mathcal{F}(L,\bm p,s)\bigr)$. Then, by \citet[Theorem~9.4]{gyorfi2006distribution}, since $\mathrm{Pdim}\bigl(T_B \circ \mathcal{F}(L,\bm p,s)\bigr) \geq 2$, we have
    \begin{align*}
        \mathcal{N}\bigl( \epsilon, T_B \circ \mathcal{F}(L,\bm p,s), \|\cdot\|_{L_q(\mu)} \bigr) &\leq 3\biggl\{ \frac{2e(2B)^q}{\epsilon^q} \log \biggl(\frac{3e(2B)^q}{\epsilon^q}\biggr) \biggr\}^{\mathrm{Pdim}(T_B \circ \mathcal{F}(L,\bm p,s))}\\
        &\leq \biggl(\frac{10B}{\epsilon}\biggr)^{2q\cdot\mathrm{Pdim}(\mathcal{F}(L,\bm p,s))}.
    \end{align*}
    Taking logarithms and applying part~\emph{(a)} yields the desired result.
\end{proof}

\section{Proof of Proposition~\ref{prop:oracle-inequality}}

\begin{proof}[Proof of Proposition~\ref{prop:oracle-inequality}]
    For $i=0,1,\ldots,n$, let $\tilde{\bm Z}_i \coloneqq (\bm Z_i,\bm \Omega_i)$ and let $f^{\star}_{B_n}(\tilde{\bm Z}_i) \coloneqq \mathbb{E}(T_{B_n}Y_i \,|\, \tilde{\bm Z}_i)$. Then
    \begin{align*}
        \mathbb{E}\bigl\{ R(T_{B_n}\tilde{f}) - R(f^{\star}) \bigr\} &= \mathbb{E}\bigl\{ \bigl(T_{B_n}\tilde{f}(\tilde{\bm Z}_0) - Y_0\bigr)^2 - \bigl(T_{B_n}\tilde{f}(\tilde{\bm Z}_0) - T_{B_n}Y_0\bigr)^2 \bigr\}\\
        &\qquad + \mathbb{E}\bigl\{ \bigl(f^{\star}_{B_n}(\tilde{\bm Z}_0) - T_{B_n}Y_0\bigr)^2 - \bigl(f^{\star}(\tilde{\bm Z}_0) - Y_0\bigr)^2 \bigr\}\\
        &\qquad + \mathbb{E}\biggl[ \bigl(T_{B_n}\tilde{f}(\tilde{\bm Z}_0) - T_{B_n}Y_0\bigr)^2 - \bigl(f^{\star}_{B_n}(\tilde{\bm Z}_0) - T_{B_n}Y_0\bigr)^2\\
        &\qquad\quad -\frac{2}{n} \sum_{i=1}^n \Bigl\{ \bigl(T_{B_n}\tilde{f}(\tilde{\bm Z}_i) - T_{B_n}Y_i\bigr)^2 - \bigl(f^{\star}_{B_n}(\tilde{\bm Z}_i) - T_{B_n}Y_i\bigr)^2 \Bigr\}\biggr] \\
        &\qquad + \mathbb{E}\biggl[ \frac{2}{n} \sum_{i=1}^n \Bigl\{ \bigl(T_{B_n}\tilde{f}(\tilde{\bm Z}_i) - T_{B_n}Y_i\bigr)^2 - \bigl(f^{\star}(\tilde{\bm Z}_i) - Y_i\bigr)^2 \Bigr\} \biggr] \\
        &\qquad + \mathbb{E}\biggl[ \frac{2}{n} \sum_{i=1}^n \Bigl\{\bigl(f^{\star}(\tilde{\bm Z}_i) - Y_i\bigr)^2 - \bigl(f^{\star}_{B_n}(\tilde{\bm Z}_i) - T_{B_n}Y_i\bigr)^2\Bigr\} \biggr]\\
        &\eqqcolon E_1 + E_2 + E_3 + E_4 + E_5.
    \end{align*}

    \paragraph{Bounding $E_1$:} By Lemma~\ref{lemma:truncation-bias},
    \begin{align}
        E_1 \leq \mathbb{E}\bigl\{ Y_0^2 - (T_{B_n}Y_0)^2 \bigr\} + 2B_n\mathbb{E}\bigl\{ |Y_0 - T_{B_n}Y_0| \bigr\} \leq \frac{5\xi^2}{n}. \label{eq:oracle-inequality-E1}
    \end{align}

    \paragraph{Bounding $E_2 + E_5$:} Note that $E_5 = -2E_2$, so by Lemma~\ref{lemma:truncation-bias},
    \begin{align}
        E_2 + E_5 &= \mathbb{E}\bigl\{ \bigl(f^{\star}(\tilde{\bm Z}_0) - Y_0\bigr)^2 - \bigl(f^{\star}_{B_n}(\tilde{\bm Z}_0) - T_{B_n}Y_0\bigr)^2 \bigr\}\nonumber\\
        &= \mathbb{E}\bigl\{ \bigl(f^{\star}(\tilde{\bm Z}_0)\bigr)^2 - \bigl( f^{\star}_{B_n}(\tilde{\bm Z}_0) \bigr)^2 \bigr\} + \mathbb{E}\bigl\{ Y_0^2 - (T_{B_n} Y_0)^2 \bigr\}\nonumber\\
        &\hspace{5cm} - 2\mathbb{E}\bigl\{f^{\star}(\tilde{\bm Z}_0)\cdot Y_0 - f^{\star}_{B_n}(\tilde{\bm Z}_0) \cdot T_{B_n}Y_0\bigr\} \leq \frac{13\xi^2}{n}. \label{eq:oracle-inequality-E2+E5}
    \end{align}

    \paragraph{Bounding $E_3$:}
    For $(\bm u_1,\bm v_1),\ldots,(\bm u_n,\bm v_n) \in \mathbb{R}^d\times\mathbb{R}^d$, let $\mathbb{P}_n$ be their empirical measure and let 
    \begin{align*}
        \mathcal{N}_{1,n} \coloneqq \sup_{(\bm u_i,\bm v_i)_{i=1}^n \in (\mathbb{R}^d\times\mathbb{R}^d)^n} \mathcal{N}\bigl((80B_n n)^{-1},\, T_{B_n}\circ\mathcal{F},\, \|\cdot\|_{L_1(\mathbb{P}_n)}\bigr).
    \end{align*}
    For $\delta \coloneqq \frac{24\cdot214 B_n^4 \log(14\mathcal{N}_{1,n})}{n} \geq 1/n$, we have
    \begin{align}
        E_3 &= \mathbb{E}\biggl[ \mathbb{E}\Bigl\{ \bigl(T_{B_n}\tilde{f}(\tilde{\bm Z}_0) - T_{B_n}Y_0\bigr)^2 - \bigl(f^{\star}_{B_n}(\tilde{\bm Z}_0) - T_{B_n}Y_0\bigr)^2 \,\Bigm|\, \mathcal{D} \Bigr\}\nonumber\\
        &\hspace{3.5cm} -\frac{2}{n} \sum_{i=1}^n \Bigl\{ \bigl(T_{B_n}\tilde{f}(\tilde{\bm Z}_i) - T_{B_n}Y_i\bigr)^2 - \bigl(f^{\star}_{B_n}(\tilde{\bm Z}_i) - T_{B_n}Y_i\bigr)^2 \Bigr\}\biggr]\nonumber\\
        &\leq \mathbb{E}\biggl( \sup_{g\in T_{B_n} \circ \mathcal{F}(L,\bm p,s)}  \biggl[ \mathbb{E}\Bigl\{ \bigl(g(\tilde{\bm Z}_0) - T_{B_n}Y_0\bigr)^2 - \bigl(f^{\star}_{B_n}(\tilde{\bm Z}_0) - T_{B_n}Y_0\bigr)^2  \Bigr\}\nonumber\\
        &\hspace{3.5cm} -\frac{2}{n} \sum_{i=1}^n \Bigl\{ \bigl(g(\tilde{\bm Z}_i) - T_{B_n}Y_i\bigr)^2 - \bigl(f^{\star}_{B_n}(\tilde{\bm Z}_i) - T_{B_n}Y_i\bigr)^2 \Bigr\} \biggr] \biggr) \nonumber\\
        &\leq \delta + \int_{\delta}^{\infty} \mathbb{P}\biggl( \sup_{g\in T_{B_n} \circ \mathcal{F}(L,\bm p,s)} \Bigl[ \mathbb{E}\bigl\{\bigl(g(\tilde{\bm Z}_0) - T_{B_n}Y_0\bigr)^2 - \bigl(f^{\star}_{B_n}(\tilde{\bm Z}_0) - T_{B_n}Y_0\bigr)^2\bigr\} \nonumber\\
        &\hspace{3.5cm} -\frac{2}{n} \sum_{i=1}^n \bigl\{ \bigl(g(\tilde{\bm Z}_i) - T_{B_n}Y_i\bigr)^2 - \bigl(f^{\star}_{B_n}(\tilde{\bm Z}_i) - T_{B_n}Y_i\bigr)^2 \bigr\} \Bigr] > t \biggr) \,\mathrm{d}t\nonumber\\
        \overset{(i)}&{\leq} \delta + 14\mathcal{N}_{1,n} \int_{\delta}^{\infty} \exp\Bigl( -\frac{nt}{24\cdot 214 B_n^4} \Bigr) \,\mathrm{d}t\nonumber\\
        &= \delta + 14\mathcal{N}_{1,n} \cdot \frac{24\cdot 214 B_n^4}{n}\exp\Bigl( -\frac{n\delta}{24\cdot 214 B_n^4} \Bigr)\nonumber\\
        &\leq \frac{24\cdot214 B_n^4 \log(14\mathcal{N}_{1,n})}{n} + \frac{24\cdot214B_n^4}{n}\nonumber\\
        \overset{(ii)}&{\lesssim} \frac{\xi^4 \log(e\xi) \log^3 n \cdot \bigl(sL\log(es) + s\log(ed)\bigr)}{n},
        \label{eq:oracle-inequality-E3}
    \end{align}
    were $(i)$ uses \citet[Theorem~11.4]{gyorfi2006distribution} with $\epsilon=1/2$ and $\alpha=\beta=t/2$ therein, and $(ii)$ uses Proposition~\ref{prop:VC-upper-bound}\emph{(b)}. 

    \paragraph{Bounding $E_4$:} The functions in $\mathcal{F}$ are parametrised by a subset $A$ of a Euclidean space, meaning that we can write $\mathcal{F} = \{f_{\bm\theta}: \bm\theta \in A\}$.  Since $A$ has a countable, dense subset $\tilde A$ \citep[e.g.][Proposition~9.26]{royden2010real} and the map $\bm\theta \mapsto \hat{R}_n(f_{\bm\theta})$ is continuous, it follows that $\inf_{\bm \theta \in A} \hat{R}_n(f_{\bm \theta}) = \inf_{\bm \theta \in \tilde{A}} \hat{R}_n(f_{\bm \theta})$.  Thus $\inf_{f \in \mathcal{F}} \hat{R}_n(f)$ is measurable, and
    \begin{align}
        E_4 &\leq \mathbb{E}\biggl[ \frac{2}{n} \sum_{i=1}^n \Bigl\{ \bigl(\tilde{f}(\tilde{\bm Z}_i) - Y_i\bigr)^2 - \bigl(f^{\star}(\tilde{\bm Z}_i) - Y_i\bigr)^2 \Bigr\} \biggr] \nonumber\\
        &= 2\mathbb{E}\Bigl\{\hat R_n(\tilde f) - \inf_{f\in\mathcal{F}}\hat{R}_n(f) \Bigr\} + 2\mathbb{E} \Bigl\{\inf_{f\in\mathcal{F}}\hat{R}_n(f) - \hat{R}_n(f^{\star}) \Bigr\}\nonumber\\
        &\leq 2\mathbb{E}\Bigl\{\hat R_n(\tilde f) - \inf_{f\in\mathcal{F}}\hat{R}_n(f) \Bigr\} + 2\inf_{f\in\mathcal{F}} \mathbb{E} \bigl\{\hat{R}_n(f) - \hat{R}_n(f^{\star}) \bigr\}\nonumber\\
        &= 2\mathbb{E}\Bigl\{\hat R_n(\tilde f) - \inf_{f\in\mathcal{F}}\hat{R}_n(f) \Bigr\} + 2\inf_{f\in\mathcal{F}} \mathbb{E} \bigl\{\bigl(f(\tilde{\bm Z_0}) - Y_0\bigr)^2 - \bigl(f^{\star}(\tilde{\bm Z_0}) - Y_0\bigr)^2 \bigr\}\nonumber\\
        &= 2\mathbb{E}\Bigl\{\hat R_n(\tilde f) - \inf_{f\in\mathcal{F}}\hat{R}_n(f) \Bigr\} + 2\inf_{f\in\mathcal{F}} \mathbb{E} \bigl\{\bigl(f(\tilde{\bm Z_0}) - f^{\star}(\tilde{\bm Z_0})\bigr)^2 \bigr\}. \label{eq:oracle-inequality-E4}
    \end{align}
    Combining \eqref{eq:oracle-inequality-E1}, \eqref{eq:oracle-inequality-E2+E5}, \eqref{eq:oracle-inequality-E3} and \eqref{eq:oracle-inequality-E4} yields the final result.
\end{proof}

\section{\texorpdfstring{$\mathcal{F}$-separability of partitions of $\mathcal{S}$}{F-separability of partitions of S}}

\begin{prop} \label{prop:F(L,p,s)-separable}
    Let $\mathcal{S} \subseteq \{0,1\}^d$ be such that $|\mathcal{S}| \geq 2$, let $K\in\{2,\ldots,|\mathcal{S}|\}$ and let $\{\mathcal{S}_1, \ldots, \mathcal{S}_K\}$ be a partition of $\mathcal{S}$.
    \begin{itemize}
        \item[(a)] Suppose that there exists $\mathcal{J} \subseteq [d]$ such that $\bm\omega, \bm\omega' \in \mathcal{S}$ belong to the same cell of the partition whenever $\bm\omega_{\mathcal{J}} = \bm\omega'_{\mathcal{J}}$. Let $\mathcal{S}_{\mathcal{J}} \coloneqq \{\bm\omega_{\mathcal{J}} : \bm\omega \in \mathcal{S}\} \subseteq \{0,1\}^{|\mathcal{J}|}$. Define 
        \begin{gather*}
            p_* \coloneqq 2\bigl\lceil |\mathcal{S}_{\mathcal{J}}|^{1/2} \bigr\rceil \quad\text{and}\quad
            \bm p \coloneqq (d,p_*,p_*,1)^\top \in \mathbb{N}^{4}.
        \end{gather*}
        Then $\{\mathcal{S}_1,\ldots,\mathcal{S}_K\}$ is $\mathcal{F}(2,\bm p)$-separable. In particular, if $\mathcal{S}_1,\ldots,\mathcal{S}_K$ is an arbitrary partition of $\mathcal{S}$, then we may take $\mathcal{J} = [d]$.

        \item[(b)] Suppose that for each $k\in[K]$, there exist $P_k \in \mathbb{N}$ and $(\bm v^{(k)}_{\ell}, b^{(k)}_{\ell})_{\ell \in [P_k]} \in (\mathbb{R}^d \times \mathbb{R})^{P_k}$ such that $\mathcal{S}_k = \bigl\{\bm\omega\in\mathcal{S}:\bm\omega^\top \bm v^{(k)}_{\ell} \leq b^{(k)}_{\ell} \text{ for all } \ell\in[P_k]\bigr\}$. Define
        \begin{gather*}
            \bm p \coloneqq \biggl(d,\, 2\sum_{k=1}^K P_k,\, K, 1\biggr)^\top \in \mathbb{N}^4.
        \end{gather*}
        Then $\{\mathcal{S}_1,\ldots,\mathcal{S}_K\}$ is $\mathcal{F}(2,\bm p)$-separable.
    \end{itemize}
\end{prop}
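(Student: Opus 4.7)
The plan is to prove both parts by explicit construction of a separating neural network in the required class, with part (b) serving as a warm-up for part (a).

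For part (b), the key observation is that $\mathcal{S} \subseteq \{0,1\}^d$ is finite, so the set $\bigl\{\bm\omega^\top \bm v_\ell^{(k)} - b_\ell^{(k)} : \bm\omega \in \mathcal{S},\, k \in [K],\, \ell \in [P_k]\bigr\}$ admits some $\delta > 0$ such that each of its elements is either non-positive or at least $\delta$. I would build the first hidden layer, of width $2\sum_k P_k$, using two neurons per constraint, namely $\sigma(\bm\omega^\top \bm v_\ell^{(k)} - b_\ell^{(k)})$ and $\sigma(\bm\omega^\top \bm v_\ell^{(k)} - b_\ell^{(k)} - \delta)$; a scaled linear combination plus a constant then realizes the exact indicator $I_{k,\ell}(\bm\omega) := \mathbb{1}\{\bm\omega^\top \bm v_\ell^{(k)} \leq b_\ell^{(k)}\}$ on $\mathcal{S}$. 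The second hidden layer, of width $K$, computes for each $k$ the aggregated cell indicator $\sigma\bigl(\sum_{\ell=1}^{P_k} I_{k,\ell} - P_k + 1\bigr) = \mathbb{1}_{\mathcal{S}_k}$, exploiting that the inner sum equals $P_k$ exactly when all $P_k$ constraints are satisfied. The output layer returns $\sum_{k=1}^K k \cdot \mathbb{1}_{\mathcal{S}_k}$, which takes the distinct values $1, \ldots, K$ across cells, so separability holds with $v_k = k$ and $\epsilon = 1/3$.

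For part (a), I would first reduce to the case $\mathcal{J} = [d]$ by zeroing out the first-layer weight columns indexed by $j \notin \mathcal{J}$. Writing $N := |\mathcal{S}_\mathcal{J}|$ and $M := \lceil \sqrt{N}\rceil$ so that $p_* = 2M$, I would absorb the injective integer-valued projection $\phi(\bm\omega) := \sum_{j \in \mathcal{J}} 2^{j-1}\omega_j$ into the first-layer weights, reducing the construction to a scalar memorization task: build a 2-hidden-layer ReLU network with widths $(2M, 2M)$ in the scalar variable $y = \phi(\bm\omega) \in \{y_1,\ldots,y_N\} \subset \mathbb{Z}_{\geq 0}$ whose output equals the correct cell label $k(\bm\omega) \in [K]$ on each point. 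Sorting $y_1 < \cdots < y_N$ and partitioning into $M$ contiguous blocks of size $\leq M$ separated by half-integer thresholds $t_0 < t_1 < \cdots < t_M$, I would use the first layer's $2M$ neurons to encode the block structure via paired ReLUs $\sigma(y - t_r)$ and $\sigma(y - t_r - 1/2)$, whose scaled differences equal $\mathbb{1}\{y \geq t_r + 1/2\}$ on integer inputs, so that linear combinations in the next layer realize exact block indicators. The second layer's $2M$ neurons are then used to produce, for each block, a within-block linear refinement of the label (gated by the block indicator), so that the output layer's linear sum is constant on each $\phi(\mathcal{S}_k)$ with distinct values across $k$; once again separability follows with $v_k = k$ and $\epsilon = 1/3$.

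The substantive obstacle is part (a), where the $2M$-width budget per layer is genuinely tight: roughly $M$ units per layer must be allocated to block-boundary encoding and the remaining $M$ to within-block refinement that resolves up to $M$ cell labels per block, while ensuring the final output is exactly (not merely approximately) constant on each cell of the partition. The enabling property is the discreteness of $\phi(\mathcal{S}) \subseteq \mathbb{Z}_{\geq 0}$, which lets the $\delta$-gap trick from part (b) (with $\delta = 1/2$) produce exact indicator computations and permits compositional assembly of the cell labels via a two-hidden-layer ReLU memorization with $O(\sqrt{N})$ width. Part (b), by contrast, is a clean discretization argument that falls out immediately from the $\delta$-gap observation.
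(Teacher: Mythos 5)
Your part~(b) argument is essentially the paper's: both build two ReLUs per halfspace constraint, exploit that $\mathcal{S}$ is finite to find a positive gap $\delta$ (the paper calls it $\epsilon$ and defines it as $\min_k\min_{\bm\omega\notin\mathcal{S}_k}\max_\ell(\bm\omega^\top\bm v_\ell^{(k)}-b_\ell^{(k)})$), realise the indicators $\mathbbm{1}_{\mathcal{S}_k}$ with a second hidden layer of width $K$ via $\sigma(\sum_\ell I_{k,\ell}-P_k+1)$, and output $\sum_k k\cdot\mathbbm{1}_{\mathcal{S}_k}$. This part is correct.

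For part~(a), the reduction is also the same as the paper's: absorb the injective integer encoding $\bm\omega\mapsto\sum_{j\in\mathcal{J}}2^{j-1}\omega_j$ into the first-layer weights and reduce to a one-dimensional interpolation problem at $N:=|\mathcal{S}_{\mathcal{J}}|$ integer points. But where the paper then simply cites \citet[Lemma~2.2]{shen2019nonlinear} for a two-hidden-layer width-$O(\sqrt{N})$ memorisation network, you attempt a hand construction, and this is where there is a genuine gap. Your first hidden layer consists of $2M$ neurons $\sigma(y-t_r)$, $\sigma(y-t_r-\tfrac12)$, $r\in[M]$, whose pairwise differences are step indicators. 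Consequently every affine combination of first-layer outputs, and hence every pre-activation in the second hidden layer, is a piecewise linear function of $y$ with at most $2M$ breakpoints (at the $t_r$ and $t_r+\tfrac12$); the second layer's $2M$ ReLUs can add at most $2M$ further breakpoints, so the final scalar output is piecewise linear in $y$ with $O(M)$ pieces. That is not enough to hit $N=M^2$ arbitrary cell labels, and the phrase ``within-block linear refinement of the label'' cannot rescue it: within a block the labels $k(y)$ are an arbitrary map into $[K]$, not a linear (or even monotone) function of $y$. The $O(\sqrt{N})$-width, two-hidden-layer memorisation theorem is genuinely nontrivial; the known constructions fold the real line back on itself in the first layer (sawtooth-type maps) so that composition multiplies the number of linear pieces rather than adding them. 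Your first layer, being a sum of $\delta$-gapped indicators, does not fold and hence cannot produce the required $\Theta(N)$ pieces. The fix is simple: replace your hand construction with the citation the paper uses, at which point the rest of your argument (composing with the integer encoding, taking $v_k=k$ and $\epsilon=1/3$) goes through.
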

\begin{proof}
    \emph{(a)} Let $J \coloneqq |\mathcal{J}|$. Without loss of generality, suppose that $\mathcal{J} = [J]$, and let $\mathcal{S}_{\mathcal{J}} = \{\bm u_1,\ldots,\bm u_{|\mathcal{S}_{\mathcal{J}}|}\}\subseteq \{0,1\}^{J}$. For $i\in[|\mathcal{S}_{\mathcal{J}}|]$, define $x_i \coloneqq \sum_{j=1}^{J} 2^{j-1} u_{ij} \in \mathbb{N}_0$, where $u_{ij}$ is the $j$th coordinate of $\bm u_i$. Further define $y_i \coloneqq k/K$ where $k\in[K]$ is such that there exists $\bm\omega \in \mathcal{S}_{k}$ such that $\bm\omega_{\mathcal{J}} = \bm u_i$; note that $k$ is uniquely defined by our assumption. By \citet[Lemma~2.2]{shen2019nonlinear} and~\ref{NN-enlarging}, there exists $f_1 \in \mathcal{F}\bigl(2,(1,p_*,p_*,1)\bigr)$ such that $f_1(x_i) = y_i$ for all $i\in[|\mathcal{S}_{\mathcal{J}}|]$. 
    Thus, by~\ref{NN-composition}, there exists $f \in\mathcal{F}(2,\bm p)$ such that
    \begin{align*}
        f(\bm v) = f_1\biggl(\sum_{j=1}^J 2^{j-1} v_j\biggr) 
    \end{align*}
    for $\bm v = (v_1,\ldots,v_d)^\top \in \mathbb{R}^d$.  For any $k\in[K]$ and $\bm\omega = (\omega_1,\ldots,\omega_d)^\top \in \mathcal{S}_k$, there exists $i \in |\mathcal{S}_{\mathcal{J}}|$ such that $\bm\omega_{\mathcal{J}} = \bm u_i$. Therefore,
    \begin{align*}
        f(\bm\omega) = f_1\biggl(\sum_{j=1}^J 2^{j-1} \omega_j\biggr) = f_1\biggl(\sum_{j=1}^J 2^{j-1} u_{ij}\biggr) = f_1(x_i) = \frac{k}{K}.
    \end{align*}
    Hence, $\mathcal{S}_1,\ldots,\mathcal{S}_K$ are $\mathcal{F}(2,\bm p)$-separable with $\epsilon = 1/(2K)$.

    \medskip \emph{(b)} First define
    \begin{align*}
        \epsilon \coloneqq \min_{k\in[K]} \min_{\bm\omega \in \mathcal{S} \setminus \mathcal{S}_k} \max_{\ell\in[P_k]} \;\bigl(\bm\omega^\top \bm v^{(k)}_{\ell} - b^{(k)}_{\ell}\bigr).
    \end{align*}
    For each $k\in[K]$ and $\bm\omega \in \mathcal{S} \setminus \mathcal{S}_k$, we have $\max_{\ell\in[P_k]} \;\bigl(\bm\omega^\top \bm v^{(k)}_{\ell} - b^{(k)}_{\ell}\bigr) > 0$ by assumption, so $\epsilon > 0$.  For $k\in[K]$ and $\ell\in[P_k]$, define $\phi_{(k,\ell)} \in \mathcal{F}\bigl(1,(d,2,1)\bigr)$ by 
    \begin{align*}
        \phi_{(k,\ell)}(\bm x) \coloneqq \sigma\biggl(-\frac{\bm x^\top \bm v^{(k)}_{\ell} - b^{(k)}_{\ell}}{\epsilon} + 1\biggr) - \sigma\biggl(-\frac{\bm x^\top \bm v^{(k)}_{\ell} - b^{(k)}_{\ell}}{\epsilon}\biggr)
    \end{align*}
    for $\bm x\in\mathbb{R}^d$. If $\bm x^\top \bm v^{(k)}_{\ell} - b^{(k)}_{\ell} \leq 0$, then $\phi_{(k,\ell)}(\bm x) = 1$; if $\bm x^\top \bm v^{(k)}_{\ell} - b^{(k)}_{\ell} \geq \epsilon$, then $\phi_{(k,\ell)}(\bm x) = 0$. By~\ref{NN-parallelisation}, the function $\bm\phi: \mathbb{R}^d \to \mathbb{R}^{\sum_{k=1}^K P_k}$ defined by $\bm \phi(\bm x) \coloneqq \bigl(\phi_{(k,\ell)}(\bm x)\bigr)_{k\in[K],\ell\in[P_k]}$ belongs to $\mathcal{F}\bigl(1,(d,\, 2\sum_{k=1}^K P_k,\, \sum_{k=1}^K P_k)\bigr)$. For $k\in[K]$ and $\bm\omega \in \mathcal{S}_k$, we have $\sum_{\ell=1}^{P_k} \phi_{(k,\ell)}(\bm\omega) = P_k$, and for all $k'\neq k$, we have $\sum_{\ell'=1}^{P_{k'}} \phi_{(k',\ell')}(\bm\omega) \leq P_{k'}-1$ since there exists $\ell^* \in [P_{k'}]$ such that $\bm\omega^\top \bm v^{(k')}_{\ell^*} - b^{(k')}_{\ell^*} \geq \epsilon$. Next define $\psi \in \mathcal{F}\bigl(1,(\sum_{k=1}^K P_k,\, K,\, 1)\bigr)$ by
    \begin{align*}
        \psi(\bm u) \coloneqq \sum_{k=1}^K k \cdot\sigma\biggl(\sum_{\ell=1}^{P_k} u_{k,\ell} - P_k + 1\biggr)
    \end{align*}
    for $\bm u = (u_{k,\ell})_{k\in[K], \ell\in[P_k]} \in \mathbb{R}^{\sum_{k=1}^K P_k}$. Then, for $k\in[K]$ and $\bm\omega \in \mathcal{S}_k$, we have $\psi\bigl(\bm\phi(\bm\omega)\bigr) = k$. Finally, $\psi\circ\bm\phi \in \mathcal{F}(2,\bm p)$ by~\ref{NN-composition}, so $\{\mathcal{S}_1,\ldots,\mathcal{S}_K\}$ is $\mathcal{F}(2,\bm p)$-separable.
\end{proof}

\section{Approximation theory for neural networks}
The following lemma follows the arguments of \citet[Theorem~2.2]{lu2021deep} and \citet[Theorem~3.3]{jiao2023deep}, with very minor changes.
\begin{lemma} \label{lemma:approximate-holder-functions-[0,1]^d}
    Let $\beta,\gamma>0$, $\beta_0 \coloneqq \lceil\beta\rceil - 1$, $d\in\mathbb{N}$ and $g \in \mathcal{H}_d^{\beta}\bigl([0,1]^d,\gamma\bigr)$. For any $M,N \in \mathbb{N}$, let
    \begin{gather*}
        R \coloneqq \lfloor N^{1/d} \rfloor^2 \lfloor M^{2/d}\rfloor,\quad L\coloneqq 12(\beta_0+1)^2(M+2)\lceil \log_2(4M) \rceil,\\
        p_* \coloneqq 30(\beta_0+1)^2 d^{\beta_0+1} (N+1) \lceil\log_2(8N)\rceil \quad\text{and}\quad \bm p \coloneqq (d,p_*,\ldots,p_*,1)^\top \in \mathbb{N}^{L+2}.
    \end{gather*}
    For $\delta > 0$, let 
    \begin{align*}
        \Gamma\bigl( [0,1]^d, R, \delta \bigr) \coloneqq \bigcup_{j=1}^d \biggl\{ \bm x = (x_1,\ldots,x_d)^\top &\in [0,1]^d : x_j \in \bigcup_{r=1}^{R-1} \biggl( \frac{r}{R} - \delta,\; \frac{r}{R} \biggr) \biggr\}.
    \end{align*}
    Then, for any $\delta \in \bigl(0, \frac{1}{3R}\bigr]$, there exists $f\in\mathcal{F}(L,\bm p)$ such that $|f(\bm x)| \leq \gamma$ for all $\bm x \in \mathbb{R}^d$, and
    \begin{align*}
        |f(\bm x) - g(\bm x)| \leq 9\gamma (\beta_0+3)^2 8^{\beta} d^{\beta_0+\beta/2} (NM)^{-2\beta/d},
    \end{align*}
    for all $\bm x \in [0,1]^d \setminus \Gamma\bigl( [0,1]^d, R, \delta \bigr)$.    
\end{lemma}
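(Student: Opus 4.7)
The plan is to follow the construction of \citet{lu2021deep} and \citet{jiao2023deep}, which proceeds by local Taylor approximation on a fine grid, together with a ReLU network that identifies the relevant grid cell away from a thin boundary region.

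First, I would partition $[0,1]^d$ into $R^d$ axis-aligned cubes of side length $1/R$, indexed by $\bm r = (r_1,\ldots,r_d)^\top \in \{0,1,\ldots,R-1\}^d$. The bad set $\Gamma([0,1]^d, R, \delta)$ is precisely the $\delta$-neighbourhood (from below in each coordinate) of the separating hyperplanes between cubes, so that a point $\bm x \in [0,1]^d \setminus \Gamma([0,1]^d, R, \delta)$ is unambiguously assigned to a unique cube. On the cube indexed by $\bm r$, writing $\bm \xi_{\bm r} \coloneqq \bm r / R$ for its lower corner, the Taylor polynomial
\[
    P_{\bm r}(\bm x) \coloneqq \sum_{\bm\alpha \in \mathbb{N}_0^d :\, \|\bm\alpha\|_1 \leq \beta_0} \frac{\partial^{\bm\alpha} g(\bm\xi_{\bm r})}{\bm\alpha!} (\bm x - \bm\xi_{\bm r})^{\bm\alpha}
\]
approximates $g$ to error at most $\gamma d^{\beta_0 + \beta/2} R^{-\beta}$ pointwise on the cube, by the standard Hölder Taylor remainder bound using the $L^2$ norm; this is the source of the factor $d^{\beta_0+\beta/2}$ in the final error.

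Second, I would build the required network as a sum over grid cells, $f(\bm x) \approx \sum_{\bm r} P_{\bm r}(\bm x) \mathbbm{1}_{\{\bm x \in \text{cube}(\bm r)\}}$, implemented through four standard ReLU subcircuits. (a) A \emph{step function} $\phi:\mathbb{R} \to \{0/R, 1/R, \ldots, (R-1)/R\}$ satisfying $\phi(x) = r/R$ for $x \in [r/R,\, (r+1)/R - \delta)$, built by summing $R$ shifted and scaled ReLU ramps; applied coordinatewise this produces, outside $\Gamma$, the nearest lower grid point $\bm\xi_{\bm r}$. (b) A \emph{point-fitting} network which, given the discrete input $\bm\xi_{\bm r}$, outputs the Taylor coefficients $\{\partial^{\bm\alpha} g(\bm\xi_{\bm r})/\bm\alpha!\}_{\|\bm\alpha\|_1 \leq \beta_0}$; this uses \citet[Lemma~2.2]{shen2019nonlinear}-style constructions and is why the widths scale with $(N+1)\lceil \log_2(8N) \rceil$ and the depth with $(M+2)\lceil \log_2(4M) \rceil$, the product $NM$ governing the number of grid points that can be exactly fitted. (c) An \emph{approximate multiplication} block $\tilde{\times}(x,y) \approx xy$ on a bounded domain with prescribed accuracy, obtained from the identity $xy = \bigl((x+y)^2 - x^2 - y^2\bigr)/2$ and a ReLU approximation of $t \mapsto t^2$ (the construction of \citet{yarotsky2017error}). (d) An \emph{iterated product} to form each monomial $(\bm x - \bm\xi_{\bm r})^{\bm\alpha}$, at a multiplicative cost of $\beta_0$ in depth and $d^{\beta_0+1}$ in width (explaining the $(\beta_0+1)^2 d^{\beta_0+1}$ prefactors).

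Third, I would combine these pieces in parallel, via~\ref{NN-parallelisation} and~\ref{NN-composition}, into a single network of depth $L$ and constant width $p_*$, and finally clip the output to $[-\gamma, \gamma]$ using the identity $T_\gamma(y) = \sigma(y+\gamma) - \sigma(y-\gamma) - \gamma$, which only changes the depth by an additive constant absorbed into $L$. The total approximation error at $\bm x \notin \Gamma$ splits into the Taylor remainder bound $\gamma d^{\beta_0+\beta/2} R^{-\beta}$ on the correct cube, plus the two internal errors from the multiplication and polynomial subcircuits; choosing the resolution of those subcircuits of order $(NM)^{-2\beta/d}$ (which the stated $L$ and $p_*$ comfortably support) makes them dominated by the Taylor term. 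Substituting $R = \lfloor N^{1/d}\rfloor^2 \lfloor M^{2/d}\rfloor$ and bookkeeping constants gives the claimed error $9\gamma(\beta_0+3)^2 8^\beta d^{\beta_0+\beta/2} (NM)^{-2\beta/d}$.

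The main obstacle is purely a careful bookkeeping exercise: tracking how the errors introduced by the step function (which must be exact outside $\Gamma$), the fitted Taylor coefficients (which only need to hit the right value at $R^d$ grid points), the approximate multiplications, and the final clipping propagate through the composition, while simultaneously verifying that the overall depth, width and parameter counts fit inside the prescribed $(L,\bm p)$ budget. Since all of this is done in \citet{lu2021deep} and \citet{jiao2023deep} with only cosmetic differences, I would explicitly invoke their constructions for the sub-networks and focus the written proof on the error aggregation and the parameter count verification.
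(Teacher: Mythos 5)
Your proposal follows the same three-stage construction as the paper (which explicitly acknowledges it is adapting \citet[Theorem~2.2]{lu2021deep} and \citet[Theorem~3.3]{jiao2023deep}): a coordinate-wise step function assigning each $\bm x \notin \Gamma$ to the lower corner of its grid cube, a point-fitting subnetwork recovering the local Taylor coefficients from the (scalarised) cube index, approximate multiplication of coefficients with monomials via a ReLU product gadget, and a final clipping to $[-\gamma,\gamma]$ adding one layer. The only differences from the paper's proof are cosmetic — you cite Yarotsky's squaring trick and a Shen--Yang--Zhang point-fitting lemma where the paper invokes \citet[Lemmas~4.2, 5.1, Propositions~4.1, 4.3, 4.4]{lu2021deep}, and you don't explicitly mention collapsing the $d$-dimensional grid index to a scalar via a $(1,R,R^2,\ldots)$ projection before point-fitting, which is how the construction avoids a width blow-up — but the error decomposition (Taylor remainder $\gamma d^{\beta_0+\beta/2}R^{-\beta}$ dominating the internal subcircuit errors) and the parameter bookkeeping are the same.
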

\begin{proof}
    \textbf{Step 1 (discretisation):}
    By \citet[Proposition~4.3]{lu2021deep}, there exists $\phi_1 \in \mathcal{F}\bigl(4M+5, (1, p_{*1}, \ldots, p_{*1}, 1) \bigr)$, where $p_{*1} \coloneqq 4\lfloor N^{1/d} \rfloor +3$ such that
    \begin{align*}
        \phi_1(x) = \frac{r}{R} \quad\text{if } x\in \biggl[ \frac{r}{R},\; \frac{r+1}{R} - \delta \cdot \mathbbm{1}_{\{r \leq R-2\}} \biggr] \text{ for } r\in \{0,1,\ldots,R-1\}.
    \end{align*}
    For $\bm v = (v_1,\ldots,v_d)^\top \in \{0,1,\ldots,R-1\}^d$, define
    \begin{align*}
        Q_{\bm v} \coloneqq \biggl\{\bm x = (x_1,\ldots,x_d)^\top \in [0,1]^d : x_j \in \biggl[ \frac{v_j}{R},\; \frac{v_j+1}{R} - \delta\cdot \mathbbm{1}_{\{v_j \leq R-2\}} \biggr] \text{ for all } j\in[d]\biggr\},
    \end{align*}
    so that $[0,1]^d \setminus \Gamma\bigl( [0,1]^d, R, \delta \bigr) = \bigcup_{\bm v \in \{0,\ldots,R-1\}^d} Q_{\bm v}$. By Property~\ref{NN-parallelisation}, the function $\bm \phi_{\mathrm{dsc}}:\mathbb{R}^d \rightarrow \mathbb{R}^d$ given by $\bm\phi_{\mathrm{dsc}}(\bm x) \coloneqq \bigl(\phi_1(x_1), \ldots, \phi_1(x_d)\bigr)^\top$ for $\bm{x} = (x_1,\ldots,x_d)^\top$ belongs to $\mathcal{F}\bigl(4M+5, (d, dp_{*1}, \ldots, dp_{*1}, d) \bigr)$.  Thus,
    \begin{align*}
        \bm\phi_{\mathrm{dsc}}(\bm x) = \biggl( \frac{v_1}{R}, \ldots, \frac{v_d}{R} \biggr)^\top \quad\text{for all } \bm x \in Q_{\bm v},\, \bm v \in \{0,1,\ldots,R-1\}^d. 
    \end{align*}

    \medskip \textbf{Step 2 (approximation of Taylor coefficients):} The function $\phi_2:\mathbb{R}^d \rightarrow \mathbb{R}$ given by $\phi_2(\bm x) \coloneqq \sum_{j=1}^d R^j \phi_1(x_j)$ is a composition of $\bm \phi_{\mathrm{dsc}}$ with a projection along the vector $(R,R^2,\ldots,R^d)^\top$ and hence by~\ref{NN-composition}, $\phi_2 \in \mathcal{F}\bigl(4M+5, (d, dp_{*1}, \ldots, dp_{*1}, 1) \bigr)$.  Moreover, for all $\bm v \in \{0,1,\ldots,R-1\}^d$ and $\bm x \in Q_{\bm v}$, we have
    \begin{align}
        \phi_2(\bm x) = \sum_{j=1}^d R^{j-1}v_j \eqqcolon I_{\bm v} \in \{0,1,\ldots,R^d-1\}. \label{eq:phi_3-property}
    \end{align}
    For each $\bm\alpha \in \mathbb{N}_0^d$ with $\|\bm\alpha\|_1 \leq \beta_0$ we have $\frac{\partial^{\bm\alpha} g(\bm x) + \gamma}{2\gamma} \in [0,1]$ for $\bm x \in [0,1]^d$.  Hence, since $R^d \leq N^2M^2$ and since $\bm{v} \mapsto I_{\bm{v}}$ is a bijection, by \citet[Proposition~4.4]{lu2021deep} there exists $\psi_{\bm\alpha}^\circ \in \mathcal{F}\bigl(5(M+2)\lceil\log_2(4M)\rceil, (1,p_{*2},\ldots,p_{*2},1)\bigr)$, where $p_{*2} \coloneqq 16(\beta_0+1)(N+1) \lceil\log_2(8N)\rceil$, such that
    \begin{align}
        \biggl| \psi_{\bm\alpha}^\circ(I_{\bm v}) - \frac{\partial^{\bm\alpha} g(\bm v/R) + \gamma}{2\gamma} \biggr| \leq (NM)^{-2(\beta_0+1)} \quad\text{for all } \bm v \in \{0,1,\ldots,R-1\}^d. \label{eq:psi^circ_alpha-property}
    \end{align}
    Now
    \[
    dp_{*1} \vee p_{*2} \leq 16d(\beta_0+1)(N+1)\lceil\log_2(8N)\rceil \eqqcolon p_{*3}
    \]
    and 
    \[
    4M+5+5(M+2)\lceil\log_2(4M)\rceil \leq 7(M+2)\lceil\log_2(4M)\rceil \eqqcolon L_0
    \]
    Hence, by Properties~\ref{NN-enlarging}, \ref{NN-composition} and~\ref{NN-padding}, there exists $\psi_{\bm\alpha} \in \mathcal{F}\bigl(L_0,\, (d,p_{*3},\ldots,p_{*3},1)\bigr)$ such that $\psi_{\bm\alpha}(\bm x) = 2\psi_{\bm\alpha}^\circ\bigl(\phi_2(\bm x)\bigr) - 1 \in [-1,1]$ for all $\bm x \in [0,1]^d$.  By~\eqref{eq:phi_3-property} and~\eqref{eq:psi^circ_alpha-property}, we deduce that for all $\bm v \in \{0,1,\ldots,R-1\}^d$ and $\bm x \in Q_{\bm v}$, we have
    \begin{align}
        \biggl| \psi_{\bm\alpha}(\bm x) - \frac{1}{\gamma} \cdot \partial^{\bm\alpha} g\bigl(\bm{\phi}_{\mathrm{dsc}}(\bm{x})\bigr) \biggr| \leq 2(NM)^{-2(\beta_0+1)}. \label{eq:psi_alpha-property}
    \end{align}

    \medskip \textbf{Step 3 (local Taylor expansion):} By \citet[Lemma~4.2]{lu2021deep}, there exists $\phi_{\times} \in \mathcal{F}\bigl(2(\beta_0+1)(M+1),\, (1,9(N+1)+1,\ldots,9(N+1)+1,1)\bigr)$ such that for all $x,y\in[-1,1.1]$, we have
    \begin{align}
        |\phi_{\times}(x,y) - xy| \leq 27(N+1)^{-2(\beta_0+1)(M+1)}. \label{eq:phi_times-property}
    \end{align}
    Moreover, by \citet[Proposition~4.1]{lu2021deep}, for $\bm\alpha\in\mathbb{N}_0^d$ with $\|\bm\alpha\|_1 \leq \beta_0$, there exists $\mathrm{Poly}_{\bm\alpha} \in \mathcal{F}\bigl(7(\beta_0+1)^2M,\, (d,9(N+1)+\beta_0,\ldots,9(N+1)+\beta_0,1)\bigr)$ such that
    \begin{align}
        |\mathrm{Poly}_{\bm\alpha}(\bm x) - \bm x^{\bm\alpha}| \leq 9(\beta_0+1)(N+1)^{-7(\beta_0+1)M}, \label{eq:poly_alpha-property}
    \end{align}
    for all $\bm x \in [0,1]^d$.
    Moreover, since $9(\beta_0+1)(N+1)^{-7(\beta_0+1)M} \leq 9(\beta_0+1)2^{-7(\beta_0+1)} < 0.1$, we have $\mathrm{Poly}_{\bm\alpha}(\bm x) \in [-1,1.1]$ for all $\bm x\in[0,1]^d$. Now define $f^\circ:[0,1]^d \rightarrow \mathbb{R}$ by  
    \begin{align*}
        f^\circ(\bm x) \coloneqq \sum_{\bm\alpha \in \mathbb{N}_0^d : \|\bm\alpha\|_1 \leq \beta_0} \frac{\gamma}{\bm\alpha!} \cdot \phi_{\times} \Bigl(\psi_{\bm\alpha}(\bm x),\, \mathrm{Poly}_{\bm\alpha}\bigl(\bm x - \bm\phi_{\mathrm{dsc}}(\bm x) \bigr)\Bigr).
    \end{align*}
    Since $\bigl|\{\bm\alpha \in \mathbb{N}_0^d : \|\bm\alpha\|_1 \leq \beta_0\}\bigr| = \binom{d+\beta_0}{\beta_0} \leq (\beta_0+1)d^{\beta_0}$, we have by Properties~\ref{NN-enlarging}--\ref{NN-parallelisation} that $f^\circ \in \mathcal{F}(L-1,\bm p^\circ)$ where $\bm p^\circ \coloneqq (d,p_*,\ldots,p_*,1)^\top \in \mathbb{N}^{L+1}$. Moreover, for $\bm v \in \{0,\ldots,R-1\}^d$ and $\bm x \in Q_{\bm v}$, we have
    \begin{align}
        &\biggl|f^\circ(\bm x) - \sum_{\bm\alpha \in \mathbb{N}_0^d : \|\bm\alpha\|_1 \leq \beta_0} \frac{\partial^{\bm\alpha}g\bigl(\bm\phi_{\mathrm{dsc}}(\bm x)\bigr)}{\bm\alpha!} \bigl(\bm x - \bm\phi_{\mathrm{dsc}}(\bm x)\bigr)^{\bm\alpha} \biggr| \nonumber\\
        &\leq \sum_{\bm\alpha \in \mathbb{N}_0^d : \|\bm\alpha\|_1 \leq \beta_0} \frac{\gamma}{\bm\alpha!} \cdot \biggl| \phi_{\times} \Bigl(\psi_{\bm\alpha}(\bm x),\, \mathrm{Poly}_{\bm\alpha}\bigl(\bm x - \bm\phi_{\mathrm{dsc}}(\bm x) \bigr)\Bigr) - \psi_{\bm\alpha}(\bm x) \cdot \mathrm{Poly}_{\bm\alpha}\bigl(\bm x - \bm\phi_{\mathrm{dsc}}(\bm x) \bigr) \biggr| \nonumber\\
        &\qquad + \sum_{\bm\alpha \in \mathbb{N}_0^d : \|\bm\alpha\|_1 \leq \beta_0} \frac{\gamma}{\bm\alpha!} \cdot \biggl| \psi_{\bm\alpha}(\bm x) - \frac{1}{\gamma}\cdot \partial^{\bm\alpha}g\bigl(\bm\phi_{\mathrm{dsc}}(\bm x)\bigr) \biggr| \cdot \mathrm{Poly}_{\bm\alpha}\bigl(\bm x - \bm\phi_{\mathrm{dsc}}(\bm x) \bigr) \nonumber\\
        &\qquad + \sum_{\bm\alpha \in \mathbb{N}_0^d : \|\bm\alpha\|_1 \leq \beta_0} \frac{\partial^{\bm\alpha}g\bigl(\bm\phi_{\mathrm{dsc}}(\bm x)\bigr)}{\bm\alpha!} \Bigl| \mathrm{Poly}_{\bm\alpha}\bigl(\bm x - \bm\phi_{\mathrm{dsc}}(\bm x) \bigr) - \bigl(\bm x - \bm\phi_{\mathrm{dsc}}(\bm x) \bigr)^{\bm\alpha} \Bigr| \nonumber\\
        &\leq \gamma(\beta_0+1)d^{\beta_0}\bigl\{ 27 (N+1)^{-2(\beta_0+1)(M+1)} + 3(NM)^{-2(\beta_0+1)} + 9(\beta_0+1)(N+1)^{-7(\beta_0+1)M} \bigr\} \nonumber\\
        &\leq \gamma(\beta_0+1)(9\beta_0+39)d^{\beta_0}(NM)^{-2(\beta_0+1)}, \label{eq:f-property}
    \end{align}
    where the second inequality follows from~\eqref{eq:phi_times-property}, \eqref{eq:psi_alpha-property} and \eqref{eq:poly_alpha-property}. Therefore, we deduce that, for $\bm v \in \{0,\ldots,R-1\}^d$ and $\bm x \in Q_{\bm v}$,
    \begin{align*}
        |f^\circ(\bm x) - g(\bm x)| &\leq \biggl|f^\circ(\bm x) - \sum_{\bm\alpha \in \mathbb{N}_0^d : \|\bm\alpha\|_1 \leq \beta_0} \frac{\partial^{\bm\alpha}g\bigl(\bm\phi_{\mathrm{dsc}}(\bm x)\bigr)}{\bm\alpha!} \bigl(\bm x - \bm\phi_{\mathrm{dsc}}(\bm x)\bigr)^{\bm\alpha} \biggr| \nonumber\\
        &\qquad + \biggl|g(\bm x) - \sum_{\bm\alpha \in \mathbb{N}_0^d : \|\bm\alpha\|_1 \leq \beta_0} \frac{\partial^{\bm\alpha}g\bigl(\bm\phi_{\mathrm{dsc}}(\bm x)\bigr)}{\bm\alpha!} \bigl(\bm x - \bm\phi_{\mathrm{dsc}}(\bm x)\bigr)^{\bm\alpha} \biggr| \nonumber\\
        &\leq \gamma(\beta_0+1)(9\beta_0+39)d^{\beta_0}(NM)^{-2(\beta_0+1)} + \gamma d^{\beta_0}(\sqrt{d}/R)^{\beta} \nonumber\\
        &\leq \gamma(\beta_0+1)(9\beta_0+39)d^{\beta_0}(NM)^{-2(\beta_0+1)} + \gamma d^{\beta_0+\beta/2} 8^{\beta}(NM)^{-2\beta/d} \nonumber\\
        &\leq 9\gamma (\beta_0+3)^2 8^{\beta} d^{\beta_0+\beta/2} (NM)^{-2\beta/d}, 
    \end{align*}
    where the second inequality follows from~\eqref{eq:f-property} and Lemma~\ref{lemma:taylor-approximation-error}, and the third inequality follows since $R \geq \frac{1}{8}(NM)^{2/d}$. Finally, we define $f:\mathbb{R}^d \rightarrow [-\gamma,\gamma]$ by
    \begin{align*}
        f(\bm x) \coloneqq \sigma\bigl(f^\circ(\bm x) + \gamma\bigr) - \sigma\bigl(f^\circ(\bm x) - \gamma\bigr) - \gamma = T_{\gamma}f^\circ(\bm x).
    \end{align*}
    By~\ref{NN-enlarging} and~\ref{NN-composition}, we have $f\in\mathcal{F}(L,\bm p)$. Moreover, since $g(\bm x) \in [-\gamma,\gamma]$ for all $\bm x \in [0,1]^d$, we have $|f(\bm x) - g(\bm x)| \leq |f^\circ(\bm x) - g(\bm x)| \leq 9\gamma (\beta_0+3)^2 8^{\beta} d^{\beta_0+\beta/2} (NM)^{-2\beta/d}$ for $\bm v \in \{0,\ldots,R-1\}^d$ and $\bm x \in Q_{\bm v}$.
\end{proof}

Our next lemma quantifies the extent to which a H\"older function on a bounded hypercube that depends only on a subset of the variables can be approximated by a neural network once we excise a finite set of strips in each coordinate.
\begin{lemma} \label{lemma:approximate-Holder-functions-[a,c]^d}
    Let $-\infty < a < c < \infty$, $\beta,\gamma>0$, $\beta_0 \coloneqq \lceil \beta \rceil - 1$, $d\in\mathbb{N}$, $t\in[d]\cup\{0\}$ and $g\in\mathcal{H}_t^{\beta}\bigl([a,c]^d,\gamma\bigr)$. For any $M,N\in\mathbb{N}$, let 
    \begin{gather*}
        R\coloneqq \lfloor N^{1/t} \rfloor^2 \lfloor M^{2/t} \rfloor,\quad L\coloneqq 12(\beta_0+1)^2(M+2)\lceil \log_2(4M) \rceil,\\
        p_* \coloneqq 30(\beta_0+1)^2 t^{\beta_0+1} (N+1) \lceil\log_2(8N)\rceil \vee 1 \quad\text{and}\quad \bm p \coloneqq (d,p_*,\ldots,p_*,1)^\top \in \mathbb{N}^{L+2}.
    \end{gather*}
    Then, for any $\delta \in \bigl(0, \frac{1}{3R}\bigr]$, there exists $f\in\mathcal{F}(L,\bm p)$ such that $|f(\bm x)| \leq \gamma \vee (c-a)^{\beta_0}\gamma$ for all $\bm x \in \mathbb{R}^d$, and
    \begin{align*}
        |f(\bm x) - g(\bm x)| \leq 9\bigl(1 \vee (c-a)^{\beta_0}\bigr)\gamma (\beta_0+3)^2 8^{\beta} t^{\beta_0+\beta/2} (NM)^{-2\beta/t},
    \end{align*}
    for all $\bm x \in [a,c]^d \setminus \Gamma\bigl( [a,c]^d, R, \delta \bigr)$, where
    \begin{align*}
        \Gamma\bigl( [a,c]^d, R, \delta \bigr) \coloneqq \bigcup_{j=1}^d \biggl\{ \bm x = (x_1,\ldots,x_d)^\top &\in [a,c]^d : \frac{x_j-a}{c-a} \in \bigcup_{r=1}^{R-1} \biggl( \frac{r}{R} - \delta,\; \frac{r}{R} \biggr) \biggr\}
    \end{align*}
    if $t \in [d]$, and $\Gamma\bigl( [a,c]^d, R, \delta \bigr) \coloneqq \emptyset$ if $t=0$.
\end{lemma}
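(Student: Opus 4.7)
The strategy is to reduce to Lemma~\ref{lemma:approximate-holder-functions-[0,1]^d} via two successive domain transformations: projecting onto the $t$ active coordinates, and affinely rescaling $[a,c]^t$ to $[0,1]^t$. The remaining work is to verify that the resulting network still fits within the architecture $\mathcal{F}(L,\bm p)$ and to track how Hölder constants transform under the rescaling.

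The case $t = 0$ is immediate: $g$ is constant with $|g| \leq \gamma$, $\Gamma\bigl([a,c]^d, R, \delta\bigr) = \emptyset$ by convention, and any constant $c_0 \in [-\gamma,\gamma]$ can be realised exactly as a network $f \in \mathcal{F}(L, \bm p)$ (with $p_* = 1$) by setting every weight matrix to zero and storing $c_0$ in the final bias vector, giving $|f| \leq \gamma$ and zero approximation error.

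For $t \in [d]$, by relabelling coordinates (which can be absorbed costlessly into the first weight matrix without altering the widths), I may assume that $g(\bm x) = \tilde g(\bm x_{[t]})$ for some $\tilde g \in \mathcal{H}_t^\beta\bigl([a,c]^t,\gamma\bigr)$. Define $h:[0,1]^t \to \mathbb{R}$ by $h(\bm y) \coloneqq \tilde g\bigl(a \bm 1_t + (c-a)\bm y\bigr)$. The chain rule gives $\partial^{\bm\alpha} h(\bm y) = (c-a)^{\|\bm\alpha\|_1} (\partial^{\bm\alpha} \tilde g)\bigl(a\bm 1_t + (c-a)\bm y\bigr)$, so that $h \in \mathcal{H}_t^\beta\bigl([0,1]^t, \tilde\gamma\bigr)$ for some $\tilde\gamma$ controlled by $\bigl(1 \vee (c-a)^{\beta_0}\bigr)\gamma$ on the sup-norm side; the Hölder seminorm of order $\beta_0$ scales by the additional factor $(c-a)^{\beta-\beta_0}$, which can be absorbed into the leading constant $8^\beta$ in the error bound below. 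Applying Lemma~\ref{lemma:approximate-holder-functions-[0,1]^d} to $h$ then yields some $f_0 \in \mathcal{F}\bigl(L,(t,p_*,\ldots,p_*,1)^\top\bigr)$ with $|f_0| \leq \tilde\gamma$ and $|f_0(\bm y) - h(\bm y)| \leq 9\tilde\gamma (\beta_0+3)^2 8^\beta t^{\beta_0+\beta/2}(NM)^{-2\beta/t}$ for all $\bm y \in [0,1]^t \setminus \Gamma\bigl([0,1]^t, R, \delta\bigr)$.

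Finally, set $f(\bm x) \coloneqq f_0\bigl((\bm x_{[t]} - a\bm 1_t)/(c-a)\bigr)$ for $\bm x \in \mathbb{R}^d$. The pre-composed affine map $\mathbb{R}^d \to \mathbb{R}^t$ is absorbed into the first affine layer of $f_0$: its $p_* \times t$ weight matrix is replaced by a $p_* \times d$ matrix obtained by dividing by $c-a$ and appending zero columns in the $d-t$ inactive coordinates, while the first bias is shifted accordingly. This produces $f \in \mathcal{F}(L, \bm p)$ of the stated architecture (depth and all hidden-layer widths unchanged). The uniform bound on $|f|$ and the pointwise error bound transfer via the correspondence $\bm y = (\bm x_{[t]} - a\bm 1_t)/(c-a)$, using the elementary observation that if $\bm x \in [a,c]^d \setminus \Gamma\bigl([a,c]^d, R, \delta\bigr)$ then the first $t$ coordinates of $\bm x$ already avoid every excluded strip, so $\bm y \in [0,1]^t \setminus \Gamma\bigl([0,1]^t, R, \delta\bigr)$. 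The main obstacle is the careful bookkeeping of the Hölder constant $\tilde\gamma$ under the affine rescaling, since sup-norm and Hölder-seminorm contributions scale with different powers of $c-a$ and must be jointly controlled by the leading factor $(1 \vee (c-a)^{\beta_0})\gamma$ in the final bound; everything else is a direct application of Lemma~\ref{lemma:approximate-holder-functions-[0,1]^d} together with~\ref{NN-composition}.
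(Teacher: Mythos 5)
Your proposal follows essentially the same reduction as the paper's: project onto the $t$ active coordinates, affinely rescale $[a,c]^t$ to $[0,1]^t$, apply Lemma~\ref{lemma:approximate-holder-functions-[0,1]^d}, and absorb the pre-composed affine map into the first layer via~\ref{NN-composition}. The paper organises it slightly differently, first treating $t=d$ by the rescaling alone and then reducing $t\in[d-1]$ to that case via the restriction $g^{[t]}(\bm y)\coloneqq g(\bm y,\bm 0_{d-t})$, whereas you merge the projection and rescaling into one step, but the substance is the same.

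One piece of bookkeeping is not quite right, though it mirrors an imprecision already present in the paper. You correctly note that the $\beta_0$-th order H\"older seminorm of the rescaled function $h$ picks up the factor $(c-a)^{\beta}=(c-a)^{\beta_0}\cdot(c-a)^{\beta-\beta_0}$; but the claim that the extra $(c-a)^{\beta-\beta_0}$ can be "absorbed into the leading constant $8^\beta$" is not a valid move, since $8^\beta$ is independent of the interval width and cannot absorb a factor that depends on it. The paper asserts without comment that $h\in\mathcal{H}_d^{\beta}\bigl([0,1]^d,\gamma\vee(c-a)^{\beta_0}\gamma\bigr)$, which, by exactly your chain-rule computation, also under-counts the seminorm when $c-a>1$: the correct H\"older constant for $h$ is $(1\vee(c-a)^{\beta})\gamma$. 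Consequently the prefactor $(1\vee(c-a)^{\beta_0})$ in both the $\|f\|_\infty$ bound and the approximation-error bound in the lemma should, strictly speaking, be $(1\vee(c-a)^{\beta})$. This discrepancy is harmless downstream, since in Lemmas~\ref{lemma:approximate-each-Holder-component} and~\ref{lemma:approximate-composition-of-Holder-functions} it is absorbed into constants that are allowed to depend on $\gamma$ and the interval widths, but it is a gap that a careful write-up of your proposal would need to address rather than wave at.
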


\begin{proof}
    In the case where $t=0$, the function $g$ is constant, so belongs to $\mathcal{F}(L,\bm p)$ and we have zero approximation error.
    We next consider the case where $t=d$.
    Define $h:[0,1]^d \to \mathbb{R}$ by
    \begin{align*}
        h(\bm x) \coloneqq g\bigl(a\bm{1}_d + (c-a)\bm x  \bigr),
    \end{align*}
    so that $h\in\mathcal{H}_d^{\beta}\bigl([0,1]^d, \gamma \vee (c-a)^{\beta_0}\gamma\bigr)$. By Lemma~\ref{lemma:approximate-holder-functions-[0,1]^d}, there exists $f^\circ \in\mathcal{F}(L,\bm p)$ such that $|f^\circ(\bm x)| \leq \gamma \vee (c-a)^{\beta_0}\gamma$ for all $\bm x \in \mathbb{R}^d$, and
    \begin{align}
        |f^\circ(\bm x) - h(\bm x)| \leq 9 \bigl(\gamma \vee (c-a)^{\beta_0}\gamma\bigr) (\beta_0+3)^2 8^{\beta} t^{\beta_0+\beta/2} (NM)^{-2\beta/t} \label{eq:f^circ-property}
    \end{align}
    for all $\bm x \in [0,1]^d \setminus \Gamma \bigl([0,1]^d,R,\delta\bigr)$. 
    By~\ref{NN-composition}, there exists $f \in \mathcal{F}(L,\bm p)$ such that 
    \begin{align*}
        f(\bm x) = f^\circ\biggl(\frac{\bm x - a\bm{1}_d}{c-a}\biggr),
    \end{align*}
    and $|f(\bm x)| \leq \gamma \vee (c-a)^{\beta_0}\gamma$ for all $\bm x \in \mathbb{R}^d$.
    Thus, by~\eqref{eq:f^circ-property},
    \begin{align*}
        |f(\bm x) - g(\bm x)| \leq 9 \bigl(\gamma \vee (c-a)^{\beta_0}\gamma\bigr) (\beta_0+3)^2 8^{\beta} t^{\beta_0+\beta/2} (NM)^{-2\beta/t}
    \end{align*}
    for all $\bm x \in [a,c]^d \setminus \Gamma\bigl([a,c]^d, R, \delta\bigr)$. This proves the claim when $t=d$.

    \medskip Now assume that $t\in[d-1]$. Without loss of generality, assume that $g$ depends only on the coordinates in $[t]$. Define $g^{[t]} : [a,c]^t \to \mathbb{R}$ by $g^{[t]}(\bm y) \coloneqq g(\bm y, \bm 0_{d-t})$, so that $g^{[t]} \in \mathcal{H}_t^{\beta}\bigl([a,c]^t, \gamma\bigr)$. Then, by the case where $t=d$, there exists $f^{[t]} \in \mathcal{F}(L, \bm p^{[t]})$ where $\bm p^{[t]} \coloneqq (t,p_*,\ldots,p_*,1)^\top \in \mathbb{N}^{L+2}$ such that $|f^{[t]}(\bm y)| \leq \gamma \vee (c-a)^{\beta_0}\gamma$ for all $\bm y\in \mathbb{R}^t$, and
    \begin{align*}
        \bigl|f^{[t]}(\bm y) - g^{[t]}(\bm y)\bigr| \leq 9 \bigl(\gamma \vee (c-a)^{\beta_0}\gamma\bigr) (\beta_0+3)^2 8^{\beta} t^{\beta_0+\beta/2} (NM)^{-2\beta/t}
    \end{align*}
    for all $\bm y \in [a,c]^t \setminus \Gamma\bigl([a,c]^t, R, \delta\bigr)$. By~\ref{NN-composition}, we can define $f\in\mathcal{F}(L,\bm p)$ by $f(\bm x) \coloneqq f^{[t]}(\bm M \bm x)$, where $\bm M \coloneqq \begin{pmatrix}
        \bm I_t & \bm 0
    \end{pmatrix} \in \mathbb{R}^{t \times d}$, and $f$ satisfies the requirements in the statement.
\end{proof}

By Assumption~\ref{assumption:composition-of-smooth-functions}, for each $k\in[K]$, there exist $\bm g^{(k)}_{1}, \ldots, \bm g^{(k)}_{q_k}$ such that
\begin{gather*}
    \bm g^{(k)}_{r}=\bigl(g^{(k)}_{r,1}, \ldots, g^{(k)}_{r,d_{r+1}^{(k)}}\bigr)^\top : \mathbb{R}^{d_{r}^{(k)}} \to \mathbb{R}^{d_{r+1}^{(k)}}, \nonumber\\
    g_{r,j}^{(k)} \in \mathcal{H}_{t_{r}^{(k)}}^{\beta_{r}^{(k)}}\bigl(\mathbb{R}^{d^{(k)}_r}, \gamma^{(k)}_r\bigr) \text{ for all } r\in[q_k], j\in[d_{r+1}^{(k)}], 
\end{gather*}
and
\begin{align*}
    f^{\mathcal{S}_k}(\bm z) = \bm g^{(k)}_{q_k} \circ \bm g^{(k)}_{q_k-1} \circ \cdots \circ \bm g^{(k)}_{1}(\bm z)
\end{align*}
for all $\bm z\in \mathbb{R}^d$. Now, for $k \in [K]$, define 
\begin{gather}
    a_1^{(k)} \coloneqq -\xi_1\log(2dn),\quad c_1^{(k)} \coloneqq \xi_1\log(2dn)\nonumber\\
    \text{and } a_r^{(k)} \coloneqq -\gamma_{r-1}^{(k)},\quad c_r^{(k)} \coloneqq \gamma_{r-1}^{(k)} \quad\text{for } r\in\{2,\ldots,q_k+1\}. \label{eq:defn-a_r^(k)-c_r^(k)}
\end{gather}
For $r\in\{2,\ldots,q_k+1\}$, we have $\bm g_{r-1}^{(k)}(\bm{z}) \in [a_r^{(k)},\, c_r^{(k)}]^{d_r^{(k)}}$ for all $\bm{z} \in \mathbb{R}^{d^{(k)}_{r-1}}$ by the H\"older property of these functions. Thus, it is sufficient to restrict the domain of $\bm g_r^{(k)}$ to $[a_r^{(k)},\, c_r^{(k)}]^{d_r^{(k)}}$, for $r\in\{2,\ldots,q_k\}$. Then, Assumption~\ref{assumption:composition-of-smooth-functions} yields that
\begin{align*}
    f^{\mathcal{S}_k}(\bm z) = \bm g^{(k)}_{q_k} \circ \bm g^{(k)}_{q_k-1} \cdots \circ \bm g^{(k)}_{1}(\bm z)
\end{align*}
for all $\bm z\in [a_1^{(k)},c_1^{(k)}]^d = \bigl[-\xi_1\log(2dn),\,\xi_1\log(2dn)\bigr]^d$, where
\begin{gather}
    \bm g^{(k)}_{r}=\bigl(g^{(k)}_{r,1}, \ldots, g^{(k)}_{r,d_{i+1}^{(k)}}\bigr)^\top : [a_{r}^{(k)}, c_{r}^{(k)}]^{d_{r}^{(k)}} \to [a_{r+1}^{(k)}, c_{r+1}^{(k)}]^{d_{r+1}^{(k)}}, \text{ and} \nonumber\\
    g_{r,j}^{(k)} \in \mathcal{H}_{t_{r}^{(k)}}^{\beta_{r}^{(k)}}\bigl([a^{(k)}_r,c^{(k)}_r]^{d^{(k)}_r}, \gamma^{(k)}_r\bigr) \text{ for all } r\in[q_k], j\in[d_{r+1}^{(k)}]. \label{eq:defn-g_r^k}
\end{gather}
The reason that we are restricting the domain of $f^{\mathcal{S}_k}$ to $\bigl[-\xi_1\log(2dn),\,\xi_1\log(2dn)\bigr]^d$ is that by Assumption~\ref{assumption:piecewise-assumption}, each coordinate of $\bm Z_0$ is sub-exponential, so $\mathbb{P}\bigl(|Z_{0,j}| \geq \xi_1\log(2dn)\bigr) \leq 1/(dn)$ for all $j\in[d]$. Thus, by a union bound,
\begin{align}
    \mathbb{P}\Bigl\{ \bm Z_0 \notin \bigl[-\xi_1\log(2dn),\,\xi_1\log(2dn)\bigr]^d \Bigr\} \leq \frac{1}{n}, \label{eq:truncate-Z}
\end{align}
so it will suffice to approximate $f^{\mathcal{S}_k}$ on $\bigl[-\xi_1\log(2dn),\,\xi_1\log(2dn)\bigr]^d$.  To this end, we begin by approximating each function in the composition that defines $f^{\mathcal{S}_k}$. Finally, recall that for $k\in[K]$ and $r\in[q_k]$, we define
\begin{align*}
    \bar{\beta}^{(k)}_r \coloneqq \beta^{(k)}_{r} \prod_{\ell=r+1}^{q_k} (\beta^{(k)}_{\ell} \wedge 1). 
\end{align*}
We also define $r_*^{(k)} \coloneqq \mathrm{sargmax}_{r\in[q_k]} t^{(k)}_r / \bar{\beta}^{(k)}_r$ and let $\bar\beta^{(k)}_* \coloneqq \bar\beta^{(k)}_{r_*^{(k)}}$, $\beta^{(k)}_* \coloneqq \beta^{(k)}_{r_*^{(k)}}$, $t^{(k)}_* \coloneqq t^{(k)}_{r_*^{(k)}}$ and $\gamma_*^{(k)} \coloneqq \gamma^{(k)}_{r_*^{(k)}}$.

\begin{lemma} \label{lemma:approximate-each-Holder-component}
    Suppose that Assumptions~\ref{assumption:piecewise-assumption} and~\ref{assumption:composition-of-smooth-functions} hold. With the notation in~\eqref{eq:defn-a_r^(k)-c_r^(k)} and~\eqref{eq:defn-g_r^k}, for $M,N\in\mathbb{N}$ and $k\in[K]$, let 
    \begin{gather*}
    L^{(k)} \coloneqq \max_{r\in[q_k]} 12\lceil \beta^{(k)}_r \rceil^2(M+2)\lceil\log_2(4M)\rceil,\\
    p^{(k)}_* \coloneqq \max_{r\in[q_k]} \Bigl\{30 d_{r+1}^{(k)}  \lceil\beta^{(k)}_r\rceil^2 (t^{(k)}_r)^{\lceil\beta^{(k)}_r\rceil} (N+1)\lceil\log_2(8N)\rceil \vee 2 d_{r+1}^{(k)}\Bigr\} \text{ and}\\
    \bm p^{(k)}_r \coloneqq (d^{(k)}_r, p^{(k)}_*,\ldots, p^{(k)}_*, d^{(k)}_{r+1})^\top \in \mathbb{N}^{L^{(k)}+2} \text{ for } r\in[q_k].
    \end{gather*}
    For each $k\in[K]$, $r \in [q_k]$ and $n\in\mathbb{N}$, there exist $\bm f^{(k)}_r \in \mathcal{F}(L^{(k)},\bm p^{(k)}_r)$ and $E_r^{(k)} \subseteq [a^{(k)}_r,c^{(k)}_r]^{d^{(k)}_r}$ satisfying:
    \begin{itemize}
        \item[(i)] $\|\bm f_r^{(k)}(\bm x)\|_{\infty} \leq \gamma_r^{(k)} \vee (c_r^{(k)} - a_r^{(k)})^{\lceil \beta_r^{(k)} \rceil-1} \gamma_r^{(k)}$ for all $\bm x \in \mathbb{R}^{d_r^{(k)}}$ and
        \begin{align*}
            &\bigl\|\bm f^{(k)}_r(\bm x) - \bm g^{(k)}_r(\bm x)\bigr\|_{\infty}\\
            & \leq 9 \bigl( 1 \vee (c_r^{(k)} - a_r^{(k)})^{\lceil \beta_r^{(k)} \rceil - 1} \bigr) \gamma^{(k)}_r (\lceil \beta^{(k)}_r \rceil +2)^2 8^{\beta_r^{(k)}} (t^{(k)}_r)^{\lceil \beta_r^{(k)} \rceil - 1 + \beta^{(k)}_r/2} (NM)^{-2\beta^{(k)}_r / t^{(k)}_r},
        \end{align*}
        for all $\bm x \in [a^{(k)}_r,c^{(k)}_r]^{d^{(k)}_r} \setminus E^{(k)}_r$.
        \item[(ii)] $\mu_{\bm Z_0}(E^{(k)}_1) \leq \frac{1}{Kn q_k}$ and $\mu_{\bm Z_0}\bigl((\bm F^{(k)}_{r-1})^{-1}(E^{(k)}_r)\bigr) \leq \frac{1}{Kn q_k}$ for all $r\in\{2,\ldots,q_k\}$, where $\bm F^{(k)}_r \coloneqq \bm f^{(k)}_r \circ \cdots\circ \bm f^{(k)}_1$. 
    \end{itemize}
\end{lemma}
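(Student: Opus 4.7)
The plan is to construct $\bm f_r^{(k)}$ iteratively in $r \in [q_k]$ by applying Lemma~\ref{lemma:approximate-Holder-functions-[a,c]^d} to each scalar component $g_{r,j}^{(k)}$ of $\bm g_r^{(k)}$ and combining the resulting networks via \ref{NN-parallelisation}, padding to the common architecture via \ref{NN-padding} and \ref{NN-enlarging}. Each scalar application produces a network of width $30\lceil\beta_r^{(k)}\rceil^2 (t_r^{(k)})^{\lceil\beta_r^{(k)}\rceil}(N+1)\lceil\log_2(8N)\rceil \vee 1$, and parallelising $d_{r+1}^{(k)}$ such networks yields the declared $p_*^{(k)}$ (the $\vee\, 2d_{r+1}^{(k)}$ term absorbs the padding overhead). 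Lemma~\ref{lemma:approximate-Holder-functions-[a,c]^d} directly furnishes the truncation bound and the uniform approximation error in~(i), valid outside the strip set $\Gamma([a_r^{(k)},c_r^{(k)}]^{d_r^{(k)}}, R_r^{(k)}, \delta_r^{(k)})$, where $R_r^{(k)} \coloneqq \lfloor N^{1/t_r^{(k)}}\rfloor^2\lfloor M^{2/t_r^{(k)}}\rfloor$ (when $t_r^{(k)}=0$ the $g_{r,j}^{(k)}$ are constant, so we take $\bm f_r^{(k)} \equiv \bm g_r^{(k)}$ and $E_r^{(k)} = \emptyset$). Setting $E_r^{(k)}$ to the union of these component strip sets over $j \in [d_{r+1}^{(k)}]$ then yields~(i), and the parameter $\delta_r^{(k)} \in (0, 1/(3R_r^{(k)})]$ remains at our disposal.

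The delicate point is property~(ii), which requires controlling $\nu_r(E_r^{(k)})$, where $\nu_1 \coloneqq \mu_{\bm Z_0}$ and $\nu_r \coloneqq (\bm F_{r-1}^{(k)})_*\mu_{\bm Z_0}$ for $r \geq 2$. The grid positions in Lemma~\ref{lemma:approximate-Holder-functions-[a,c]^d} are fixed at integer multiples of $(c_r^{(k)}-a_r^{(k)})/R_r^{(k)}$, but we may translate them in each coordinate by an arbitrary shift $\bm\eta \in [0, (c_r^{(k)}-a_r^{(k)})/R_r^{(k)}]^{d_r^{(k)}}$ by prepending an affine input layer to $\bm f_r^{(k)}$, which by \ref{NN-composition} leaves the depth and width unchanged. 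A one-dimensional Fubini calculation shows that, for any probability measure $\nu$ on $[a_r^{(k)},c_r^{(k)}]$, the $\nu$-measure of the shifted strip in coordinate $j$, averaged over $\eta_j$ uniform on $[0, (c_r^{(k)}-a_r^{(k)})/R_r^{(k)}]$, is at most $R_r^{(k)}\delta_r^{(k)}$; crucially, this uses only that $\nu$ is a probability measure, so no absolute continuity of $\nu_r$ is needed. Union-bounding over the $t_r^{(k)}$ active coordinates of each component and over the $d_{r+1}^{(k)}$ components then produces a deterministic shift $\bm\eta^*$ for which $\nu_r(E_r^{(k)}) \leq d_{r+1}^{(k)} t_r^{(k)} R_r^{(k)} \delta_r^{(k)}$. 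Taking $\delta_r^{(k)} \leq \min\bigl(1/(3R_r^{(k)}),\, 1/(Knq_k d_{r+1}^{(k)} t_r^{(k)} R_r^{(k)})\bigr)$ then delivers the required bound $1/(Knq_k)$.

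The construction proceeds inductively for $r = 1, \ldots, q_k$; at step $r$ the measure $\nu_r$ depends on the networks $\bm f_1^{(k)}, \ldots, \bm f_{r-1}^{(k)}$ already chosen, but since the Fubini argument is distribution-free, it applies unchanged. The main technical subtlety---the iterative dependence of $\nu_r$ on earlier choices, combined with the fact that $\bm Z_0$ need not be absolutely continuous---is therefore circumvented cleanly, and since the shift $\bm\eta^*$ enters only through the first affine layer of $\bm f_r^{(k)}$, the architectural bounds $L^{(k)}$, $p_*^{(k)}$ and the input/output dimensions declared in the statement are preserved.
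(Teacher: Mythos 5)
Your approach to the delicate property~\emph{(ii)} is based on a randomised grid-shift argument that the paper does not need, and the key step of your argument has a gap. You claim that the grid in Lemma~\ref{lemma:approximate-Holder-functions-[a,c]^d} can be translated by an arbitrary $\bm\eta$ simply by prepending an affine input layer. But prepending the affine map $\bm x \mapsto \bm x - \bm\eta$ to the network shifts the \emph{function being approximated} (it becomes $g(\cdot - \bm\eta)$), not just the location of the strips. To shift the strips while leaving the target $g$ fixed you would have to re-derive the discretisation step (\citet[Proposition~4.3]{lu2021deep} inside Lemma~\ref{lemma:approximate-holder-functions-[0,1]^d}) with a translated grid, or apply the lemma to the translated function $g(\cdot+\bm\eta)$ on a translated domain that no longer covers $[a_r^{(k)},c_r^{(k)}]^{d_r^{(k)}}$. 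Neither option is addressed, so the Fubini bound you compute is for a set that is not the one you have constructed. (As a minor aside, your averaged strip measure should be bounded by $2R\delta$ per coordinate, not $R\delta$, because a point can fall into a shifted strip for up to two indices $r$; this is immaterial but indicates the estimate was done hastily.)

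The paper avoids this entirely with a much shorter argument. Lemma~\ref{lemma:approximate-Holder-functions-[a,c]^d} holds for every $\delta \in (0, 1/(3R_r^{(k)})]$ with the \emph{same} network depth and width, and the strip sets $E_{r,\delta}^{(k)} = \Gamma\bigl([a_r^{(k)},c_r^{(k)}]^{d_r^{(k)}}, R_r^{(k)}, \delta\bigr)$ are nested in $\delta$ with empty intersection as $\delta \downarrow 0$. Once $\bm f_1^{(k)},\ldots,\bm f_{r-1}^{(k)}$ are fixed (so $\bm F_{r-1}^{(k)}$ is a fixed measurable map), the preimages $\bigl(\bm F_{r-1}^{(k)}\bigr)^{-1}\bigl(E_{r,\delta}^{(k)}\bigr)$ are likewise nested with empty intersection, so continuity from above of the probability measure $\mu_{\bm Z_0}$ gives $\mu_{\bm Z_0}\bigl((\bm F_{r-1}^{(k)})^{-1}(E_{r,\delta}^{(k)})\bigr) \to 0$ as $\delta \downarrow 0$, and one simply picks $\delta$ small enough. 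This requires no absolute continuity, no averaging, and no modification of the network. Your worry that "$\bm Z_0$ need not be absolutely continuous" is a red herring: continuity from above holds for any finite measure. Your iterative scheme and use of~\ref{NN-parallelisation}/\ref{NN-padding}/\ref{NN-enlarging} to assemble the architecture are fine and match the paper; the gap is confined to the measure-control step.
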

\begin{proof}
    Fixing $k\in[K]$, we construct $(f^{(k)}_r,E^{(k)}_r)_{r=1}^{q_k}$ inductively.  For $r \in [q_k]$, let $R^{(k)}_r \coloneqq \lfloor N^{1/t^{(k)}_r} \rfloor^2 \lfloor M^{2/t^{(k)}_r} \rfloor$ for $r \in [q_k]$. Let $\delta_{\max} \coloneqq \frac{1}{3\max_{r\in[q_k]}R_r^{(k)}}$. By Lemma~\ref{lemma:approximate-Holder-functions-[a,c]^d},~\ref{NN-enlarging},~\ref{NN-padding} and~\ref{NN-parallelisation}, for any $\delta\in(0,\delta_{\max}]$, there exists $\bm f^{(k)}_{1,\delta} \in \mathcal{F}(L^{(k)}, \bm p^{(k)}_1)$ such that $\|\bm f_{1,\delta}^{(k)}(\bm x)\|_{\infty} \leq \bigl( 1 \vee (c_1^{(k)} - a_1^{(k)})^{\lceil \beta_1^{(k)} \rceil - 1} \bigr) \gamma_1^{(k)}$ for all $\bm x \in \mathbb{R}^{d_1^{(k)}}$ and
    \begin{align*}
        &\bigl\|\bm f^{(k)}_{1,\delta}(\bm x) - \bm g^{(k)}_1(\bm x)\bigr\|_{\infty}\\
        &\quad \leq 9 \bigl( 1 \vee (c_1^{(k)} - a_1^{(k)})^{\lceil \beta_1^{(k)} \rceil - 1} \bigr) \gamma^{(k)}_1 (\lceil \beta^{(k)}_1 \rceil +2)^2 8^{\beta_1^{(k)}} (t^{(k)}_1)^{\lceil \beta_1^{(k)} \rceil - 1 + \beta^{(k)}_1/2} (NM)^{-2\beta^{(k)}_1 / t^{(k)}_1},
    \end{align*}
    for all $\bm x \in [a^{(k)}_1, c^{(k)}_1]^{d^{(k)}_1} \setminus E^{(k)}_{1,\delta}$, where $E^{(k)}_{1,\delta} \coloneqq \Gamma\bigl([a^{(k)}_1, c^{(k)}_1]^{d^{(k)}_1},\, R^{(k)}_1,\, \delta\bigr)$ as defined in Lemma~\ref{lemma:approximate-Holder-functions-[a,c]^d}. Further note that $(E^{(k)}_{1,\delta})_{\delta\in(0,\delta_{\max}]}$ are nested and have empty intersection.
    Thus, there exists $\delta^* \in (0,\delta_{\max}]$ such that $\mu_{\bm Z_0}(E^{(k)}_{1,\delta^*}) \leq \frac{1}{Kn q_k}$. We set $\bm f^{(k)}_1 \coloneqq \bm f^{(k)}_{1,\delta^*}$ and $E^{(k)}_1 \coloneqq E^{(k)}_{1,\delta^*}$, so that $\bm f^{(k)}_1$ and $E^{(k)}_1$ satisfy conditions~\emph{(i)} and~\emph{(ii)}.

    Now suppose that $\bm f^{(k)}_1,\ldots,\bm f^{(k)}_{r-1}$ and $E^{(k)}_1,\ldots,E^{(k)}_{r-1}$ satisfying conditions~\emph{(i)} and~\emph{(ii)} have been constructed for some $r\in\{2,\ldots,q_k\}$. We apply the same argument as above to construct $\bm f^{(k)}_r$ and $E^{(k)}_r$. By Lemma~\ref{lemma:approximate-Holder-functions-[a,c]^d},~\ref{NN-enlarging},~\ref{NN-padding} and~\ref{NN-parallelisation}, for any $\delta\in(0,\delta_{\max}]$, there exists $\bm f^{(k)}_{r,\delta} \in \mathcal{F}(L^{(k)}, \bm p^{(k)}_r)$ such that $\|\bm f_{r,\delta}^{(k)}(\bm x)\|_{\infty} \leq \bigl( 1 \vee (c_r^{(k)} - a_r^{(k)})^{\lceil \beta_r^{(k)} \rceil - 1} \bigr) \gamma_r^{(k)}$ for all $\bm x \in \mathbb{R}^{d_r^{(k)}}$ and
    \begin{align*}
        &\bigl\|\bm f^{(k)}_{r,\delta}(\bm x) - \bm g^{(k)}_r(\bm x)\bigr\|_{\infty}\\
        & \leq 9 \bigl( 1 \vee (c_r^{(k)} - a_r^{(k)})^{\lceil \beta_r^{(k)} \rceil - 1} \bigr) \gamma^{(k)}_r (\lceil \beta^{(k)}_r \rceil +2)^2 8^{\beta_r^{(k)}} (t^{(k)}_r)^{\lceil \beta_r^{(k)} \rceil - 1 + \beta^{(k)}_r/2} (NM)^{-2\beta^{(k)}_r / t^{(k)}_r},
    \end{align*}
    for all $\bm x \in [a^{(k)}_r, c^{(k)}_r]^{d^{(k)}_r} \setminus E^{(k)}_{r,\delta}$, where $E^{(k)}_{r,\delta} \coloneqq \Gamma\bigl([a^{(k)}_r, c^{(k)}_r]^{d^{(k)}_r},\, R^{(k)}_r,\, \delta\bigr)$. Again, since $\bigl((\bm F_{r-1}^{(k)})^{-1}(E_{r,\delta}^{(k)})\bigr)_{\delta \in (0,\delta_{\max}]}$ are nested and have empty intersection, there exists $\delta^* \in (0,\delta_{\max}]$ such that $\mu_{\bm Z_0}\bigl((\bm F_{r-1}^{(k)})^{-1}(E^{(k)}_{r,\delta^*})\bigr) \leq \frac{1}{Kn q_k}$. We set $\bm f^{(k)}_r \coloneqq \bm f^{(k)}_{r,\delta^*}$ and $E^{(k)}_r \coloneqq E^{(k)}_{r,\delta^*}$, so that $\bm f^{(k)}_r$ and $E^{(k)}_r$ satisfy conditions \emph{(i)} and \emph{(ii)}.
\end{proof}
We are now in a position to approximate each piece of the Bayes regression function by a neural network on a hypercube, once we have excised certain strips.
\begin{lemma} \label{lemma:approximate-composition-of-Holder-functions}
    Suppose that Assumptions~\ref{assumption:piecewise-assumption} and~\ref{assumption:composition-of-smooth-functions} hold and let $n\geq 2$. For $M,N\in\mathbb{N}$ and $k\in[K]$, let
    \begin{gather*}
    L^{(k)} \coloneqq \max_{r\in[q_k]} 12\lceil \beta^{(k)}_r \rceil^2(M+2)\lceil\log_2(4M)\rceil \text{ and}\\
    p^{(k)}_* \coloneqq \max_{r\in[q_k]} \Bigl\{ 30 d_{r+1}^{(k)}  \lceil\beta^{(k)}_r\rceil^2 (t^{(k)}_r)^{\lceil\beta^{(k)}_r\rceil} (N+1)\lceil\log_2(8N)\rceil \vee 2d_{r+1}^{(k)} \Bigr\}.
    \end{gather*}
    For each $k\in[K]$, there exist $E^{(k)} \subseteq \bigl[-\xi_1\log(2dn),\,\xi_1\log(2dn)\bigr]^d$ and $f^{(k)} \in \mathcal{F}(q_kL^{(k)}, \bm p^{(k)})$ where $\bm p^{(k)} \coloneqq (d, p^{(k)}_*,\ldots,p^{(k)}_*,1) \in \mathbb{N}^{q_kL^{(k)}+2}$ such that $\mu_{\bm Z_0}(E^{(k)}) \leq \frac{1}{Kn}$, $|f^{(k)}(\bm z)| \leq \gamma_{q_k}^{(k)} \vee (2\gamma_{q_k-1}^{(k)})^{\lceil\beta_{q_k}^{(k)}\rceil-1} \gamma_{q_k}^{(k)}$ for all $\bm z \in \mathbb{R}^d$, and
    \begin{align*}
        \bigl|f^{(k)}(\bm z) - f^{\mathcal{S}_k}(\bm z)\bigr| \leq C_1(\xi_1, \bm t^{(k)}, \bm \beta^{(k)}, \bm \gamma^{(k)}) \cdot \log^{\bar{\beta}_1^{(k)}} (2dn) \sum_{r=1}^{q_k} \frac{1}{(NM)^{2\bar{\beta}^{(k)}_r / t^{(k)}_r}}
    \end{align*}
    for all $\bm z\in \bigl[-\xi_1\log(2dn),\,\xi_1\log(2dn)\bigr]^d \setminus E^{(k)}$, where $C_1(\xi_1, \bm t^{(k)}, \bm \beta^{(k)}, \bm \gamma^{(k)}) > 0$ depends only on $(\xi_1, \bm t^{(k)}, \bm \beta^{(k)}, \bm \gamma^{(k)})$, and $C_1(\xi_1, \bm t^{(k)}, \bm \beta^{(k)}, \bm \gamma^{(k)}) \leq A(\xi_1,\bm \beta^{(k)}, \bm \gamma^{(k)})\|\bm t^{(k)}\|_\infty^{B(\bm \beta^{(k)})}$ for some $A(\xi_1,\bm \beta^{(k)}, \bm \gamma^{(k)}),B(\bm \beta^{(k)})  > 0$. 
\end{lemma}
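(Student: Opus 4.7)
The strategy is to form $f^{(k)}$ as the composition $\bm f_{q_k}^{(k)}\circ\cdots\circ\bm f_1^{(k)}$ of the per-layer neural network approximations supplied by Lemma~\ref{lemma:approximate-each-Holder-component}, and to bound the overall error via a telescoping argument that propagates each layer's error through the H\"older continuity of subsequent layers. Iterated application of~\ref{NN-composition} yields a network of depth $q_kL^{(k)}$; since the widths at the junctions between consecutive $\bm f_r^{(k)}$ equal $d_{r+1}^{(k)}\leq p_*^{(k)}/2$ by the choice of $p_*^{(k)}$,~\ref{NN-enlarging} places $f^{(k)}$ in $\mathcal{F}(q_kL^{(k)},\bm p^{(k)})$. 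The pointwise bound on $|f^{(k)}(\bm z)|$ follows from property~(i) of Lemma~\ref{lemma:approximate-each-Holder-component} applied to $\bm f_{q_k}^{(k)}$, using $c_{q_k}^{(k)}-a_{q_k}^{(k)}=2\gamma_{q_k-1}^{(k)}$. For the exceptional set, I take
\begin{align*}
E^{(k)}\coloneqq\biggl(E_1^{(k)}\cup\bigcup_{r=2}^{q_k}\bigl(\bm F_{r-1}^{(k)}\bigr)^{-1}(E_r^{(k)})\biggr)\cap\bigl[-\xi_1\log(2dn),\,\xi_1\log(2dn)\bigr]^d,
\end{align*}
whose $\mu_{\bm Z_0}$-measure is bounded by $1/(Kn)$ via a union bound and property~(ii) of Lemma~\ref{lemma:approximate-each-Holder-component}.

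The core step is the error bound for $\bm z\notin E^{(k)}$. With $\bm F_0^{(k)}\coloneqq\mathrm{id}$, I telescope
\begin{align*}
f^{(k)}(\bm z)-f^{\mathcal{S}_k}(\bm z)=\sum_{r=1}^{q_k}\Bigl\{\bm g_{q_k}^{(k)}\circ\cdots\circ\bm g_{r+1}^{(k)}\bigl(\bm f_r^{(k)}(\bm F_{r-1}^{(k)}(\bm z))\bigr)-\bm g_{q_k}^{(k)}\circ\cdots\circ\bm g_{r+1}^{(k)}\bigl(\bm g_r^{(k)}(\bm F_{r-1}^{(k)}(\bm z))\bigr)\Bigr\}.
\end{align*}
For $\bm z\notin E^{(k)}$ we have $\bm F_{r-1}^{(k)}(\bm z)\notin E_r^{(k)}$, so property~(i) of Lemma~\ref{lemma:approximate-each-Holder-component} yields a per-layer error of order $(c_r^{(k)}-a_r^{(k)})^{\lceil\beta_r^{(k)}\rceil-1}(NM)^{-2\beta_r^{(k)}/t_r^{(k)}}$, where $c_1^{(k)}-a_1^{(k)}=2\xi_1\log(2dn)$ for $r=1$ and $c_r^{(k)}-a_r^{(k)}=2\gamma_{r-1}^{(k)}$ otherwise. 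Iteratively applying the $\beta_\ell^{(k)}$-H\"older property of $\bm g_\ell^{(k)}$ for $\ell>r$---using the Lipschitz bound, controlled by $\gamma_\ell^{(k)}$, whenever $\beta_\ell^{(k)}>1$, and the direct H\"older bound otherwise---raises this error to the power $\prod_{\ell>r}(\beta_\ell^{(k)}\wedge 1)$, producing a contribution of order $(NM)^{-2\bar\beta_r^{(k)}/t_r^{(k)}}$ together with, only when $r=1$, a factor at most $\log^{\bar\beta_1^{(k)}}(2dn)$, since $(\lceil\beta_1^{(k)}\rceil-1)\prod_{\ell>1}(\beta_\ell^{(k)}\wedge 1)\leq\bar\beta_1^{(k)}$ and $\log(2dn)\geq 1$ for $n\geq 2$. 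Summing over $r\in[q_k]$ and absorbing the various H\"older norms and combinatorial factors into the constant $C_1$ yields the stated bound.

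\textbf{Main obstacle.} The principal technical care concerns the propagation of approximation errors through the composition: when an inner H\"older exponent $\beta_\ell^{(k)}$ exceeds one, one must invoke the Lipschitz bound (furnished by the derivative norm $\gamma_\ell^{(k)}$) rather than the raw H\"older bound, so that the effective exponent becomes $\beta_\ell^{(k)}\wedge 1$; iterating this is precisely what produces the quantities $\bar\beta_r^{(k)}$ appearing in the claim, and also explains why only $r=1$ contributes a $\log$ factor (the other $c_r^{(k)}-a_r^{(k)}$ being constants). A subsidiary check is that the intermediate values $\bm F_{r-1}^{(k)}(\bm z)$ lie in domains where $\bm g_r^{(k)}$ is defined and where the approximation bounds of Lemma~\ref{lemma:approximate-each-Holder-component} apply; the former is guaranteed by the global definition $g_{r,j}^{(k)}\in\mathcal{H}_{t_r^{(k)}}^{\beta_r^{(k)}}(\mathbb{R}^{d_r^{(k)}},\gamma_r^{(k)})$ in Assumption~\ref{assumption:composition-of-smooth-functions}, and the latter by the boundedness guarantee in property~(i).
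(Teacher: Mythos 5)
Your proof is correct and follows essentially the same approach as the paper's own: you compose the per-layer approximants supplied by Lemma~\ref{lemma:approximate-each-Holder-component}, build the exceptional set from preimages with a union bound, and propagate each layer's error through the H\"older moduli of the subsequent exact $\bm g_\ell^{(k)}$'s, with the product $\prod_{\ell>r}(\beta_\ell^{(k)}\wedge 1)$ yielding the effective exponents $\bar\beta_r^{(k)}/\beta_r^{(k)}$. The only cosmetic difference is that you organize the error as a forward telescoping sum with the exact $\bm g$'s outermost, whereas the paper sets up a backward recursion on $\|\bm F_r^{(k)}-\bm G_r^{(k)}\|_{L^\infty}$ and unfolds it via $(a+b)^t\leq a^t+b^t$ for $t\in[0,1]$; the two are computationally equivalent.
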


\begin{proof}
    We use the notation in~\eqref{eq:defn-a_r^(k)-c_r^(k)} and~\eqref{eq:defn-g_r^k}.
    Fix $k \in [K]$, and let $(\bm f^{(k)}_r)_{r=1}^{q_k}$ and $(E^{(k)}_r)_{r=1}^{q_k}$ be defined as in Lemma~\ref{lemma:approximate-each-Holder-component}. By~\ref{NN-composition}, the function $f^{(k)} \coloneqq \bm f^{(k)}_{q_k} \circ\cdots\circ \bm f^{(k)}_1$ belongs to $\mathcal{F}(q_kL^{(k)}, \bm p^{(k)})$. 
    For $r\in[q_k]$, recall the definition of $\bm F^{(k)}_r$ from Lemma~\ref{lemma:approximate-each-Holder-component} and let $\bm G^{(k)}_r \coloneqq \bm g^{(k)}_{r} \circ\cdots\circ \bm g^{(k)}_1$.
    Define $E^{(k)} \coloneqq \bigcup_{r\in[q_k]} (\bm F^{(k)}_{r-1})^{-1}(E^{(k)}_r)$ with $\bm F^{(k)}_0$ being the identity function. By Lemma~\ref{lemma:approximate-each-Holder-component} and a union bound, $\mu_{\bm Z_0}(E^{(k)}) \leq \frac{1}{Kn}$.
    For $\bm f:\mathbb{R}^d \to \mathbb{R}^m$ and $D\subseteq\mathbb{R}^d$, we define $\|\bm f\|_{L^{\infty}(D)} \coloneqq \sup_{\bm x\in D} \|\bm f(\bm x)\|_{\infty}$. For $r\in[q_k]$, let $D^{(k)}_r \coloneqq [a^{(k)}_r, c^{(k)}_r]^{d^{(k)}_r} \setminus E^{(k)}_r$ and let $D^{(k)} \coloneqq \bigl[-\xi_1\log(2dn),\,\xi_1\log(2dn)\bigr]^d \setminus E^{(k)}$. Then, for $r\in\{2,\ldots,q_k\}$,
    \begin{align}
        \bigl\|\bm F^{(k)}_r - \bm G^{(k)}_r&\bigr\|_{L^{\infty}(D^{(k)})} \leq \bigl\|\bm f^{(k)}_r \circ \bm f^{(k)}_{r-1} \circ\cdots\circ \bm f^{(k)}_1 - \bm g^{(k)}_r \circ \bm f^{(k)}_{r-1} \circ\cdots\circ \bm f^{(k)}_1\bigr\|_{L^{\infty}(D^{(k)})} \nonumber\\
        &\hspace{3cm} + \bigl\|\bm g^{(k)}_r \circ \bm f^{(k)}_{r-1} \circ\cdots\circ \bm f^{(k)}_1 - \bm g^{(k)}_r \circ \bm g^{(k)}_{r-1} \circ\cdots\circ \bm g^{(k)}_1\bigr\|_{L^{\infty}(D^{(k)})} \nonumber\\
        &\leq \bigl\|\bm f^{(k)}_r - \bm g^{(k)}_r\bigr\|_{L^{\infty} (\bm F^{(k)}_{r-1}(D^{(k)}))} + \gamma^{(k)}_r \Bigl\{ \bigl(t^{(k)}_r\bigr)^{1/2}\bigl\|\bm F^{(k)}_{r-1} - \bm G^{(k)}_{r-1}\bigr\|_{L^{\infty}(D^{(k)})} \Bigr\}^{\beta^{(k)}_r \wedge 1} \nonumber\\
        &\leq \bigl\|\bm f^{(k)}_r - \bm g^{(k)}_r\bigr\|_{L^{\infty} (D^{(k)}_r)} + \gamma^{(k)}_r \bigl(t^{(k)}_r\bigr)^{(\beta^{(k)}_r \wedge 1)/2} \bigl\|\bm F^{(k)}_{r-1} - \bm G^{(k)}_{r-1}\bigr\|_{L^{\infty}(D^{(k)})}^{\beta^{(k)}_r \wedge 1}, \label{eq:composition-error}
    \end{align}
    where the second inequality follows since each coordinate of $\bm g_r^{(k)}$ belongs to the class  $\mathcal{H}_{t_r^{(k)}}^{\beta_r^{(k)}}\bigl([a_r^{(k)},c_r^{(k)}]^{d_r^{(k)}}, \gamma_r^{(k)}\bigr)$.
    Thus, 
    \begin{align*}
        \bigl\|&f^{(k)} - f^{\mathcal{S}_k}\bigr\|_{L^{\infty}(D^{(k)})} = \bigl\|\bm F^{(k)}_{q_k} - \bm G^{(k)}_{q_k}\bigr\|_{L^{\infty}(D^{(k)})}\\
        &\leq \sum_{r=1}^{q_k} \biggl\{ \prod_{j=r+1}^{q_k} (\gamma^{(k)}_j)^{\bar{\beta}_j^{(k)} / \beta_j^{(k)}}\bigl(t^{(k)}_j\bigr)^{\bar{\beta}_{j-1}^{(k)} / (2\beta_{j-1}^{(k)})} \biggr\} \bigl\|\bm f^{(k)}_r - \bm g^{(k)}_r\bigr\|_{L^{\infty} (D^{(k)}_r)}^{\bar{\beta}^{(k)}_r / \beta^{(k)}_r}\\
        &\leq \sum_{r=1}^{q_k} \biggl\{ \prod_{j=r+1}^{q_k} (\gamma^{(k)}_j)^{\bar{\beta}_j^{(k)} / \beta_j^{(k)}}\bigl(t^{(k)}_j\bigr)^{\bar{\beta}_{j-1}^{(k)} / (2\beta_{j-1}^{(k)})} \biggr\} \Bigl\{9 \bigl( 1 \vee (c_r^{(k)} - a_r^{(k)})^{\lceil \beta_r^{(k)} \rceil - 1} \bigr) \gamma^{(k)}_r \\
        &\hspace{4.5cm} \times (\lceil \beta^{(k)}_r \rceil +2)^2 8^{\beta_r^{(k)}} (t^{(k)}_r)^{\lceil \beta_r^{(k)} \rceil - 1 + \beta^{(k)}_r/2} \Bigr\}^{\bar{\beta}^{(k)}_r / \beta^{(k)}_r} (NM)^{-2\bar{\beta}^{(k)}_r / t^{(k)}_r}\\
        &\leq C_1(\xi_1, \bm t^{(k)}, \bm \beta^{(k)}, \bm \gamma^{(k)}) \cdot \log^{\bar{\beta}_1^{(k)}} (2dn) \sum_{r=1}^{q_k} \frac{1}{(NM)^{2\bar{\beta}^{(k)}_r / t^{(k)}_r}},
    \end{align*}
    where $C_1(\xi_1, \bm t^{(k)}, \bm \beta^{(k)}, \bm \gamma^{(k)})$ has the properties claimed in the statement of the result.  Here, the first inequality follows by applying~\eqref{eq:composition-error} iteratively and using the fact that $(a+b)^t \leq a^t + b^t$ for $a,b \geq 0$ and $t\in[0,1]$, the second inequality follows from Lemma~\ref{lemma:approximate-each-Holder-component}, and the third inequality follows by substituting the definitions of $(a^{(k)}_r, c^{(k)}_r)_{r=1}^{q_k}$ in~\eqref{eq:defn-a_r^(k)-c_r^(k)}. Moreover, $\|f^{(k)}\|_{L^{\infty}(\mathbb{R}^d)} \leq \|\bm f^{(k)}_{q_k}\|_{L^{\infty}(\mathbb{R}^d)} \leq \gamma_{q_k}^{(k)} \vee (2\gamma_{q_k-1}^{(k)})^{\lceil\beta_{q_k}^{(k)}\rceil-1} \gamma_{q_k}^{(k)}$ by Lemma~\ref{lemma:approximate-each-Holder-component}.
\end{proof}
The following lemma is used in the proof of Lemma~\ref{lemma:use-embedding-to-extract-coordinates} below, which quantifies the extent to which we can extract coordinates, based on a function $\bm f_2$ that separates $\{\mathcal{S}_1,\ldots,\mathcal{S}_K\}$, using a neural network.
\begin{lemma} \label{lemma:approximate-multiplication-with-zero-one}
    For any $B>0$ and $N,L\in\mathbb{N}$, there exists $\phi \in \mathcal{F}\bigl(L,(2,9N+1,\ldots,9N+1,1)\bigr)$ such that for all $x\in[-B,B]$, we have
    \begin{align*}
        \phi(x,0) = 0 \quad\text{and}\quad |\phi(x,1)-x| \leq \frac{12B}{N^L}.
    \end{align*}
\end{lemma}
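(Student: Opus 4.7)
The plan is to observe that since $y$ is only evaluated at the two values $0$ and $1$ rather than as a continuous variable, no genuine multiplication-approximation is needed; a tiny explicit ReLU network computes the required values \emph{exactly} on $[-B,B]\times\{0,1\}$, and the remaining work is just to pad it to fit the prescribed architecture, with the stated bound $12B/N^L$ then holding trivially.

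First, I will consider the one-hidden-layer network with two hidden units
\begin{align*}
    \phi_0(x,y) \coloneqq \sigma(x + By - B) - \sigma(-x + By - B),
\end{align*}
so that $\phi_0 \in \mathcal{F}\bigl(1,(2,2,1)\bigr)$. For $x\in[-B,B]$ and $y=0$, both pre-activations satisfy $x-B \leq 0$ and $-x-B \leq 0$, hence $\phi_0(x,0)=0$; at $y=1$ the pre-activations reduce to $x$ and $-x$, so $\phi_0(x,1)=\sigma(x)-\sigma(-x)=x$. Both desired identities therefore hold with zero error on $[-B,B]$.

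Second, to embed $\phi_0$ into the larger class $\mathcal{F}\bigl(L,(2,9N+1,\ldots,9N+1,1)\bigr)$, I will append $L-1$ identity layers, each realised by the scalar ReLU identity $t=\sigma(t)-\sigma(-t)$ using two neurons. By property~\ref{NN-padding} this yields a representative in $\mathcal{F}\bigl(L,(2,2,\ldots,2,1)\bigr)$, and by property~\ref{NN-enlarging}, combined with the inequality $9N+1\geq 2$, one can further inflate the hidden widths to $9N+1$, setting unused weights to zero. The resulting $\phi$ agrees pointwise with $\phi_0$, so $\phi(x,0)=0$ and $\phi(x,1)=x$ exactly for all $x\in[-B,B]$; in particular $|\phi(x,1)-x|=0\leq 12B/N^L$.

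There is no substantive obstacle: the only two routine checks are that $9N+1\geq 2$ so that the padding fits within the prescribed width, and that the pre-activations at $y=0$ are non-positive on $[-B,B]$, both of which are immediate. (The bound $12B/N^L$ in the statement is therefore a convenient slack, presumably phrased in this form to align with the accuracy rates of the approximations used elsewhere, rather than a quantity that is actually achieved tightly.)
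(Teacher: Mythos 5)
Your proof is correct, and it takes a genuinely different and more elementary route than the paper's. The paper proves this lemma by importing the approximate-multiplication network $\phi_1$ from \citet[Lemma~5.2]{lu2021deep}, which achieves $|\phi_1(a,b)-ab|\leq 6N^{-L}$ uniformly over $a,b\in[0,1]$, and then observes from the explicit form of $\phi_1$ that $\phi_1(a,0)=0$; the bound $12B/N^L$ is the rescaled version of the $6N^{-L}$ error. Your observation — that since $y$ is only evaluated at $\{0,1\}$, the products $x\cdot 0$ and $x\cdot 1$ can be computed \emph{exactly} by the two-neuron gadget $\phi_0(x,y)=\sigma(x+By-B)-\sigma(-x+By-B)$ on $[-B,B]\times\{0,1\}$ — sidesteps the approximation machinery entirely, and the verification that $\phi_0(x,0)=0$ (both pre-activations are $\leq 0$ there) and $\phi_0(x,1)=\sigma(x)-\sigma(-x)=x$ is immediate. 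Padding to depth $L$ via iterated scalar identities and widening to $9N+1\geq 2$ via~\ref{NN-enlarging} then places the construction in the stated class, and the conclusion $|\phi(x,1)-x|=0\leq 12B/N^L$ holds with room to spare. What each approach buys: the paper's version fits naturally alongside the other approximation lemmas that reuse the same Lu et al.\ machinery (where the architecture $(9N+1)$-wide, $L$-deep is exactly the shape that building block comes in), whereas yours is strictly simpler and delivers exact computation; both satisfy the lemma as stated, and downstream uses in Lemma~\ref{lemma:use-embedding-to-extract-coordinates} only require the error upper bound, so your stronger guarantee is unproblematic. One small remark: you invoke~\ref{NN-padding} for the depth extension, but as stated in the paper that property appends copies of width $2p_{1,L_1+1}$ \emph{including} the output slot, so the last entry would read $2$ rather than $1$; the cleaner route is to apply~\ref{NN-composition} directly with the two-neuron identity block $L-1$ times, which is what the proof of~\ref{NN-padding} actually does, yielding exactly the architecture $(2,2,\dots,2,1)$ you want before widening.
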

\begin{proof}
    By \citet[Lemma~5.2]{lu2021deep}, there exists $\phi_1 \in \mathcal{F}\bigl(L,(2,9N,\ldots,9N,1)\bigr)$ such that $|\phi_1(a,b)-ab| \leq 6N^{-L}$ for all $a,b\in[0,1]$. Moreover, by \citet[Eq~(5.2)]{lu2021deep}, the function $\phi_1$ is defined by $\phi_1(a,b) = 2\bigl(\psi(\frac{a+b}{2}) - \psi(\frac{a}{2}) - \psi(\frac{b}{2})\bigr)$, where $\psi$ is the neural network constructed in \citet[Lemma~5.1]{lu2021deep}, which satisfies $\psi(0) = 0$. Therefore, $\phi_1(a,0) = 0$ for all $a\in[0,1]$. 
    By~\ref{NN-composition} and~\ref{NN-parallelisation}, there exists $\phi\in\mathcal{F}\bigl(L,(2,9N+1,\ldots,9N+1,1)\bigr)$ such that $\phi(x,y) = 2B\phi_1(\frac{x+B}{2B}, y) - By$ for all $(x,y)\in\mathbb{R}\times[0,\infty)$. Moreover, for all $x\in[-B,B]$, we have $\phi(x,0) = 2B\phi_1(\frac{x+B}{2B}, 0) = 0$ and 
    \begin{align*}
        |\phi(x,1) - x| = 2B\biggl|\phi_1\biggl(\frac{x+B}{2B}, 1\biggr) - \frac{x+B}{2B}\biggr| \leq \frac{12B}{N^L},
    \end{align*}
    as required.
\end{proof}

\begin{lemma} \label{lemma:use-embedding-to-extract-coordinates}
    Suppose that $\{\mathcal{S}_1,\ldots,\mathcal{S}_K\}$ is separated by $\bm f_2 \in \mathcal{F}(L_2,\bm p_2)$. Let $m \coloneqq p_{2,L_2+1}$ and $B > 0$. For any $M,N_1,\ldots,N_K \in \mathbb{N}$, let
    \begin{gather*}
        s \coloneqq (13m+7)K + (2M+2)\sum_{k=1}^K(9N_k+1)^2 + 7,\\
        p_* \coloneqq (4m+1)K \vee  \sum_{k=1}^K (9N_k+1) \quad \text{and} \quad \bm p \coloneqq (K+m, p_*, \ldots, p_*, 1) \in \mathbb{N}^{M+5}.
    \end{gather*}
    Then there exists $f_3\in\mathcal{F}(M+3, \bm p, s)$ such that $|f_3(\bm u, \bm v)| \leq B$ for all $(\bm u, \bm v) \in \mathbb{R}^K \times \mathbb{R}^m$, and for all $\bm u=(u_1,\ldots,u_K)^\top \in [-B,B]^K$, $k\in[K]$ and $\bm\omega \in \mathcal{S}_k$, we have
    \begin{align*}
        \bigl|f_3 \bigl(\bm u, \bm f_2(\bm\omega)\bigr) - u_k \bigr| \leq\frac{12B}{N_k^M}.
    \end{align*}
\end{lemma}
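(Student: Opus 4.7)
My approach is to build $f_3$ from three components: trapezoidal ``soft indicator'' functions $I_k(\bm v)$ that, evaluated at $\bm v = \bm f_2(\bm\omega)$, exactly recover $\mathbbm{1}\{\bm\omega \in \mathcal{S}_k\}$; the multiplication gadgets $\phi_k$ of Lemma~\ref{lemma:approximate-multiplication-with-zero-one} applied to $(u_k, I_k(\bm v))$; and a terminal truncation $T_B$ that enforces $|f_3|\leq B$ on all of $\mathbb{R}^K\times\mathbb{R}^m$.

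Let $\bm v_1,\ldots,\bm v_K\in\mathbb{R}^m$ and $\epsilon>0$ be as in Definition~\ref{defn:F(L,p,s)-separable}. For each $(k,j)\in[K]\times[m]$, I would take the trapezoid
\[
\psi_{k,j}(x) := \sigma\Bigl(\tfrac{2(x-v_{k,j})}{\epsilon}+2\Bigr) - \sigma\Bigl(\tfrac{2(x-v_{k,j})}{\epsilon}+1\Bigr) - \sigma\Bigl(\tfrac{2(x-v_{k,j})}{\epsilon}-1\Bigr) + \sigma\Bigl(\tfrac{2(x-v_{k,j})}{\epsilon}-2\Bigr),
\]
which takes values in $[0,1]$, equals $1$ on $[v_{k,j}-\epsilon/2,v_{k,j}+\epsilon/2]$ and vanishes outside $(v_{k,j}-\epsilon,v_{k,j}+\epsilon)$. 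Setting $I_k(\bm v) := \sigma\bigl(\sum_{j=1}^m\psi_{k,j}(v_j)-m+1\bigr)$, the first separation property yields $I_k(\bm f_2(\bm\omega))=1$ for $\bm\omega\in\mathcal{S}_k$, while the second separation property guarantees, for each $k'\neq k$, a coordinate $j^*$ with $|v_{k,j^*}-v_{k',j^*}|\geq 2\epsilon$, whence $|(\bm f_2(\bm\omega))_{j^*}-v_{k,j^*}|\geq 3\epsilon/2>\epsilon$ by the triangle inequality; thus $\psi_{k,j^*}=0$ and $I_k(\bm f_2(\bm\omega))=0$. Taking $\phi_k\in\mathcal{F}\bigl(M,(2,9N_k+1,\ldots,9N_k+1,1)\bigr)$ from Lemma~\ref{lemma:approximate-multiplication-with-zero-one}, I define
\[
f_3(\bm u,\bm v) := T_B\Bigl(\textstyle\sum_{k=1}^K\phi_k\bigl(u_k,I_k(\bm v)\bigr)\Bigr).
\]
The bound $|f_3|\leq B$ is immediate. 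For $\bm u\in[-B,B]^K$ and $\bm\omega\in\mathcal{S}_k$, the indicator collapse gives $\sum_{k'}\phi_{k'}=\phi_k(u_k,1)$, and combining $|\phi_k(u_k,1)-u_k|\leq 12B/N_k^M$ with $T_B(u_k)=u_k$ and the $1$-Lipschitz property of $T_B$ delivers the claimed error bound.

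To realise $f_3$ in $\mathcal{F}(M+3,\bm p,s)$, I arrange the layers as follows. Layer $1$ contains the $4mK$ ReLU units of the $\psi_{k,j}$'s together with $K$ shifted pass-throughs $\sigma(u_k+B)$, giving active width $(4m+1)K$; for $\bm u\in[-B,B]^K$ these pass-throughs return $u_k+B$, while the outer $T_B$ absorbs any mismatch for other $\bm u$. Layer $2$ has active width $2K$, realising each $I_k$ and continuing to carry $u_k+B$. Layers $3$ through $M+2$ run the $K$ subnets $\phi_k$ in parallel with combined width $\sum_k(9N_k+1)$, the first affine of each $\phi_k$ absorbing $u_k=(u_k+B)-B$ via its bias. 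Layer $M+3$ has width $2$, computing $\sigma(\tilde y+B)$ and $\sigma(\tilde y-B)$ with $\tilde y=\sum_k\phi_k(u_k,I_k(\bm v))$, and the final affine outputs $T_B(\tilde y)=\sigma(\tilde y+B)-\sigma(\tilde y-B)-B$. Thus the depth is $M+3$ and the maximum hidden width is $p_*=(4m+1)K\vee\sum_k(9N_k+1)$.

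The main obstacle is the sparsity bookkeeping. Counting non-zero parameters affine by affine: $\bm A_1$ uses $8mK+2K$ (two per trapezoid row and per pass-through row); $\bm A_2$ uses $(4m+1)K+K$; $\bm A_3$ uses at most $3(9N_k+1)$ per $k$; each of $\bm A_4,\ldots,\bm A_{M+2}$ uses at most $(9N_k+1)(9N_k+2)$ per $k$; $\bm A_{M+3}$ uses $2\sum_k(9N_k+1)+2$; and $\bm A_{M+4}$ uses $3$. Using $(9N_k+1)^2\geq 9N_k+1$ to absorb the linear terms into the quadratic ones, the total is at most $(13m+7)K+(2M+2)\sum_k(9N_k+1)^2+7=s$. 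The delicate point is that the shifted pass-through of $\bm u$ is essential to fit the $\bm u$-variables within only $K$ (rather than $2K$) extra units per layer, which is what keeps layer $1$ within the prescribed width $(4m+1)K$.
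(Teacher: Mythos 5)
Your proof follows essentially the same construction as the paper's: trapezoidal bump functions $\psi_{k,j}$ (the paper's $h_{v_{k,j}}$), the soft-AND $I_k(\bm v)=\sigma\bigl(\sum_j\psi_{k,j}(v_j)-m+1\bigr)$ (the paper's $g_k$), the multiplication gadget $\phi_k$ from Lemma~\ref{lemma:approximate-multiplication-with-zero-one}, and the final clamp $T_B\circ(\sum_k\phi_k)$. One small caveat in the bookkeeping: the inequality $(9N_k+1)^2\geq 9N_k+1$ alone does not give the claimed total, because it leaves a coefficient of $2M+3$ on $\sum_k(9N_k+1)^2$ (one unit too large, and the slack in the $(13m+7)K+7$ terms cannot absorb an extra $\sum_k(9N_k+1)^2\geq 100K$); you should instead invoke $(9N_k+1)^2\geq 10(9N_k+1)$, valid since $N_k\geq 1$, after which the linear terms contribute at most $\tfrac{M+4}{10}\sum_k(9N_k+1)^2$ and the total is comfortably below $s$.
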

\begin{proof}
    Suppose that $\{\mathcal{S}_1,\ldots,\mathcal{S}_K\}$ and $\bm f_2 \in \mathcal{F}(L_2,\bm p_2)$ satisfy Definition~\ref{defn:F(L,p,s)-separable} for some $\epsilon>0$ and $\bm v_1,\ldots,\bm v_K \in \mathbb{R}^m$.
    For $a\in\mathbb{R}$, define $h_a:\mathbb{R} \to [0,1]$ by
    \begin{align*}
        h_a(x) \coloneqq \sigma\biggl(\frac{2x-2a+2\epsilon}{\epsilon}\biggr) - \sigma\biggl(\frac{2x-2a+\epsilon}{\epsilon}\biggr) - \sigma\biggl(\frac{2x-2a-\epsilon}{\epsilon}\biggr) +\sigma\biggl(\frac{2x-2a-2\epsilon}{\epsilon}\biggr).
    \end{align*}
    Then $h_a \in \mathcal{F}\bigl(1,(1,4,1),12\bigr)$, with $h_a(x) = 1$ for all $x\in\bigl[a-\frac{\epsilon}{2}, a+\frac{\epsilon}{2}\bigr]$ and $h_a(x) = 0$ for all $x\in(-\infty,a-\epsilon] \cup [a+\epsilon,\infty)$. Further, for $k\in[K]$, write $\bm v_k = (v_{k,1},\ldots,v_{k,m})^\top \in \mathbb{R}^m$ and define $g_k \in \mathcal{F}\bigl(2,(m,4m,1,1),13m+2\bigr)$ by
    \begin{align*}
        g_k(\bm x) \coloneqq \sigma\biggl(\sum_{j=1}^m h_{v_{k,j}}(x_j) - m + 1 \biggr).
    \end{align*}
    Then, for $\bm\omega \in \mathcal{S}_k$, we have $g_k\bigl(\bm f_2(\bm\omega)\bigr) = \sigma(m-m+1) = 1$; for $\bm\omega \notin \mathcal{S}_k$, we have $\sum_{j=1}^m h_{v_{k,j}}\bigl(f_{2,j}(\bm\omega)\bigr) \leq m-1$, so $g_k\bigl(\bm f_2(\bm\omega)\bigr) = 0$. By Lemma~\ref{lemma:approximate-multiplication-with-zero-one}, for $k\in[K]$, there exists $\phi_k \in \mathcal{F}\bigl(M,(2,9N_k+1,\ldots,9N_k+1,1)\bigr)$ such that $\phi_k(x,0) = 0$ and $|\phi_k(x,1) - x| \leq 12BN_k^{-M}$ for all $x\in[-B,B]$. Further, by~\ref{NN-enlarging},~\ref{NN-composition} and~\ref{NN-parallelisation}, there exists $f_3^\circ\in\mathcal{F}\bigl(M+2,\, (K+m,p_*,\ldots,p_*,1),\,  s-7\bigr)$ such that $f_3^\circ(\bm u, \bm v) = \sum_{k=1}^K \phi_k\bigl(u_k, g_k(\bm v)\bigr)$ for all $\bm u \in[-B,B]^K$ and $v\in\mathbb{R}^m$. 
    Moreover, for $\bm\omega \in \mathcal{S}_k$ and $\bm u = (u_1,\ldots,u_K)^\top \in [-B,B]^K$, we have
    \begin{align*}
        \bigl|f_3^\circ\bigl(\bm u, \bm f_2(\bm\omega)\bigr) - u_k \bigr| &= \biggl|\sum_{\ell=1}^K \Bigl\{\phi_k\Bigl(u_{\ell},g_{\ell}\bigl(\bm f_2(\bm\omega)\bigr)\Bigr) - u_{\ell}g_{\ell}\bigl(\bm f_2(\bm\omega)\bigr) \Bigr\}\biggr|\\
        &\leq \sum_{\ell=1}^K \Bigl| \phi_k \Bigl(u_{\ell},g_{\ell}\bigl(\bm f_2(\bm\omega)\bigr)\Bigr) - u_{\ell}g_{\ell}\bigl(\bm f_2(\bm\omega)\bigr) \Bigr|\\
        &=|\phi_k(u_k,1) - u_k| + \sum_{\ell\neq k} |\phi_\ell(u_\ell, 0) - 0|\leq \frac{12B}{N_k^M}. 
    \end{align*}
    Finally, define $f_3: \mathbb{R}^{K+m} \to [-B,B]$ by
    \begin{align*}
        f_3 \coloneqq \sigma(f_3^\circ + B) - \sigma(f_3^\circ - B) - B = T_B f_3^\circ.
    \end{align*}
    By~\ref{NN-enlarging} and~\ref{NN-composition}, we have $f_3 \in \mathcal{F}(M+3,\bm p,s)$ and since $u_k \in [-B,B]$, we deduce that $\bigl|f_3\bigl(\bm u, \bm f_2(\bm\omega)\bigr) - u_k \bigr| \leq \bigl|f_3^\circ\bigl(\bm u, \bm f_2(\bm\omega)\bigr) - u_k \bigr|$ for all $\bm u=(u_1,\ldots,u_K)^\top \in [-B,B]^K$, $k\in[K]$ and $\bm\omega \in \mathcal{S}_k$.
\end{proof}

\section{Proof of Theorem \ref{thm:PENN-ub}}

\begin{proof}[Proof of Theorem \ref{thm:PENN-ub}]
    For any $M_1 \in \mathbb{N}$, let 
    \[
    L_1 \coloneqq \max_{k\in[K],\, r\in[q_k]} 12q_k\lceil \beta^{(k)}_r \rceil^2 (M_1+2) \lceil\log_2(4M_1)\rceil \in \mathbb{N}
    \]
    and for $k \in [K]$, let
    \begin{gather*}
        N^{(k)}_1 \coloneqq \bigl\lceil M_1^{-1} n_k^{t^{(k)}_* / (4\bar\beta^{(k)}_* + 2t^{(k)}_*)} \bigr\rceil,\\
        p^{(k)}_{1,*} \coloneqq \max_{r\in[q_k]} \Bigl\{ 30 d_{r+1}^{(k)} \lceil\beta^{(k)}_r\rceil^2 (t^{(k)}_r)^{\lceil\beta^{(k)}_r\rceil} (N^{(k)}_1+1) \lceil\log_2(8N^{(k)}_1)\rceil \vee 2d_{r+1}^{(k)} \Bigr\},\\
        \bm{p}^{(k)}_1 \coloneqq (d,p^{(k)}_{1,*},\ldots,p^{(k)}_{1,*},1) \in \mathbb{N}^{L_1+2}.
    \end{gather*}
    By Lemma~\ref{lemma:approximate-composition-of-Holder-functions} and~\ref{NN-padding}, for $k \in [K]$ we can find $E^{(k)} \subseteq \bigl[-\xi_1\log(2dn),\,\xi_1\log(2dn)\bigr]^d$ and $f^{(k)}_1 \in \mathcal{F}(L_1,\bm{p}^{(k)}_1)$ such that $\mu_{\bm Z_0}(E^{(k)}) \leq \frac{1}{Kn}$, 
    \[
    |f^{(k)}_1(\bm z)| \leq \max_{k\in[K]} \bigl\{\gamma_{q_k}^{(k)} \vee (2\gamma_{q_k-1}^{(k)})^{\lceil\beta_{q_k}^{(k)}\rceil-1} \gamma_{q_k}^{(k)}\bigr\} \eqqcolon B
    \]
    for all $\bm z \in \mathbb{R}^d$ and 
    \begin{align}
        \bigl|f^{(k)}_1(\bm{z}) - f^{\mathcal{S}_k}(\bm{z})\bigr| & \leq  C_1(\xi_1, \bm t^{(k)}, \bm \beta^{(k)}, \bm \gamma^{(k)}) \cdot \log^{\bar{\beta}_1^{(k)}} (2dn) \cdot q_k \cdot n_k^{-\bar\beta^{(k)}_* / (2\bar\beta^{(k)}_* + t^{(k)}_*)}\nonumber\\
        & \leq C_2(\xi_1, d, \bm t^{(k)}, \bm \beta^{(k)}, \bm \gamma^{(k)}) \cdot \log^{\bar{\beta}_1^{(k)}}(n) \cdot n_k^{-\bar\beta^{(k)}_* / (2\bar\beta^{(k)}_* + t^{(k)}_*)} \label{eq:approx-error-for-f^{S_k}}
    \end{align}
    for all $\bm{z}\in \bigl[-\xi_1\log(2dn),\,\xi_1\log(2dn)\bigr]^d \setminus E^{(k)}$.  
    Furthermore, each $f^{(k)}_1$ has $V^{(k)}_1 \coloneqq (d+L_1+1)p^{(k)}_{1,*} + (L_1-1)(p_{1,*}^{(k)})^2 + 1$ parameters. Thus, by~\ref{NN-parallelisation}, the function $\bm{f}_1 \coloneqq (f^{(1)}_1, \ldots, f^{(K)}_1)^\top : \mathbb{R}^d \to [-B,B]^K$ belongs to $\mathcal{F}\bigl(L_1, \bm{p}_1, s_1 \bigr)$, where $\bm{p}_1 \coloneqq \bigl(d, \sum_{k=1}^K p^{(k)}_{1,*}, \ldots, \sum_{k=1}^K p^{(k)}_{1,*}, K\bigr) \in \mathbb{N}^{L_1+2}$ and $s_1 \coloneqq \sum_{k=1}^K V^{(k)}_1$. 

    By assumption, there exists $\bm f_2 \in \mathcal{F}(L_2,\bm p_2,s_2)$ such that $\{\mathcal{S}_1,\ldots,\mathcal{S}_K\}$ is separated by~$\bm f_2$.  For any $M_3 \in \mathbb{N}$, let $m\coloneqq p_{2,L_2+1}$, and let
    \begin{gather*}
        L_3 \coloneqq \biggl\lceil 2M_3 \max_{k\in[K]} \frac{\bar{\beta}^{(k)}_*}{t^{(k)}_*} \biggr\rceil  +3,\quad
        N_{3,k} \coloneqq \Bigl\lceil n_k^{\bar\beta^{(k)}_* / \{(L_3-3)(2\bar\beta^{(k)}_* + t^{(k)}_*)\}} \Bigr\rceil \text{ for }k\in[K],\\
        p_{3,*} \coloneqq (4m+1)K \vee \sum_{k=1}^K (9N_{3,k}+1),\quad
        \bm{p}_3 \coloneqq \bigl(K+m,\, p_{3,*},\,\ldots,\,p_{3,*}, \,1\bigr) \in \mathbb{N}^{L_3+2} \text{ and}\\
        s_3 \coloneqq (13m+7)K + (2M_3+2)\sum_{k=1}^K (9N_{3,k}+1)^2 + 7.
    \end{gather*}
    By Lemma~\ref{lemma:use-embedding-to-extract-coordinates}, there exists $f_3 \in \mathcal{F}(L_3,\bm p_3, s)$ such that $|f_3(\bm u, \bm v)| \leq B$ for all $(\bm u,\bm v) \in \mathbb{R}^K \times \mathbb{R}^m$, and that for all $\bm z \in \mathbb{R}^d$, $k\in[K]$ and $\bm\omega \in \mathcal{S}_k$, we have
    \begin{align}
        \bigl| f_3\bigl(\bm f_1(\bm z), \bm f_2(\bm\omega)\bigr) - f_1^{(k)}(\bm z) \bigr| \leq 12B \cdot n_k^{-\bar\beta^{(k)}_* / (2\bar\beta^{(k)}_* + t^{(k)}_*)}. \label{eq:extract-coordinate-error}
    \end{align}
    Further note that for $k\in[K]$,
    \begin{align*}
        N_{3,k} \leq 2n_k^{t_*^{(k)} / (4\bar{\beta}_*^{((k)} + 2t_*^{(k)})}. 
    \end{align*}
    Thus,
    \begin{align*}
        s_3 \leq C_3(m,M_3) \sum_{k=1}^K n_k^{t_*^{(k)} / (2\bar{\beta}_*^{((k)} + t_*^{(k)})}.
    \end{align*}
    Now, define $\bar{f} \in \mathcal{F}$ by
    \begin{align*}
        \bar{f}(\bm{z},\bm{\omega}) \coloneqq f_3 \bigl( \bm{f}_1(\bm{z}),\, \bm f_2(\bm{\omega}) \bigr),
    \end{align*}
    so that $|\bar{f}(\bm{z},\bm{\omega})| \leq B$ for all $\bm z \in \mathbb{R}^d$ and $\bm\omega \in \mathcal{S}$.
    Let $D\coloneqq \bigl[-\xi_1\log(2dn),\,\xi_1\log(2dn)\bigr]^d \setminus \bigl(\bigcup_{k\in[K]}E^{(k)}\bigr)$, so that by Lemma~\ref{lemma:approximate-composition-of-Holder-functions} and~\eqref{eq:truncate-Z}, we have
    \begin{align}
        \mathbb{P}(\bm Z_0 \notin D) \leq \mathbb{P}\bigl(\bm Z_0 \notin \bigl[-\xi_1\log(2dn),\,\xi_1\log(2dn)\bigr]^d \bigr) + \sum_{k=1}^K \mathbb{P}(\bm Z_0 \in E^{(k)}) \leq \frac{2}{n}. \label{eq:D-complement-prob}
    \end{align}
    By~\eqref{eq:approx-error-for-f^{S_k}} and~\eqref{eq:extract-coordinate-error}, for all $\bm z \in D$, $k\in[K]$ and $\bm\omega \in \mathcal{S}_k$, we have 
    \begin{align}
        \bigl|\bar{f}(\bm{z},\bm{\omega}) &- f^{\star}(\bm{z},\bm{\omega})\bigr| \leq \bigl|f_3 \bigl( \bm{f}_1(\bm{z}),\, \bm f_2(\bm{\omega}) \bigr) - f_1^{(k)}(\bm{z}) \bigr| +  \bigl| f_1^{(k)}(\bm{z}) - f^{\mathcal{S}_k}(\bm{z})\bigr|\nonumber\\
        &\leq \bigl\{C_2(\xi_1, d, \bm t^{(k)}, \bm \beta^{(k)}, \bm \gamma^{(k)}) + 12B \bigr\} \cdot \log^{\bar{\beta}_1^{(k)}}(n) \cdot n_k^{-\bar\beta^{(k)}_* / (2\bar\beta^{(k)}_* + t^{(k)}_*)}. \label{eq:f-bar-property-1}
    \end{align}
    Hence there exists $C_4>0$, depending only on $\xi_1,d,m$ and $(\bm t^{(k)}, \bm \beta^{(k)}, \bm \gamma^{(k)})_{k=1}^K$ such that
    \begin{align}
        \inf_{f\in\mathcal{F}} \mathbb{E}\Bigl\{ \bigl(f(\bm Z_0,\bm \Omega_0) &- f^{\star}(\bm Z_0,\bm \Omega_0)\bigr)^2 \Bigr\}\nonumber\\
        &\leq \mathbb{E}\Bigl\{ \bigl(\bar{f}(\bm Z_0,\bm \Omega_0) - f^{\star}(\bm Z_0,\bm \Omega_0)\bigr)^2 \mathbbm{1}_{\{\bm Z_0 \in D\}} \Bigr\} + \frac{8B^2}{n}\nonumber\\
        &= \sum_{k=1}^K \pi_k \mathbb{E}\Bigl\{ \bigl(\bar{f}(\bm Z_0,\bm \Omega_0) - f^{\star}(\bm Z_0,\bm \Omega_0)\bigr)^2 \mathbbm{1}_{\{\bm Z_0 \in D\}} \,\bigm|\, \bm\Omega_0 \in \mathcal{S}_k \Bigr\}  + \frac{8B^2}{n}\nonumber\\
        &\leq \sum_{k=1}^K \pi_k \sup_{\bm z\in D, \bm\omega\in\mathcal{S}_k} \bigl\{\bar{f}(\bm{z},\bm{\omega}) - f^{\star}(\bm{z},\bm{\omega})\bigr\}^2  + \frac{8B^2}{n}\nonumber\\
        &\leq C_4 (\log n)^{2\max_{k\in[K]}\bar{\beta}_1^{(k)}} \cdot \sum_{k=1}^K \pi_k n_k^{-2\bar\beta^{(k)}_* / (2\bar\beta^{(k)}_* + t^{(k)}_*)}, \label{eq:approx-error}
    \end{align}
    where the first inequality follows from~\eqref{eq:D-complement-prob} and $|\bar{f}(\bm z,\bm\omega) - f^{\star}(\bm z,\bm\omega)| \leq 2B$ for all $\bm z \in \mathbb{R}^d$ and $\bm\omega\in\mathcal{S}$, and the final inequality follows from~\eqref{eq:f-bar-property-1}.
    Let $L_0 \coloneqq L_3+(L_1 \vee L_2)$, $p_0 \coloneqq \bigl(2d, 2\|\bm{p}_1\|_{\infty} + 2\|\bm{p}_2\|_{\infty}, \ldots, 2\|\bm{p}_1\|_{\infty} + 2\|\bm{p}_2\|_{\infty}, \bm{p}_3\bigr) \in \mathbb{N}^{L_0+2}$ and $s_0 \coloneqq 2(s_1+s_2+s_3) + 2(K \vee m)|L_1 - L_2|$. By~\ref{NN-enlarging}--\ref{NN-parallelisation}, we have $\mathcal{F}\subseteq \mathcal{F}(L_0, \bm{p}_0, s_0)$. Moreover, \begin{align}
        \frac{s_0L_0\log(es_0) + s_0\log(ed)}{n}
        & \leq C_5 \log^3 n \cdot \frac{\sum_{k=1}^K n_k^{t_*^{(k)} / (2\bar{\beta}_*^{(k)} + t_*^{(k)})} + s_2\log s_2}{n}\nonumber\\
        &= C_5\log^3 n \cdot  \biggl\{\sum_{k=1}^K \pi_k n_k^{-2\bar\beta^{(k)}_*/(2\bar\beta^{(k)}_* + t^{(k)}_*)} + \frac{s_2 \log s_2}{n}\biggr\}, \label{eq:entropy-bound}
    \end{align}
    where $C_5>0$ depends only on $m,L_2,M_1,M_3,d$ and $(\bm d^{(k)}, \bm t^{(k)}, \bm\beta^{(k)}, \bm\gamma^{(k)})_{k=1}^K$. The final result then follows by applying Proposition~\ref{prop:oracle-inequality} in conjunction with~\eqref{eq:approx-error} and~\eqref{eq:entropy-bound}.  
\end{proof}

\section{Proof of Theorem \ref{thm:minimax-lb}}
\begin{proof}[Proof of Theorem \ref{thm:minimax-lb}]
    Without loss of generality, we may assume that $j_* = d$. We also define $\beta_*^{(k)} \coloneqq \beta^{(k)}_{r_*^{(k)}}$. Let $\bm X_0$ be uniformly distributed on $[0,1]^d$, so that $\|X_{0,j}\|_{\psi_1} \leq 0.8 \leq \xi_1$ for $j\in[d]$. For $k\in[K-1]$, let $A_k \coloneqq \bigl[\sum_{\ell=1}^{k-1} \pi_\ell, \sum_{\ell=1}^{k} \pi_\ell\bigr)$ and let $A_K \coloneqq \bigl[\sum_{\ell=1}^{K-1} \pi_\ell, 1\bigr]$.
    Given $f^{(1)}, \ldots, f^{(K)} : [0,1]^d \to [0,1/2]$, we define $g_{(f^{(1)},\ldots,f^{(K)})} : [0,1]^d \to [0,1/2]$ by $g_{(f^{(1)},\ldots,f^{(K)})}(\bm x) \coloneqq \sum_{k=1}^K \mathbbm{1}_{\{x_d \in A_k\}} \cdot f^{(k)}(\bm x)$ where $\bm x = (x_1,\ldots,x_d)^\top \in [0,1]^d$. 
    Further, let $P_{(f^{(1)}, \ldots, f^{(K)})}$ be the distribution of $(\bm Z_0,\bm\Omega_0,Y_0)$, where $\bm\Omega_0 \,|\, \bm X_0 \sim \mathrm{Unif}(\mathcal{S}_k)$ if $X_{0,d} \in A_k$, $\bm Z_0 = \mathsf{Imp}(\tilde{\bm X}_0)$ (where $\tilde{\bm{X}}_0$ is the partially observed version of $\bm{X}_0$) and $Y_0 = g_{(f^{(1)},\ldots,f^{(K)})}(\bm X_0) + \varepsilon_0$, where $\varepsilon_0 \sim N(0,\xi_2^2/25)$ is independent of $(\bm X_0, \bm\Omega_0)$.  Thus 
    \[
    \|Y_0\|_{\psi_2} \leq \|g_{(f^{(1)},\ldots,f^{(K)})}(\bm X_0)\|_{\psi_2} + \|\varepsilon_0\|_{\psi_2} \leq \frac{1}{2\sqrt{\log 2}} + \sqrt{\frac{8}{75}}\xi_2 \leq \xi_2.
    \]
    Moreover, observe that when $\bm{\omega} \in \mathcal{S}_k$, we must have $X_{0,d} \in A_k$ and hence $g_{(f^{(1)},\ldots,f^{(K)})}(\bm X_0) = f^{(k)}(\bm{X}_0)$.  Hence, if for each $k\in[K]$, the function $f^{(k)}$ depends only on the coordinates in $\tilde{\mathcal{J}}^{(k)} \coloneqq\mathcal{J}^{(k)}\setminus\{d\}$, then when $\bm{\omega} \in \mathcal{S}_k$,
    \begin{align*}
    f^{\star}(\bm z, \bm\omega) = \mathbb{E}(\bm{Y}_0 \, | \, \bm{Z}_0 = \bm{z},\bm{\Omega}_0 = \bm{\omega}) &= \mathbb{E}\bigl(g_{(f^{(1)},\ldots,f^{(K)})}(\bm X_0) \, | \, \bm{Z}_0 = \bm{z},\bm{\Omega}_0 = \bm{\omega}\bigr) \\
    &= \mathbb{E}\bigl(f^{(k)}(\bm X_0) \, | \, \bm{Z}_0 = \bm{z},\bm{\Omega}_0 = \bm{\omega}\bigr) = f^{(k)}(\bm z).
    \end{align*}
    In general then, $f^{\star}(\bm z, \bm\omega) = \sum_{k=1}^K f^{(k)}(\bm z)\mathbbm{1}_{\{\bm{\omega} \in \mathcal{S}_k\}}$.  For $k \in [K]$, let 
    \begin{align*}
    \mathcal{F}^{(k)} \coloneqq \Bigl\{ f \in \mathcal{H}_{\mathrm{comp}}(q_k,\bm d^{(k)}, &\bm t^{(k)}, \bm \beta^{(k)}, \bm\gamma^{(k)}): f \text{ takes values in } [0,1/2] \text{ and}\\
    &\text{depends only on the coordinates in $\tilde{\mathcal{J}}^{(k)}$} \Bigr\}.
    \end{align*}
    We have established that if $f^{(k)} \in \mathcal{F}^{(k)}$ for all $k\in[K]$, then $P_{(f^{(1)}, \ldots, f^{(K)})} \in \mathcal{P}$.
    Let $\mu_{\bm{Z}_0,\bm{\Omega}_0}$ be the joint distribution of $(\bm{Z}_0,\bm{\Omega}_0)$ when $(\bm{Z}_0,\bm{\Omega}_0,Y_0) \sim P_{(f^{(1)}, \ldots, f^{(K)})} \in \mathcal{P}$, and for $k\in[K]$ and $f^{(k)} \in \mathcal{F}^{(k)}$, let $P_{f^{(k)}} \coloneqq P_{(0,\ldots,0,f^{(k)},0,\ldots,0)}$.  Then
    \begin{align}
        &\inf_{\hat{f}\in \hat{\mathcal{F}}} \sup_{P \in \mathcal{P}} \;\mathbb{E}_{P^{\otimes n}}\bigl\{R(\hat{f}) - R(f^{\star})\bigr\} 
        \geq \inf_{\hat{f}\in \hat{\mathcal{F}}} \sup_{\substack{f^{(1)},\ldots,f^{(K)} : \\ f^{(\ell)} \in \mathcal{F}^{(\ell)} \,\forall \ell\in[K]}} \mathbb{E}_{P_{(f^{(1)}, \ldots, f^{(K)})}^{\otimes n}}\bigl\{\|\hat{f} - f^{\star}\|_{L_2(\mu_{\bm Z_0,\bm\Omega_0})}^2\bigr\} \nonumber\\
        &\hspace{0.4cm}= \inf_{\hat{f} \in \hat{\mathcal{F}}} \sup_{\substack{f^{(1)},\ldots,f^{(K)} : \\ f^{(\ell)} \in \mathcal{F}^{(\ell)} \,\forall \ell\in[K]}} \sum_{k=1}^K \mathbb{E}_{P_{(f^{(1)}, \ldots, f^{(K)})}^{\otimes n}} \int_{[0,1]^d \times \mathcal{S}} \bigl\{\hat{f}(\bm z,\bm\omega) - f^{(k)}(\bm z)\bigr\}^2 \mathbbm{1}_{\{\bm\omega\in\mathcal{S}_k\}} \, \mathrm{d}\mu_{\bm Z_0,\bm\Omega_0}(\bm z, \bm \omega) \nonumber\\
        &\hspace{0.4cm}\geq \sum_{k=1}^K \inf_{\hat{f} \in \hat{\mathcal{F}}} \sup_{\substack{f^{(1)},\ldots,f^{(K)} : \\ f^{(\ell)} \in \mathcal{F}^{(\ell)} \,\forall \ell\in[K]}}  \mathbb{E}_{P_{(f^{(1)}, \ldots, f^{(K)})}^{\otimes n}} \int_{[0,1]^d \times \mathcal{S}} \bigl\{\hat{f}(\bm z,\bm\omega) - f^{(k)}(\bm z)\bigr\}^2 \mathbbm{1}_{\{\bm\omega\in\mathcal{S}_k\}} \, \mathrm{d}\mu_{\bm Z_0,\bm\Omega_0}(\bm z, \bm \omega)\nonumber\\
        &\hspace{0.4cm}\geq \sum_{k=1}^K \inf_{\hat{f} \in \hat{\mathcal{F}}} \sup_{f^{(k)} \in \mathcal{F}^{(k)}}  \mathbb{E}_{P_{f^{(k)}}^{\otimes n}} \int_{[0,1]^d \times \mathcal{S}} \bigl\{\hat{f}(\bm z,\bm\omega) - f^{(k)}(\bm z)\bigr\}^2 \mathbbm{1}_{\{\bm\omega\in\mathcal{S}_k\}} \, \mathrm{d}\mu_{\bm Z_0,\bm\Omega_0}(\bm z, \bm \omega). \label{eq:minimax-lower-bound-k-simplified}
    \end{align}

    Now fix $k\in[K]$ and take $f^{(\ell)} = 0$ for all $\ell \neq k$. 
    Since any $f^{(k)} \in \mathcal{F}^{(k)}$ depends only on the coordinates in $\tilde{\mathcal{J}}^{(k)}$, there exists $\tilde{f}^{(k)}: [0,1]^{|\tilde{\mathcal{J}}^{(k)}|} \to [0, 1/2]$ such that $f^{(k)}(\bm x) = \tilde{f}^{(k)}(\bm x_{\tilde{\mathcal{J}}^{(k)}})$ for all $\bm x = (x_1,\ldots,x_d)^\top \in[0,1]^d$, where $\bm x_{\tilde{\mathcal{J}}^{(k)}} \coloneqq (x_j)_{j \in \tilde{\mathcal{J}}^{(k)}} \in [0,1]^{|\tilde{\mathcal{J}}^{(k)}|}$. Thus, writing $\|f\|_{L_2} \coloneqq \bigl(\int_{\bm x \in [0,1]^m} f(\bm x)^2 \,\mathrm{d}\bm x\bigr)^{1/2}$ for square-integrable $f:[0,1]^m \to \mathbb{R}$ and $m\in\mathbb{N}$, we deduce that for $f_1^{(k)}, f_2^{(k)} \in \mathcal{F}^{(k)}$,
    \begin{align}
        \int_{[0,1]^d \times \mathcal{S}} \bigl\{f^{(k)}_1(\bm z) &- f^{(k)}_2(\bm z) \bigr\}^2\mathbbm{1}_{\{\bm\omega\in\mathcal{S}_k\}} \, \mathrm{d}\mu_{\bm Z_0,\bm\Omega_0}(\bm z, \bm \omega) = \mathbb{E} \Bigl[ \bigl\{f^{(k)}_1(\bm Z_0) - f^{(k)}_2(\bm Z_0)\bigr\}^2 \mathbbm{1}_{\{\bm\Omega_0 \in \mathcal{S}_k\}} \Bigr]\nonumber\\
        &= \mathbb{E} \Bigl[ \bigl\{\tilde{f}^{(k)}_1(\bm X_{0,\tilde{\mathcal{J}}^{(k)}}) - \tilde{f}^{(k)}_2(\bm X_{0,\tilde{\mathcal{J}}^{(k)}})\bigr\}^2 \mathbbm{1}_{\{X_{0,d} \in A_k\}} \Bigr]\nonumber\\
        &= \pi_k \mathbb{E} \Bigl[ \bigl\{\tilde{f}^{(k)}_1(\bm X_{0,\tilde{\mathcal{J}}^{(k)}}) - \tilde{f}^{(k)}_2(\bm X_{0,\tilde{\mathcal{J}}^{(k)}})\bigr\}^2 \,\Big|\, X_{0,d} \in A_k \Bigr] \nonumber\\
        &= \pi_k \bigl\| f^{(k)}_1 - f^{(k)}_2 \bigr\|_{L_2}^2. \label{eq:mu-Z-Omega-norm-bound}
    \end{align}
    Next, we use a construction similar to \citet[Theorem~3]{schmidt-hieber2020nonparametric} to prove the lower bound.  Define $R \coloneqq \lfloor \lambda n_k^{1/(2\bar\beta_*^{(k)} + t_*^{(k)})} \rfloor \in \mathbb{N}$ where $\lambda \geq 16$ will be chosen later, let $\rho\coloneqq 1/R$ and let $\mathcal{U} \coloneqq \{0,\rho,2\rho,\ldots,(R-1)\rho\}^{t_*^{(k)}}$.  Define $h: \mathbb{R} \rightarrow [0,1]$ by $h(x) \coloneqq c_1e^{-1/\{x(1-x)\}}\mathbbm{1}_{\{x\in(0,1)\}}$, where $c_1 > 0$ depends only on $\beta^{(k)}_*$ and is chosen such that $h \in \mathcal{H}_1^{\beta^{(k)}_*}(\mathbb{R},1)$.  Further, without loss of generality, suppose that $\tilde{\mathcal{J}}^{(k)} = \{1,\ldots,|\tilde{\mathcal{J}}^{(k)}|\}$. For $\bm u = (u_1,\ldots,u_{t_*^{(k)}})^\top \in \mathcal{U}$, define $\psi_{\bm u} : [0,1]^{t_*^{(k)}} \to [0,1]$ by
    \begin{align*}
        \psi_{\bm u}(x_1,\ldots,x_{t_*^{(k)}}) \coloneqq \rho^{\beta_*^{(k)}}\prod_{j=1}^{t_*^{(k)}} h\biggl(\frac{x_j-u_j}{\rho}\biggr).
    \end{align*}
    For $\bm\alpha \in \mathbb{N}_0^d$ with $\|\bm\alpha\|_1 \leq \lceil \beta_*^{(k)} \rceil - 1$, we have $\|\partial^{\bm\alpha} \psi_{\bm u}\|_{\infty} \leq 1$ since $h \in \mathcal{H}_1^{\beta^{(k)}_*}(\mathbb{R},1)$.  Moreover, for $\bm\alpha \in \mathbb{N}_0^d$ with $\|\bm\alpha\|_1 = \lceil \beta_*^{(k)} \rceil - 1$ and for all $\bm x,\bm y \in [0,1]^{t_*^{(k)}}$, we have
    \begin{align*}
        &\frac{|\partial^{\bm\alpha} \psi_{\bm u}(\bm x) - \partial^{\bm\alpha} \psi_{\bm u}(\bm y)|}{\|\bm x - \bm y\|_2^{\beta_*^{(k)}+1-\lceil\beta_*^{(k)}\rceil}} = \rho^{\beta_*^{(k)}+1-\lceil\beta_*^{(k)}\rceil} \frac{\Bigl| \prod_{j=1}^{t_*^{(k)}} h^{(\alpha_j)}\bigl(\frac{x_j-u_j}{\rho}\bigr) - \prod_{j=1}^{t_*^{(k)}} h^{(\alpha_j)}\bigl(\frac{y_j-u_j}{\rho}\bigr) \Bigr|}{\|\bm x - \bm y\|_2^{\beta_*^{(k)}+1-\lceil\beta_*^{(k)}\rceil}}\\
        &\hspace{3cm} \leq \frac{\rho^{\beta_*^{(k)}+1-\lceil\beta_*^{(k)}\rceil}}{\|\bm x - \bm y\|_2^{\beta_*^{(k)}+1-\lceil\beta_*^{(k)}\rceil}} \sum_{\ell=1}^{t_*^{(k)}}\Biggl| \prod_{j=\ell}^{t_*^{(k)}} h^{(\alpha_j)}\Bigl(\frac{x_j-u_j}{\rho}\Bigr)\prod_{j=1}^{\ell-1} h^{(\alpha_j)}\Bigl(\frac{y_j-u_j}{\rho}\Bigr)\\
        &\hspace{8cm} - \prod_{j=\ell+1}^{t_*^{(k)}} h^{(\alpha_j)}\Bigl(\frac{x_j-u_j}{\rho}\Bigr)\prod_{j=1}^{\ell} h^{(\alpha_j)}\Bigl(\frac{y_j-u_j}{\rho}\Bigr) \Biggr|\\
        &\hspace{3cm} \leq \sum_{\ell=1}^{t_*^{(k)}} \frac{\bigl| h^{(\alpha_\ell)}\bigl(\frac{x_\ell-u_\ell}{\rho}\bigr) - h^{(\alpha_\ell)}\bigl(\frac{y_\ell-u_\ell}{\rho}\bigr) \bigr|}{|(x_\ell - y_\ell)/\rho|^{\beta_*^{(k)}+1-\lceil\beta_*^{(k)}\rceil}} \leq t_*^{(k)},
    \end{align*}
    where the first inequality follows from the triangle inequality, and the second and third inequalities follow since $h \in \mathcal{H}_1^{\beta^{(k)}_*}(\mathbb{R},1)$. Thus, we have shown that $\psi_{\bm u} \in \mathcal{H}_{t_*^{(k)}}^{\beta_*^{(k)}}\bigl([0,1]^{t_*^{(k)}}, t_*^{(k)}\bigr)$. For $\bm v = (v_{\bm u})_{\bm u \in \mathcal{U}} \in \{0,1\}^{|\mathcal{U}|}$, define $\phi_{\bm v} : [0,1]^{t_*^{(k)}} \to [0,1]$ by
    \begin{align*}
        \phi_{\bm v} \coloneqq \sum_{\bm u \in \mathcal{U}} v_{\bm u} \psi_{\bm u}.
    \end{align*}
    Since $\psi_{\bm u}$ and $\psi_{\bm u'}$ have disjoint support for $\bm u \neq \bm u'$, we have $\phi_{\bm v} \in \mathcal{H}_{t_*^{(k)}}^{\beta_*^{(k)}}\bigl([0,1]^{t_*^{(k)}}, t_*^{(k)}\bigr)$.
    For $r< r_*^{(k)}$, if $d_{r+1}^{(k)} \leq d_{r}^{(k)}$, then we define $\bm g_r^{(k)} : [0,1]^{d_{r}^{(k)}} \to [0,1]^{d_{r+1}^{(k)}}$ by $\bm g_r^{(k)}(\bm x) \coloneqq (x_1,\ldots,x_{d_{r+1}^{(k)}})^\top$; otherwise, we define $\bm g_r^{(k)} : [0,1]^{d_{r}^{(k)}} \to [0,1]^{d_{r+1}^{(k)}}$ by $\bm g_r^{(k)}(\bm x) \coloneqq (\bm x, 0,\ldots,0)^\top$.  For $r > r_*^{(k)}$, define $\bm g_r^{(k)} : [0,1]^{d_{r}^{(k)}} \to [0,1]^{d_{r+1}^{(k)}}$ by $\bm g_r^{(k)}(\bm x) \coloneqq (x_1^{\beta_r^{(k)} \wedge 1}, 0,\ldots,0)^\top$. For $\bm v \in \{0,1\}^{|\mathcal{U}|}$, define $\bm g_{r_*^{(k)},\bm v}^{(k)}: [0,1]^{d_{r_*^{(k)}}^{(k)}} \to [0,1]^{d_{r_*^{(k)}+1}^{(k)}}$ by $\bm g_{r_*^{(k)},\bm v}^{(k)}(\bm x) \coloneqq \bigl(\phi_{\bm v}(x_1,\ldots,x_{t_*^{(k)}}), 0, \ldots, 0\bigr)^\top$.
    Then each component of $\bm g_r^{(k)}$ belongs to $\mathcal{H}_{t^{(k)}_r}^{\beta^{(k)}_r} ([0,1]^{d_{r}^{(k)}}, \gamma^{(k)}_r)$ for $r\in[q_k] \setminus \{r_*^{(k)}\}$, and each component of $\bm g_{r_*^{(k)},\bm{v}}^{(k)}$ belongs to $\mathcal{H}_{t^{(k)}_{*}}^{\beta^{(k)}_*} ([0,1]^{d_{r_*^{(k)}}^{(k)}}, \gamma^{(k)}_*)$ for $\bm{v} \in \{0,1\}^{|\mathcal{U}|}$. Let $B \coloneqq \prod_{r=r_*^{(k)}+1}^{q_k} (\beta_r^{(k)} \wedge 1) = \bar{\beta}_*^{(k)} / \beta_*^{(k)}$. For $\bm v = (v_{\bm u})_{\bm u \in \mathcal{U}} \in \{0,1\}^{|\mathcal{U}|}$, define $f^{(k)}_{\bm v} : [0,1]^d \to [0,1/2]$ by
    \begin{align*}
        f^{(k)}_{\bm v}(\bm x) &\coloneqq \frac{1}{2}\bm g_{q_k}^{(k)} \circ \cdots \circ \bm g_{r_*^{(k)}+1}^{(k)} \circ \bm g_{r_*^{(k)},\bm v}^{(k)} \circ \bm g_{r_*^{(k)}-1}^{(k)} \circ \cdots \circ \bm g_{1}^{(k)}(\bm x)\\
        &\phantom{:}= \frac{1}{2}\phi_{\bm v}(\bm x)^B = \frac{1}{2}\sum_{\bm u \in \mathcal{U}} v_{\bm u} \psi_{\bm u}(x_1,\ldots,x_{t_*^{(k)}})^B,
    \end{align*}
    which satisfies $f^{(k)}_{\bm v} \in \mathcal{F}^{(k)}$ by construction.
    For $\bm u \in \mathcal{U}$, we have $\|\psi_{\bm u}^B\|_{L_2}^2 = \rho^{2\bar{\beta}_*^{(k)} + t_*^{(k)}} \|h^B\|_{L_2}^{2t_*^{(k)}}$. Therefore, for $\bm v, \bm v' \in \{0,1\}^{|\mathcal{U}|}$,
    \begin{align}
        \|f^{(k)}_{\bm v} - f^{(k)}_{\bm v'}\|_{L_2}^2 = \frac{1}{4}\|\bm v - \bm v'\|_1 \cdot \rho^{2\bar{\beta}_*^{(k)} + t_*^{(k)}} \|h^B\|_{L_2}^{2t_*^{(k)}}. \label{eq:f_w-difference-upper-bound}
    \end{align}
    By the Gilbert--Varshamov lemma \citet[][Exercise~8.9]{samworth2025statistics}, there exists $\mathcal{V} \subseteq \{0,1\}^{|\mathcal{U}|}$ such that $|\mathcal{V}| \geq e^{|\mathcal{U}|/8}$ and $\|\bm v - \bm v'\|_1 > |\mathcal{U}|/4$ for all $\bm v, \bm v' \in \mathcal{V}$ with $\bm v \neq \bm v'$. Thus, for $\bm v, \bm v' \in \mathcal{V}$ with $\bm v \neq \bm v'$, we have by~\eqref{eq:mu-Z-Omega-norm-bound} and~\eqref{eq:f_w-difference-upper-bound} that
    \begin{align}
        \int_{[0,1]^d \times \mathcal{S}}\bigl\{f^{(k)}_{\bm v}(\bm z) &- f^{(k)}_{\bm v'}(\bm z) \bigr\}^2 \mathbbm{1}_{\{\bm\omega\in\mathcal{S}_k\}} \, \mathrm{d}\mu_{\bm Z_0,\bm\Omega_0}(\bm z, \bm \omega) = \pi_k \|f^{(k)}_{\bm v} - f^{(k)}_{\bm v'}\|_{L_2}^2\nonumber\\
        &> \frac{\pi_k}{4}  \cdot \frac{\rho^{-t_*^{(k)}}}{4} \cdot \rho^{2\bar{\beta}_*^{(k)} + t_*^{(k)}} \|h^B\|_{L_2}^{2t_*^{(k)}} \geq \frac{\|h^B\|_{L_2}^{2t_*^{(k)}}}{16 \lambda^{2\bar{\beta}_*^{(k)}}} \cdot \pi_k n_k^{-2\bar{\beta}_*^{(k)} / (2\bar{\beta}_*^{(k)} + t_*^{(k)})}. \label{eq:separation-lower-bound}
    \end{align}
    Moreover, with $\lambda \coloneqq 2\Bigl(\frac{100\|h^B\|_{L_2}^{2t_*^{(k)}}}{\xi_2^2}\Bigr)^{1/(2\bar{\beta}_*^{(k)} + t_*^{(k)})} \vee 16$,
    \begin{align}
        \mathrm{KL}\Bigl(P^{\otimes n}_{f^{(k)}_{\bm v}}, P^{\otimes n}_{f^{(k)}_{\bm v'}}\Bigr) &= n\mathrm{KL}\Bigl(P_{f^{(k)}_{\bm v}}, P_{f^{(k)}_{\bm v'}}\Bigr) \nonumber \\
        &= 
        \frac{25n}{2\xi_2^2} \cdot \int_{[0,1]^d \times \mathcal{S}}\bigl\{f^{(k)}_{\bm v}(\bm z) - f^{(k)}_{\bm v'}(\bm z) \bigr\}^2 \mathbbm{1}_{\{\bm\omega\in\mathcal{S}_k\}} \, \mathrm{d}\mu_{\bm Z_0,\bm\Omega_0}(\bm z, \bm \omega) \nonumber\\
        &= \frac{25n\pi_k}{2\xi_2^2} \|f^{(k)}_{\bm v} - f^{(k)}_{\bm v'}\|_{L_2}^2 \leq \frac{25\|h^B\|_{L_2}^{2t_*^{(k)}}}{8\xi_2^2} \cdot n_k \rho^{2\bar{\beta}_*^{(k)}}  \nonumber\\
        &\leq \frac{1}{32\rho^{t_*^{(k)}}}  \leq \frac{\log(|\mathcal{V}|)}{4}. \label{eq:kl-upper-bound}
    \end{align}
    By applying Fano's lemma \citet[Corollary~8.12]{samworth2025statistics} in conjunction with~\eqref{eq:separation-lower-bound} and~\eqref{eq:kl-upper-bound}, we conclude that there exists $c^{(k)} > 0$, depending only on $(\xi_2, \bar{\beta}_*^{(k)}, t_*^{(k)})$, such that
    \begin{align}
        &\inf_{\hat{f} \in \hat{\mathcal{F}}} \sup_{f^{(k)} \in \mathcal{F}^{(k)}}  \mathbb{E}_{P_{f^{(k)}}^{\otimes n}}\int_{[0,1]^d \times \mathcal{S}} \bigl\{\hat{f}(\bm z,\bm\omega) - f^{(k)}(\bm z)\bigr\}^2 \mathbbm{1}_{\{\bm\omega\in\mathcal{S}_k\}} \, \mathrm{d}\mu_{\bm Z_0,\bm\Omega_0}(\bm z, \bm \omega) \nonumber\\
        &\hspace{9cm}\geq c^{(k)} \pi_k n_k^{-2\bar{\beta}_*^{(k)} / (2\bar{\beta}_*^{(k)} + t_*^{(k)})}. \label{eq:minimax-lb-kth-term}
    \end{align}
    The final result follows from~\eqref{eq:minimax-lower-bound-k-simplified} and~\eqref{eq:minimax-lb-kth-term}.
\end{proof}

\section{Auxiliary lemmas}
\begin{lemma}\label{lemma:LambertW}
    For $c\geq 4$ and $d\geq 1$, if $m \geq 0$ satisfies $2^m \leq (cm)^d$, then $m\leq 2d\log_2(cd)$.
\end{lemma}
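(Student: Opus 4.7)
Taking $\log_2$ of both sides of the hypothesis $2^m \leq (cm)^d$, the conclusion becomes the implication
\[
    m \leq d\log_2 c + d\log_2 m \quad \Longrightarrow \quad m \leq 2d\log_2(cd).
\]
I will prove the contrapositive: setting $m_0 \coloneqq 2d\log_2(cd)$, I will show that if $m > m_0$ then $m > d\log_2(cm)$, which contradicts the hypothesis.

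The first step is to verify the anchor inequality $m_0 \geq d\log_2(cm_0)$. Substituting the definition of $m_0$, this reduces to $\log_2(cd) \geq 1 + \log_2\log_2(cd)$, which on setting $y \coloneqq \log_2(cd) \geq \log_2 4 = 2$ becomes $y - 1 - \log_2 y \geq 0$. The function $y \mapsto y - 1 - \log_2 y$ vanishes at $y = 2$ and has derivative $1 - (y\ln 2)^{-1}$, which is positive throughout $[2,\infty)$ since $2 > 1/\ln 2$, so the inequality holds for all $y \geq 2$, i.e.\ for all admissible $c,d$.

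The second step is to upgrade this anchor to a strict inequality for $m > m_0$. Consider $g(m) \coloneqq m - d\log_2(cm)$ on $(0,\infty)$. Its derivative $g'(m) = 1 - d/(m\ln 2)$ is strictly positive whenever $m > d/\ln 2$, and since $m_0 \geq 4d > d/\ln 2$, the function $g$ is strictly increasing on $[m_0,\infty)$. Combined with $g(m_0) \geq 0$ from the previous step, this gives $g(m) > 0$ for every $m > m_0$, i.e.\ $2^m > (cm)^d$, contradicting the hypothesis. Hence $m \leq m_0 = 2d\log_2(cd)$, as required.

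The only mildly delicate point is checking the boundary case $cd = 4$ (corresponding to $y = 2$) in the anchor inequality, where $g(m_0) = 0$ exactly; the strict monotonicity of $g$ beyond $m_0$ is what rules out $m > m_0$ even in this tight regime.
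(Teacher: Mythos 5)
Your proof is correct and takes essentially the same approach as the paper: both rely on showing that the difference function $g(m) = m - d\log_2(cm)$ is nonnegative at $m_0 = 2d\log_2(cd)$ and increasing beyond $d/\ln 2$, hence positive for $m > m_0$. The only cosmetic difference is that the paper first proves the $d=1$ case and then reduces to it via the substitution $y = m/d$, whereas you carry the parameter $d$ throughout; the underlying calculus is identical.
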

\begin{proof}
    First, in the case $d=1$, we have that $x \geq 0$ satisfies $2^x\leq cx$ if and only if $f(x)\coloneqq x-\log_2 x - \log_2 c \leq 0$. Since $f'(x) \geq 0$ for $x \geq \log_2 e$, any $m_0\geq \log_2 e$ satisfying $f(m_0) > 0$ is an upper bound of $m$.  But for $m_0 = 2\log_2 c$, 
    \[  
    f(m_0) = \log_2 c - \log_2\log_2 c - 1 \geq 0
    \]
    for all $c \geq 4$. Hence $m\leq m_0$ as desired. For general $d\geq 1$, defining $y \coloneqq m/d$, we have $2^y\leq cd y$, so the result follows from the case $d=1$.
\end{proof}

\begin{lemma} \label{lemma:truncation-bias}
    Let $Y$ be a random variable such that $\|Y\|_{\psi_2} \leq \xi$ for some $\xi > 0$.  Let $Z$ be a random variable taking values in a measurable space $\mathcal{Z}$.  Writing $B_n\coloneqq \xi\sqrt{2\log n}$, we have
    \begin{align*}
        \mathbb{E}\{|Y - T_{B_n}Y|\} \leq \frac{\sqrt{\pi}\xi}{n^2}, \quad \mathbb{E}\bigl\{Y^2 - (T_{B_n}Y)^2\bigr\} \leq \frac{2\xi^2}{n^2}
    \end{align*}
    and
    \begin{align*}
        \mathbb{E}\Bigl\{\bigl(\mathbb{E}(Y\,|\,Z)\bigr)^2 - \bigl(\mathbb{E}(T_{B_n}Y\,|\,Z)\bigr)^2 \Bigr\} \leq \frac{4\xi^2}{n}, \, \Bigl|\mathbb{E}\Bigl\{\mathbb{E}(Y\,|\,Z)Y - \mathbb{E}(T_{B_n}Y\,|\,Z) \cdot T_{B_n}Y\Bigr\}\Bigr| \leq \frac{4\xi^2}{n}.
    \end{align*}
\end{lemma}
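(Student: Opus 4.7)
The plan is to derive all four bounds from two basic consequences of the hypothesis $\|Y\|_{\psi_2}\leq\xi$: firstly, by the definition of the sub-Gaussian norm and Markov's inequality, $\mathbb{P}(|Y|>t)\leq 2e^{-t^2/\xi^2}$ for all $t>0$; and secondly, by Jensen's inequality applied to the convex function $\exp$, we have $\mathbb{E}(Y^2)\leq \xi^2\log 2$. The choice $B_n=\xi\sqrt{2\log n}$ is exactly calibrated so that $e^{-B_n^2/\xi^2}=n^{-2}$, which will drive the $n^{-2}$ rates in the first two bounds.

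For the first inequality, I will observe that $|Y-T_{B_n}Y|=(|Y|-B_n)_+$, and tail-integrate
\[
\mathbb{E}\{(|Y|-B_n)_+\}=\int_0^\infty \mathbb{P}(|Y|>B_n+t)\,dt\leq \int_{B_n}^\infty 2e^{-s^2/\xi^2}\,ds.
\]
After the substitution $u=s/\xi$, the key inequality is the Gaussian tail bound $\int_a^\infty e^{-u^2}\,du\leq (\sqrt{\pi}/2)e^{-a^2}$ valid for all $a\geq 0$ (which is easily seen by differentiating the gap on both sides, noting equality at $a=0$ and as $a\to\infty$). This gives the $\sqrt{\pi}\xi/n^2$ bound. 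For the second, $Y^2-(T_{B_n}Y)^2=(Y^2-B_n^2)\mathbbm{1}_{\{|Y|>B_n\}}$, and the direct computation
\[
\mathbb{E}\{(Y^2-B_n^2)_+\}=\int_{B_n}^\infty 2s\,\mathbb{P}(|Y|>s)\,ds\leq 4\int_{B_n}^\infty s e^{-s^2/\xi^2}\,ds=2\xi^2 e^{-B_n^2/\xi^2}=2\xi^2/n^2
\]
(via the substitution $u=s^2/\xi^2$) yields the second bound with a more elementary argument.

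For the third inequality, my strategy is to use the difference of squares identity $\mathbb{E}(Y\mid Z)^2-\mathbb{E}(T_{B_n}Y\mid Z)^2=\mathbb{E}(Y-T_{B_n}Y\mid Z)\cdot\mathbb{E}(Y+T_{B_n}Y\mid Z)$ and apply conditional Jensen followed by Cauchy--Schwarz to reduce the claim to unconditional second moments:
\[
\mathbb{E}\bigl\{\mathbb{E}(Y\mid Z)^2-\mathbb{E}(T_{B_n}Y\mid Z)^2\bigr\}\leq \sqrt{\mathbb{E}|Y-T_{B_n}Y|^2}\cdot\sqrt{\mathbb{E}|Y+T_{B_n}Y|^2}.
\]
The crucial inequality is $|Y-T_{B_n}Y|^2\leq Y^2-(T_{B_n}Y)^2$, which reduces on $\{|Y|>B_n\}$ to $(|Y|-B_n)^2\leq Y^2-B_n^2$, i.e.\ to $B_n\leq|Y|$. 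Combined with $|Y+T_{B_n}Y|\leq 2|Y|$ and the two basic consequences above, this gives an upper bound of $2\sqrt{2\log 2}\,\xi^2/n<4\xi^2/n$.

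For the fourth inequality, I will invoke the tower property: $\mathbb{E}\{\mathbb{E}(Y\mid Z)Y\}=\mathbb{E}\{\mathbb{E}(Y\mid Z)^2\}$ and similarly with $T_{B_n}Y$ in place of $Y$, so that the left-hand side equals that of the third inequality and the result follows immediately. The main (though mild) obstacle is simply bookkeeping of the constants: the Gaussian tail bound $\int_a^\infty e^{-u^2}du\leq(\sqrt{\pi}/2)e^{-a^2}$ is precisely what is needed to land at the constant $\sqrt{\pi}$ in the first inequality, while the tight constant in the third requires the Jensen-based refinement $\mathbb{E}(Y^2)\leq\xi^2\log 2$ rather than the cruder $\mathbb{E}(Y^2)\leq\xi^2$.
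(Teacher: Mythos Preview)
Your proof is correct, and for the first three inequalities it parallels the paper's argument closely: both use the tail bound $\mathbb{P}(|Y|\geq t)\leq 2e^{-t^2/\xi^2}$ and the layer-cake formula for the first two, and both factor the difference of squares and apply Cauchy--Schwarz with conditional Jensen for the third. Your version is marginally tighter in places (you use the Gaussian Mills-ratio estimate $\int_a^\infty e^{-u^2}\,du\leq(\sqrt{\pi}/2)e^{-a^2}$ where the paper simply drops the cross term in $(B_n+t)^2$, and you use the Jensen bound $\mathbb{E}Y^2\leq\xi^2\log 2$ where the paper tail-integrates to $\mathbb{E}Y^2\leq 2\xi^2$), but these are cosmetic.

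The genuine difference is in the fourth inequality. You observe, via the tower property and ``taking out what is known'', that
\[
\mathbb{E}\bigl\{\mathbb{E}(Y\mid Z)\,Y\bigr\}=\mathbb{E}\bigl\{\bigl(\mathbb{E}(Y\mid Z)\bigr)^2\bigr\},\qquad
\mathbb{E}\bigl\{\mathbb{E}(T_{B_n}Y\mid Z)\,T_{B_n}Y\bigr\}=\mathbb{E}\bigl\{\bigl(\mathbb{E}(T_{B_n}Y\mid Z)\bigr)^2\bigr\},
\]
so the fourth left-hand side is literally equal to the third, and the bound follows for free. The paper instead decomposes $\mathbb{E}(Y\mid Z)Y-\mathbb{E}(T_{B_n}Y\mid Z)T_{B_n}Y$ into two cross terms, applying Cauchy--Schwarz to one and the crude bound $|T_{B_n}Y|\leq B_n$ to the other, then reassembles. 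Your route is shorter, avoids the extra $B_n$ factor entirely, and in fact shows that the third and fourth bounds must always coincide; the paper's route works too but is more laboured than necessary here.
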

\begin{proof}
    By Markov's inequality,
    \begin{align*}
        \mathbb{P}(|Y|\geq t) \leq \mathbb{E}(e^{|Y|^2/\xi^2}) \cdot e^{-t^2/\xi^2} \leq 2e^{-t^2/\xi^2},
    \end{align*}
    for all $t\geq 0$. For the first inequality,
    \begin{align*}
        \mathbb{E}\{|Y - T_{B_n}Y|\} &= \int_0^{\infty} \mathbb{P}(|Y - T_{B_n}Y| \geq t) \,\mathrm{d}t\\
        &\leq \int_0^{\infty} \mathbb{P}(|Y| \geq B_n + t) \,\mathrm{d}t\\
        &\leq 2\int_0^{\infty} e^{-(B_n^2 + t^2)/\xi^2}\,\mathrm{d}t = \frac{2}{n^2} \int_0^{\infty} e^{-t^2/\xi^2}\,\mathrm{d}t = \frac{\sqrt{\pi} \xi}{n^2}.
    \end{align*}
    For the second inequality,
    \begin{align*}
        \mathbb{E}\bigl\{Y^2 - (T_{B_n}Y)^2\bigr\} &= \int_0^{\infty} \mathbb{P}\bigl(Y^2 - (T_{B_n}Y)^2 \geq t\bigr)\,\mathrm{d}t\\
        &\leq \int_0^{\infty} \mathbb{P}\bigl(Y^2 \geq B_n^2 + t\bigr) \,\mathrm{d}t \leq 2 \int_0^{\infty} e^{-(B_n^2 + t)/\xi^2} \,\mathrm{d}t = \frac{2\xi^2}{n^2}.
    \end{align*}
    For the third inequality, by the Cauchy--Schwarz inequality,
    \begin{align}
        \mathbb{E}\Bigl\{\bigl(\mathbb{E}&(Y\,|\,Z)\bigr)^2 - \bigl(\mathbb{E}(T_{B_n}Y\,|\,Z)\bigr)^2 \Bigr\}\nonumber\\
        &= \mathbb{E}\Bigl\{\bigl(\mathbb{E}(Y\,|\,Z) -\mathbb{E}(T_{B_n}Y\,|\,Z)\bigr) \cdot \bigl(\mathbb{E}(Y\,|\,Z) +\mathbb{E}(T_{B_n}Y\,|\,Z)\bigr) \Bigr\} \nonumber\\
        &\leq \sqrt{\mathbb{E}\Bigl\{\bigl(\mathbb{E}(Y-T_{B_n}Y\,|\,Z)\bigr)^2\Bigr\} \cdot \mathbb{E}\Bigl\{2\bigl(\mathbb{E}(Y\,|\,Z)\bigr)^2 + 2\bigl(\mathbb{E}(T_{B_n}Y\,|\,Z)\bigr)^2 \Bigr\}}. \label{eq:truncation-ineq-5}
    \end{align}
    Moreover, by the conditional version of Jensen's inequality,
    \begin{align}
        \mathbb{E}\Bigl\{\bigl(\mathbb{E}(Y-T_{B_n}Y\,|\,Z)\bigr)^2\Bigr\} \leq \mathbb{E}\Bigl\{\mathbb{E}\bigl((Y-T_{B_n}Y)^2\,|\,Z\bigr)\Bigr\} \leq \mathbb{E}\bigl\{Y^2 - (T_{B_n}Y)^2\bigr\} \leq \frac{2\xi^2}{n^2}, \label{eq:truncation-ineq-6}
    \end{align}
    and
    \begin{align}
        \mathbb{E}\Bigl\{2\bigl(\mathbb{E}(Y\,|\,Z)\bigr)^2 + 2\bigl(\mathbb{E}(&T_{B_n}Y\,|\,Z)\bigr)^2 \Bigr\} \leq \mathbb{E}\Bigl\{2\mathbb{E}(Y^2\,|\,Z) + 2\mathbb{E}\bigl((T_{B_n}Y)^2\,|\,Z\bigr) \Bigr\}\nonumber\\
        &\leq 4\mathbb{E}(Y^2) = 4 \int_0^{\infty} \mathbb{P}(Y^2\geq t)\,\mathrm{d}t \leq 8 \int_0^{\infty} e^{-t/\xi^2}\,\mathrm{d}t = 8\xi^2. \label{eq:truncation-ineq-7}
    \end{align}
    The third inequality then follows by combining \eqref{eq:truncation-ineq-5}, \eqref{eq:truncation-ineq-6} and \eqref{eq:truncation-ineq-7}.
    
    Finally, for the fourth inequality, by Cauchy--Schwarz again,
    \begin{align}
        \Bigl|\mathbb{E}\Bigl\{\mathbb{E}&(Y\,|\,Z)Y - \mathbb{E}(T_{B_n}Y\,|\,Z) \cdot T_{B_n}Y\Bigr\}\Bigr|\nonumber\\
        &\leq \Bigl|\mathbb{E}\Bigl\{\mathbb{E}(Y\,|\,Z)(Y-T_{B_n}Y)\Bigr\}\Bigr| + \Bigl|\mathbb{E}\Bigl\{\bigl( \mathbb{E}(Y\,|\,Z)-\mathbb{E}(T_{B_n}Y\,|\,Z) \bigr) \cdot T_{B_n}Y\Bigr\}\Bigr| \nonumber\\
        &\leq \sqrt{\mathbb{E}\bigl\{\bigl(\mathbb{E}(Y\,|\,Z)\bigr)^2\bigr\}\cdot \mathbb{E}\bigl\{(Y-T_{B_n}Y)^2\bigr\}} + B_n \mathbb{E}\bigl\{ \bigl| \mathbb{E}(Y - T_{B_n}Y\,|\,Z) \bigr| \bigr\}. \label{eq:truncation-ineq-1}
    \end{align}
    Now, by~\eqref{eq:truncation-ineq-7},
    \begin{align}
        \mathbb{E}\bigl\{\bigl(\mathbb{E}(Y\,|\,Z)\bigr)^2\bigr\} \leq \mathbb{E}(Y^2) \leq 2\xi^2, \label{eq:truncation-ineq-2}
    \end{align} 
    and
    \begin{align}
        \mathbb{E}\bigl\{(Y-T_{B_n}Y)^2\bigr\} \leq \mathbb{E}\bigl\{Y^2-(T_{B_n}Y)^2\bigr\} \leq \frac{2\xi^2}{n^2}. \label{eq:truncation-ineq-3}
    \end{align}
    Lastly,
    \begin{align}
        \mathbb{E}\bigl\{ \bigl| \mathbb{E}(Y - T_{B_n}Y\,|\,Z) \bigr| \bigr\} \leq \mathbb{E}\bigl(|Y - T_{B_n}Y|\bigr) \leq \frac{\sqrt{\pi}\xi}{n^2}. \label{eq:truncation-ineq-4}
    \end{align}
    Combining~\eqref{eq:truncation-ineq-1}, \eqref{eq:truncation-ineq-2}, \eqref{eq:truncation-ineq-3} and \eqref{eq:truncation-ineq-4} yields that
    \begin{align*}
        \Bigl|\mathbb{E}\Bigl\{\mathbb{E}(Y\,|\,Z)Y - \mathbb{E}(T_{B_n}Y\,|\,Z) \cdot T_{B_n}Y\Bigr\}\Bigr| \leq \frac{4\xi^2}{n},
    \end{align*}
    as desired.
\end{proof}

\begin{lemma} \label{lemma:taylor-approximation-error}
    Let $\beta,\gamma>0$, $\beta_0 \coloneqq \lceil\beta\rceil - 1$, $d\in\mathbb{N}$ and $g\in \mathcal{H}_d^{\beta}\bigl([0,1]^d,\gamma\bigr)$. Then
    \begin{align*}
        \biggl| g(\bm y) - \sum_{\bm\alpha \in \mathbb{N}_0^d : \|\bm\alpha\|_1 \leq \beta_0} \frac{\partial^{\bm\alpha}g(\bm x)}{\bm\alpha!}(\bm y - \bm x)^{\bm\alpha} \biggr| \leq \gamma d^{\beta_0} \|\bm y - \bm x\|_2^{\beta}
    \end{align*}
    for all $\bm x,\bm y \in [0,1]^d$.
\end{lemma}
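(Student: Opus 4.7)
The plan is to reduce the multivariate Taylor remainder to a one-dimensional computation along the line segment from $\bm x$ to $\bm y$, then invoke the H\"older continuity of the top-order partial derivatives together with a multinomial identity. The case $\beta_0 = 0$ (i.e.\ $\beta \in (0,1]$) is immediate from the H\"older condition itself, since then the left-hand side equals $|g(\bm y) - g(\bm x)|$ and $d^{\beta_0} = 1$. Hence I will concentrate on the case $\beta_0 \geq 1$.

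For $\beta_0 \geq 1$, I would define $\phi:[0,1]\to\mathbb{R}$ by $\phi(t) \coloneqq g\bigl(\bm x + t(\bm y - \bm x)\bigr)$. Since $g$ is $\beta_0$-times differentiable on $[0,1]^d$, the chain rule and induction yield the standard formula
\begin{equation*}
    \phi^{(k)}(t) = \sum_{\bm\alpha\in\mathbb{N}_0^d:\|\bm\alpha\|_1 = k} \frac{k!}{\bm\alpha!}\, \partial^{\bm\alpha} g\bigl(\bm x + t(\bm y-\bm x)\bigr)\,(\bm y-\bm x)^{\bm\alpha}
\end{equation*}
for each $k\in\{0,1,\ldots,\beta_0\}$. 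The one-dimensional Taylor theorem with Lagrange remainder (applicable because $\phi^{(\beta_0-1)}$ is continuous on $[0,1]$ and $\phi$ is $\beta_0$-times differentiable on $(0,1)$) produces a $\theta\in(0,1)$ such that $\phi(1) = \sum_{k=0}^{\beta_0-1} \phi^{(k)}(0)/k! + \phi^{(\beta_0)}(\theta)/\beta_0!$. Adding and subtracting $\phi^{(\beta_0)}(0)/\beta_0!$ and substituting the chain-rule formula yields
\begin{equation*}
    g(\bm y) - \!\!\!\!\sum_{\bm\alpha:\|\bm\alpha\|_1\leq \beta_0}\!\! \frac{\partial^{\bm\alpha}g(\bm x)}{\bm\alpha!}(\bm y-\bm x)^{\bm\alpha} = \!\!\!\!\sum_{\bm\alpha:\|\bm\alpha\|_1 = \beta_0}\!\! \frac{\partial^{\bm\alpha}g(\bm\xi) - \partial^{\bm\alpha}g(\bm x)}{\bm\alpha!}(\bm y-\bm x)^{\bm\alpha},
\end{equation*}
where $\bm\xi \coloneqq \bm x + \theta(\bm y-\bm x)$.

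It then remains to bound the right-hand side. For each $\bm\alpha$ with $\|\bm\alpha\|_1 = \beta_0$, the H\"older condition from the definition of $\mathcal{H}_d^{\beta}\bigl([0,1]^d,\gamma\bigr)$ gives $|\partial^{\bm\alpha}g(\bm\xi) - \partial^{\bm\alpha}g(\bm x)| \leq \gamma\|\bm\xi-\bm x\|_2^{\beta-\beta_0} \leq \gamma\|\bm y-\bm x\|_2^{\beta-\beta_0}$, while $|(\bm y-\bm x)^{\bm\alpha}| \leq \|\bm y-\bm x\|_\infty^{\beta_0} \leq \|\bm y-\bm x\|_2^{\beta_0}$. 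Applying the multinomial identity $\sum_{\bm\alpha:\|\bm\alpha\|_1 = \beta_0} \beta_0!/\bm\alpha! = d^{\beta_0}$, I obtain $\sum_{\bm\alpha:\|\bm\alpha\|_1 = \beta_0} 1/\bm\alpha! = d^{\beta_0}/\beta_0! \leq d^{\beta_0}$, and combining these bounds delivers the claimed estimate $\gamma d^{\beta_0}\|\bm y-\bm x\|_2^{\beta}$.

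There is no real obstacle in this proof; the only minor bookkeeping is ensuring the 1D Taylor remainder with Lagrange form is valid under the mere pointwise differentiability assumed in the definition of $\mathcal{H}_d^\beta$, handling the $\beta_0 = 0$ degenerate case separately, and noting that the loss of the factor $1/\beta_0!$ is what makes the final constant $d^{\beta_0}$ (rather than the sharper $d^{\beta_0}/\beta_0!$) suffice for the statement.
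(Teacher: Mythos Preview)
Your proof is correct and follows essentially the same route as the paper's: both obtain the Lagrange-form remainder $\sum_{\|\bm\alpha\|_1=\beta_0}\frac{\partial^{\bm\alpha}g(\bm\xi)-\partial^{\bm\alpha}g(\bm x)}{\bm\alpha!}(\bm y-\bm x)^{\bm\alpha}$ (you make the one-dimensional reduction explicit, whereas the paper simply invokes Taylor's theorem), then bound each summand via the H\"older condition and $|(\bm y-\bm x)^{\bm\alpha}|\leq\|\bm y-\bm x\|_2^{\beta_0}$. The only cosmetic difference is in the combinatorial step: you use the multinomial identity $\sum_{\|\bm\alpha\|_1=\beta_0}1/\bm\alpha! = d^{\beta_0}/\beta_0!\leq d^{\beta_0}$, while the paper drops the $1/\bm\alpha!$ factors and bounds the number of multi-indices by $\binom{d+\beta_0-1}{\beta_0}\leq d^{\beta_0}$; your version is marginally sharper but both reach the stated constant.
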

\begin{proof}
    Let $\bm u \coloneqq \bm y - \bm x$. By Taylor's theorem, there exists $\theta\in(0,1)$ such that
    \begin{align*}
        g(\bm y) = \sum_{\bm\alpha \in \mathbb{N}_0^d : \|\bm\alpha\|_1 \leq \beta_0-1} \frac{\partial^{\bm\alpha}g(\bm x)}{\bm\alpha!}\bm u^{\bm\alpha} + \sum_{\bm\alpha \in \mathbb{N}_0^d : \|\bm\alpha\|_1 = \beta_0} \frac{\partial^{\bm\alpha}g(\bm x + \theta\bm u)}{\bm\alpha!}\bm u^{\bm\alpha}.
    \end{align*}
    Thus,
    \begin{align*}
        \biggl| g(\bm y) - \sum_{\bm\alpha \in \mathbb{N}_0^d : \|\bm\alpha\|_1 \leq \beta_0} \frac{\partial^{\bm\alpha}g(\bm x)}{\bm\alpha!} \bm u^{\bm\alpha} \biggr| &\leq \sum_{\bm\alpha \in \mathbb{N}_0^d : \|\bm\alpha\|_1 = \beta_0} \bigl|\partial^{\bm\alpha}g(\bm x + \theta\bm u) - \partial^{\bm\alpha}g(\bm x)\bigr| \bm u^{\bm\alpha}\\
        &\leq d^{\beta_0} \gamma \|\bm u\|_2^{\beta-\beta_0}\cdot \|\bm u\|_2^{\beta_0}
    \end{align*}
    where the final inequality uses the facts that $\bigl|\{\bm\alpha \in \mathbb{N}_0^d : \|\bm\alpha\|_1 = \beta_0\}\bigr|= \binom{d+\beta_0-1}{\beta_0}\leq d^{\beta_0}$, that $g \in \mathcal{H}_d^{\beta}\bigl([0,1]^d,\gamma\bigr)$ and that $\bm u^{\bm\alpha} = \prod_{j=1}^d u_j^{\alpha_j} \leq \|\bm u\|_{\infty}^{\beta_0} \leq \|\bm u\|_2^{\beta_0}$.
\end{proof}

\begin{lemma}\label{lemma:arbitrary-partition}
    Given an arbitrary partition $\{\mathcal{S}_1,\ldots,\mathcal{S}_K\}$ of $\mathcal{S}$, there exists a Bayes regression function~$f^{\star}$ that satisfies Assumption~\ref{assumption:piecewise-assumption}.
\end{lemma}
\begin{proof}
    Indeed, let $\bm{X} = (X_1,\ldots,X_d)^\top \sim \mathrm{Unif}[0,1]^d$ and let $Y = \sum_{k=1}^K \mathbbm{1}_{\{X_{1}\geq (k-1)/K\}} + \varepsilon$ where $\varepsilon \sim N(0,1)$ is independent of $\bm X$. Further define $\bm \Omega$ by $\bm \Omega \,|\, \bm X \sim \mathrm{Unif}( \mathcal{S}_k)$ when $X_1 \in \bigl[(k-1)/K,k/K\bigr)$. Then $f^{\star}(\bm z,\bm \omega) = k$ for all $\bm z \in [0,1]^d$ whenever $\bm \omega\in\mathcal{S}_k$.
\end{proof}

\section{Visualisation of the learned pattern embeddings for synthetic data} \label{sec:embeddings-syn}

In Figures~\ref{fig:embedding-model1}--\ref{fig:embedding-model4}, we plot visualisations of the learned embeddings for our synthetic data examples from Models~1--4 in Section~\ref{sec:simulated-data}.  Since the original embeddings were in four dimensions, we plot the first two principal components for the visualisations.  The clustering structure of the different colours in these figures reinforces the message that the embeddings of the revelation vectors contain relevant information for predictive performance.  In some cases, such as Figure~\ref{fig:embedding-model2}, the clusters may be almost disjoint, while in others, such as Figure~\ref{fig:embedding-model1}, there may be non-trivial overlap between the clusters, at least following the projection onto the first two principal components. 

\begin{figure}[htbp]
\centering

\begin{subfigure}[t]{\textwidth}
    \includegraphics[width=\textwidth]{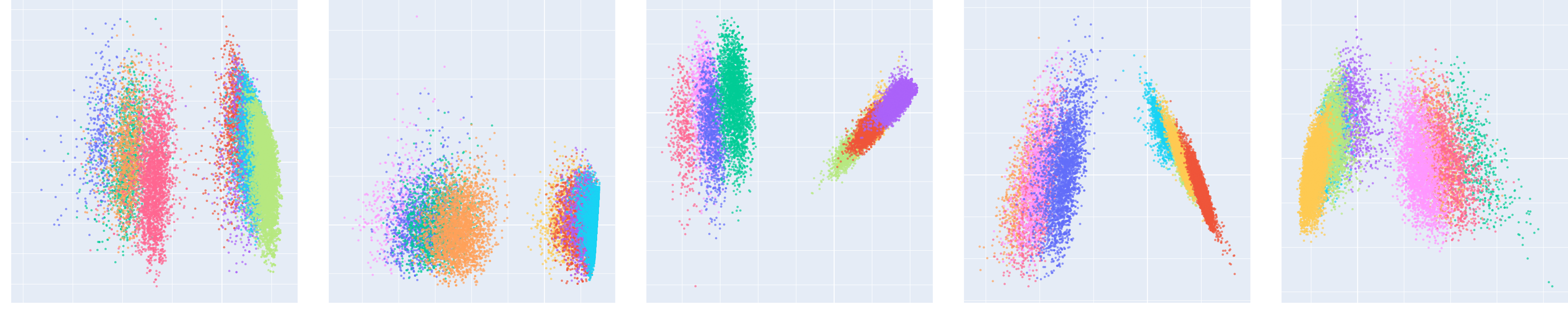}
    \caption{Model 1 MI}
\end{subfigure}

\begin{subfigure}[t]{\textwidth}
    \includegraphics[width=\textwidth]{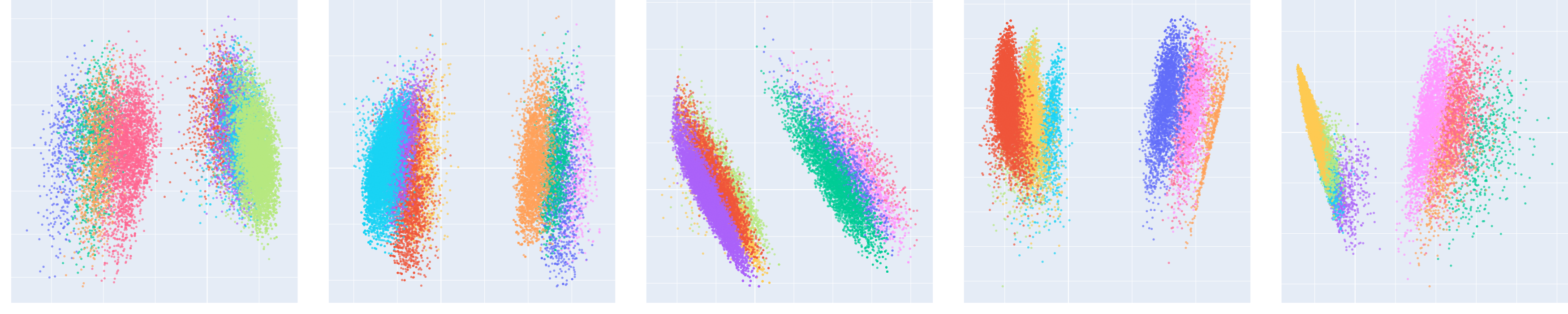}
    \caption{Model 1 MF}
\end{subfigure}

\begin{subfigure}[t]{\textwidth}
    \includegraphics[width=\textwidth]{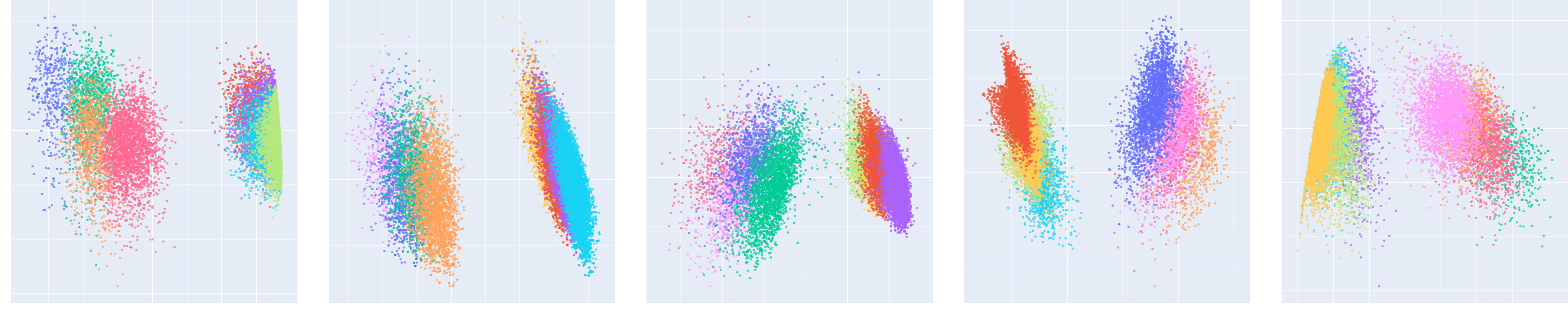}
    \caption{Model 1 II}
\end{subfigure}

\caption{Visualisation of the learned embeddings for five different realisations of synthetic data from Model 1, and three different imputation methods: Mean Imputation (MI, top), MissForest (MF, middle) and Iterative Imputer (II, bottom).  There are eight different cells (depending on $\omega_1,\omega_2,\omega_3$) in the Bayes regression function partition, and these are plotted with different colours.} 
\label{fig:embedding-model1}
\end{figure}

\begin{figure}[htbp]
\centering

\begin{subfigure}[t]{\textwidth}
    \includegraphics[width=\textwidth]{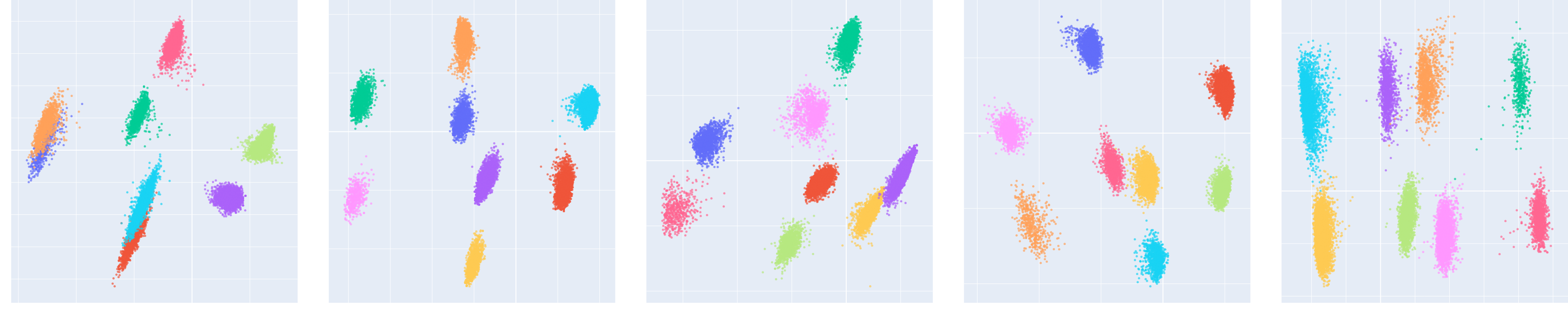}
    \caption{Model 2 MI}
\end{subfigure}

\begin{subfigure}[t]{\textwidth}
    \includegraphics[width=\textwidth]{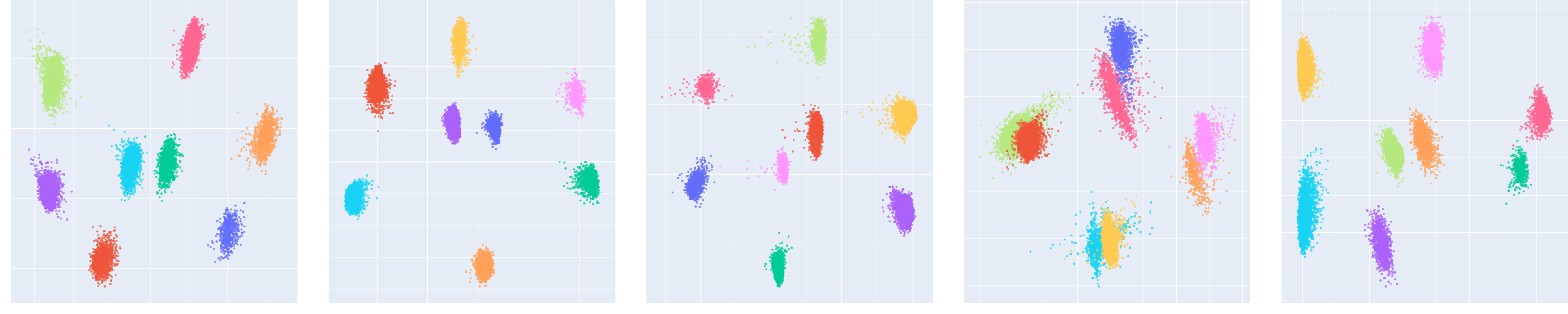}
    \caption{Model 2 MF}
\end{subfigure}

\begin{subfigure}[t]{\textwidth}
    \includegraphics[width=\textwidth]{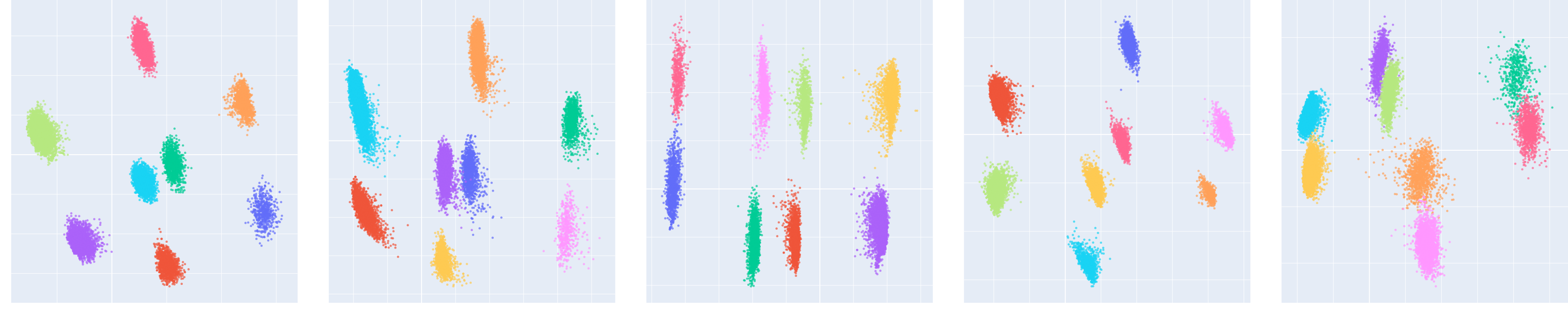}
    \caption{Model 2 II}
\end{subfigure}

\caption{Visualisation of the learned embeddings for five different realisations of synthetic data from Model 2, and three different imputation methods: Mean Imputation (MI, top), MissForest (MF, middle) and Iterative Imputer (II, bottom). There are eight different cells (depending on $\omega_1,\omega_2,\omega_3$) in the Bayes regression function partition, and these are plotted with different colours.} 
\label{fig:embedding-model2}
\end{figure}

\begin{figure}[htbp]
\centering

\begin{subfigure}[t]{\textwidth}
    \includegraphics[width=\textwidth]{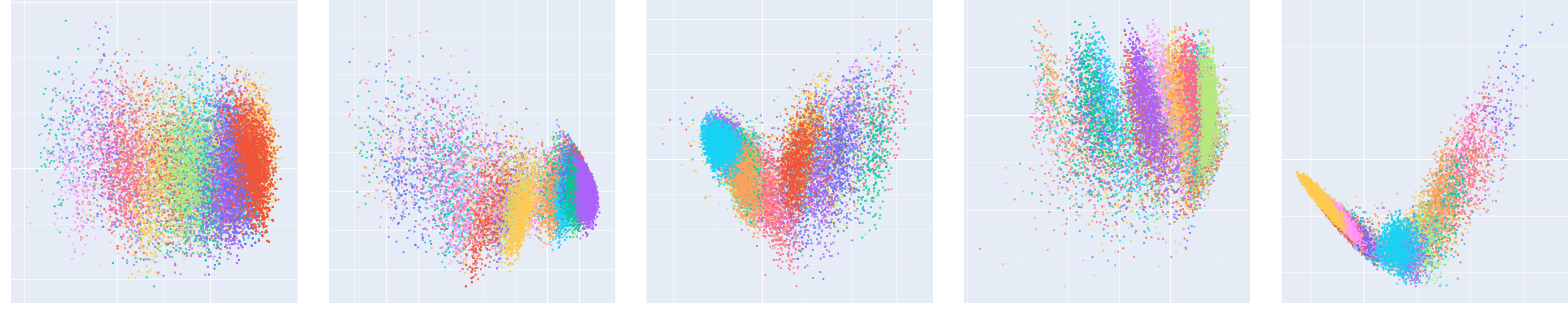}
    \caption{Model 3 MI}
\end{subfigure}

\begin{subfigure}[t]{\textwidth}
    \includegraphics[width=\textwidth]{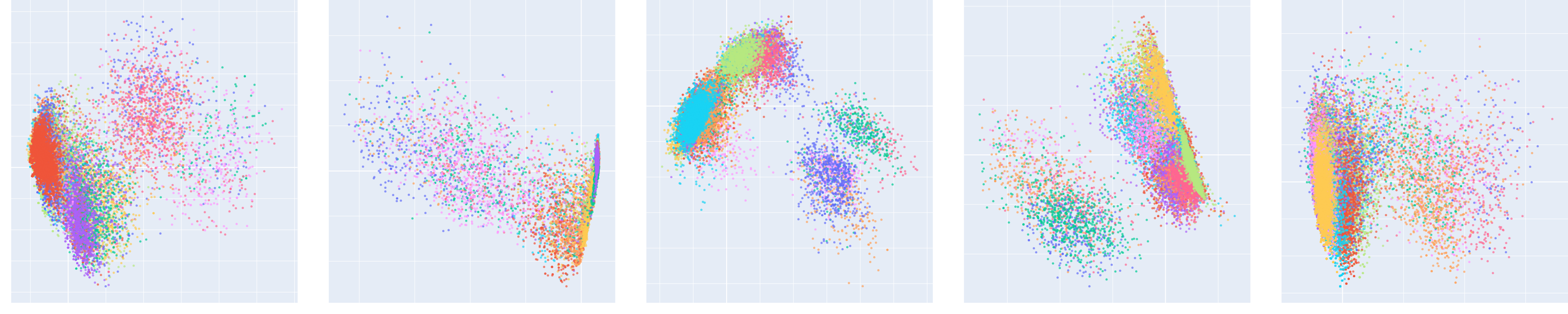}
    \caption{Model 3 MF}
\end{subfigure}

\begin{subfigure}[t]{\textwidth}
    \includegraphics[width=\textwidth]{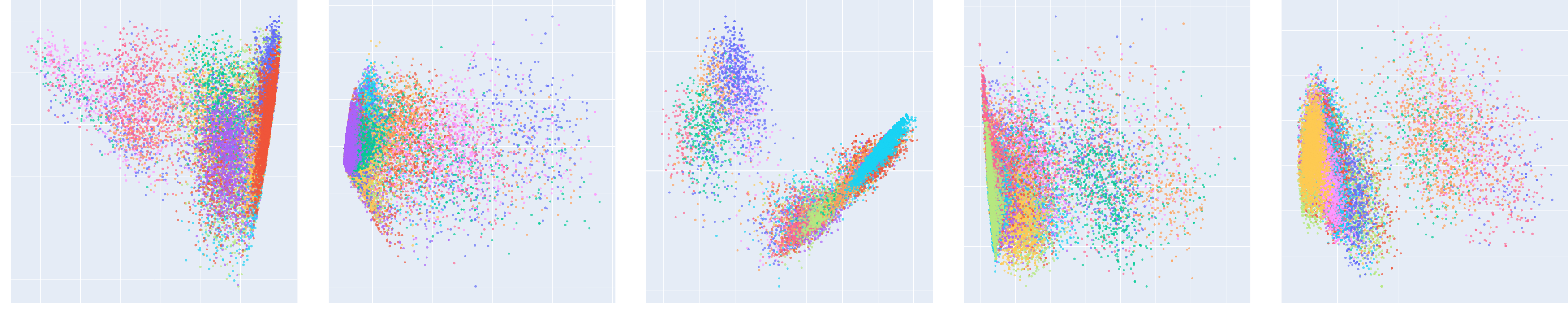}
    \caption{Model 3 II}
\end{subfigure}

\caption{Visualisation of the learned embeddings for five different realisations of synthetic data from Model 3, and three different imputation methods: Mean Imputation (MI, top), MissForest (MF, middle) and Iterative Imputer (II, bottom). There are 32 different cells (depending on $\omega_1,\ldots,\omega_5$) in the Bayes regression function partition, and these are plotted with different colours.} 
\label{fig:embedding-model3}
\end{figure}

\begin{figure}[htbp]
\centering

\begin{subfigure}[t]{\textwidth}
    \includegraphics[width=\textwidth]{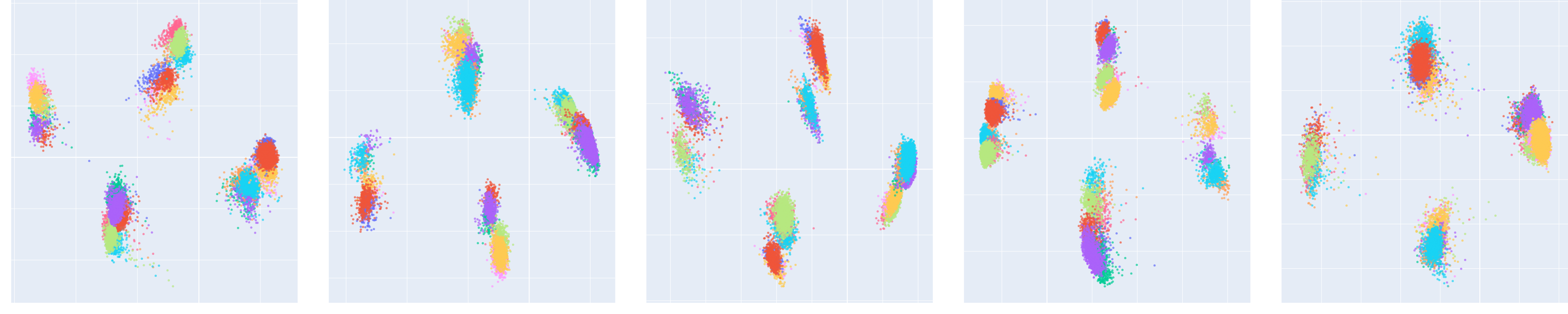}
    \caption{Model 4 MI}
\end{subfigure}

\begin{subfigure}[t]{\textwidth}
    \includegraphics[width=\textwidth]{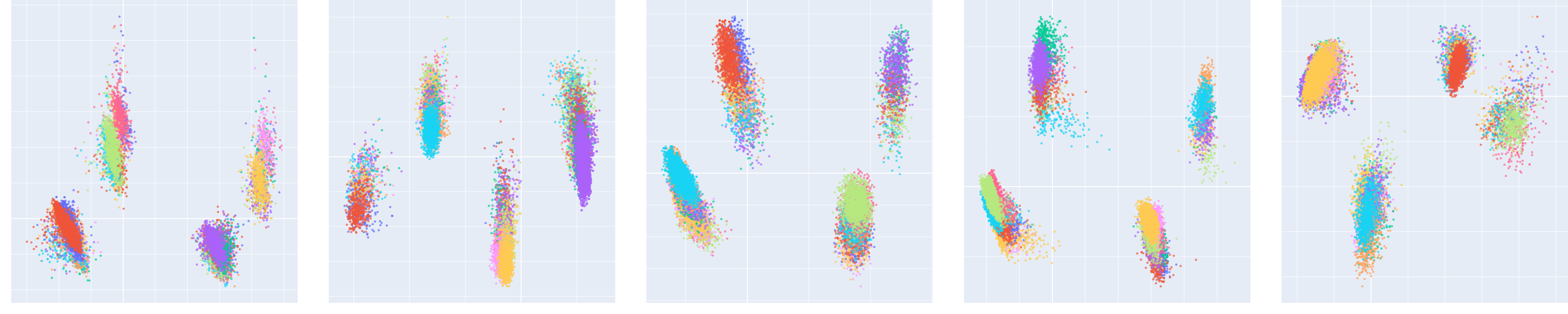}
    \caption{Model 4 MF}
\end{subfigure}

\begin{subfigure}[t]{\textwidth}
    \includegraphics[width=\textwidth]{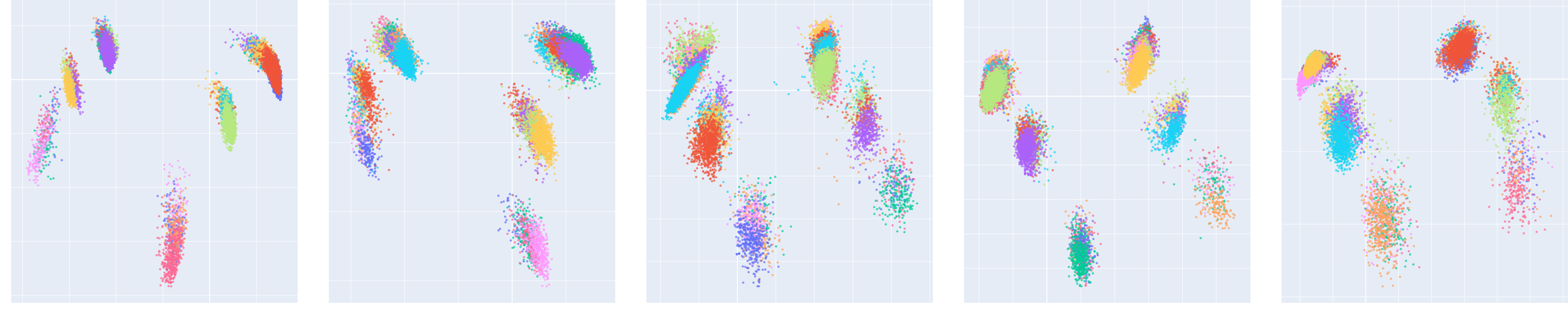}
    \caption{Model 4 II}
\end{subfigure}

\caption{Visualisation of the learned embeddings for five different realisations of synthetic data from Model 4, and three different imputation methods: Mean Imputation (MI, top), MissForest (MF, middle) and Iterative Imputer (II, bottom). There are 32 different cells (depending on $\omega_1,\ldots,\omega_5$) in the Bayes regression function partition, and these are plotted with different colours.} 
\label{fig:embedding-model4}
\end{figure}

\end{document}